\newif\ifxetexorluatex   
\newlength{\curtextsize}
\newlength{\bigtextsize}
\newcounter{intvl}
\newcounter{otstup}
\newcounter{contnumeq}
\newcounter{contnumfig}
\newcounter{contnumtab}
\newcounter{pgnum}
\newcounter{bibliosel}
\newcounter{chapstyle}
\newcounter{headingdelim}
\newcounter{headingalign}
\newcounter{headingsize}
\newcounter{tabcap}
\newcounter{tablaba}
\newcounter{tabtita}
\definecolor{linkcolor}{rgb}{0,0,0} 
\definecolor{citecolor}{rgb}{0,0,0} 
\definecolor{urlcolor}{rgb}{0,0,0} 
\renewcommand{\chaptername}{Chapter}
\newcommand{\thesisAuthor}             
{%
    \texorpdfstring{
        Jacob Daniel Biamonte
    }{%
        Biamonte, Jacob Daniel
    }%
}
\newcommand{\thesisUdk}                
{\todo{xxx.xxx}}
\newcommand{\thesisTitle}              
{\texorpdfstring{\MakeUppercase{On the Theory of Modern Quantum Algorithms}}{On the Theory of Modern Quantum Algorithms}}
\newcommand{\thesisSpecialtyNumber}    
{\texorpdfstring{01.04.02}{01.04.02}}
\newcommand{\thesisSpecialtyTitle}     
{\texorpdfstring{Applied Mathematics and Informatics}{Applied Mathematics and Informatics}}
\newcommand{\thesisDegree}             
{Doctor of Physical and Mathematical Sciences (DS.c.)}
\newcommand{\thesisCity}               
{Moscow}
\newcommand{\thesisYear}               
{2020}
\newcommand{\thesisOrganization}       
{Moscow Institute of Physics and Technology}
\newcommand{\thesisInOrganization}       
{Moscow Institute of Physics and Technology}
\newcommand{\supervisorFio}            
{\todo{Name Surname}}
\newcommand{\supervisorRegalia}        
{\todo{academic degree, academic rank}}
\newcommand{\opponentOneFio}           
{\todo{Name Surname}}
\newcommand{\opponentOneRegalia}       
{\todo{academic degree}}
\newcommand{\opponentOneJobPlace}      
{\todo{work place name}}
\newcommand{\opponentOneJobPost}       
{\todo{position}}
\newcommand{\opponentTwoFio}           
{\todo{Name Surname}}
\newcommand{\opponentTwoRegalia}       
{\todo{academic degree}}
\newcommand{\opponentTwoJobPlace}      
{\todo{work place name}}
\newcommand{\opponentTwoJobPost}       
{\todo{position}}
\newcommand{\leadingOrganizationTitle} 
{\todo{Федеральное государственное бюджетное образовательное учреждение высшего профессионального образования с~длинным длинным длинным длинным названием}}
\newcommand{\defenseDate}              
{\todo{DD mmmmmmmm YYYY~г.~at~XX}}
\newcommand{\defenseCouncilNumber}     
{\todo{NN}}
\newcommand{\defenseCouncilTitle}      
{\todo{committee institution name}}
\newcommand{\defenseCouncilAddress}    
{\todo{address}}
\newcommand{\defenseSecretaryFio}      
{\todo{Name Surname}}
\newcommand{\defenseSecretaryRegalia}  
{\todo{academic degree}}            
\newcommand{\synopsisLibrary}          
{\todo{library name}}
\newcommand{\synopsisDate}             
{\todo{DD mmmmmmmm YYYY}}
\newcommand{\keywords}
{}      
    \newfontfamily\cyrillicfont{Times New Roman}
    \newfontfamily\cyrillicfontsf{Arial}
    \newfontfamily\cyrillicfonttt{Courier New}
    \newcommand{\tabcapalign}{\raggedright}  
    \newcommand{\tabcapalign}{\raggedright}  
    \newcommand{\tabcapalign}{\centering}    
    \newcommand{\tabcapalign}{\raggedleft}   
    \newcommand{\tabtitalign}{\raggedright}  
    \newcommand{\tabtitalign}{\centering}    
    \newcommand{\tabtitalign}{\raggedleft}   
\DeclareRobustCommand{\todo}{\textcolor{red}}       
\setlist{nosep,
    labelindent=\parindent,leftmargin=*
}
\newlength{\otstuplen}
    \newcommand{\hdngalign}{\filcenter}                
    \newcommand{\hdngalign}{\filright}                 
    \renewcommand\cftchapaftersnum{.\ }   
\renewcommand\cftchapaftersnum{\quad}     
    \renewcommand\cftsecaftersnum{.\ }    
    \renewcommand\cftsubsecaftersnum{.\ } 
\renewcommand\cftsecaftersnum{\quad}      
\renewcommand\cftsubsecaftersnum{\quad}   
\let\ps@plain\ps@fancy              
\titleformat{\chapter}[block]                                
        {\hdngalign\fontsize{14pt}{16pt}\selectfont\bfseries}%
        {\thechapter\cftchapaftersnum}                       
        {0em}
        {}%
\titleformat{\section}[block]                                
        {\hdngalign\fontsize{14pt}{16pt}\selectfont\bfseries}%
        {\thesection\cftsecaftersnum}                        
        {0em}
        {}%
\titleformat{\subsection}[block]                             
        {\hdngalign\fontsize{14pt}{16pt}\selectfont\bfseries}%
        {\thesubsection\cftsubsecaftersnum}                  
        {0em}
        {}%
    \sectionformat{\chapter}{
        label=\chaptername\ \thechapter\cftchapaftersnum,
        labelsep=0em,
    }
    \renewcommand{\cftchappresnum}{\chaptername\ }
\titlespacing{\chapter}{\theotstup\parindent}{-1.7em}{*\theintvl}       
\titlespacing{\section}{\theotstup\parindent}{*\theintvl}{*\theintvl}
\titlespacing{\subsection}{\theotstup\parindent}{*\theintvl}{*\theintvl}
\titlespacing{\subsubsection}{\theotstup\parindent}{*\theintvl}{*\theintvl}
    \sectionformat{\chapter}{
        format=\hdngalign\Large\bfseries, 
        top-=0.4em,                       
    }
    \sectionformat{\section}{
        format=\hdngalign\large\bfseries, 
    }
    \sectionformat{\chapter}{
        top-=0.2em, 
    }
\def\formbytotal#1#2#3#4#5{%
    \newcount\@c
    \@c\totvalue{#1}\relax
    \newcount\@last
    \newcount\@pnul
    \@last\@c\relax
    \divide\@last 10
    \@pnul\@last\relax
    \divide\@pnul 10
    \multiply\@pnul-10
    \advance\@pnul\@last
    \multiply\@last-10
    \advance\@last\@c
    \total{#1}~#2%
    \ifnum\@pnul=1#5\else%
    \ifcase\@last#5\or#3\or#4\or#4\or#4\else#5\fi
    \fi
}
\def\zz{\ifx\[$\else\aftergroup\zzz\fi}
\def\zzz{\setbox0\lastbox
\dimen0\dimexpr\extrarowheight + \ht0-\dp0\relax
\setbox0\hbox{\raise-.5\dimen0\box0}%
\ht0=\dimexpr\ht0+\extrarowheight\relax
\dp0=\dimexpr\dp0+\extrarowheight\relax 
\box0
}
\lstdefinelanguage{Renhanced}%
{keywords={abbreviate,abline,abs,acos,acosh,action,add1,add,%
        aggregate,alias,Alias,alist,all,anova,any,aov,aperm,append,apply,%
        approx,approxfun,apropos,Arg,args,array,arrows,as,asin,asinh,%
        atan,atan2,atanh,attach,attr,attributes,autoload,autoloader,ave,%
        axis,backsolve,barplot,basename,besselI,besselJ,besselK,besselY,%
        beta,binomial,body,box,boxplot,break,browser,bug,builtins,bxp,by,%
        c,C,call,Call,case,cat,category,cbind,ceiling,character,char,%
        charmatch,check,chol,chol2inv,choose,chull,class,close,cm,codes,%
        coef,coefficients,co,col,colnames,colors,colours,commandArgs,%
        comment,complete,complex,conflicts,Conj,contents,contour,%
        contrasts,contr,control,helmert,contrib,convolve,cooks,coords,%
        distance,coplot,cor,cos,cosh,count,fields,cov,covratio,wt,CRAN,%
        create,crossprod,cummax,cummin,cumprod,cumsum,curve,cut,cycle,D,%
        data,dataentry,date,dbeta,dbinom,dcauchy,dchisq,de,debug,%
        debugger,Defunct,default,delay,delete,deltat,demo,de,density,%
        deparse,dependencies,Deprecated,deriv,description,detach,%
        dev2bitmap,dev,cur,deviance,off,prev,,dexp,df,dfbetas,dffits,%
        dgamma,dgeom,dget,dhyper,diag,diff,digamma,dim,dimnames,dir,%
        dirname,dlnorm,dlogis,dnbinom,dnchisq,dnorm,do,dotplot,double,%
        download,dpois,dput,drop,drop1,dsignrank,dt,dummy,dump,dunif,%
        duplicated,dweibull,dwilcox,dyn,edit,eff,effects,eigen,else,%
        emacs,end,environment,env,erase,eval,equal,evalq,example,exists,%
        exit,exp,expand,expression,External,extract,extractAIC,factor,%
        fail,family,fft,file,filled,find,fitted,fivenum,fix,floor,for,%
        For,formals,format,formatC,formula,Fortran,forwardsolve,frame,%
        frequency,ftable,ftable2table,function,gamma,Gamma,gammaCody,%
        gaussian,gc,gcinfo,gctorture,get,getenv,geterrmessage,getOption,%
        getwd,gl,glm,globalenv,gnome,GNOME,graphics,gray,grep,grey,grid,%
        gsub,hasTsp,hat,heat,help,hist,home,hsv,httpclient,I,identify,if,%
        ifelse,Im,image,\%in\%,index,influence,measures,inherits,install,%
        installed,integer,interaction,interactive,Internal,intersect,%
        inverse,invisible,IQR,is,jitter,kappa,kronecker,labels,lapply,%
        layout,lbeta,lchoose,lcm,legend,length,levels,lgamma,library,%
        licence,license,lines,list,lm,load,local,locator,log,log10,log1p,%
        log2,logical,loglin,lower,lowess,ls,lsfit,lsf,ls,machine,Machine,%
        mad,mahalanobis,make,link,margin,match,Math,matlines,mat,matplot,%
        matpoints,matrix,max,mean,median,memory,menu,merge,methods,min,%
        missing,Mod,mode,model,response,mosaicplot,mtext,mvfft,na,nan,%
        names,omit,nargs,nchar,ncol,NCOL,new,next,NextMethod,nextn,%
        nlevels,nlm,noquote,NotYetImplemented,NotYetUsed,nrow,NROW,null,%
        numeric,\%o\%,objects,offset,old,on,Ops,optim,optimise,optimize,%
        options,or,order,ordered,outer,package,packages,page,pairlist,%
        pairs,palette,panel,par,parent,parse,paste,path,pbeta,pbinom,%
        pcauchy,pchisq,pentagamma,persp,pexp,pf,pgamma,pgeom,phyper,pico,%
        pictex,piechart,Platform,plnorm,plogis,plot,pmatch,pmax,pmin,%
        pnbinom,pnchisq,pnorm,points,poisson,poly,polygon,polyroot,pos,%
        postscript,power,ppoints,ppois,predict,preplot,pretty,Primitive,%
        print,prmatrix,proc,prod,profile,proj,prompt,prop,provide,%
        psignrank,ps,pt,ptukey,punif,pweibull,pwilcox,q,qbeta,qbinom,%
        qcauchy,qchisq,qexp,qf,qgamma,qgeom,qhyper,qlnorm,qlogis,qnbinom,%
        qnchisq,qnorm,qpois,qqline,qqnorm,qqplot,qr,Q,qty,qy,qsignrank,%
        qt,qtukey,quantile,quasi,quit,qunif,quote,qweibull,qwilcox,%
        rainbow,range,rank,rbeta,rbind,rbinom,rcauchy,rchisq,Re,read,csv,%
        csv2,fwf,readline,socket,real,Recall,rect,reformulate,regexpr,%
        relevel,remove,rep,repeat,replace,replications,report,require,%
        resid,residuals,restart,return,rev,rexp,rf,rgamma,rgb,rgeom,R,%
        rhyper,rle,rlnorm,rlogis,rm,rnbinom,RNGkind,rnorm,round,row,%
        rownames,rowsum,rpois,rsignrank,rstandard,rstudent,rt,rug,runif,%
        rweibull,rwilcox,sample,sapply,save,scale,scan,scan,screen,sd,se,%
        search,searchpaths,segments,seq,sequence,setdiff,setequal,set,%
        setwd,show,sign,signif,sin,single,sinh,sink,solve,sort,source,%
        spline,splinefun,split,sqrt,stars,start,stat,stem,step,stop,%
        storage,strstrheight,stripplot,strsplit,structure,strwidth,sub,%
        subset,substitute,substr,substring,sum,summary,sunflowerplot,svd,%
        sweep,switch,symbol,symbols,symnum,sys,status,system,t,table,%
        tabulate,tan,tanh,tapply,tempfile,terms,terrain,tetragamma,text,%
        time,title,topo,trace,traceback,transform,tri,trigamma,trunc,try,%
        ts,tsp,typeof,unclass,undebug,undoc,union,unique,uniroot,unix,%
        unlink,unlist,unname,untrace,update,upper,url,UseMethod,var,%
        variable,vector,Version,vi,warning,warnings,weighted,weights,%
        which,while,window,write,\%x\%,x11,X11,xedit,xemacs,xinch,xor,%
        xpdrows,xy,xyinch,yinch,zapsmall,zip},%
    otherkeywords={!,!=,~,$,*,\%,\&,\%/\%,\%*\%,\%\%,<-,<<-},%
    alsoother={._$},%
    sensitive,%
    morecomment=[l]\#,%
    morestring=[d]",%
    morestring=[d]'
}%
\newlength{\twless}
\newlength{\lmarg}
\selectfont\color{Gray},  
    \let\c@ListingEnv\c@lstlisting
    \let\ftype@lstlisting\ftype@ListingEnv 
\renewcommand{\@biblabel}[1]{#1.}   
\def\oldcite{}
\let\oldcite=\bibcite
\def\bibcite{\stepcounter{citenum}\oldcite}
  \renewcommand{\blx@driver}[1]{}%
    \renewcommand{\blx@driver}[1]{}%
    \renewcommand{\blx@driver}[1]{}%
    \renewcommand{\blx@driver}[1]{}%
    \renewcommand{\blx@driver}[1]{}%
\theoremstyle{definition}
\newcolumntype{R}{>{\raggedleft\arraybackslash}X}
\newcolumntype{L}{>{\raggedright\arraybackslash}X}
\definecolor{codegreen}{rgb}{0,0.6,0}
\definecolor{codegray}{rgb}{0.5,0.5,0.5}
\definecolor{codepurple}{rgb}{0.58,0,0.82}
\definecolor{backcolour}{rgb}{0.95,0.95,0.92}
\lstdefinestyle{mystyle}{
    backgroundcolor=\color{backcolour},   
    commentstyle=\color{codegreen},
    keywordstyle=\color{magenta},
    numberstyle=\tiny\color{codegray},
    stringstyle=\color{codepurple},
    basicstyle=\footnotesize,
    breakatwhitespace=false,         
    breaklines=true,                 
    captionpos=b,                    
    keepspaces=true,                 
    numbers=left,                    
    numbersep=5pt,                  
    showspaces=false,                
    showstringspaces=false,
    showtabs=false,                  
    tabsize=2
}
\newcommand{\XOR}{\mathsf{XOR}}
\newcommand{\B}{\mathbb{B}}
\def\LH{{\sc local Hamiltonian}}
\def\2LH{{\sc $2$-local Hamiltonian}}
\def\5LH{{\sc $5$-local Hamiltonian}}
\def\AH{{\sc $2$-local ZX Hamiltonian}}
\def\SH{{\sc $2$-local ZZXX Hamiltonian}}
\def\RH{{\sc $2$-local real Hamiltonian}}
\def\RFLH{{\sc $5$-local real Hamiltonian}}
\def\clock{{\mathrm{clock}}}
\def\YES{{\sc Yes}}
\def\AND{{\sf AND}}
\def\OR{{\sf OR}}
\newcommand\myarrow{\mathrel{\stackrel{\makebox[0pt]{\mbox{\tiny -1}}}{\longrightarrow}}}
\newcommand{\QMA}{{\sf QMA}}
\newcommand{\BQP}{{\sf BQP}}
\newcommand{\SAT}{{\sc 3-SAT}}
\newcommand{\NP}{{\sf{NP}}}
\renewcommand{\P}{{\sf{P}}}
\def\LH{{\sc local Hamiltonian}}
\def\2LH{{\sc $2$-local Hamiltonian}}
\def\5LH{{\sc $5$-local Hamiltonian}}
\def\AH{{\sc $2$-local ZX Hamiltonian}}
\def\SH{{\sc $2$-local ZZXX Hamiltonian}}
\def\RH{{\sc $2$-local real Hamiltonian}}
\def\RFLH{{\sc $5$-local real Hamiltonian}}
\def\clock{{\mathrm{clock}}}
\def\YES{{\sc Yes}}
\newcommand{\openone}{\mathds{1}}
\newtheorem{theorem}{Theorem}
\newtheorem{lemma}{Lemma}
\newtheorem{example}{Example}
\newtheorem{remark}{Remark}
\newtheorem{definition}{Definition}
\newtheorem{proposition}{Proposition}
\newmdtheoremenv[outerlinewidth=2,leftmargin=40, rightmargin=20,backgroundcolor=green!7,
outerlinecolor=blue,innertopmargin=0pt,
splittopskip=\topskip,skipbelow=\baselineskip, skipabove=\baselineskip,ntheorem]{myexample}{Example}[section]
\newmdtheoremenv[outerlinewidth=2,leftmargin=40, rightmargin=20,backgroundcolor=white,
outerlinecolor=blue,innertopmargin=0pt,
splittopskip=\topskip,skipbelow=\baselineskip, skipabove=\baselineskip,ntheorem]{mydefinition}{Definition}[section]
\newmdtheoremenv[outerlinewidth=2,leftmargin=40, rightmargin=20,backgroundcolor=white,
outerlinecolor=blue,innertopmargin=0pt,
splittopskip=\topskip,skipbelow=\baselineskip, skipabove=\baselineskip,ntheorem]{myremark}{Remark}[section]
\newmdtheoremenv[outerlinewidth=2,leftmargin=40, rightmargin=20,backgroundcolor=white,
outerlinecolor=blue,innertopmargin=0pt,
splittopskip=\topskip,skipbelow=\baselineskip, skipabove=\baselineskip,ntheorem]{mytheorem}{Theorem}[section]
\newmdtheoremenv[outerlinewidth=2,
backgroundcolor=pink!20,outerlinecolor=blue,innertopmargin=0pt,
splittopskip=\topskip,skipbelow=\baselineskip, skipabove=\baselineskip,ntheorem]{myproblem}{Problem}
\newmdtheoremenv[outerlinewidth=2,
backgroundcolor=pink!20,outerlinecolor=blue,innertopmargin=0pt,
splittopskip=\topskip,skipbelow=\baselineskip, skipabove=\baselineskip,ntheorem]{exercise}{Exercise}
\def\ketbra#1#2{{\vert#1\rangle\!\langle#2\vert}}
\newcommand\bydef{\stackrel{\mathclap{\normalfont\mbox{\normalfont\tiny def}}}{=}}
\def\eye{\mathds{1}}
\title{{\bf \huge \sc {\bf On the Theory of Modern Quantum Algorithms}}\\
{\small Dissertation submitted for the degree of} \\ 
{\small \it Doctor of Physical and Mathematical Sciences} \\
{\small Mathematical Physics 01.01.03} \\~\\ 
Skolkovo Institute of Science and Technology 
}
\DeclareMathOperator{\spn}{span}
\DeclareMathOperator*{\argmin}{arg\,min}
 \let\leq\leqslant
 \let\geq\geqslant
 \let\oldvarepsilon\varepsilon
 \let\epsilon\oldvarepsilon
 \let\oldint\int
 \def\int{\oldint\limits}
\let\blx@rerun@biber\relax
\begin{document}

\begin{titlepage}
\begin{center}~\\~\\~\\
Autonomous non-profit organization for higher education\\ SKOLKOVO INSTITUTE OF SCIENCE AND TECHNOLOGY
\\[3\baselineskip] 
\hfill {\small {\it \copyright~manuscript author's signature}}
 \hfill\includegraphics[width=0.35\textwidth]{signature.png}  
\\[2\baselineskip] 

~
\\~\\~\\
{\Large \bf Jacob Daniel Biamonte}
\\~\\~\\
{\Large \sc \bf ON THE MATHEMATICAL STRUCTURE OF QUANTUM MODELS OF COMPUTATION BASED ON HAMILTONIAN MINIMISATION}
\\~\\~\\
{\small  01.01.03 - Mathematical Physics}
\\~\\~\\

{\small Dissertation submitted for the degree of} \\ 
{\small \itshape Doctor of Physical and Mathematical Sciences}
\\
\vfill \bf {Moscow --- 2021}

\end{center}
\end{titlepage}


\includepdf{title.pdf}

\newpage



\begingroup
\let\cleardoublepage\clearpage

\setcounter{page}{2}
\tableofcontents
\makeatletter
\makeatother
\newpage 
\endgroup

 

\chapter*{Acknowledgements}

This thesis is my act to participate in the Russian tradition of The Doktor Nauk degree (Doctor of Mathematical and Physical Sciences).  The degree has no academic equivalent in North America, as it is a post-doctoral degree.
It might be compared to the German Habilitation.  While the process and the thesis styles are different, in 1999 Russia and Germany signed a {\it Statement on Mutual Academic Recognition}. This statement included agreeing on the equivalence of the Russian Doctor of Science and the German Habilitation.\footnote{According to the International Standard Classification of Education, for purposes of international educational statistics, a {\it Doctor Nauk in Physical and Mathematical Sciences} is abbreviated as D.Sc.---see paragraph 262 International Standard Classification of Education (ISCED) UNESCO 2011.}  

The most significant portion of this thesis was completed in Moscow after joining the Skolkovo Institute of Science and Technology in 2017.  Around the time when I moved to Moscow, there was a transformation in the field of quantum information processing (see e.g.~this popular account we coauthored~\cite{Biamonte2019}).  The research topic of quantum algorithms began to rapidly leave the university research lab and become a subject newly explored by industry.  This spawned a new era of development: it lead to variational quantum algorithms. From a research point of view, the theory of these new algorithms contained large gaps.  So I decided to explore this emerging domain and to build a research group centered around the mathematical theory of variational quantum algorithms.  Moreover, we approached the topic with tools from ordered mathematical structures and the formal theory of computation as well as condensed matter physics.  

I thank Maxim Fedorov, Grigory Kabatyansky 
and Alexander Kuleshov in particular for their support in hiring me at Skoltech.  I further thank Alexander Kuleshov, Maxim Fedorov, Grigory Kabatyansky, Keith Stevenson, Maxim Fedorov, Lawrence Stein, Pavel Dorozhkin and Alexey Ponomarev for helping to shape the vision and supporting the groups development in various ways, culminating in what it has become today.  The research group turned into an official laboratory structure in 2019 and has grown to approach 20 people.  We are applying some traditional methods of mathematical physics to modern quantum science and technology.  Our focus is to develop the mathematical theory of variational quantum algorithms.  


Large parts of this thesis---including the universality proof of the variational model---represent independent research conducted alone.  While other parts represent successful collaborations, which are credited in the text.  I thankfully have had many strong collaborators over the years.  Here I have included for the most part, my own primary contributions of these joint works.

I thank  Richik Sengupta and Sergey Filippov who kindly offered their advise as I needed it many times, especially when trying to figure out Russian thesis regulations.  The Russian Quantum Center, courtesy in particular of Alex Fedorov and Ruslan Yunusov, kindly agreed to review my work through public presentation as did Sergey Kulik and Stanislav Straupe of M.~V.~Lomonosov Moscow State University Quantum Technology Center.  

I humbly tip my hat to the following readers.  These brave souls (listed alphabetically) found and reported typos, errors or omissions, improving this monograph for all future readers. In alphabetical order I thank: 
Soumik Adhikary,  
Ernesto Campos,  
Ignacio Cirac, 
Alex Fedorov,  
Alexandr Holevo, 
Andrey Kardashin,  
Vladimir Korepin,  
Johannes Jakob Meyer, 
Albert Nasibulin, 
Alex Pechen,  
Walter Pogosov, 
Sergey Rykovanov,  
Richik Sengupta, 
Antoine Tilloy, 
Beat Toedtli, 
Alexey Uvarov and 
Alexander Vlasov. 

I also gratefully acknowledge Aly Nasrallah and Przemys{\l}aw Scherwentke for assisting with \LaTeX~typesetting. I thank  Andrey Kardashin, Dina Fedotova and Ksenia Samburskaya for assisting in understanding the myriad of rules in both the document format and the defense process.  Last but certainly not least, I also thank my degree consultant, Alex Pechen, for his patience and careful advice as well as the official opponents and members of the committee for their time and consideration.

Yours sincerely, ~\hfil \\ \null\hfil

 \includegraphics[width=0.3\textwidth]{signature}\hfil \\ ~\hfil \null\hfil

Jacob Biamonte \hfill Moscow, 2021 

\chapter*{INTRODUCTION}\label{chap:intro}

The Turing machine is one of several abstract models of the computers we are accustomed to. Today's computers which we know and love---albeit smartphones or the mainframes behind the internet---are all built from billions of transistors.  While transistors utilize quantum mechanical effects (such as tunneling: in which an electron can both penetrate and bounce off an energy barrier concurrently), the composite operation of today's computers is purely deterministic or {\it classical}.   By classical, we mean {\it classical mechanics} which is exactly the  physics (a.k.a.~mechanics) we'd anticipate day to day in our lives.  The term {\it quantum mechanics} dates to 1925 in work~\cite{Born1925} by Born and Jordan (in German, {\it quanten mechanik}---without the space of course) and comprises the physics governing atomic systems. Quantum mechanics contains principles and rules that appear to contradict the classical mechanics we are so intuitively familiar with.  Such counter intuitive phenomena provide new possibilities to store and manipulate (quantum) information.  

Quantum computing dates back at least to 1979, when physicist Paul Benioff~\cite{Benioff1980} proposed a quantum mechanical model of the Turing machine. Richard Feynman~\cite{Fey82} and independently Yuri Manin\cite{manin} suggested that a quantum computer had the potential to simulate physical processes that a classical computer could not. Such ideas were further formulated and developed in the work of David Deutsch~\cite{deutsch1985quantum}---Deutsch formulated a quantum Turing machine and applied a sort of {\it anthropic principle} to the plausible computations allowed by the laws of physics.  Namely, what we now call the Church–Turing–Deutsch principle asserts that a universal (quantum) computing device can simulate any physical process.  Yet even the most elementary quantum systems appear impossible to fully emulate using classical computers.  Whereas quantum computers would readily emulate other quantum systems~\cite{Fey82,Lloyd1996}. 

Recent developments in quantum information processing have fostered a global research effort to understand and develop applications for noisy real-world quantum information processors (often called NISQ: Noisy Intermediate-Scale Quantum (NISQ).  Unlike traditional textbook quantum algorithms, quantum algorithms executed on NISQ devices operate in the presence of systematic and random errors.  In practice this limits the depth of the circuit that can be executed.  Experimental developments have lead to a novel utilitarian means of quantum computation enabled by an iterative classical-to-quantum feedback process called, variational quantum computation. 


We aim to present a consistent and general framework, which conceptually binds many of the tools used across contemporary quantum programming.  The unifying focus is on properties of ground states of Hamiltonians. Programming ground states is required in adiabatic quantum computation and other models of ground state annealing while Hamiltonian minimization is also central to physics and chemistry simulation algorithms that are widely anticipated future quantum computing applications.

The {\bf goal} is then simply stated.  We present a coherent view that develops mathematical structures and connects the core ideas across the areas of: 
\begin{enumerate}

    \item[(i)] Ground state and adiabatic quantum computation. 
    
    \item[(ii)] The quantum simulation of ground state properties of physical systems.
    
    \item[(iii)] The variational approach to effective Hamiltonian minimization. 
\end{enumerate}

Indeed, the variational model of quantum computation is stated by means of a Hamiltonian minimization problem that utilizes a classical-to-quantum feedback loop.  We further model and formalize this algorithmic process.

\section*{MAIN STATEMENTS DEFENDED} 

\begin{enumerate}
    \item The formulation of the Ising and quantum kernel problem statements and development of a mathematical apparatus to program parent Hamiltonian models with specific ground state properties.  
    
    \item The development of specific and improved $k$-body to $2$-body Hamiltonian reductions.  
    
    
    \item  Proof that the von Neumann entropy of stochastic propagators on a graph is subadditive. 
    
    \item Showing that (i) $|y_+\rangle=|0\rangle+\imath|1\rangle$, (ii, iii) cups and caps, (iv) Hadamard and (v) {\sf COPY} generate any Clifford tensor network and hence that the ZX tensor rewrite system admits a poly-time terminating rewrite sequence establishing the Gottesman–Knill theorem.
    
    \item The combinatorial quantum circuit area law bounds the maximum possible entanglement across any bipartition of qubits acted on by a quantum circuit comprised of local unitaries and CNOT gates. 
    
    \item Utilization of the parent Hamiltonian mathematical apparatus and gadgets to embed quantum and classical circuits into the low energy sector of Hamiltonians, thereby contributing mappings of {\bf MA}- and {\bf QMA}-hard problem instances to {\bf MA}- and {\bf QMA}-hard Hamiltonian ground state energy decision problems. 
    
    \item Utilization of the mathematical apparatus to embed quantum and classical circuits into the lowest energy state of Hamiltonians thereby mapping {\bf MA}- and {\bf QMA}-hard problem instances to {\bf MA}- and {\bf QMA}-hard Hamiltonian ground state energy decision problems. 
    
    
    \item Proving that physically relevant Hamiltonians---including the tunable Ising model with additional XX-interactions---can embed universal quantum computational resources for ground state quantum computation. 
    
    
    \item The development of the mathematical model to describe variational quantum computation and the establishment of the computational universality of the {\it variational model of quantum computation}.  
    
\end{enumerate}

\section*{CONSISTENCY OF RESULTS} 

This thesis considers qubits.  For short (non-error corrected) circuits this model is physically justified as follows (see the experimental summary in Table \ref{tab:qubitmodel}): 

\begin{enumerate}
    \item NISQ Era variational quantum algorithms consider a fixed error tolerance and tune a short quantum circuit to minimize an objective function. 
    
    \item Circuits with dozens of gates can now be realized with negligible accumulated total error: 
\end{enumerate}

\begin{table}
\begin{center}
\begin{tabular}{|p{3.cm}|p{3.cm}|p{2.cm}|p{3.cm}|p{2.cm}|p{1cm}|}
  \hline
    \textbf{Experiment} & \textbf{Organization} & {\bf Qubits}  & {\bf Ansatz} & \textbf{Depth}& \textbf{year}\\
      \hline
      {\sc Ising model} & Cornell/IBM & 20 & Alternating & 25 & 2021 \\\hline
      {\sc QAOA} & Google & 23 & Split operator & 4 & 2021 \\\hline
      {\sc High energy model} & MSU/SkT & 2 & Checkerboard  & 3 & 2021\\\hline
      {\sc Supremacy} & Google & 53 & HEA & 20 & 2019\\\hline
      {\sc Lattice model} & Innsbruck  & 20 & Split operator & 6 & 2019 \\\hline
      {\sc Chemistry} & IBM & 6 & HEA & 2 & 2017 \\\hline
      
\end{tabular}
\end{center}\caption{Justification of the qubit model in terms of ansatz demonstrations of specified depth/qubit counts.}\label{tab:qubitmodel}
\end{table}

The validity of the results are confirmed by consistency with prior art and rigorous mathematical proofs wherever appropriate.  Numerical experiments were sometimes also employed which reconfirm analytical findings. 

Results forming this dissertation date back several years and appeared in peer reviewed journal articles.  Several of these results now comprise parts of the accepted literature on the topic.  This includes work on Ising model embeddings, work on stochastic versus quantum walks, developing more general perturbation gadgets as well as results on using phase estimation for quantum simulation.  

This so-called {\it variational} approach to quantum computation was formally proven (in the noise free setting) to represent a universal model of quantum computation by this thesis.  This extended and built on several known results appearing in the related topic of Hamiltonian complexity theory.  Many recent studies have not quantified the number of terms needed in the penalty function to implement a variational algorithm.  We hence define a cardinality measure and quantify the number of Pauli terms in the sigma basis.  This is consistent with past findings but presents a new focus to quantify penalty functions.  

In addition, many studies have presented various penalty functions to illustrate that variational algorithms are capable of algorithmic tasks.  A universality proof shows that penalty functions in principle are more general.  This is again consistent with the state of the art.  The original published papers which present these results have become accepted parts of the literature, some over a decade old.

\section*{PRESENTATION OF THE RESULTS} 
 
Contents and results from this dissertation were presented by the author to peers as follows (talks entirely dedicated to the presentation of the DSc thesis are denoted with `[Thesis presentation]' preceding the title of the thesis): \\

\begin{enumerate}

\item {\bf [Thesis presentation] On the mathematical structure of quantum models of computation based on Hamiltonian minimisation} \\
I.E.~Tamm Theory Department, P.N.~Lebedev Institute of Physics, the Russian Academy of Sciences, Moscow, Russian Federation, 22 September 2021 

\item {\bf [Thesis presentation] On the mathematical structure of quantum models of computation based on Hamiltonian minimisation} \\
Laboratory of Quantum Optics and Quantum Information, Center for Advanced Studies, Peter the Great St.~Petersburg Polytechnic University, St. Petersburg, Russian Federation, 15 September 2021 
\item 
{\bf [Thesis presentation] On the mathematical structure of quantum models of computation based on Hamiltonian minimisation}\\
Department of Supercomputers and Quantum Informatics, The Faculty of Computational Mathematics and Cybernetics, Lomonosov Moscow State University, Moscow, Russian Federation, 14 September 2021  

\item {\bf [Thesis presentation] On the mathematical structure of quantum models of computation based on Hamiltonian minimisation} \\
Department of Higher Mathematics, Moscow Institute of Physics and Technology, Moscow, Russian Federation, 8 September 2021  

\item {\bf [Thesis presentation] On the mathematical structure of quantum models of computation based on Hamiltonian minimisation} \\
Skolkovo Institute of Science and Technology, Moscow, Russian Federation, 7 September 2021 

\item {\bf [Thesis presentation] On the mathematical structure of quantum models of computation based on Hamiltonian minimisation} \\
Kazan Quantum Center, Kazan National Research Technical University named after A.N. Tupolev, Kazan, Russian Federation, 4 September 2021

\item {\bf [Thesis presentation] On the mathematical structure of quantum models of computation based on Hamiltonian minimisation} \\
Max Planck Institute of Quantum Optics, Hans-Kopfermann-Str.~1
85748 Garching, 21 July 2021

\item {\bf On variational quantum computation}\\
(General Institutional Seminar), P.N.~Lebedev Institute of Physics, the Russian Academy of Sciences, Moscow, Russian Federation, 17 March 2021 

\item {\bf [Thesis presentation] On quantum computation by variation of a quantum circuits parameters to minimise an effective Hamiltonian iteratively realised by local measurements}\\
Department of Mathematical Methods for Quantum Technologies, Steklov Mathematical Institute of the Russian Academy of Sciences,  Moscow, Russian Federation, 25 March 2021

\item {\bf [Thesis presentation] On the mathematical structure of quantum models of computation based on Hamiltonian minimisation}\\
Skolkovo Institute of Science and Technology, Skolkovo, Russian Federation, 25 September 2020 

\item {\bf [Thesis presentation] On the mathematical structure of quantum models of computation based on Hamiltonian minimisation} \\
The Russian Quantum Center, Skolkovo, Russian Federation, 26 Aug 2020

\item {\bf [Thesis presentation] On the mathematical structure of quantum models of computation based on Hamiltonian minimisation}\\
M.V.~Lomonosov Moscow State University Quantum Technologies Center, Moscow, Russian Federation, 14 July 2020

\item {\bf Variational Models of Quantum Computation}\\ Episode IX, Google Research Series on Quantum Computing \\
Google Poland, Warsaw Poland, 10 October 2019 
    
    \item {\bf A Universal Model of Variational Quantum Computation}\\
Quantum Machine Learning and Data Analytics Workshop \\
Purdue University, Discovery Park, West Lafayette Indiana \\ United States, September 2019 

\item {\bf Quantum Enhanced Machine Learning} \\ 
Physics Challenges in Machine Learning for Network Science \\
Queen Mary University of London\\
London, United Kingdom, September 2019 

\item {\bf Quantum Machine Learning for Quantum Simulation}\\
Machine Learning for Quantum Matter \\ 
Nodita, Stockholm, Sweden, August 2019 

\item {\bf Recent Results in the Theory of Variational Quantum Computation}\\ 
the 5th International Conference on Quantum Technologies \\
The Russian Quantum Center, Moscow Russia 2019 

\item {\bf Variational Quantum Computation in Photonics}\\ 
The 28th Annual International Laser Physics Workshop \\
Gyeongju, South Korea, July 2019

\item {\bf Trends in Variational Quantum Algorithms}\\
Overview style talk given (multiple times) at 
\begin{enumerate}
    \item Riken Institute (Japan)
    \item NTT laboratories (Tokyo, Japan)
    \item CIIRC Institute (Prague)
\end{enumerate}

\item {\bf Quantum Machine Learning Matrix Product States}\\ 
Keynote talk at the Workshop on Quantum Information\\
Harvard, USA, April 23-24, 2018

\item {\bf Quantum Complex Networks}\\
Keynote Lighting Talk at International school and conference on network science (NetSci) \\
Paris, France 2018

\end{enumerate}

\section*{PUBLICATIONS}
The author has sixty two papers listed in Scopus [September 2021]. The thesis compiles results from twenty primary research articles, one book and two review articles. A list of twenty publications is given at the end of this synopsis.  

\section*{AUTHOR CONTRIBUTION}
The author has had many successful collaborations.  The main results of the dissertation were published in small teams or as single author manuscripts.  Results derived with collaborators are clearly indicated as such, either in the body of the text or in reference to the result/theorem.  The focus has been on the authors own contribution to these joint works.  

\section*{DISSERTATION STRUCTURE}

The dissertation consists of an introduction, six chapters, a conclusion, a list of symbols, a list of abbreviations, a glossary of terms, a bibliography, a list of figures, a list of tables and finally an alphabetical index.  

\section*{DISSERTATION CONTENTS} 


To present the most central portions of the theory underpinning contemporary quantum algorithms, we focus on Hamiltonian ground states.  The rudimentary though still non-trivial starting point is understanding how to program ground states of Ising type models.  

We state the following properties of quantum theory stated in terms of quantum bits (qubits).  


\begin{definition}[Complex Euclidean space] 
\begin{equation}
    V_n=[\mathbb{C}^2]^{\otimes n} \cong [\mathbb{C}]^{2^n}
\end{equation}
\end{definition}

We will equivalently write $[\mathbb{C}^2]^{\otimes n}$, $\mathbb{C}_2^{\otimes n}$. 

\begin{remark}
${\mathscr L}(\mathbb{C}_2^{\otimes n})$ denotes the space of linear maps from $\mathbb{C}_2^{\otimes n}$ to itself.
\end{remark}

The dissertation considers the following linear maps:
\begin{remark}[Linear qubit maps]
~
\begin{enumerate}
    \item States: $\psi \in V_n$ 
    \item Effects: $\psi^\dagger \in V_n^\star = (V_n \rightarrow \mathbb{C})$. 
    \item Hamiltonians $A$ in  $\text{herm}_{\mathbb{C}}(2^n) \equiv \lbrace A \in \mathcal{L}(V_n) \: | \: A = A^\dagger \rbrace$. 
    \item Propagators $U$ in $\text{\bf U}_{\mathbb{C}}(2^n)\equiv
            ~\{U\in\mathcal{L}(V_n) ~|~U^\dagger U=\eye\}$. 
\end{enumerate}
\end{remark}

\begin{remark}[Inner product]
The standard inner product is used:
        \begin{equation}
        \bra{\cdot}\cdot\rangle: V_n^* \otimes V_n \rightarrow \mathbb{C},~~(\phi,\psi)\rightarrow \bra{\phi}\psi\rangle =\sum_{j}\bar{\phi}_j \psi^j \in \mathbb{C}. \notag
        \end{equation}
States/effects are unit $\ell_2$ vectors.
\end{remark}

\begin{remark}[Computational basis] 
The dissertation tends to fix the so called, computational basis:
\begin{enumerate}
\item $n$-qubit basis: $\mathcal{B}_n = \{\ket{0},\ket{1}\}^{\otimes n}$ with $2^n$ orthonormal basis vectors
\item Single qubit basis: $\ket{0},\ket{1}\in \mathcal{B}_1$ 
\item $\text{span}_{\mathbb C}\{\mathcal{B}_n\} \cong V_n = [{\mathbb C}^2]^{\otimes n}$
\end{enumerate}
\end{remark}

\begin{remark}[Properties of Sigma matrices] This thesis makes use of the following properties of Sigma matrices: 
\begin{enumerate}

    \item $\sigma^l \sigma^m=\imath \epsilon_{lmn}\sigma^n+\delta_{lm}\eye$ 
    
    \item $R_{\bm n}(\theta) = e^{-\imath\theta(\bm n.\bm \sigma)} = \cos\theta-\imath(\bm n.\bm\sigma)\sin\theta$,
where $\bm n.\bm \sigma\equiv n_1\sigma^1+n_2\sigma^2+n_3\sigma^3$ 
    

    \item  $\text{herm}_{\mathbb C}(2^n) = \text{span}_\mathbb{R}\left\{ \bigotimes_{l=1}^{n} \sigma^{\alpha_{l}}_l  \mid \alpha_{l} = 0,1,2,3 \right\}$ 
    
    \item $\sigma^a_{k} \equiv \eye_{1}\otimes\ldots\otimes \eye_{k-1} \otimes \sigma^a_{k}\otimes \eye_{k+1}\otimes\ldots\otimes \eye_{n}$

    
\end{enumerate}
\end{remark}

\begin{definition}[Pauli group]  $\mathbf{P}_n=\left\{ e^{\imath\theta\pi/2} \bigotimes_{j=1}^{n} \sigma^{\alpha_{j}}  \mid \theta, \alpha_{j} = 0,1,2,3 \right\}$. 
\end{definition}

\begin{remark}
$\text{span}\{\Re(\  \mathbf{P}_n)\}=\text{span}_\mathbb{R}\left\{ \bigotimes_{l=1}^{n} \sigma^{\alpha_{l}}_l  \mid \alpha_{l} = 0,1,2,3 \right\}$. 
\end{remark}

\begin{definition}[Clifford group] $\mathbf{C}_n=\{C\in {\bf U}(2^n)\mid C\mathbf{P}_nC^\dagger = \mathbf{P}_n\}$. 
\end{definition}

\begin{remark}
For $C\in \mathbf{C}_n$, 
\begin{equation}
    C \left(\bigotimes_{j=1}^{n} \sigma^{\alpha_{j}} \right) C^\dagger = \pm \bigotimes_{j=1}^{n} \sigma^{\gamma_{j}}
\end{equation}
for $\alpha_{j}, \gamma_j \in \lbrace 0,1,2,3 \rbrace$. 
\end{remark}

\begin{remark}[The expected value of a Hamiltonian relative a state]
The dissertation will consider the expected value as:
        \begin{equation}
            (A,\psi, \psi^\dagger)\rightarrow \bra{\psi}A\ket{\psi}=\sum_{l,m} A_{l,m}\bar{\psi}_m\psi_l\in \mathbb{C} \notag
        \end{equation} 
        for $A\in\text{herm}_{\mathbb{C}}(2^n)$. 
\end{remark}

The dissertation works with Hamiltonian operators.  The simplest case is the generalized Ising model.  

\begin{remark}[Generalized Ising model]
 A generalized Ising model is an energy function of a symmetric graph $G=(E, V)$.  The energy (Hamiltonian) function is given as: 
\begin{equation}
{H}_{\text{Ising}}=\sum_{j\in V}h_js_j+\frac{1}{2}\sum_{l,m\in E}J_{lm}s_ls_m. 
\end{equation}
where $s_j\in \{\pm 1\}$
\end{remark}

The dissertation relies on connections between problems in mathematical physics and the theory of complexity. 

\begin{remark}
We assume all numbers are defined to some fixed but arbitrary finite precision to avoid pathologies.  
\end{remark}

\begin{definition}[The class \textbf{NP}]
 A problem class $\Gamma$ is said to be inside \textbf{NP} if candidate solutions to instances $\omega\in\Gamma$ can be verified in time $\mathcal{O}(\text{poly}(|\omega|))$.
\end{definition}

The concept of minimisation problems where the inputs are {\it easy} to evaluate is one of the concepts motivating the dissertation.  For example: one can determine the energy of a given spin configuration with respect to the following Hamiltonian using an algorithm that is polynomial in the number of Hamiltonian terms/size of the input. 
\begin{equation}
{H}_{\text{Ising}}=\sum_{i}h_is_i+\sum_{i,j}J_{ij}s_is_j. 
\end{equation}

\begin{remark}
The minimisation of generalized Ising Hamiltonians is {\bf NP}-hard.
\end{remark}

\begin{definition}
 A problem is {\bf NP}-hard if all problems inside {\bf NP} can be reduced to it (Karp reduction).
\end{definition}

\begin{definition}
 A problem is {\bf NP}-complete when it is in {\bf NP} and also {\bf NP}-hard.
\end{definition} 

\begin{definition}
A language $L \in$ \textbf{MA}[a,b] if there exists a probabilistic polynomial time verifier $V$, such that:
\begin{enumerate}
    \item $\forall x \in L ~~~ \exists y: ~~ \vert y \vert = poly(\vert x \vert), P(V(x, y) = 1) \geq a $
    
    \item $\forall x \not\in L ~~~ \forall y: ~~ \vert y \vert =poly(\vert x \vert), P(V(x, y) = 1) \leq b$
\end{enumerate}
\end{definition}

\begin{remark}~
\begin{enumerate}
    \item The numbers $a,b \in [0,1]$ are such that $a-b \geq poly(\vert x \vert^{-1})$
    \item One would consider instance $x$ to be the description of a probabilistic circuit taking input $y$ and outputting $V(x,y) \in [0,1]$
    \item \textbf{NP} = \textbf{MA}$[1,0]$
\end{enumerate}
\end{remark}

\begin{definition}
A language $L \in$ \textbf{QMA}$[a,b]$ if there exists a polynomial time quantum verifier $V$ such that:
\begin{enumerate}
    \item $\forall x \in L ~~~ \exists \ket{\xi}\in [\mathbb{C}^2]^{ \otimes \text{poly}(|x|)}: P(V(x,\ket{\xi}) = 1) \geq a $
    \item $\forall x \not\in L ~~~ \forall  \ket{\xi}\in [\mathbb{C}^2]^{ \otimes \text{poly}(|x|)} ~~ P(V(x, \ket{\xi}) = 1) \leq b$
\end{enumerate}
\end{definition}

\begin{remark}~

\begin{enumerate}
    \item The numbers $a,b \in [0,1]$ are such that $a-b \geq \text{poly}(\vert x \vert^{-1})$
    \item One would consider instance $x$ to be the description of a quantum circuit taking input state $\ket{\xi}$ and outputting on the first qubit $V(x,y) \in [0,1]$
    \item It is assumed that the verifier has access to a slack register initially in the state  $\ket{0}^{{ \otimes \text{poly}(\vert x \vert)}}$
\end{enumerate}
\end{remark}

\subsection*{Chapter 1} 

The dissertation begins by recalling several established results related to programming the ground states of generalised Ising systems.  This presents and builds on my own work as well as the work of others---see the dissertation for citations.  

The first chapter begins by considering the relationship between qubit quantum states and Ising penalty functions.  

The dissertation begins by defining the field extension: 
\begin{equation}
    \mathbb{C}[x_1, x_2, \dots, x_n] \Big/ x_1, x_2, \dots, x_n \in \{0,1\} 
\end{equation}
where $x_1, x_2, \dots, x_n \in \{0,1\}$ the quotient constraint is equivalent to $x_i x_i = x_i$ (idempotence). We arrive at the ring of (qubit) polynomials of type: 
\begin{equation}
    \{0, 1\}^n \rightarrow \mathbb{C} 
\end{equation}
by means of the following mapping
\begin{equation}
    f({\bf x}) = \sum_{I\in \{0, 1\}^n} a_I {\bf x}^I 
\end{equation}
where 
\begin{equation}
    {\bf x}^I \bydef (x_1)^{i_1} (x_2)^{i_2} \cdots (x_n)^{i_n}
\end{equation}
and we abuse notation as 
$$
(x)^0 \bydef (1-x) 
$$
with $x^1 = x$. 

It goes on to state the following propositions. 
\begin{proposition}[Biamonte (2008)]\label{prop:grade1}
The ring $\mathbb{C}[x_1, x_2, \dots, x_n] \Big/ \forall i, x_i^2 = x_i$ is graded as 
\begin{equation}
    \mathbb{C} \oplus \mathbb{C}[x_1]\oplus \cdots\oplus \mathbb{C}[x_n] \oplus \mathbb{C}[x_1, x_2] \oplus \cdots \oplus \mathbb{C}[x_{n-1}, x_n] \oplus \cdots \oplus \mathbb{C}[x_1, x_2, \dots, x_n] 
\end{equation}
where the quotients are omitted for brevity of notation.  
\end{proposition}

We call an expansion canonical when it is unique up to labeling variables. 

\begin{proposition}
The expansion 
\begin{equation}
    f({\bf x}) = a_0 + \sum a_i x_i + \sum a_{ij}x_i x_j + \cdots + \sum a_{ij\dots n}x_i x_j \dots x_n 
\end{equation}
is canonical. 
\end{proposition}

More generally, the early chapter presents the following: 

\begin{lemma}\label{lemma:iso1} 
The follow isomorphisms hold. 
\begin{equation}
    \mathbb{C}[x_1, x_2, \dots, x_n] \Big/ \forall i, x_i^2 = x_i \simeq \mathbb{C}_2^{\otimes n} \simeq \text{diag}\text{Mat}_{\mathbb{C}}(2^n) 
\end{equation}
\end{lemma}

By considering the real valued restriction from Proposition \ref{prop:grade1} and hence Lemma \ref{lemma:iso1}, this concept formally connects pseudo Boolean and Ising minimization problems: 

\begin{proposition}[Operator embedding of Pseudo Boolean forms]
Any Pseudo Boolean function 
\begin{equation}
    f({\bf x})  = \sum_I a_I {\bf x}^I
\end{equation}
gives rise to an operator embedding 
\begin{equation}
    [f] = \sum_I a_I \ketbra{I}{I}  \bydef {\hat f}
\end{equation}
by Lemma \ref{lemma:iso1}.  The minimisation problems are evidently related as:
\begin{equation}
    \min_{x\in \{0, 1\}^n}f(x) = x', 
\end{equation}
then 
\begin{equation}
    \min_{\psi \in {\mathcal A}} \bra{\psi}{\hat f}\ket{\psi}  = \bra{x'}{\hat f}\ket{x'} 
\end{equation}
for the appropriate vector space ${\mathcal A}$. 
\end{proposition}

The dissertation then details the practical codomain extension of Karnaugh maps.  This is used to derive penalty functions for logical operations.  In particular, a deductive method is presented based on Karnaugh maps to derive the following penalty functions. The method to derive these appears novel whereas various penalty functions exist in the literature. 

\begin{theorem}
The following penalty functions embed the logical product $-x_1x_2x_3$ into their lowest energy sector as:  
\begin{equation}
-x_1x_2x_3=\min_{z\in\{0,1\}}z(2-x_1-x_2-x_3), 
\end{equation}
and  
\begin{equation}
-x_1x_2x_3=\min_{z\in\{0,1\}}z(-x_1+x_2+x_3)-x_1x_2-x_1x_3+x_1. 
\end{equation}
\end{theorem}

\begin{theorem}[Boolean function embedding, Biamonte (2008)]\label{thm:booleantoH1}
Any Boolean function $f(x_1, x_2, \dots, x_n)$ expressed over the basis $\{\vee, \wedge, \neg\}$ embeds into the spectrum of a Hermitian operator formed by the linear extension of $\{P_0, P_1, \eye\}$ by means of the following maps \eqref{eqn:m11} and \eqref{eqn:m21}. %
\begin{align}\label{eqn:m11}
&\wedge \longrightarrow \otimes 
\\
\label{eqn:m21}
&\vee \longrightarrow + 
\end{align}
For every (positive polarity, a.k.a.~non-negated) Boolean variable $x_j$ we apply 
\begin{equation}\label{eqn:m31} 
x_j \longrightarrow P_1^j. 
\end{equation}
For negated variable $\neg x_j$  we apply 
\begin{equation}\label{eqn:m41}
\neg x_j \longrightarrow P_0^j. 
\end{equation}
In both cases \eqref{eqn:m31} and \eqref{eqn:m41},  $1\leq j\leq n$ becomes a spin label index which $P^j$ acts on. Moreover 
the above mapping induces an operator $\mathcal{H}$ such that 
\begin{equation}
    \mathcal{H}\ket{{\bf x}} = f({\bf x}) \ket{{\bf x}}
\end{equation}
for Boolean function $f({\bf x})$ and bit string ${\bf x}$. 
\end{theorem}

\begin{theorem}[Kernel embedding]
A Boolean function $f(x)$ embeds into the kernel of a non-negative Ising penalty function by applying the map from Theorem \ref{thm:booleantoH1} to the function $g(x, f(x)) = 0$, $g(x, 1-f(x)) = 1$. 
\end{theorem}

\begin{remark}
The condition $g(x, 1-f(x)) = 1$ can readily be modified to $g(x, 1-f(x)) \geq 1$ leaving the operators constructed by Theorem \ref{thm:booleantoH1} non-negative with identical kernals. 
\end{remark}

\begin{definition} The set of all two-body Ising Hamiltonians on $n$ spins is defined as: 
$\begin{aligned}[t]
  \Omega_n &= \{a_0+a_1x_1+a_2x_2+\dots +a_{12}x_1x_2 + a_{13}x_1x_3+\dots\\
    &a_{n-1,n}x_{n-1}x_n| \forall j, k, a_{jk}\in [-l, l]\subset \mathbb{R}\}.
\end{aligned}$
\end{definition}

\begin{proposition}
$\nexists H\in \Omega_3$ $|$ $\text{Ker}\{H\} = \text{span}\{x, y, z\in \mathbb{B}|z = x\oplus y\}$. 
\end{proposition}

We will also show that the orbits of embedded functions in $\text{Ker}\{H\}$ separate under conjugation of $H$ by $\sigma_x$ into equivalency classes:  {(\sf AND $\sim$ OR $\sim$ NAND $\sim$ NOR) $\in \Omega_3$} and  {(\sf XOR $\sim$ EQV) $\in \Omega_4$}. 

\subsection*{Chapter 2} 

The second chapter presents a detailed mathematical (structural) comparison between quantum and stochastic mechanics.  Table \ref{tab:cvspvsqbits1} is presented. Then the contents of Table 2 are developed. 

\begin{table}
    \centering
    \begin{tabular}{m{0.15\textwidth} m{0.15\textwidth} m{0.30\textwidth} m{0.30\textwidth}}
    & bits & probabilistic bits & qubits 
    \end{tabular}
    \noindent
    \begin{tabular}{m{0.15\textwidth} |m{0.15\textwidth}| m{0.30\textwidth}| m{0.30\textwidth}}
    \hline
    state (single unit) & bit $\in \lbrace 0,1 \rbrace $ & real vector \newline $a, b \in \mathbb{R}_{+}$ \hfill $a+b=1$ \newline $\Vec{p} = a\Vec{0} + b\Vec{1}$ or $a \ket{0} + b\ket{1}$  & complex vector \newline $\alpha , \beta \in \mathbb{C}$ \hfill $\abs{\alpha}^{2} + \abs{\beta}^{2} = 1$ \newline 
    $\Vec{\psi} = \alpha\Vec{0} + \beta\Vec{1}$ or $\alpha \ket{0} + \beta \ket{1}$
    \\
    \hline
    
    state (multi-unit) & bitstring \newline $x \in \lbrace 0,1 \rbrace^{n}$ & prob.distribution (stochastic vector) \newline $\Vec{p} = \sum_{x\in\{0, 1\}^n} a_x \ket{x} \in [\mathbb{R}^2_+]^{\otimes n}$ & wavefunction (complex vector) \newline $\Vec{\psi} =\sum_{x\in\{0, 1\}^n} \alpha_x \ket{x} \in [\mathbb{C}^2]^{\otimes n}$
    \\
    \hline
    
    operations & Boolean logic & stochastic matrices \newline $\sum_{j} P_{ij} =1, $ $P_{ij}\geq 0$  & unitary matrices \newline $U^{\dagger}U = \bf{1}$ 
    \\
    \hline
    
    component ops & Boolean gates & tensor product of matrices & tensor product of matrices
    \\
    \hline
    
    \end{tabular}
    \caption{Summary of deterministic, probabilistic and quantum bits. We use the standard notation that $\mathbb{R}^2_+$ denotes the two-dimensional real vector space with non-negative entries. Likewise, $\mathbb{C}^2$ is the two-dimensional complex vector space.  The space of $n$ pbits,  $n$ qubits are respectively given by the tensor product of spaces, $[\mathbb{R}^2_+]^{\otimes n}$ and $[\mathbb{C}^2]^{\otimes n}$. }
    \label{tab:cvspvsqbits1}
    
\end{table}

\begin{table}[htbp!]
\begin{center}
\renewcommand{\arraystretch}{1.5}
\resizebox{0.75\textwidth}{!}{\begin{minipage}{\textwidth}
\begin{tabular}{ p{3cm}|p{5cm}|p{5cm} }
         & {\bf quantum mechanics} & {\bf stochastic mechanics} \\\hline 
  {\bf state}  
& vector $ \psi \in \mathbb{C}^n$ with
$$ \sum_i |\psi_i|^2 = 1 $$
  &
vector $\psi \in \mathbb{R}^n$ with $$ \sum_i \psi_i = 1 $$ 
and we typically insist that,  $$ \psi_i \ge 0 $$ 
\\\hline 
{\bf observable} & $n \times n$ matrix ${\mathcal O}$ with 
$$  {\mathcal O}^\dagger = {\mathcal O}$$
where $({\mathcal O}^\dagger)_{i j} \bydef \overline{{\mathcal O}}_{j i}$
  & vector ${\mathcal O} \in \mathbb{R}^n$  \\\hline 
{\bf expected value} &
$$ \langle \psi |{\mathcal O}| \psi \rangle \bydef \sum_{i,j} \overline{\psi}_i {\mathcal O}_{i j} \psi_j $$
&  $$ \langle {\mathcal O} \psi \rangle \bydef \sum_i {\mathcal O}_i \psi_i $$ 
\\\hline
  {\bf symmetry}\break (linear map sending states to states) & unitary $n \times n$ matrix: $$ U U^\dagger = U^\dagger U = \eye $$
 & stochastic $n \times n$ matrix: $$ \sum_i U_{i j} = 1 , \quad U_{i j} \ge 0 $$
\\\hline  
  {\bf symmetry \break generator} & self-adjoint $n \times n$ matrix: $${\mathcal H}={\mathcal H}^\dagger$$ 
 & infinitesimal stochastic $n \times n$ \break matrix:
$$ \sum_i {\mathcal H}_{i j}=0 , \quad  i\neq j \;$$ 
\newline $$\Rightarrow \; {\mathcal H}_{i j} \le 0 $$ 
\\\hline
{\bf symmetries from symmetry \break generators} &
$$ U(t) = \exp(-\imath t{\mathcal H}) $$ &
$$ U(t) = \exp(-t{\mathcal H}) $$
\\\hline
{\bf equation of \hfill \break motion} & $$\imath \frac{d}{dt} \psi(t) = {\mathcal H} \psi(t)$$ with solution $$\psi(t) = \exp(-\imath t{\mathcal H})\psi(0)$$ & $$\frac{d}{dt} \psi(t) = -{\mathcal H} \psi(t)$$ with solution $$\psi(t) = \exp(-t{\mathcal H})\psi(0)$$ 
 \end{tabular} 
\end{minipage} }
\caption{Summary of quantum versus statistical mechanics.}
\end{center}
\end{table}

\begin{remark}
Every finite dimensional quantum or stochastic process can be viewed as a (spinless single particle) walk on a graph given by the support of the corresponding time propagator. 
\end{remark}

For the purpose of comparison, the following definitions are all given in the dissertation.  

\begin{remark}[Summary of stochastic versus quantum walks]
$G$ is a simple graph.  Labeling the nodes of $G$ lifts to specify:
\begin{enumerate}
    \item $A$ the adjacency matrix (generator of a quantum walk). 
    \item $D$ the diagonal matrix of the degrees. 
    \item ${\mathcal L}$ the symmetric Laplacian (generator of stochastic and quantum walks), which when normalized by $D$ returns both: 
    \item[3.1] $S$ the generator of the uniform escape stochastic walk and
    \item[3.2] $Q$ the quantum walk generator to which ${\mathcal L}$ is similar. 
\end{enumerate}
\end{remark}

Several results are derived, leading to the subadditivity of entropy of stochastic generators: 

\begin{remark}
A simple undirected graph with edges weighted by real numbers gives rise to a {\it generalized symmetric adjacency matrix}.  For edges labeled $l$ and $m$ weighted by $w \in \mathbb{R}$, the $l$-$m^{\textup th}$ entry of the corresponding adjacency matrix is $w$.
\end{remark}

\begin{definition}
A {\it generalized Laplacian} arises as 
\begin{equation}
    \mathcal{L} = \mathcal{D} - \mathcal{A} 
\end{equation}
where $A$ is a generalized symmetric adjacency matrix and $D$ stores on its diagonal entries the sums of the corresponding rows of $A$. 
\end{definition}

\begin{theorem}[Biamonte-DeDomenico 2016]
	Given two generalized Laplacians and their sum  $\mathcal{L}_C=\mathcal{L}_A+\mathcal{L}_B$, and corresponding Gibbs state density matrices $ \mathcal{\rho_{C}} = e^{\beta(\mathcal{L_{A}+L_{B}})}/\mathcal{Z} $, the von Neumann entropy $S(\rho) = Tr\{\rho \ln_2 \rho\}$ is subadditive as, 
	\begin{equation}
	S(\mathbf{\rho_{C}}) \leq S(\mathbf{\rho_{A}}) + S(\mathbf{\rho_{B}}). 
	\end{equation}
\end{theorem}

\begin{remark}
We adopt the notation that $ S(\rho_A) \equiv S_A $, $ S(\rho_B) \equiv S_B $, etc.
\end{remark}

The second chapter concludes by presenting several methods to find minimal graph properties on a quantum processor.  

\section*{Chapter 3} 

Techniques from the theory of tensor networks can apply to quantum circuits.  In chapter 3 the following theorem on generating families of tensor networks is proven. 

\begin{theorem}[Minimal Stabilizer Tensor Generators]
The following generating tensors are sufficient to simulate any stabilizer quantum circuit: \begin{enumerate}
    \item[(a)] a vector $\ket{t}\bydef \ket{0}+\imath \ket{1}$,
    \item[(b)] the Hadamard gate and 
    \item[(c)] the {\sf XOR}- and {\sf COPY} tensors and 
    \item[(d)] a covector $\bra{+}\bydef \bra{0}+\bra{1}$. 
\end{enumerate}
\end{theorem}

\begin{remark}
The Gottesman–Knill theorem states that stabilizer circuits---circuits that only consist of gates from the normalizer of the qubit Pauli group, a.k.a.~Clifford group---can be simulated in polynomial time on a probabilistic classical computer. 
\end{remark}

The dissertation constructs a sequence of graphical rewrites to establish this theorem by algebraic properties of tensor contraction, namely: 

\begin{theorem}[Graphical Proof of the Gottesman--Knill Theorem] 
For $n$-qubits acted on by $L$ Clifford gates, there exists a confluent sequence of rewrites, that establishes the Gottesman--Knill theorem in ${\mathcal O}(\text{poly}(n, L))$ steps. 
\end{theorem}

\subsection*{Chapter 4}
We then consider the minimisation of Hamiltonians by parameterised quantum circuits. This provides an illustrative connection between computational and physical complexity, stated and defined in the early chapter step wise. The variational model contains the following ingredients which will be further defined: 
\begin{enumerate}
    \item {\bf States}. A vector of real parameters $\boldsymbol{\theta}$ sets a circuit to produce $\ket{\psi ( \boldsymbol{\theta} )}$. 
    
    \item {\bf Measurements}. Expected values of a Pauli strings, $\bigotimes_{j=1}^{n} \sigma_j^{\alpha_{j}}$ for $\alpha_{j} \in \lbrace 0,1,2,3 \rbrace$ can be computed for each $\ket{\psi ( \boldsymbol{\theta} )}$. 
    
    \item {\bf Compute cost function}. A cost function defined by a weighted sum of expected values is computed for each $\ket{\psi ( \boldsymbol{\theta} )}$.  
    
    \item {\bf Outer-loop optimization}. Classical optimization routines update parameters $\boldsymbol{\theta}\rightarrow \boldsymbol{\theta}_\star$. 
\end{enumerate}

\begin{definition}[Variational Statespace---Biamonte (2021)]
The variational statespace of a $p$-parameterized $n$-qubit state preparation process is the union of $\ket{\psi({\boldsymbol \theta})}$ over all possible assignments of real numbers ${\boldsymbol \theta}$:
\begin{equation}
   \Gamma = \bigcup_{\boldsymbol{\theta}\in (0, 2 \pi]^{\times p}}\ket{\psi(\boldsymbol{\theta})}. 
\end{equation}
\end{definition} 

\begin{definition}[Variational Sequence] 
A variational sequence specifies parameters to prepare a state in a variational statespace.  It can be given by defining a specific sequence of gates or by specifying control parameter values. 
\end{definition}

\begin{definition}[Variational principle]
 A {\it variational principle} is a problem specific reduction to that of finding extrema of an objective function.   Variational quantum computation considers the normalized minimization:
\begin{equation}
    \min_{\ket{\psi(\boldsymbol{\theta})}\in \Gamma\subset V_n}\bra{\psi(\boldsymbol{\theta})} H\ket{\psi(\boldsymbol{\theta})} \geq \min_{\ket{\psi} \in V_n}\bra{\psi} H\ket{\psi}.
\end{equation}
\end{definition}

\begin{remark}
Alternative NISQ approaches might minimise the variance 
\begin{equation}
\min_{}(\langle\mathcal{H}^2\rangle - \langle\mathcal{H}\rangle ^2) \geq 0 
\end{equation}
which vanishes if and only if $\ket{\psi}$ is an eigenstate of $\mathcal H$.
\end{remark}

Cost function implementation proceeds by applying the fact that an expected value of a sum is a sum of expected values.   
\begin{equation} 
\bra{\psi}{{\mathcal H}}\ket{\psi} = \bra{\psi}{\sum_k h_k\bigotimes_{j=1}^{n} \sigma_j^{\alpha_{j}(k)} }\ket{\psi} = \sum_k h_k \bra{\psi}{\bigotimes_{j=1}^{n} \sigma_j^{\alpha_{j}(k)}}\ket{\psi}
\end{equation}
where $h_k$ is a real number and $\bigotimes_{j=1}^{n} \sigma_j^{\alpha_{j}(k)}$ is a Pauli string for $\alpha_{j} \in \lbrace 0,1,2,3 \rbrace$.  

\begin{remark}[Iteration]
Given copies of $\ket{\psi}$, measuring $\bigotimes_{j=1}^{n} \sigma_j^{\alpha_{j}(k)}$ repeatedly gives an estimate for each $\bra{\psi}{\bigotimes_{j=1}^{n} \sigma_j^{\alpha_{j}(k)}}\ket{\psi}$ separately.
\end{remark}




\begin{remark}
Whereas the objective function can be evaluated term-wise, achieving tolerance $\sim\epsilon$ requires $\sim\epsilon^{-2}$ measurements---see Hoeffding's inequality. 
\end{remark} 

\begin{definition}[Objective Function Cardinality] 
The number of terms in the Pauli basis $\{\openone , X, Y, Z\}^{\otimes n}$ needed to express an objective function. 
\end{definition}

\begin{example}
Let ${\mathcal H}=\sum_k h_k\bigotimes_{j=1}^{n} \sigma_j^{\alpha_{j}(k)}$ for coefficients $h_k$ and Pauli strings $\bigotimes_{j=1}^{n} \sigma_j^{\alpha_{j}(k)}$.  Then $|{\mathcal H}|_{\text{card}}= \sum_k (h_k)^0$.
\end{example}


\begin{definition}[Bounded Objective Function---Biamonte (2021)] 
A family of objective functions is {\it efficiently computable} when uniformly generated by calculating the expected value of an operator with $\text{poly}(n)$ bounded cardinality over 
\begin{equation} 
\begin{split}
\Omega &  \subset \{\openone , X, Y, Z\}^{\otimes n}.
 \end{split}
\end{equation}
\end{definition}

\begin{definition}[Poly-Computable Objective Function---Biamonte (2021)]
An objective function
\begin{equation}
    f\colon\ket{\phi}^{\times \mathcal{O}(\text{poly}(n))} \rightarrow \mathbb{R}_+
\end{equation}
is called poly-computable provided $\text{poly}(n)$ independent physical copies of $\ket{\phi}$ can be efficiently prepared to evaluate a bounded objective function.  
\end{definition}  

\begin{definition}[Accepting a Quantum State---Biamonte (2021)]
An objective function $f$ {\it accepts} $\ket{\phi}$ when given $\mathcal{O}(\text{poly}~n)$ copies of $\ket{\phi}$, 
\begin{equation}
    f(\ket{\phi}^{\times \mathcal{O}(\text{poly}(n)}) = f(\ket{\phi}, \ket{\phi}, \cdots, \ket{\phi})< \Delta
\end{equation}
 evaluates strictly less than a chosen real parameter $\Delta > 0$.  
  \end{definition}

\begin{theorem}[Energy to Overlap Theorem---Biamonte (2021)]
Let non-negative $\mathcal{H}= \mathcal{H}^\dagger\in \mathscr{L}(\mathbb{C}_d)$ have spectral gap $\Delta$ and non-degenerate ground eigenvector $\ket{\psi}$ of eigenvalue $0$.  
Consider then a unit vector $\ket{\phi}\in \mathbb{C}_d$ such that 
\begin{equation}
\bra{\phi }\mathcal{H}\ket{\phi } < \Delta 
\end{equation} 
it follows that 
\begin{equation}
1 - \frac{\bra{\phi}\mathcal{H}\ket{\phi} }{\Delta} \leq | \braket{\phi}{\psi}|^2 \leq 1 - \frac{\bra{\phi}\mathcal{H}\ket{\phi} }{\text{Tr}\{ \mathcal{H}\}}.
\end{equation} 
\end{theorem}

Several constructions related to quantum approximate optimization using short parameterised quantum circuits are subsequently developed.  A general bound applicable to short circuits is then given. 

Consider a pure $n$-qubit state $\ket{\psi}$. 
\begin{definition}
Bipartite Rank is the Schmidt number (the number of non-zero singular values) across any reduced bipartite density state from $\ket{\psi}$ (i.e.~$\lceil n/2 \rceil$ qubits). 
\end{definition}
 
\begin{definition}
 An ebit is a unit of entanglement contained in a maximally entangled two-qubit (Bell) state.
\end{definition}

\begin{remark}
A quantum state with $q$ ebits of entanglement (quantified by any entanglement measure) contains the same amount of entanglement (in that measure) as $q$ Bell states. \item If a task requires $r$ ebits, it can be done with $r$ or more Bell states, but not with fewer.  Maximally entangled states in $\mathbb{C}^d\otimes \mathbb{C}^d$ have $\log_2(d)$ ebits of entanglement. 
\end{remark}

The dissertation then presents and proves the following: 

\begin{theorem}[Combinatorial quantum circuit area law---Biamonte-Morales-Koh (2020)]
Let $c$ be the depth of 2-qubit controlled {\sf NOT} gates in a ansatz circuit. Then the maximum possible number of ebits accross any bi-partition is $\min \{ \left \lfloor{n/2}\right \rfloor, c \}$.
\end{theorem}

Finally, the chapter presents the definition of an effect the dissertation author discovered and published with coauthors. 

\begin{definition}
 Let $\ket{\psi}$, be the ansatz states generated from a \textit{p}--depth {\sf QAOA} circuit. Then 
\begin{equation}\label{eqn:reachabilitydef1}
    f = \min_{\psi \in \Gamma \subset \mathcal{H}} \bra{\psi}\mathcal{V}\ket{\psi} - \min_{\phi \in \mathcal{H}} \bra{\phi}\mathcal{V}\ket{\phi}\geq 0,
\end{equation}
characterises the limiting performance of {\sf QAOA}. 
\end{definition}

The R.H.S.~of equation \eqref{eqn:reachabilitydef1} can be expressed as a function, $f(p,\alpha,n)$.

\begin{proposition}[Reachability Deficit---with Akshay et al.~2020]
For $p \in \mathbb{N}$ and fixed problem size, $\exists$ $ \alpha > \alpha_c$ such that $f$ from \eqref{eqn:reachabilitydef1} is non-vanishing. This is a reachability deficit.
\end{proposition}

\subsection*{Chapter 5}

Chapter 5 develops a universal model of variational quantum computation.  The early chapter related to programming diagonal Hamiltonin ground states.  Chapter's 5 and 6 focus on the non-diagonal case.  

The dissertation then goes on to construct Hermitian $\mathcal{H} \in \mathcal{L}(\mathbb C_2^{\otimes n})$ with $\mathcal{H}\geq 0$ and non-degenerate  $\ket{\psi}\in \mathbb C_2^{\otimes n}$  as $\mathcal{H}\ket{\psi}=0$.   Define  
\begin{equation}\label{eqn:proj1}
P_\phi = \sum_{i=1}^n \ket{1}\bra{1}^{(i)} = \frac{n}{2}\left(\openone - \frac{1}{n}\sum_{i=1}^n   Z^{(i)} \right) 
\end{equation} 
and consider \eqref{eqn:proj1} as the initial Hamiltonian, preparing state $\ket{0}^{\otimes n}$. 

We will act on \eqref{eqn:proj1} with a sequence of gates $ \prod_{l=1}^L U_l$ corresponding to the circuit being simulated as 
\begin{equation}\label{eqn:isoaffine1}
h(k) = \left(\prod_{l=1}^{k\leq L} U_l \right)P_\phi \left(\prod_{l=1}^{k\leq L} U_l\right)^\dagger \geq 0
\end{equation}
which isospectral on \eqref{eqn:proj1}.


\begin{lemma}[Clifford Gate Cardinality Invariance] \label{lemma:invariance1} 
For $C$ a Clifford gate and $h\in \text{span}_\mathbb{R}\left\{ \bigotimes_{l=1}^{n} \sigma^{\alpha_{l}}_l  \mid \alpha_{l} = 0,1,2,3 \right\}$, $|h|_{\text{card}} = |C h C^\dagger|_{\text{card}}$. 
\end{lemma}

\begin{remark}
The algebraic $k$-locality of \eqref{eqn:isoaffine1} is not invariant under Clifford conjugation.  
\end{remark}

\begin{remark}
Non-Clifford gates increase the cardinality of \eqref{eqn:isoaffine1} by exponentially and so must be logarithmically bounded from above,  restricting to $p$ gate circuit's with $\mathcal{O}(\textup{poly} \ln p)$ non-Clifford single qubit gates.
\end{remark}

We will then consider embedding general quantum circuits into Hamiltonian ground states. 

Two notions of universality are common in the literature: 
\begin{enumerate}
    \item Strongly universal means a system is fully controllable and able to approximate any state.  
    
    \item Computationally universal means that any quantum circuit can be efficiently simulated by this model. 
\end{enumerate}

\begin{remark}[with Morales and Zimboras QIP {\bf 19}:291 (2020)]
One can simulate general $p$-depth circuits containing two-qubit gates with ansatze circuits of $\mathcal{O}\left(\text{poly}(p)\right)$ depth.
\end{remark}

\begin{theorem}[Biamonte PRA 103:L030401 (2021)] 
Let $\Pi_{l=1}^L U_l \ket{0^n}$ be an $L$-gate quantum circuit preparing state $\ket{\psi}$ on $n$-qubits and containing $L-c$ non-Clifford gates. Then there exists a non-negative Hamiltonian ${\mathcal H}$ on n-qubits with $|{\mathcal H}|_{\text{card}}= \mathcal{O}\left(\text{poly}(c, e^{L-c})\right)$, gap $\Delta$ and $\ker\{ {\mathcal H} \}= \text{span}\{ \Pi_{l=1}^L U_l \ket{0^n}\}$. In particular, if $\ket{\phi}$ is such that 
\begin{equation}
0\leq \bra{\phi} {\mathcal H} \ket{\phi}  < \Delta 
\end{equation} 
then it follows that 
\begin{equation}
1 - \frac{\bra{\phi} {\mathcal H}\ket{\phi} }{\Delta} \leq | \braket{\phi}{\psi}|^2 \leq 1 - \frac{\bra{\phi}{\mathcal H}\ket{\phi} }{\text{Tr}\{ {\mathcal H}\}}.
\end{equation} 
\end{theorem}


For some $U$ a Clifford gate, Lemma \ref{lemma:invariance1} shows that the cardinality is invariant.  Non-Clifford gates increase the cardinality by factors $\mathcal{O}(e^{L-c})$ and so must be logarithmically bounded from above.  Hence, telescopes bound the number of expected values by restricting to circuit's with 
$$
k\sim\mathcal{O}(\text{poly} \ln n)
$$ 
general single qubit gates. Clifford gates do however modify the locality of terms appearing in the expected values. 

Chapter 5 then presents then proves the following theorem (\ref{thm:history1}) which establishes universality of the variational model of quantum computation. 

\begin{theorem}[Universal Objective Function---Biamonte (2021)]\label{thm:history1}
Consider a quantum circuit of $L$ gates on $n$-qubits producing state $\prod_l U_l \ket{0}^{\otimes n}$.  Then there exists an objective function (Hamiltonian, $\mathcal{H}$) with non-degenerate ground state, cardinality $\mathcal{O}(L^2)$ and spectral gap $\Delta\geq \mathcal{O}(L^{-2})$ acting on $n+\mathcal{O}(\ln L)$ qubits such that acceptance implies efficient preparation of the state $\prod_l U_l \ket{0}^{\otimes n}$.  Moreover, a variational sequence exists causing the objective function to accept.  
\end{theorem} 

The proof follows from several lemma.  Degeneracy is first lifted. We let $P_0 = \ket{0}\bra{0}$. 
\begin{lemma}[Degeneracy Lifting] 
A tensor product of a projector on the first clock qubit with a telescope  
\begin{equation}
\mathcal{H}_{\text{in}} = V\left( \sum_{i = 1}^n P_1^{(i)} \right)V^\dagger \otimes P_0
\end{equation} 
lifts the degeneracy of $\mathcal{H}_{\text{prop}}$ and the history state with fixed input as 
\begin{equation}
\frac{1}{\sqrt{L+1}} \sum_{t=0}^L \prod_{l = 1}^t U_l(V\ket{0}^{\otimes n}) \otimes \ket{t} 
\end{equation}
becomes the non-degenerate ground state of $J\cdot \mathcal{H}_{\text{in}} + K\cdot  \mathcal{H}_{\text{prop}}$ for real $J, K>0$.
\end{lemma}

The penalty function is gaped and omits a log-space embedding. 

\begin{lemma}[Gap Existence]
For appropriate non-negative $J$ and $K$, the operator $J\cdot \mathcal{H}_{\text{in}} + K\cdot  \mathcal{H}_{\text{prop}}$ is gapped as
\begin{equation}
\Delta \geq \max\{ J,  \frac{K \pi^2}{2(L+1)^2} \}. 
\end{equation} 
\end{lemma}

\begin{lemma}[Logspace Embedding $\mathcal{H}_{\text{prop}}$] 
The clock space of $\mathcal{H}_{\text{prop}}$ embeds into $\mathcal{O}(\ln L)$ slack qubits, leaving the ground space of $J\cdot \mathcal{H}_{\text{in}} + K\cdot  \mathcal{H}_{\text{prop}}$ and the gap invariant. 
\end{lemma}

The dissertation then proves acceptance and derives the bound, noting that one must add $M$ identity gates to boost the probability of the desired circuit output state $\ket{\phi } = \Pi _{l=1}^LU_l \ket{0}^{\otimes n}$. The telescoping construction, we have that 
\begin{equation}\label{eqn:application1}
1 - \frac{\bra{\phi}\mathcal{H}\ket{\phi} }{\max\{ J,  \frac{K \pi^2}{2(L+1)^2} \}} \leq | \braket{\phi}{\psi_{\text{hist}}}|^2 = \frac{1}{1+\frac{L+1}{M}}
\end{equation} 
whenever $\bra{\phi }\mathcal{H}\ket{\phi } < \max\{ J,  \frac{K \pi^2}{2(L+1)^2} \}$.  For large enough $M>L$, the right hand side of \eqref{eqn:application1} approaches unity, implying acceptance. 

\subsection*{Chapter 6}

\begin{remark}
Kitaev et al.~established that sparse Hamiltonian's restricted to have at most 5-body bounded strength interactions have a ground state energy problem which is complete for the quantum analog of the complexity class {\bf NP} ({\sf QMA}-hard).
\end{remark}

\begin{definition}
 The $k$-local Hamiltonian problem: The input is a $k$-local Hamiltonian acting on n qubits, which is the sum of poly many Hermitian matrices that act on only $k$ qubits. The input also contains two numbers $a< b \in [0,1]$, such that $\frac{1}{b-a}=\mathcal{O}(n^{-c})$ for some constant $c$. The problem is to determine whether the smallest eigenvalue of this Hamiltonian is less than $a$ or greater than $b$, promised that one of these is the case.
\end{definition}
    
\begin{remark}
The $k$-local Hamiltonian admits an energy decision problems with is {\sf QMA}-complete for $k \geq 2$.  The minimisation of $k$-local Hamiltonians is {\sf QMA}-hard for $k \geq 2$. We seek to determine the simplest $2$-local {\sf QMA}-hard Hamiltonian to embed computational problems into a Hamiltonian for practical means.  
\end{remark}

The dissertation then develops and proves the following theorems. 

\begin{remark}[Real Hamiltonians]
We call Hamiltonian's expressed in the real subset of the Pauli basis, {\it real Hamiltonians}.  That is, qubit Hamiltonians that contain no tensor product terms with odd numbers of $Y$ operator(s). The corresponding ground state energy problem is called {\scshape Real Hamiltonian}. 
\end{remark}

\begin{lemma}
The ground state energy decision problem {\scshape Real Hamiltonian} is {\sf QMA}-hard. 
\end{lemma}

\begin{remark}[Complexity (Sketch)]
Given a Hamiltonian on $n$ qubits, determine if $\min_{\ket{\psi}\in V_n} \bra{\psi}H\ket{\psi}$ is below $b$ or above $a$ for $a, b\in [0,1]$ and $b-a\geq \text{poly}(n^{-1})$. 
\end{remark}

\begin{remark}[Universality (Sketch)]
A computationally universal set of real valued gates is embedded to act in ground states of \eqref{eqn:zzxx0} and \eqref{eqn:zx0}. 
\end{remark} 

\begin{theorem}[Biamonte-Love (2008)]
The ground energy decision problem {\it ZZXX Hamiltonian} is {\sf QMA}-hard, given as: 
\begin{equation}\label{eqn:zzxx0}
H_{\text{ZZXX}}=\sum_{i}h_i Z_i+\sum_{i,j}J_{ij} Z_i Z_j+\sum_{i,j}K_{ij} X_i X_j.
\end{equation}
\end{theorem}

\begin{theorem}[Biamonte-Love (2008)]
The ground energy decision problem {\it ZX Hamiltonian} is {\sf QMA}-hard, given as: 
\begin{equation}\label{eqn:zx0}
{H}_{\text{ZX}}=\sum_{i}h_i Z_i+\sum_{i,j}J_{ij} Z_i X_j.
\end{equation}
\end{theorem}

The Hamiltonian \eqref{eqn:zzxx1} (that is, \ref{eqn:zx1}) can create effective $Y\otimes Y$ (that is, $Z\otimes Z\otimes Z$) interactions with error $\epsilon$ using one slack bit acted on by a term $\sim \epsilon^{-4} Z$ (that is, $\sim \epsilon^{-5} X$).  

\begin{theorem}[with Cao-et al.~(2015)]
The Hamiltonian
\begin{equation}\label{eqn:zzxx1}
   H_{\text{ZZXX}}=\sum_{i}h_i Z_i+\sum_{i}\Delta_i X_i+\sum_{i,j}J_{ij} Z_i Z_j+\sum_{i,j}K_{ij} X_i X_j. 
\end{equation}
emulates a $Y\otimes Y$ interaction with $\delta=\mathcal{O}(\epsilon^{-4})$  given one slack qubit. 
\end{theorem}

\begin{theorem}[with Cao-et al.~(2015)]
The Hamiltonian
\begin{equation}\label{eqn:zx1}
    H_{\text{Ising,X}}=\sum_{i}h_i Z_i+\sum_{i}\Delta_i X_i+\sum_{i,j}J_{ij} Z_i Z_j.
\end{equation}
emulates the $Z\otimes Z \otimes Z$ interaction with $\delta=\mathcal{O}(\epsilon^{-5})$ given one slack qubit.
\end{theorem}

\subsection*{Conclusion}

The conclusion presents and discusses the implications of efficiently checkable quantum versus classical minimization problems.  It also presents some future research directions. 

Anticipated computational resources to determine ground state energy and calculate energy relative to a state have been conjectured.  In Table \ref{table:zoo1} I have summarized what is known/conjectured regarding efficiently checkable minimisation problems.  Therein `Restricted Ising' denotes problems known to be in {\sf P}. ($^\star$) denotes conjectures. Electronic structure problem instances have constant maximum size so are assumed to be in {\sf BQP} whereas the ZZXX model is {\sf QMA}-hard.

\begin{center}
\begin{table}[H]\caption{{ Hamiltonian complexity micro zoo}}
  \begin{tabular}{|p{5.4cm}|p{5.4cm}|p{5.4cm}|}
  \hline
      \textbf{Problem Hamiltonian} & {\bf Finding Ground Energy} (Classical / Quantum) & {\bf Calculating State Energy}  (Classical / Quantum) \\
      \hline
      {\sc 1-Local Hamiltonian} & Polynomial & Polynomial\\\hline
      {\sc 2-Local Ising} & Exp  & Polynomial \\\hline
      Electronic Structure & $^\star$Exp & $^\star$Exp / Polynomial\\\hline
      ZZXX Model & Exp  & $^\star$Exp / $^\star$Polynomial\\\hline 
  \end{tabular} \label{table:zoo1}
  \end{table}
\end{center}

\chapter{The Algebra of Programming Hamiltonian Ground States}\label{chap:progGS}

A universal model of quantum computation is an abstraction of a physical process.  The abstract model is then proven to---in principle---be able to emulate any quantum circuit efficiently in the circuits size. Early ideas in quantum computation~\cite{Feynman59, Fey82, deutsch1985quantum, Feynman1986, Deutsch73} lead to the so called, circuit (a.k.a.~gate) model of quantum computation (see the book \cite{NC}). In the history of quantum computation, several other models have been proven to be universal models of quantum computation through their computational equivalence to the defacto quantum circuit model~\cite{NC}. This includes adiabatic quantum computation~\cite{farhi2014quantum, 2004quant.ph..5098A} both discrete and continuous quantum walks~\cite{PhysRevLett.102.180501, Lovett_2010}, measurement based quantum computation~\cite{PhysRevLett.97.150504} as well as this authors installment proving universality of the variational model~\cite{UVQC}.  

To study contemporary quantum computing applications, we will adopt the view of computation in terms of ground states of Hamiltonians \cite{KSV02, 2004quant.ph..5098A}.  This will later be used as a foundation to understand the modern class of variational quantum algorithms \cite{2014NatCo...5E4213P, UVQC}.  The approach taken provides an elegant and practical connection between theoretical computer science and mathematical physics.  On one hand, computational complexity can classify ground state problems.  On the other hand, physical systems can be constructed and their ground states can be accessed and used as a computational resource.

The computational properties of ground states are the unifying theme of this thesis and indeed, offer a golden thread connecting the contemporary fundamental underpinnings behind advanced techniques to program quantum enhanced processors, of many types.  We want to begin by explaining the core ideas as simply and as plainly as possible.  Our starting point is the generalized Ising model: proven to be universal for classical computation.  We will then develop these ideas piece-wise as our journey together through these pages commences.  We will later extend the techniques developed in this chapter to the case of Hamiltonians with non-diagonal (a.k.a.~non-commuting terms) which are proven to be universal models for ground state quantum computation.  

We are concerned with instances of two general problems, where a problem is defined as a {\it set} (class) in complexity theory.  The first is a decision problem, which serves essentially as a theoretical tool to study the limits of computation.  

This chapter requires the following elementary properties of quantum theory.  Quantum computation functions with quantum bits (qubits). Qubits should both be isolated from their surroundings yet also be made to interact. In practice, design imperfections and random noise can not be avoided, meaning that the ideal qubit can never exist.  Such noise processes serve to restrict quantum circuit depth. 

\begin{remark}[$n$-qubits]
 We work with the complex Euclidean space $\mathcal{H} \bydef [\mathbb{C}^2]^{\otimes n}$.
\end{remark}

\begin{remark}
By  ${\mathscr L}(\mathbb{C}_2^{\otimes n})$ we will denote the space of linear maps from $\mathbb{C}_2^{\otimes n}$ to itself.
\end{remark}

\begin{remark}[Linear qubit maps]
We will consider the following linear maps:
\begin{enumerate}
    \item States: $\psi \in \mathcal{H} \simeq (\mathbb{C}\rightarrow \mathcal{H})$ where $\psi$ takes the complex number $c$ into $\mathcal{H}$ trivially as $\psi(c)= c\cdot \psi \in \mathcal{H}$. 
    \item Effects: $\psi^\dagger \in \mathcal{H}^\star \simeq (\mathcal{H} \rightarrow \mathbb{C})$. 
    \item Hamiltonians $A$ in  $\text{herm}_{\mathbb{C}}(2^n) \equiv \lbrace A \in \mathcal{L}(\mathcal{H}) \: | \: A = A^\dagger \rbrace$. 
    \item Propagators $U$ in $\text{\bf U}_{\mathbb{C}}(2^n)\equiv
            ~\{U\in\mathcal{L}(\mathcal{H}) ~|~U^\dagger U=\eye\}$. 
\end{enumerate}
\end{remark}

\begin{remark}
We will work with the standard inner product:
        \begin{equation}
        \bra{\cdot}\cdot\rangle: \mathcal{H}^* \otimes \mathcal{H} \rightarrow \mathbb{C},~~(\phi,\psi)\rightarrow \bra{\phi}\psi\rangle =\sum_{j}\bar{\phi}_j \psi^j \in \mathbb{C}. \notag
        \end{equation}
Here, states/effects are unit $\ell_2$ vectors. 
\end{remark}

\begin{remark}[Computational basis]
We will typically fix the so called, computational basis: 
\begin{enumerate}
    \item Single qubit basis vectors are given as $\ket{0},\ket{1}$.  
    \item Composite $n$-qubit basis are taken from $\{\ket{0},\ket{1}\}^{\otimes n}$
    \item The $2^n$ basis vectors satisfy $\braket{l}{m}=\delta_{lm}$.  
\end{enumerate}
\end{remark}

\begin{remark}[The expected value of a Hamiltonian relative a state]
We consider the expected value as
        \begin{equation}
            (A,\psi, \psi^\dagger)\rightarrow \bra{\psi}A\ket{\psi}=\sum_{l,m} A_{l,m}\bar{\psi}_m\psi_l\in \mathbb{C} \notag
        \end{equation} 
        for $A\in\text{herm}_{\mathbb{C}}(2^n)$. 
\end{remark}

We will work with Hamiltonian operators.  The simplest case is the generalized Ising model.  

\begin{remark}[Generalized Ising model]\label{def:ising}
 A generalized Ising model is an energy function of a symmetric graph $G=(E, V)$.  Each edge in $E$ is weighted by a real number and each vertex in $V$ is assigned a binary variable $s\in \pm 1$. Each vertex in $V$ can further be associated with an {\it onsite energy}.   
 
 To calculate the (pseudo) energy of a graph relative to an edge assignment ${\bf s}$, we will consider the edge weight matrix $J$ where entry the $l,m$th entry in $E$ is the so called interaction energy between vertex $l$ and $m$ in $V$ and the onsite energy vector $h$.  The energy function is given as: 
\begin{equation}\label{eqn:is}
{H}_{\text{Ising}}=\sum_{j\in V}h_js_j+\frac{1}{2}\sum_{l,m\in E}J_{lm}s_ls_m. 
\end{equation}
\end{remark}

\begin{remark}
Two mathematical problems arise in the literature related to the Ising model \eqref{eqn:is}.  Here and in \S~\ref{chap:varintro} and \ref{chap:gadgets} we consider the calculation of the ground state configurations of \eqref{eqn:is}.  We calculate various forms of the partition function: 
\begin{equation}
    {\mathcal Z} = \text{Tr}\{e^{-\beta {H}_{\text{Ising}}} \}, 
\end{equation}
for $\beta$ playing the role of inverse temperature in \S~\ref{chap:qvspc}, paying particular attention to the Ising model in \S~\ref{chap:varintro}. 
\end{remark}

\section{{\sf P}- vs.~{\sf NP} problems and mathematical physics}

Let us continue with a more formal definition of efficiency. First we consider the following definition.  

\begin{definition}(Decision Problem)
 A decision problem instance can be posed as a {\scshape Yes}-{\scshape No} question of input values. 
\end{definition}

The following represent {\scshape Yes} or {\scshape No} decision problems. 

\begin{example}(Primality Testing)
Is a given natural number prime? 
\end{example}

\begin{example}
Given two numbers $x$ and $y$, does $x$ evenly divide $y$?  The answer is either {\scshape Yes} or {\scshape No} depending upon the values of $x$ and $y$.
\end{example}

\begin{remark}[Efficient algorithm]
A method for solving a decision problem, given in the form of an algorithm, is called a {\bf decision procedure} for that problem.  If the procedure or algorithm terminates in time bounded above by some polynomial in the problem size (i.e.~the size of the problems description), then we call the procedure/algorithm/process {\bf efficient}. 
\end{remark}

We will now consider {\sf P} as containing the set of {\bf tractable} decision problems: decision problems for which we have polynomial-time algorithms.

\begin{definition}
 The complexity class {\sf P} contains all decision problems that can be solved with worst-case polynomial time-complexity.
\end{definition}

\begin{remark}
In other words, a decision problem is in the class {\sf P} if there exists an algorithm that solves any instance of size $n$ in ${\mathcal O}(n^k)$ time, for some $n$ independent integer $k$.
Here $n$ is the number of bits needed for encoding the input. 
\end{remark}

The second class of decision problems that we are concerned with here in the class {\sf NP}, which stands for {\bf non-deterministic polynomial time}.  This class was among the first to be connected with concepts appearing in mathematical physics. In 1982 Barahona proved that finding the ground state of reasonably simplistic Ising spin glass models is {\bf NP}-hard \cite{Barahona82}.  

\begin{definition}[The class \textbf{NP}]
 A problem class $\Gamma$ is said to be inside \textbf{NP} if candidate solutions to instances $\omega\in\Gamma$ can be verified in time $\mathcal{O}(\text{poly}(|\omega|))$.
\end{definition}

\begin{example}[Generalized Ising model is in {\bf NP}]
The problem is to determine if a generalized Ising Hamiltonian 
\begin{equation}\label{eqn:zx}
{H}_{\text{Ising}}=\sum_{j}h_js_j+\sum_{j, k}J_{jk}s_js_k,  
\end{equation}
has a ground eigenvalue less than real $a$ or if all eigenvalues are greater than $b$ for $b-a=\mathcal{O}(1)$ as large as the lowest spectral gap.  This problem is evidently in {\bf NP} as one can determine the energy of a given spin configuration using an algorithm that is polynomial in the number of Hamiltonian terms/size of the input. 
\end{example}

\begin{definition}[The class \textbf{NP}-hard]
A problem is {\bf NP}-hard when all problems inside {\bf NP} can be reduced to it (Karp reduction).  The minimisation of generalized Ising Hamiltonians is {\bf NP}-hard.
\end{definition}

\begin{example}
Minimising ${H}_{\text{Ising}}=\sum_{j, k}J_{jk}s_js_k$ for $J_{jk}\in\{0, 1, -1\}$ is {\bf NP}-hard \cite{Barahona82}. 
\end{example}

\begin{definition}[The class \textbf{NP}-complete]
 A problem is {\bf NP}-complete when it is in {\bf NP} and also {\bf NP}-hard.
\end{definition}

\begin{remark}[Problems versus instances]
Many optimisation problems, such as restricted forms of function minimization, are in the class {\sf NP}: more generally optimisation by minimization is {\sf NP}-hard.  Problems in {\sf NP} can be mapped (by a many-to-one mapping) onto an $\sf NP$-hard optimisation incarnation.  
\end{remark}

\begin{remark}
{\sf NP}-complete problems represent a subclass of {\sf NP} containing the hardest problems inside {\sf NP}: each of the problems inside the {\sf NP}-complete subclass are readily mapped from one to another (by polynomial, i.e.~Karp, reduction). The minimisation of Ising models is readily cast to an {\sf NP}-complete decision problem. If any {\sf NP}-complete problem has a polynomial time algorithm, all problems in {\sf NP} do. Such an algorithm is widely conjectured not to exist.
\end{remark}

\begin{remark}
We assume that $\mathbb{N} \bydef \mathbb{N}\cup \{0\}$. This sometimes appears in the literature as $\mathbb{N}_0$. 
\end{remark}

Consider then a pseudo Boolean function $f$ .  That is, $f$ is a map from $n$-tuples of $0$ and $1$ to the integers between $0$ and some natural number (possibly defined to be ${\mathcal O}(\text{poly}~n)$), that is for $f$ pseudo Boolean we have type 
\begin{equation}
    f\colon\{0, 1\}^n \rightarrow \mathbb{N}.  
\end{equation}
If we consider the class of all such functions, under the strict condition that $f$ can be evaluated for all $y\in \{0, 1\}^n$ in some time not exceeding ${\mathcal O}(\text{poly}~n)$, then we arrive at the following decision problem. 

\begin{definition}({\scshape Integer Decision {\sf SAT}})
 Consider 
 \begin{equation}
    f\colon\{0, 1\}^n \rightarrow \mathbb{N},   
\end{equation}
such that $f(y)$ can be evaluated in ${\mathcal O}(\text{poly}~n)$ time for all $y\in \{0, 1\}^n$.  We want to decide the following: 
\begin{enumerate}
    \item[(1)] if there exists a $z$ such that $f(z)=0$, or  
    \item[(2)] if the function $f(z)\geq 1$ for all $z\in \{0, 1\}^n$ 
\end{enumerate}
where $f$ is promised to be either (1)~a {\scshape Yes} instance or otherwise (2)~a {\scshape No} instance.  
\end{definition}

This problem is a standard {\it decision problem}, the optimization variant is given as 
\begin{definition}({\scshape Min Integer {\sf SAT}})
 Given  
 \begin{equation}
    f\colon\{0, 1\}^n \rightarrow \mathbb{N},   
\end{equation}
such that $f(y)$ can be evaluated in ${\mathcal O}(\text{poly}~n)$ time for all $y\in \{0, 1\}^n$. Find find $z'$ such that 
\begin{equation}
   z' = \argmin_{z\in \{0,1\}^n} f(z),  
\end{equation}
where $\min_{z\in \{0,1\}^n}\limits f(z) = f(z')$. 
\end{definition}

We will see that {\scshape Integer Decision {\sf SAT}} relates to physics of actual systems and provides a bridge between the theory of computation and that of physics.  In terms of practice, physical systems exist which embed and evolve to approximately solve {\scshape Min Integer {\sf SAT}}.  Function minimization is conceptually easy to understand. It would seem that quantum mechanics provides a richer class of minimization problems, wherein the target function is typically replaced with a Hermitian operator where one is subsequently tasked with determining the ground eigenvalue.   

Let us start with an explanatory version of a {\it quantum} problem, inspired by Kitaev's {\scshape Local Hamiltonian} which we will develop and apply to questions of modern relevance later in \S~\ref{sec:variational}. For now consider variants simplified for illustrative purposes. For this, we need to rely on quantum bits. 

\begin{definition}
 We call an operator ${\mathcal H} \in {\mathscr L}(\mathbb{C}_2^{\otimes n})$ a $k$-local Hamiltonian if ${\mathcal H}$ is expressible as ${\mathcal H}= \sum_{l=1}^{\text{poly}(n)} {\mathcal A}_l$ where each term ${\mathcal A}_l$ is Hermitian and acts non-trivially on at most $k$ qubits. 
\end{definition}

\begin{definition}({\scshape Decision Hamiltonian}---ignores locality)
  
 \begin{equation}\label{eqn:dh}
     {\mathcal H} = \sum_{l=1}^{\text{poly}(n)} {\mathcal A}_l \in {\mathscr L}(\mathbb{C}_2^{\otimes n})
 \end{equation}
 be a non-negative Hamiltonian on $n$ qubits.  Decide  
 \begin{enumerate}
     \item[(1)] if ${\mathcal H}$ has a zero eigenvalue, or 
     \item[(2)] if all eigenvalues of ${\mathcal H}$ are greater than or equal to some $f(n)> 0$. 
 \end{enumerate}
 $f(n)$ will be determined later.
\end{definition}

\begin{remark}[State preparation]
State preparation creates a quantum state by some gate sequence (or another physical process). 
\end{remark}

Deciding instances of {\scshape Decision Hamiltonian} has some evident practical merit.  Consider being given access to a quantum computer that simulates \eqref{eqn:dh} and prepares an arbitrary quantum state $\ket{\psi}$ (a witness) such that we can calculate the expected value $\bra{\psi}{\mathcal H}\ket{\psi}=0$.  While appearing from the outset as artificial, we will develop {\scshape Decision Hamiltonian} in a sequence of steps as a conceptual building block behind powerful mathematical tool(s) to probe the power and limitations of quantum enhanced information processing.  {\scshape Decision Hamiltonian} is closely related to the more practically encountered variant (Definition \ref{eqn:mh}), which we will study in tandem as follows.

\begin{remark}
We consider a quantum state $\psi$ and a Hilbert space ${\mathscr A}$. We adopt the slight abuse of notation that:  
\begin{enumerate}
    \item $\psi \subset {\mathscr A}$ is shorthand for  $\psi \in {\mathscr B} \subset {\mathscr A}$ where $\psi$ is necessarily restricted to a proper subset ${\mathscr B}$ of ${\mathscr A}$, 
    \item $\psi \subseteq {\mathscr A}$ is shorthand for $\psi \in {\mathscr B} \subseteq {\mathscr A}$ where $\psi$ is restricted to a subset ${\mathscr B}$ that might be equivalent to ${\mathscr A}$, 
    \item $\psi \in {\mathscr A}$ means that $\psi$ can take any value in ${\mathscr A}$. 
\end{enumerate}
\end{remark}

\begin{definition}({\scshape Min Hamiltonian}---ignores locality)
 Let 
 \begin{equation}\label{eqn:mh}
     {\mathcal H} = \sum_{l=1}^{\text{poly}(n)} B_l \in {\mathscr L}(\mathbb{C}_2^{\otimes n})
 \end{equation}
 be a non-negative Hamiltonian on $n$ qubits.  Determine 
  \begin{equation}
     \min_{\psi \subseteq {\mathscr A}} \bra{\psi}{\mathcal H}\ket{\psi} = E^\star 
  \end{equation}
 where the domain ${\mathscr A} \subseteq \mathbb{C}_2^{\otimes n}$ is given as a possibly restricted subset of $\mathbb{C}_2^{\otimes n}$ and so $E^\star \geq E_\star$. 
\end{definition}

We readily establish that 
\begin{equation}
  0\leq E_\star = \min_{\psi \in \mathbb{C}_2^{\otimes n}} \bra{\psi}{\mathcal H}\ket{\psi} \leq E^\star = \min_{\psi \subseteq {\mathscr A}} \bra{\psi}{\mathcal H}\ket{\psi}.
\end{equation}
Hence, knowledge of $E_\star$ provided ${\mathscr A} =  \mathbb{C}_2^{\otimes n}$ readily lifts {\scshape Min Hamiltonian} to partition $\{${\scshape Yes}, {\scshape No}$\}$ instances of {\scshape Decision Hamiltonian}. Furthermore, {\scshape Min Hamiltonian} has practical applications as an eigenvalue problem, where similar and restricted forms arise in many areas of engineering and applied science, including determining the ground state energy of electronic structure Hamiltonians \cite{WBA11}.  We then consider the following variant of {\scshape Min Hamiltonian}. 

\begin{remark}

In words, we use equality in \eqref{eqn:arg1} to denote a vector in the linear span of the solution space \eqref{eqn:arg2} where $E_\star$ is the smallest eigenvalue of $\bra{\psi}{\mathcal H}\ket{\psi}$. 
\end{remark}

\begin{definition}({\scshape Argmin Hamiltonian}---ignores locality)
 Let 
 \begin{equation}\label{eqn:amh}
     {\mathcal H} = \sum_{l=1}^{\text{poly}(n)} B_l \in {\mathscr L}(\mathbb{C}_2^{\otimes n})
 \end{equation}
 be a non-negative Hamiltonian on $n$ qubits.  Determine 
  \begin{equation}
     \argmin_{\psi \subseteq {\mathscr A}} \bra{\psi}{\mathcal H}\ket{\psi} = \ket{\psi'}  
  \end{equation}
 where ${\mathscr A} \subseteq \mathbb{C}_2^{\otimes n}$ is given. 
\end{definition}

Provided the domain ${\mathscr A}$ is appropriately restricted, we will be able to store $\ket{\psi'}$ on a classical computer.  In fact, this is often the case.  For example, provided that \eqref{eqn:amh} represents a binary constrained optimization problem (which we will consider later), then ${\mathscr A}$ can safely be restricted to the domain $\{0, 1\}^n$ so that $\ket{\psi'}$ is readily stored as a bit string.  The lack of a tangible description of general (quantum) $\ket{\psi'}$ using a classical computer is one of the foreseen advantages of quantum processors, and a key element of quantum supremacy demonstrations.  

\begin{remark}[The memory argument]
Early arguments for quantum advantage considered an ideal state of interacting qubits, requiring about $2^{n+1}\cdot 16$ bytes of information to store assuming $32$ bit precision. This reaches $80$ terabytes (TB) at just less than $43$ qubits and $2.2$ petabytes (PB) at just under $47$: e.g.~the world’s largest memory of the supercomputer Trinity. Hence applications with $\geq 47$ qubits might already outperform classical computers at certain tasks.  While this argument didn't account for noise and approximations/compression schemes to reduce required memory, similar arguments are considered valid lines of reasoning today.
\end{remark}

While we have presented three rather generic problems, these three problems will be further refined to form the theoretical backbone of Hamiltonian complexity and due in part to their close connection to actual physical processes, we will see that such problems underpin the vast majority of modern quantum programming techniques.  

Going forward we will modify these problems to become either more abstract or otherwise more physical.  We will tailor these problems to apply specifically to restricted settings that are closer to what is available on today's quantum processors.  We will also push the limitations of what can be said about the computational complexity of various Hamiltonian energy problems.  

The starting place for all of this is to develop a language and the intuition to program ground states.  Subsequent chapters will take these ideas in a variety of directions but the starting place begins with one of the most basic, yet most applicable models of statistical mechanics.  We will develop techniques to fully control the ground states of Ising Hamiltonians (see Definition \ref{def:ising}).  

\section{Mathematical structures connecting Ising models and quantum states} 

The vector space representation of quantum states and operators is not always the most suitable for certain circumstances.  Here we will adopt methods from order theory and algebra to define the quotient ring structure representing qubit quantum states and interrelates states (under real valued quadratic restriction) with generalized Ising models. 

A field extension adjoins a field with an element(s) outside of the field.  Stemming from Galios theory, a field can be extended with an indeterminate.  For example, $\mathbb{R}[x]$ is the free ring of all polynomials in a single indeterminate $x$.  A quotient ring construction can then append additional roots to a field extension.  For example, $\mathbb{R}[x]\Big/ x^2=-1$, truncates the free generation of $\mathbb{R}[x]$ at second order.  In this case, $x$ plays the role of the complex number and $\mathbb{C}\simeq \mathbb{R}[x]\Big/ x^2=-1$.


We consider the field extension: 
\begin{equation}\label{eqn:qubitring}
    \mathbb{C}[x_1, x_2, \dots, x_n] \Big/ x_1, x_2, \dots, x_n \in \{0,1\} 
\end{equation}
where $x_1, x_2, \dots, x_n \in \{0,1\}$ the quotient constraint is equivalent to $x_i x_i = x_i$ (idempotence). We arrive at the ring of (qubit) polynomials of type: 
\begin{equation}
    \{0, 1\}^n \rightarrow \mathbb{C} 
\end{equation}
by means of the following mapping
\begin{equation}\label{eqn:fstandard}
    f({\bf x}) = \sum_{I\in \{0, 1\}^n} a_I {\bf x}^I 
\end{equation}
where 
\begin{equation}
    {\bf x}^I \bydef (x_1)^{i_1} (x_2)^{i_2} \cdots (x_n)^{i_n}
\end{equation}
and we abuse notation as 
$$
(x)^0 \bydef (1-x) 
$$
with $x^1 = x$. 

\begin{proposition}[Biamonte 2008, \cite{B08}]\label{prop:grade}
The ring $\mathbb{C}[x_1, x_2, \dots, x_n] \Big/ \forall i, x_i^2 = x_i$ is graded as 
\begin{equation}
    \mathbb{C} \oplus \mathbb{C}[x_1]\oplus \cdots \oplus\mathbb{C}[x_n] \oplus \mathbb{C}[x_1, x_2] \oplus \cdots \oplus \mathbb{C}[x_{n-1}, x_n] \oplus \cdots \oplus \mathbb{C}[x_1, x_2, \dots, x_n] 
\end{equation}
where the quotients are omitted for brevity of notation.  
\end{proposition}
We call an expansion canonical when it is unique up to labeling variables. Proposition \ref{prop:grade} follows from Proposition \ref{prop:can}. 
\begin{proposition}\label{prop:can}
The expansion 
\begin{equation}\label{eqn:fcan}
    f({\bf x}) = a_0 + \sum a_i x_i + \sum a_{ij}x_i x_j + \cdots + \sum a_{ij\dots n}x_i x_j \dots x_n 
\end{equation}
is canonical \cite{B08}. 
\end{proposition}
\begin{proof}
Proposition \ref{prop:can} follows from the point-wise decomposition of $f({\bf x})$ in \eqref{eqn:fstandard}. The identity $(x)^0 \bydef (1-x)$ is applied and then terms of like-powers are grouped, relabeling and grouping coefficients.
\end{proof}

We have hence established that the ring \eqref{eqn:qubitring} can faithfully represent any qubit state in canonical form \eqref{eqn:fcan}. 
This maps algebraic tools to the study of qubit states.  We further see that this algebraic structure is closely related to generalized Ising Hamiltonians.  

Let us then consider elementary examples.  Note that alternative approaches to represent quantum states and gates using Boolean and pseudo Boolean algebras can be found in the literature, including \cite{Fastovets_2019}. 

\begin{definition}[Graph or cluster state]
Given a graph $G(V,  E)$, with the set of vertices $V$ and  edges $E$, the corresponding graph state is defined as
\begin{equation}
    \left| G \right\rangle =\prod _{(a,b)\in E}U^{\{ a,b\} } {\left| + \right\rangle} ^{\otimes V}
\end{equation}
where $\left| + \right\rangle = ({\left| 0 \right\rangle} +{\left| 1 \right\rangle} )/\sqrt{2}$ and the operator $U^{\{ a,b\} }$ is the controlled-$Z$ interaction between the two vertices (qubits) $a$, $b$. 
\end{definition}

\begin{example}
If $G = P_3$ is a three-vertex path graph, then the $S_v$ stabilizers are 
\begin{align}
X \otimes & Z \otimes I, \\
Z \otimes & X \otimes Z, \\
       I \otimes & Z \otimes X
\end{align}
The corresponding quantum state is 
\begin{equation}
    \sqrt{8}\ket{P_3}= \ket{000} + \ket{100} + \ket{010} - \ket{110} + \ket{001} + \ket{101} - \ket{011} + \ket{111}
\end{equation}

The point-wise expansion of the polynomial representing $\ket{P_3}$ in the form \eqref{eqn:fstandard} is 
\begin{align}
    P_3(x, y,  z) &= (1-x) (1-y) (1-z)+x (1-y) (1-z)+(1-x) y (1-z)+ \\\nonumber -&x y (1-z)+(1-x)(1-y)z+x (1-y) z-(1-x) y z+x y z.
\end{align}
The corresponding canonical form as \eqref{eqn:fcan} is
\begin{equation}
    P_3(x,  y,  z) = 1-2 x y-2 y z+4 x y z. 
\end{equation}
\end{example}

\begin{remark}[The coalgebra dual space]
Covectors can be expanded by partial derivatives.  A caveat is that the constant indeterminate independent term (corresponding to a global phase in the case of quantum states; corresponding to a global energy shift in the case of generalized Ising Hamiltonians) is set to zero. Any costate can then be expanded with the coring expansion 
\begin{equation}\label{eqn:fcan}
    f({\bf x}) = \sum a_i \frac{\partial}{\partial x_i}     + \sum a_{ij}\frac{\partial}{\partial x_i}\frac{\partial}{\partial x_j}
 + \cdots + \sum a_{ij\dots n}\frac{\partial}{\partial x_i}\frac{\partial}{\partial x_j}
      \dots \frac{\partial}{\partial x_n}. 
\end{equation}
Operators can also be expanded, for example the $X$ Pauli matrix becomes 
\begin{equation}
    X \bydef {\bar x}\frac{\partial}{\partial x} +  x \frac{\partial}{\partial {\bar x}}
\end{equation}
which can be verified by direct calculation on the single qubit state $\psi(x) = \sin(\theta)\cdot {\bar x} + \cos(\theta) e^{-\imath \phi} \cdot x$. 
\end{remark}

\begin{lemma}\label{lemma:iso} 
The follow isomorphisms hold. 
\begin{equation}
    \mathbb{C}[x_1, x_2, \dots, x_n] \Big/ \forall i, x_i^2 = x_i \simeq \mathbb{C}_2^{\otimes n} \simeq \text{diag}\text{Mat}_{\mathbb{C}}(2^n) 
\end{equation}
\end{lemma}
\begin{proof}
The proof of Lemma \ref{lemma:iso} follows by constructing the invertible (linear) maps. 
\begin{center}
\begin{equation}
\begin{CD}
f({\bf x)}   @>^{-1}>>   \ket{f}\\
@VV^{-1}V       @VV^{-1}V\\
[f]   @=   [f]
\end{CD}
\end{equation}
\end{center}
We write 
\begin{equation}
    f({\bf x})  = \sum_I a_I {\bf x}^I
\end{equation}
\begin{equation}
    \ket{f} = \sum_I a_I \ket{I} 
\end{equation}
\begin{equation}
    [f] = \sum_I a_I \ket{I}\bra{I} 
\end{equation}
and the result follows by explicit construction.  
\end{proof} 

\section{Computation and the Ising model}

We will begin with the methodology to program the ground state of a physical system to embed logic functions.  This leads directly to the fundamental result establishing that the ground state of the general (tunable or adjustable) Ising model\index{Ising Model} represents a computationally meaningful system: specifically that finding the ground state can be shown to be \NP-hard in the language of computational complexity.  We will build on these results.    

Readers should come away with the basic tools needed to embed and sequence logic gates in the ground states of Hamiltonians.  They should also be familiar with the idea of a penalty function and the problem of reducing such functions to quadratic form for physical implementation(s).  Readers should further become familiar with the problem of three satisfiability (3-{\sf SAT}), its embedding into physical spin systems and its corresponding computational phase transition signature.  

This chapter utilizes results from my past work in (2008) and  my work with Whitfield, Faccin in (2012), namely \cite{B08, spinlogic2}.  We will consider interacting binary units called {\it{classical spins}} and adopt the matrix presentation of Boolean bits in which 
\begin{center}
Logical 0 $\mapsto \ket{0}$,  \hspace{60pt} Logical 1 $\mapsto \ket{1}$.
\end{center}
\noindent
The concatenation of bits is defined pairwise using Kronecker's tensor ($\otimes$) where the symbol ($\otimes$) is often omitted (e.g.~$\ket{q}\otimes\ket{r}$ is written equivalently as $\ket{q}\ket{r}$ or $\ket{q,r}$). 


\begin{example}\label{ex:phi}
We can define the vector corresponding to a Boolean switching function~$f$ as 
\begin{equation}
\ket{f} \bydef \sum_{x \in \{0,1\}^n} f(x) \ket{x}. 
\end{equation}
For instance, the following is proportional to the Bell state
\begin{equation} \label{eqn:phi}
\sqrt{2}\ket {\Phi^+} =\sum_{x_1,x_2} \left[1-(x_1-x_2)^2\right] \ket{x_1, x_2}= \ket{00}+\ket{11}, 
\end{equation}
 with $x_1,x_2\in \{0,1\}$. 
The Bell state is used in a range of quantum protocols and takes its name after pioneering quantum physicist, John Bell. 
\end{example}

There exists a useful method to embed Boolean equations (as well as their pseudo Boolean generalization discussed later) into the low energy configuration of a physical system.  To develop these methods, we introduce some machinery. 

In this thesis, each spin is considered to be acted on by a matrix in $\spn\{\mathds{1}, Z\}$ over the reals, in other words by a matrix
\begin{equation}
\spn\{\mathds{1}, Z\} = \alpha \mathds{1} + \beta Z \hspace{20pt} 	\forall \alpha, \beta \in \mathbb{R}, \hspace{20pt} 
\end{equation}
with 
\begin{equation}
\mathds{1}\bydef\ketbra{0}{0}+\ketbra{1}{1}, \hspace{10pt} Z\bydef\ketbra{0}{0}-\ketbra{1}{1}.
\end{equation}
%
We typically assign $\beta=\pm1/2$, $\alpha= 1/2$ and define orthogonal projectors \eqref{eqn:P1} and~\eqref{eqn:P0}. %
\begin{align}\label{eqn:P1}
P_{1}\bydef \ketbra{1}{1} = \frac{1}{2} (\mathds{1} - Z)
\\[6pt]
\label{eqn:P0}
P_{0}\bydef  \ketbra{0}{0} = \frac{1}{2} (\mathds{1} + Z)
\end{align}
Where equations \eqref{eqn:P1} and \eqref{eqn:P0} are defined by the relations \eqref{eqn:P0vec} and \eqref{eqn:P1vec}. 
\begin{align}\label{eqn:P0vec}
 P_0 \lvert 0 \rangle& = \ket{0} \qquad P_{0}\lvert 1 \rangle = 0
\\
\label{eqn:P1vec}
 P_1 \lvert 1 \rangle& = \ket{1} \qquad P_{1}\lvert 0 \rangle = 0
\end{align}
These equations \eqref{eqn:P1}, \eqref{eqn:P0}, \eqref{eqn:P0vec} and \eqref{eqn:P1vec} can be succinctly summarized respectively as \eqref{eqn:proj1} and \eqref{eqn:proj2}. 
\begin{align}\label{eqn:proj1}
P_a &= \frac{1}{2} (\mathds{1} + (-1)^a Z)
\\
\label{eqn:proj2}
P_a \ket{b}&= \delta_{ab}\ket{b}
\end{align}

We will emulate logic operations using the lowest eigenstates of operators formed from the real-linear extension taken over 
\begin{equation}
    \Omega_n \bydef \{P_{1}, P_{0}\}^{\otimes n}
\end{equation}
for fixed finite $n$---see Remark \ref{remark:sle}---in other words we will devise operators in 
\begin{equation}\label{eqn:basis}
\spn\{\Omega_n \}. 
\end{equation}

\begin{remark}[Notation---span of linear extension]\label{remark:sle}
The notation $\{A, B, \dots \}^{\otimes n}$ corresponds to the set of all $n$-word products of $A, B, \dots$ with tensor ($\otimes$) as concatenation.  The real-linear extension of \eqref{eqn:basis} over the set $\Omega_n$ implies that we might consider the span of e.g.~operators $P_{1}, P_{0}$ as well as the span of their composition using the tensor product such as e.g.~$\alpha\cdot P_{1}\otimes P_{0}$ acting on two systems with $\alpha\in \mathbb{R}$. 
\end{remark}

\begin{example}
We will now establish the following elementary properties.   
\begin{enumerate}
    \item[(a)] For $Z^2=\eye$, $Z$ has eigenvalues $\pm 1$.
    \item[(b)] For $P_a^2=P_a$, $P_a$ has eigenvalues $0,1$.
    \item[(c)] Consider $\det(Z-\lambda\eye) = \lambda^2 - 1 = 0$ and $\det(P-\lambda\eye) = \lambda(\lambda-1)=0$. By substituting $\lambda \mapsto P_a$ (or $Z$) and sending scalars $c$ to $c\eye$, it follows that $P_a$ and $Z$ satisfy their own characteristic equation (Cayley-Hamilton theorem).
\end{enumerate}
Consider $(Z-\eye)(Z+\eye)\ket{j}=0$ and if $\ket{j}$ is an eigenstate with eigenvalue $\lambda$ then $\lambda^2=1$ and hence $\lambda = \pm 1$, which establishes (a). For the operators from (a) and (b), their eigenvalues and eigenvectors are expressed as $Z\ket{j}=(-1)^j\ket{j}$ and $P_a\ket{j}=\delta_{aj}\ket{j}$ for $j = 0,1$. Similar arguments hold for $P_a$. Point (c) follows by direct calculation.   
\end{example}

The method to program and engineer the ground states of physical systems functions by adding a so called penalty (a.k.a.~energy penalty $\geq \Delta$) to undesirable spin configurations.  For example, to set a bit to logical zero, we can add a penalty $P_1$.  To set a pair of bits to logical 11 ($\ket{11}$), we will add the penalty $P_0\otimes \eye + \eye \otimes P_0$ (see Remark \ref{remark:probits}). 

\begin{definition}
It is common to sometimes adopt the notation as in \eqref{eqn:notation1}.  
\begin{equation}\label{eqn:notation1}
P_a\otimes P_b \otimes \cdots \otimes P_c \bydef P_{ab\dots c} 
\end{equation}
\end{definition}
\noindent 
\begin{remark}\label{remark:probits}
(Assigning spins to represent specific bit strings). 
The operators $\mathcal Q$ and $\mathcal Q'$ in equations \eqref{eqn:assign1} and \eqref{eqn:assign2} minimize to set $k$ spins to represent the $k$-long bit string $\omega_1\omega_2\cdots \omega_k$. Note that the over bar is the logical complement (negation), sending Boolean variable $x$ to $\bar{x}=1-x$. The constant $\Delta$ is chosen to be as large as possible ($\Delta >1$). 
\begin{align}\label{eqn:assign1}
{\mathcal Q} &= \Delta\sum_{j=1}^k P_{\bar {\omega}_j}=\Delta\sum_{j=1}^k\left(\eye- P_{ \omega_j} \right)
\\
\label{eqn:assign2}
{\mathcal Q'} &= \Delta\biggl(\eye - \bigotimes_{j=1}^k P_{\omega_j}\biggr)
\end{align}
\end{remark}

    \begin{table}[htbp]\renewcommand{\arraystretch}{1.5}
        \centering
         \caption{Contrasting binary, polarity and matrix embeddings of Boolean bits. }
        \begin{tabularx}{0.9\textwidth}{@{} | L | R |}
 
  \hline
	{\bf Variables} & {\bf Matrix Embedding} \\ [5pt]
    \hline
    \multirow{2}{*}{Boolean bit $x_i \in \{0, 1\}$} &  projector $P_{\overline{x}_i} \bydef \ketbra{\overline{x}_i}{\overline{x}_i}$ \\
 	& with spectrum $\in \{0, 1\}$ \\[5pt]
    \hline
    \multirow{2}{*}{Spin variable $s_i \in \{\pm 1\}$ } & $Z$ matrix $ \ketbra{0}{0} - \ketbra{1}{1}$\\
    & with spectrum $\in \{\pm1\}$ \\
    \hline
    Affine transformation relating $x_i, s_i$& Matrix relation between $P_a$ and $Z$\\
    $s_i=1-2x_i$ & $P_a = \frac{1}{2}(1+(-1)^a Z)$ \\[5pt]
    \hline
  \end{tabularx}
    \end{table}

\begin{remark}(Locality of an operator). The locality of an operator is the highest number of non-trivial terms in a tensor product describing that operator.  For example, the locality of the operator \eqref{eqn:assign1} is called 1-local, or local or one-body whereas the locality of the operator in \eqref{eqn:assign2} is $k$-local.  

In \eqref{eqn:assign1} we project onto the complement of the bits and form a sum over the projectors. Equation \eqref{eqn:assign2} is given by a tensor product of projectors.  Both of these operators share the same low-energy space.  Physical systems implement two-body interactions and so $k$-body interactions must be emulated.  This translates into computational resources. 
\end{remark}

We will continue explaining penalty functions by developing several examples.  

\begin{example}(Equality and Inequality Penalties). 
\label{ex:eqineq}
Two binary variables are equal when they both evaluate to logical $0$ (logical $1$). Logical equivalence is defined in this way.  And similarly for inequivalence.  

We wish to construct a non-negative operator $h_{=}$ with the property that 
$$
\ker \{h_{=}\} =\spn \{ \ket{x, y}| x=y\} 
$$ 
is the zero eigenspace and where all other eigenvectors are in an eigenspace $\geq \Delta$.  Such an operator is constructed directly from considering a modified truth table for the logical operation.  For equality and inequality (right most), the corresponding (energy) truth table is given as follows. 
\begin{center}
\begin{tabular}{c|c|c|c}
$x$ & $y$ & $x\stackrel{?}{=}y$ & $x\stackrel{?}{\neq} y$\\
\hline
$\ket{0}$ & $\ket{0}$ & 0 & $\geq\Delta$ \\
$\ket{0}$ & $\ket{1}$ & $\geq\Delta$ & 0 \\
$\ket{1}$ & $\ket{0}$ & $\geq\Delta$ & 0 \\
$\ket{1}$ & $\ket{1}$ & 0 & $\geq\Delta$ 
\end{tabular}
\end{center}
The penalty functions have an evident expression in terms of the projectors \eqref{eqn:P1} and \eqref{eqn:P0} as follows:   
\begin{align*}
h_{=}&\bydef \Delta(P_{0}\otimes P_{1} + P_{1}\otimes P_{0}),
\\
h_{\neq}&\bydef \Delta(P_{0}\otimes P_{0} + P_{1}\otimes P_{1}). 
\end{align*}
Here the defining relation is $h_{=}\bydef \Delta-h_{\neq}$.  We can check these formula by noting that the tensor product of scalars reduces to the usual product viz., \[ 
(P_a\otimes P_b) \ket{q, r} = (P_a\ket{q})\otimes (P_b\ket{r}) = \delta_{aq}\cdot \delta_{br} \ket{q,r}.
\]
\end{example}

\begin{remark}[Using $\cdot$ for multiplication by a scalar]
Though typically omitted, based on aesthetics we sometimes use $\cdot$ to denote multiplication by a scalar.  
\end{remark}

The objective is to embed logical operations into the ground states of tunable Ising Hamiltonians.  To that end, we must define a family of logical operations that form a universal generating basis from which we can express any logical operation.  So far we have only defined penalties that act on two spins.  Going further we will develop a penalty to embed the {\sf AND} gate.  It is well known that {\sf AND, COPY} and {\sf NOT} \index{Logic Gates} form a universal basis for Boolean logic. Logic gates and Boolean algebra will be further discussed in Section \ref{sec:switch}. 

\begin{example}The Logical {\sf AND} operation \cite{B08, spinlogic2} is constructed similarly to the procedure in Example \ref{ex:eqineq}.  We want to develop a penalty function such that the zero eigenspace is in
$$\spn \{ \ket{x, y, z}| z=x \cdot y\}$$ 
and the orthogonal space 
$$\spn \{ \ket{x, y, z}| z=1-x \cdot y\}$$ corresponds to eigenvalues of at least $\Delta$.

From the energy-truth table 
\begin{center}
\begin{tabular}{c|c|c|c}
$x$ & $y$ & $z$ & $z \stackrel{?}{=} x\cdot y$\\
\hline
0 & 0 & 0 & 0 \\
0 & 0 & 1 & $\geq\Delta$ \\
0 & 1 & 0 & 0 \\
0 & 1 & 1 & $\geq\Delta$ \\
1 & 0 & 0 & 0 \\
1 & 0 & 1 & $\geq\Delta$ \\
1 & 1 & 0 & $\geq\Delta$ \\
1 & 1 & 1 & 0 
\end{tabular}
\end{center}
we arrive at single penalty for each term $\geq \Delta$ as 
\begin{equation}\label{eqn:Hand3}
h_{\wedge}\bydef \Delta(P_{001}+P_{011}+P_{101} + P_{110}). 
\end{equation}
Expanding and simplifying \eqref{eqn:Hand3} yields $$
h_{\wedge} = \Delta (\eye\otimes\eye \otimes P_1 + P_1\otimes P_1 \otimes \eye - 2 P_1\otimes P_1 \otimes P_1). 
$$ 
\end{example}

To use the {\sf AND} gate penalty function \eqref{eqn:Hand3} in practice, one can set the input bits using the projectors $P_0$, $P_1$.  Alternatively, one could force the output bit to be logical 1 by projecting onto $\ket{0}$ 
\begin{equation}\label{eqn:andout}
\Delta(P_{001}+P_{011}+P_{101} + P_{110}) + \epsilon \mathds{1}\otimes\mathds{1}\otimes P_0
\end{equation}
where $0< \epsilon$.  

Now minimization of the penalty function \eqref{eqn:andout} would provide input conditions to satisfy the {\sf AND} function.  This approach becomes more interesting when considering a sequence of gates.  Hence, penalty functions for a universal set of classical logic gates should be developed.  The {\sf AND} gate by itself is not universal for classical logic: yet {\sf AND}, together with {\sf OR} and {\sf COPY} is.  However, the {\sf NAND}\index{Logic Gates} gate is universal provided one can also copy bits.  

\section{Low-energy subspace embedding}\label{sec:switch}

Here we will consider embedding switching functions into low-energy subspaces.  We will proceed by recalling some basic properties of Boolean algebra. 

\begin{definition}\label{def:diagHam}
(A note for mathematicians and computer scientists). We consider the diagonal matrix $\mathcal{H}$ acting on states $\ket{\psi}\in \ket{\{0,1\}^n}$ such that $\mathcal{H}\ket{\psi} = k \ket{\psi}$ where $k$ is a real number.  For each such $\ket{\psi}$, the quantity $\bra{\psi}{\mathcal H}\ket{\psi} = k$ is called the energy of $\ket{\psi}$ relative to $\mathcal{H}$.  Operators such as $\mathcal{H}$ are called Hamiltonians, or energy functions.  We will extend and refine this definition.
\end{definition}

\begin{remark}
The ground state or low-energy subspace of $\mathcal{H}$ from Definition \ref{def:diagHam} is given by the span of the vectors with minimal $k$.  We have been engineering non-negative Hamiltonians such that their low-energy (zero) eigenspace embeds logical functions. 
\end{remark}

\begin{definition}
A Boolean (or switching) function is an $n$-ary map
\begin{equation}
f\colon \mathbb{B}^n \rightarrow \mathbb{B}
\end{equation}
where $\mathbb{B} = \{ 0,1\}$ is the Boolean field and non-negative $n$ is called the \textbf{arity} of $f.$ The case $n=0$ formally defines the constant elements of $\mathbb{B},$ $0$ and $1$ (false and true respectively). 
\end{definition}

\begin{remark}
The total number of Boolean functions $f\colon \mathbb{B}^n \rightarrow \mathbb{B}$ for each $n$ is $2^{2^n}$.  
\end{remark}

\begin{example}
(Majority function). The \textbf{majority function} is false when $n/2$ or more input arguments are false and true otherwise. It can be written as
\begin{equation}
M(x_1,\dots,x_n) = \left\lfloor \frac{1}{2} + \frac{1}{n}\biggl(\sum_{\,i=1}^n x_i - \frac{1}{2}\biggr)\right\rfloor,
\end{equation}
where $\lfloor f \rfloor$ is a floor function of $f$.
\end{example}

\begin{example}
For $n=2$ the majority function becomes equivalent to the {\sf AND} function, which takes bit pairs $x,y \in \mathbb{B}$ to their logical product. The {\sf AND} of two bits $x,y$ is denoted equivalently as $x\wedge y,~x.y,~x\cdot y,~xy$ and is $1$ iff $x=y=1$ and else $0$.
\end{example}

The logical {\sf OR} function of Boolean variables $x$, $y$  is written
\begin{equation}
x \vee y = x + y - xy
\end{equation}
The {\sf AND}, {\sf OR}, {\sf NOT} gates have the following respective graphical representations 
\begin{center}
\begin{circuitikz} \draw
(0,0) node[and port] (myand1) {}
	(myand1.in 1) node [anchor=east] {$x$}
    (myand1.in 2) node [anchor=east] {$y$}
    (myand1.out) node [anchor=west] {$x\wedge y$ }
(4,0) node[or port] (myor1) {}
	(myor1.in 1) node [anchor=east] {$x$}
    (myor1.in 2) node [anchor=east] {$y$}
    (myor1.out) node [anchor=west] {$x\vee y$ }

(7,0) node{$x$}
(7.25,0) -- (7.75,0)
(7.75,-0.2) -- (7.75,0.2)
(7.75,0) circle [radius=0.2]
(7.75,0) -- (8.25,0)
(8.55,0) node{$~\neg x$};
    
\end{circuitikz}
\end{center}
where the rightmost gate negates its input bit $x$, sending it to $1-x$. Logical negation is written equivalently as $\neg x$, $\bar{x}$ and sometimes $x'$.

\begin{proposition}[Operator embedding of Pseudo Boolean forms]
Any Pseudo Boolean function 
\begin{equation}
    f({\bf x})  = \sum_I a_I {\bf x}^I
\end{equation}
gives rise to an operator embedding 
\begin{equation}
    [f] = \sum_I a_I \ket{I}\bra{I}  \bydef {\hat f}
\end{equation}
by Lemma \ref{lemma:iso}.  The minimisation problems are evidently related as:
\begin{equation}
    \min_{x\in \{0, 1\}^n}f(x) = x', 
\end{equation}
then 
\begin{equation}
    \min_{\psi \in {\mathcal A}} \bra{\psi}{\hat f}\ket{\psi}  = \bra{x'}{\hat f}\ket{x'} 
\end{equation}
for the appropriate vector space ${\mathcal A}$. 
\end{proposition}

\begin{example}[Operator embedding $3$-{\sf SAT}]
We will consider the partition function of $3$-{\sf SAT} instances in \S~\ref{chap:varintro}. 
For an explicit example of turning 3-{\sf SAT} into a ground state Hamiltonian problem, consider \eqref{eqn:sat1}. 
\begin{equation}\label{eqn:sat1}
  f =  (x_{i} \lor \overline{x}_{j} \lor x_{k})\land  (\overline{x}_{g} \lor x_{r} \lor \overline{x}_{s}) \land(\cdots)\land \cdots
\end{equation}
The first clause can be rewritten as an energy penalty viz., 
\begin{equation}
x_{i} \lor \overline{x}_{j} \lor x_{k} \mapsto 
\ketbra{0}{0}_i \otimes \ketbra{1}{1}_j \otimes \ketbra{0}{0}_k  
\end{equation}
and likewise for other clauses. 
\end{example}

\begin{theorem}[Boolean function embedding, Biamonte (2008)~\cite{B08}]\label{thm:booleantoH}
Any Boolean switching function $f(x_1, x_2, \dots, x_n)$ expressed over the basis $\{\vee, \wedge, \neg\}$ embeds into the spectrum of a Hermitian operator formed by the linear extension of $\{P_0, P_1, \eye\}$ by means of the following maps \eqref{eqn:m1} and \eqref{eqn:m2}. %
\begin{align}\label{eqn:m1}
&\wedge \longrightarrow \otimes 
\\
\label{eqn:m2}
&\vee \longrightarrow + 
\end{align}
For every (positive polarity, a.k.a.~non-negated) Boolean variable $x_j$ we apply 
\begin{equation}\label{eqn:m3} 
x_j \longrightarrow P_1^j. 
\end{equation}
For negated variable $\neg x_j$  we apply 
\begin{equation}\label{eqn:m4}
\neg x_j \longrightarrow P_0^j. 
\end{equation}
In both cases \eqref{eqn:m3} and \eqref{eqn:m4},  $1\leq j\leq n$ becomes a spin label index which $P^j$ acts on. Moreover 
the above mapping induces an operator $\mathcal{H}$ such that 
\begin{equation}
    \mathcal{H}\ket{{\bf x}} = f({\bf x}) \ket{{\bf x}}
\end{equation}
for Boolean function $f({\bf x})$ and bit string ${\bf x}$. 
\end{theorem}

\begin{theorem}[Kernel embedding]\label{theorem:kern}
A Boolean function $f(x)$ embeds into the kernel of a non-negative Ising penalty function by applying the map from Theorem \ref{thm:booleantoH} to the function $g(x, f(x)) = 0$, $g(x, 1-f(x)) = 1$. 
\end{theorem}

\begin{remark}
The condition $g(x, 1-f(x)) = 1$ can readily be modified to $g(x, 1-f(x)) \geq 1$ leaving the operators constructed by Theorem \ref{thm:booleantoH} non-negative with identical kernals. 
\end{remark}


\begin{example}[Equality penalty]
Recall the Boolean function from Example \ref{ex:phi}: 
\begin{equation}\label{ex:phi2}
1 - (x_1-x_2)^2. 
\end{equation}
Expanding \eqref{ex:phi2} we have $1 - x_1 - x_2 + 2x_1x_2$.  Using the techniques presented in the pseudo Boolean to ground state mapping, we arrive at the penalty: 
\begin{equation}
\eye - P_0\otimes \eye - \eye\otimes P_0 + 2 P_0\otimes P_0.  
\end{equation}
\end{example}




\begin{definition}
As is typical in quantum physics, we will often omit identity operators by writing $Z_a$ to mean 
\begin{equation}
Z_a \bydef  \eye_1 \otimes \cdots \eye_{a-1}\otimes Z_a \otimes \eye_{a+1} \otimes \cdots
\end{equation}
where the operator $Z_a$ acts on the $a^{th}$ spin. Likewise for pairs of operators with $a\neq b$, $Z_a\otimes Z_b = Z_a Z_b$ and again the identity is omitted. 
\end{definition}

Here we have developed a method to simulate (in principle) any pseudo Boolean penalty function using Ising Hamiltonians.  Physical realizations of classical or quantum annealers and of quantum Ising machines \cite{kirkpatrick1983optimization, utsunomiya2011mapping, inagaki2016coherent, pierangeli2019large, marandi2014network, nixon2013observing, dung2017variable, kalinin2018global} are however limited to two-body interactions.  Three-body terms (and higher) will be emulated using two-body terms through a construction involving the introduction of slack qubits.   

Before considering a method to reduce (or quadratrize) penalty Hamiltonians, let us state the following complexity result. 

\begin{remark}
Note that for a problem to be in the class {\sf NP}, a verification procedure must exist as follows.  Given an instance, the output of each input can be determined in polynomial time. In the case of problems defined by classes of Hamiltonians on $n$ spins/qubits, we consider that the Hamiltonian's description should be bounded in size by some polynomial in the size of the input.  With slight abuse of notation, we express this in \eqref{eqn:Hran3sat} by bounding the sum 
\begin{equation}\label{eqn:Hran3sat}
    {\mathcal H} = \sum_{\substack{(a, b, c)\in A \subseteq{\{0,1\}}^3 \\ (\alpha, \beta, \gamma) \in B \subseteq{\{1,2,\ldots,n\}^3} \\ \alpha \neq \beta \neq \gamma}} P^{\alpha \beta \gamma}_{a b c} 
\end{equation}
with upper limit ${\text{poly}(n)}$ as 
\begin{equation}\label{eqn:Hran3sat}
    {\mathcal H} = \sum^{{\mathcal O}(\text{poly}~n)}_{\substack{(a, b, c)\in \{0,1\} \\ (\alpha, \beta, \gamma) \in \{1,2,\ldots,n\} \\ \alpha \neq \beta \neq \gamma}} P^{\alpha \beta \gamma}_{a b c}. 
\end{equation}
or more compactly as 
\begin{equation}\label{eqn:Hran3sat}
    {\mathcal H} = \sum^{{\mathcal O}(\text{poly}~n)} P^{\alpha \beta \gamma}_{a b c} 
\end{equation}
\end{remark}
    
\begin{proposition}[{\scshape Decision Three-body Projector Ising Hamiltonian}]\label{thm:3bph}
Given non-negative 
\begin{equation}\label{eqn:Hran3sat}
    {\mathcal H} = \sum^{{\mathcal O}(\text{poly}~n)} P^{\alpha \beta \gamma}_{a b c}
\end{equation}
acting on $n$ spins where 
\begin{equation}
    P^{\alpha \beta \gamma}_{a b c} = P^\alpha_a \otimes P^\beta_b \otimes P^\gamma_c
\end{equation}
is uniformly chosen at random.  Here $P_x^\kappa = \frac{1}{2}(\eye +(-1)^x Z_{(\kappa)})$ acts non-trivially on the bit labeled $\kappa$ by projecting onto $\ket{x}$.  Then the decision problem {\scshape Three-body Projector Ising Hamiltonian} determines if:
\begin{enumerate}
    \item[(1)] ${\mathcal H}$ has at least one zero eigenvalue or otherwise if 
    \item[(2)] all eigenvalues of ${\mathcal H}$ are at least unity. 
\end{enumerate}
\end{proposition}

Proposition \ref{thm:3bph} is the first strong connection between complexity science and physics we will begin to make.  The problem is closely related to decision 3-{\sf SAT}---in fact, there is a  bijection between these problems.\footnote{Decision 3-{\sf SAT} canonical and first {\sf NP}-problem from Cook \cite{Cook1971}---Sometimes called the Cook--Levin Theorem as similar results were independently published by Leonid Levin as [Universal search problems]. Problems of Information Transmission (in Russian) {\bf 9}(3): 115–116 (1973). Translated into English by Trakhtenbrot \cite{Trakhtenbrot1984}.} We will leave the proof of 
{\sf NP}-completeness in Theorem \ref{thm:3bph} to \S~\ref{sec:cpt}. 


\begin{proposition}[{\scshape Min Three-body Projector Ising Hamiltonian} is {\sf NP}-hard]\label{thm:m3bph}
The minimization of non-negative 
\begin{equation}
   \min_{x\in\{0,1\}^n} \bra{x}{\mathcal H}\ket{x} = \min_{x\in\{0,1\}^n} \sum \bra{x}P^{\alpha \beta \gamma}_{a b c}\ket{x}
\end{equation}
as defined in Theorem \ref{thm:3bph} is {\sf NP}-hard.  Here the condition on the upper bound is lifted as {\sf NP}-hard problems need not be in {\sf NP} unless they are {\sf NP}-complete. 
\end{proposition} 

The problem, {\scshape Min Three-body Projector Ising Hamiltonian} from Proposition \ref{thm:m3bph} is readily reduced to {\scshape Max 3-{\sf SAT}} in which the objective is to violate the fewest clauses.  

The proofs of Proposition \ref{thm:3bph} and \ref{thm:m3bph} follow from known results and will wait until \S~\ref{sec:cpt}.  We will first focus on the calculus of reduction of three-body terms to two-body terms. Such techniques are often called, classical gadgets, in relation to the non-perturbative case using gadget Hamiltonians \cite{B08, spinlogic2, Cao_2015}---see \S~\ref{chap:gadgets}. 

\section{Two-body reductions} 

We study penalty functions as they describe the energy levels of a physical spin system.  Spins as stated, are binary units and each physical configuration of spins is assigned a real number representing the energy of the spin configuration.  Later on we will study natural physical processes that cause a system to evolve towards the lowest energy configuration.  As a first step towards a physical realization, we will consider here the process of embedding higher order ($ZZ\cdots Z$) interactions into two-body interactions by a process that adds ancillary spins \cite{B08, spinlogic2}.

\begin{table}[htbp]
    \centering
    \renewcommand{\arraystretch}{1.5}
  \begin{tabularx}{0.955\textwidth}{|>{\raggedright\arraybackslash}p{.3\textwidth}|>{\raggedright\arraybackslash}p{.6\textwidth}|}
  \hline
    Pseudo Boolean form truncated past quadratic order &

  $\begin{aligned}[t]
  f&=a_0+a_1x_1+a_2x_2+\dots \\
    &\qquad+a_{12}x_1x_2 + a_{13}x_1x_3+\dots \\
     &=a_0 + \sum_ia_ix_i + \sum_{i<j} a_{ij}x_ix_j 
  \end{aligned}$\\ 
    \hline
    Polarity transform of $f$ & $
    \begin{aligned}[t]
    \tilde{f} &= \omega_0+\omega_1s_1+\omega_2s_2+\dots \\
      &\qquad+\omega_{12}s_1s_2 + \omega_{13}s_1s_3+\dots \\
      &= \omega_0 + \sum_i\omega_is_i + \sum_{i<j} \omega_{ij}s_is_j  
      \end{aligned}$\\ 
    \hline
    Matrix presentation mapping the values of $f$ to the spectrum of a diagonal matrix &

    $\begin{aligned}[t]
    {\mathcal H}_f &= a_0\mathds{1} + a_1P_{\overline{x}_1}+ a_2P_{\overline{x}_2}+\dots\\
    &\qquad+a_{12}P_{\overline{x}_1} \otimes P_{\overline{x}_2} + a_{13}P_{\overline{x}_1} \otimes P_{\overline{x}_3}+\dots \\
     &=a_0\mathds{1} + \sum_i a_i P_{\overline{x}_i} + \sum_{i<j} a_{ij} P_{\overline{x}_i} \otimes P_{\overline{x}_j}
     \end{aligned}$\\
    
    \hline
    Direct Ising realization of $ \tilde{f} $ & 
     $\begin{aligned}[t]
     {\mathcal H}_{\tilde{f}} &= \omega_0 \mathds{1}  + \omega_1Z_1 + \omega_2Z_2 + \dots\\
    &\qquad+\omega_{12}Z_1Z_2+\omega_{13}Z_1Z_3+\dots \\
    &=\omega_0 \mathds{1} + \sum_i \omega_i Z_i + \sum_{i<j} \omega_{ij}Z_iZ_j
     \end{aligned}$\\
    \hline
    \multicolumn{2}{|c|}{$f$ and $\tilde{f}$ are related by an affine change of variables $s_i=1-2x_i$} \\
    \multicolumn{2}{|c|}{and ${\mathcal H}_f$ and ${\mathcal H}_{\tilde{f}}$ are related as $P_x = \frac{1}{2}(1+(-1)^xZ)$} \\
    \hline
  \end{tabularx}

    \caption{Constructing penalty Hamiltonians from Boolean (or pseudo-Boolean) functions. }
    \label{tab:my_label}
\end{table}

Physical systems with rare exception implement local (one body) and two-body interactions.  Two body interactions are terms of the form $J_{ij}Z_iZ_j$.   We have so far implemented penalty functions using operators $P_0$, $P_1$.  To translate these directly into physical interactions, we have to return to their defining relations and express them in terms of $Z$ operators.  In doing such, the following problem illustrates that our penalty function \eqref{eqn:Hand3} requires three-body terms to implement.  The penalty function \eqref{eqn:Hand3} must be brought into two-body form \cite{B08,spinlogic2} by the addition of slack or ancilary spins.  

\begin{example}\label{eqn:andZ}
Let us express $h_{\wedge}$ from \eqref{eqn:Hand3} over the basis $\big\{ \mathds{1}, Z, \otimes \big\}$. First we remark that transitioning between these two basis will not increase locality.  
We consider the penalty function over Boolean variables $x_1, x_2, x_3$ and arrive at a penalty function for {\sf AND} as 
\begin{equation}
f_\wedge = \Delta (x_3+x_1 x_2 - 2 x_1x_2x_3). 
\end{equation}
Spin variable $s_i$ and Boolean variable $x_i$ are related by the affine transformation $s_i = 1-2x_i$. Substitution of $\frac{1}{2}(1-s_i)$ into $f_\wedge$ changes from Boolean to spin variables, resulting in 
\begin{equation}
\frac{\Delta}{4}(2-s_3-s_1s_3-s_2s_3+s_1s_2s_3). 
\end{equation}
To embed this into matrix form, we replace $s_i \mapsto Z_i$ and send multiplication $\cdot \mapsto \otimes$. 
\end{example}



\subsection{Karnaugh map codomain extension}

The Karnaugh map (KM or K-map) is a method of simplifying Boolean algebra expressions. Maurice Karnaugh introduced it in 1953 as a refinement of Edward Veitch's 1952 Veitch chart, which was a rediscovery of Allan Marquand's 1881 logical diagram a.k.a.~Marquand diagrams.  

The Karnaugh map technique appears as a powerful tool to minimize Boolean expressions of type 
\begin{equation}
    \{0,1\}^n \rightarrow \{0,1\}. 
\end{equation}
In the work \cite{B08}, the Karnaugh map technique was extended to fields of characteristic zero.  The property which enables the Karnaugh map method to be applies is the Boolean lattice structure. In \cite{B08} the Karnaugh map tool was applied to specific reductions related to quardratic Ising Hamiltonian interactions (e.g.~emulating $ZZZ$ using only two-body $ZZ$ and $Z$ terms).  

To develop the Karnaugh map approach, we will explain in detail how the positive-semidefinite \AND{} penalty Hamiltonian, $H_\wedge$, is derived.  Let $\cal L$ be the null space of $H_\wedge$ and let all higher eigenspaces be given as $\cal L^\perp$.  The penalty Hamiltonian has a null space, ${\cal L}$, spanned by the vectors 
\begin{equation}
    \{\ket{x_1 x_2}\ket{z_\star}|z_\star=x_1\wedge x_2, \forall x_1,x_2\in\{0,1\}\}. 
\end{equation}
Denote $\delta$ as an energy penalty applied to any vector component in ${\cal L^\perp}$.  Our goal is to develop a Hamiltonian that adds a penalty of at least $\delta$ to any vector that does not satisfy the truth table of the $\AND$ gate---that is, we want to add an energy penalty to any vector with a component that lies in ${\cal L^\perp}$.

In order to make the penalty quadratic, one first constructs the Karnaugh map
illustrated in Figure~\ref{fig:quad} (c) for the case $x_1\wedge x_2=z_\star$.  This is done by examining Table~\ref{table:NPH}.  
\begin{table}[t]
\centering
{\begin{tabular}{ccc||c||c}
  $x_1$ & $x_2$ & $z_\star $ & $z_\star = x_1\wedge x_2$  & $H_\wedge(x_1,x_2,z_\star)$ \\\hline
  0 & 0 & 0 & $\bra{000}H_\wedge\ket{000}=0$  & 0 \\
  0 & 0 & 1 & $\bra{001}H_\wedge\ket{001}\geq\delta$ & $3 \delta$\\
  0 & 1 & 0 & $\bra{010}H_\wedge\ket{010}=0$ & 0 \\
  0 & 1 & 1 & $\bra{011}H_\wedge\ket{011}\geq\delta$ & $\delta$\\
  1 & 0 & 0 & $\bra{100}H_\wedge\ket{100}=0$ & 0 \\
  1 & 0 & 1 & $\bra{101}H_\wedge\ket{101}\geq\delta$ & $\delta$ \\
  1 & 1 & 0 & $\bra{110}H_\wedge\ket{110}\geq\delta$ & $\delta$ \\
  1 & 1 & 1 & $\bra{111}H_\wedge\ket{111}=0$ & 0 \\
\end{tabular}}\caption{Left column: possible assignments of the variables $x_1$, $x_2$ and $z_\star$.  Center column: illustrates the variable assignments that must receive an energy penalty $\geq\delta$.  Right column: truth table for $H_\wedge(x_1,x_2,z_\star )=3z_\star +x_1\wedge x_2-2z_\star \wedge x_1-2z_\star \wedge x_2$, which has a null space ${\cal L}\in\spn\{\ket{x_1 x_2}\ket{z_\star}|z_\star=x_1\wedge x_2,\forall x_1,x_2\in\{0,1\}\}$.}\label{table:NPH}%
\end{table}

In the right most column, all possible assignments for the variables $x_1$, $x_2$ and $z_\star$ are shown.  The Karnaugh map is constructed by examining the second column.  Whenever the variable $z_\star$ is not equal to the \AND{} of the variables $x_1$ and $x_2$, a penalty of at least $\delta$ must be applied, which ensures that vectors in the ground space satisfy $\ket{x_1}\ket{x_2}\ket{x_1\wedge x_2}$.  Any vector that must receive an energy penalty of $\delta$ is depicted in the Karnaugh map with a dot ($\cdot$).

\begin{figure}[t]
    \centering
    \includegraphics[width=13cm]{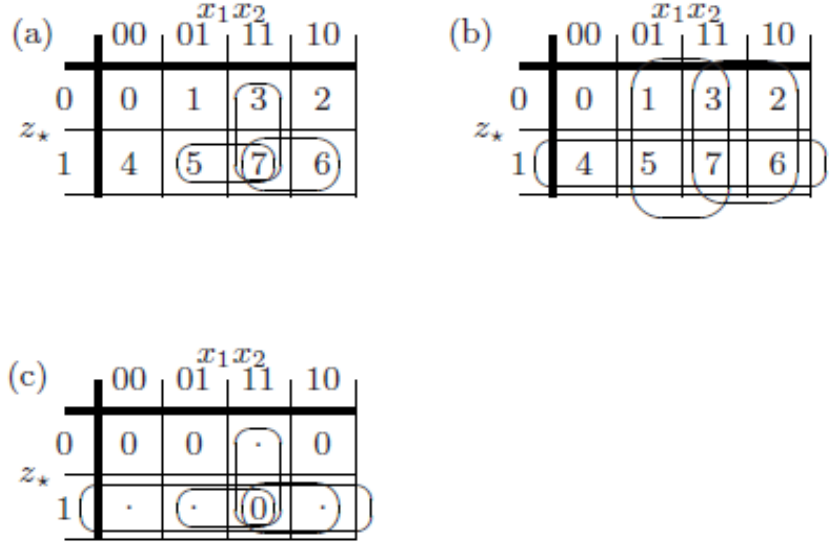}
\caption{Karnaugh maps: (a) 2-local (positive polarity) interactions circled (e.g.~$q_1  x_1x_2+q_2  x_1\wedge z_\star +q_3  x_2\wedge f^\star $). (b) Linear (positive polarity) terms circled (e.g.~$l_1  x_1+l_2  x_2+l_3  x_3$).  The interactions in cubes (a) and (b) form a basis for the space of realizable ($\leq$ 3 qubit) positive-definite logical gadget Hamiltonians expressible as:
$H(x_1,x_2,z_\star )=k_0+l_1  x_1+l_2  x_2+l_3  x_3+q_1  x_1\wedge x_2+q_2  x_1\wedge z_\star +q_3  x_2\wedge z_\star $, where $\forall i$, $k_0, l_i, q_i \geq 0$. (c) A Karnaugh map illustrating (with ovals) the linear and quadratic terms needed to set the null space of the Hamiltonian (\ref{eqn:Hwedge}) to be in $\spn\{\ket{x_1x_2}\ket{y_\star}|y_\star=x_1\wedge x_2, \forall x_1,x_2\in\{0,1\}\}$.}
\label{fig:quad}
\end{figure}

Begin by noticing that any vector associated with cube number $4$ must receive an
energy penalty, so the 1-local field corresponding to the qubit with label $z_\star $ must be at least $\delta$ --- adding the term $p_1  z_\star$ to the Hamiltonian, with the constraint $p_1 \geq\delta$.  Cube $3$ must also receive an energy penalty of at
least $\delta$, adding the term $p_2   x_1\wedge  x_2$ to the
Hamiltonian $H_\wedge$.  With both penalties applied, vectors corresponding to cube $7$ must be brought back to the null space --- accomplished by subtracting the quadratic energy
rewards $r_1   z_\star\wedge x_1$ and $r_2   z_\star\wedge x_2$ from $H_\wedge$. A system
of equations for the Hamiltonian 
\begin{equation}\label{eqn:g1}
H_\wedge(x_1,x_2,z_\star)=p_1   z_\star  + p_2   x_1\wedge x_2 - r_2   z_\star\wedge  x_1 - r_2  z_\star\wedge x_1
\end{equation}
can be solved to set the rewards ($r$'s) and the penalties ($p$'s). This system is derived from the
fact that the term $x_1x_2x_3$, corresponding to cube $7$, must have zero energy: $0=p_1+p_2-r_1-r_2$ and is
subject to the conditions that $p_1,p_2 \geq\delta$ and $|r_1+r_2|>p_1$.  For convenience, let $\delta = 1$ and then determine values for the coefficients in (\ref{eqn:g1}) and thus derive the 2-body Hamiltonian (for \AND{}):
\begin{equation}\label{eqn:Hwedge}
H_{\wedge}(x_1,x_2,z_\star)=3z_\star +x_1\wedge x_2-2z_\star\wedge x_1-2z_\star\wedge x_2.
\end{equation}
We hence or otherwise establish the following theorem. 
\begin{theorem}
The following penalty functions embed the logical product $-x_1x_2x_3$ into their lowest energy sector as:  
\begin{equation}
-x_1x_2x_3=\min_{z\in\{0,1\}}z_\star(2-x_1-x_2-x_3), 
\end{equation}
and  
\begin{equation}
-x_1x_2x_3=\min_{z\in\{0,1\}}z_\star(-x_1+x_2+x_3)-x_1x_2-x_1x_3+x_1. 
\end{equation}
\end{theorem}

\begin{example}
Let us now develop a two-body penalty function that performs the {\sf COPY} operation. In other words, let us develop the penalty function such that the low-energy subspace is in 
$$\spn\{\ket{000}, \ket{111}\}.$$ 
We will consider three spins and place pairwise equality penalties 
$$f_=(x_i, x_j) = x_i(1-x_j) + (1-x_i)x_j$$
connecting the spins with a triangular interaction graph. Assigning the value $\ket{a}$ to any spin and traversing the triangle minimizes this penalty when all spins take the same value $\ket{a}$.  Labeling the spins $i, j$ and $k$ the penalty function reduces as 
\begin{equation}
f_{\text{copy}} = 2 x_3 + 2 x_1+2 x_2-2 x_2 x_3 -2 x_2 x_1-2 x_3 x_1. 
\end{equation}
\end{example} 

In addition to embedding Boolean logic into the ground state spin system, one can consider various other quadratic binary optimisation problems.  

\begin{example}(Penalty function with constraints). 
A real valued equality constraint of the form 
$$
f(x) = c
$$ 
for $x\in \mathbb R^n$, $c\in \mathbb R$ can be converted to an unconstrained optimization problem using the following penalty function
$$ 
P(x) = A(f(x)-c)^2
$$
where $P(x) = 0$ when the constraint is reached; $P(x)>0$ when the constraint is not reached and $A\in \mathbb{R}$ is a scaling factor. Such problems are readily mapped to the Ising model by considering a Boolean embedding of $x$. 
\end{example}

\begin{example}(Number partitioning).\label{ex:partitioning}
Given a set of $N$ positive numbers $S = {n_1, \dots, n_N }$, is there a partition of this set of numbers into two disjoint subsets $R$ and $S - R$, such that the sum of the elements in both sets is the same?

Let $n_i~(i = 1,\dots, N = |S|)$ describe the numbers in set $S$. It can be shown that  
\begin{equation}
\mathcal{H}= \biggl(\sum_i n_i s_i\biggr)^2 \geq 0
\end{equation}
vanishes if and only if such a disjoint partition exists.  Here $s_i$ is a spin variable $\in \pm 1$. 
\end{example}

The following applies to number partitioning (Example \ref{ex:partitioning}). 

\begin{remark}($\mathbb{Z}_2$ symmetry). 
Consider the tunable two-body Ising Hamiltonian \index{Ising Model} acting on $n$ spins as 
\begin{equation}
\mathcal{H} = \sum_{i< j}J_{ij}Z_iZ_j
\end{equation}
Using the identity that $XZX= -Z$ or otherwise, for $\tilde{X}=\bigotimes_{l=0}^n X_l$ it can be shown that 
\begin{equation}
\left[\tilde{X}, \mathcal{H}\right] = 0 
\end{equation}
and hence one can establish that the definition of $\ket{0}$, $\ket{1}$ is entirely arbitrary with respect to $\mathcal{H}$. 
\end{remark}

\begin{proposition}[Biamonte 2008, \cite{B08}]
Out of the 16 possible functions of 2-input and 1-output variable, it can be proven that only two are not realizable using 3-spins.  These are the 2-local penalty Hamiltonians for $\XOR{}$ ($\oplus$) and \emph{equivalence} ($\odot$)~\footnote{Where \emph{exclusive} \OR{} ($\XOR{}$) is given as $f_\oplus(x_1,x_2)\bydef x_1\oplus x_2=\bar x_1x_2\vee x_1\bar x_2= x_1+x_2-2x_1\wedge x_2$, and \emph{equivalence} as
$f_\odot(x_1,x_2)\bydef x_1\odot x_2=\bar x_1\bar x_2\vee x_1 x_2= 1-x_1-x_2+2x_1\wedge x_2$.}, 
which are each possible to realize by adding a single mediator qubit. This was proven by contradiction in \cite{B08}. 
\end{proposition}

\chapter{The Structure of Quantum vs Probabilistic Computation} \label{chap:qvspc}

To understand quantum computation, we will recall and contrast quantum mechanics with stochastic mechanics.  Our development partly follows a book coauthored with John Baez \cite{2012arXiv1209.3632B} as well as other works \cite{faccin2013degree, De_Domenico_2016} which among other results were partially reviewed in \cite{2017arXiv170208459B}.  Here we offer a more computational focus.  

\section{Defining Mechanics} 

Understanding quantum computation involves being able to contrast quantum computation from other models.  We might then compare standard deterministic or classical bits (c-bits), versus stochastic or probabilistic bits (p-bits), versus quantum bits (q-bits or qubits).  We will begin by mentioning a summary in Table \ref{tab:cvspvsqbits}. 

Our starting place is to describe and contrast stochastic versus quantum mechanics and then point out some of the basic implications the similarities and differences imply when considering walks on graphs.

\begin{table}[h]
    \centering
    \begin{tabular}{m{0.15\textwidth} m{0.15\textwidth} m{0.30\textwidth} m{0.30\textwidth}}
    & bits & probabilistic bits & qubits 
    \end{tabular}
    \noindent
    \begin{tabular}{m{0.15\textwidth} |m{0.15\textwidth}| m{0.30\textwidth}| m{0.30\textwidth}}
    \hline
    state (single unit) & bit $\in \lbrace 0,1 \rbrace $ & real vector \newline $a, b \in \mathbb{R}_{+}$ \hfill $a+b=1$ \newline $\Vec{p} = a\Vec{0} + b\Vec{1}$ or $a \ket{0} + b\ket{1}$  & complex vector \newline $\alpha , \beta \in \mathbb{C}$ \hfill $\abs{\alpha}^{2} + \abs{\beta}^{2} = 1$ \newline 
    $\Vec{\psi} = \alpha\Vec{0} + \beta\Vec{1}$ or $\alpha \ket{0} + \beta \ket{1}$
    \\
    \hline
    
    state (multi-unit) & bitstring \newline $x \in \lbrace 0,1 \rbrace^{n}$ & prob.distribution (stochastic vector) \newline $\Vec{p} = \sum_{x\in\{0, 1\}^n} a_x \ket{x} \in [\mathbb{R}^2_+]^{\otimes n}$ & wavefunction (complex vector) \newline $\Vec{\psi} =\sum_{x\in\{0, 1\}^n} \alpha_x \ket{x} \in [\mathbb{C}^2]^{\otimes n}$
    \\
    \hline
    
    operations & Boolean logic & stochastic matrices \newline $\sum_{j} \mathcal{U}_{ij} =1, $ $\mathcal{U}_{ij}\geq 0$  & unitary matrices \newline $\mathcal{U}^{\dagger}\mathcal{U} = \bf{1}$ 
    \\
    \hline
    
    component ops & Boolean gates & tensor product of matrices & tensor product of matrices
    \\
    \hline
    
    \end{tabular}
    \caption{Summary of deterministic, probabilistic and quantum bits. We use the standard notation that $\mathbb{R}^2_+$ denotes the two-dimensional real vector space with non-negative entries. Likewise, $\mathbb{C}^2$ is the two-dimensional complex vector space.  The space of $n$ pbits,  $n$ qubits are respectively given by the tensor product of spaces, $[\mathbb{R}^2_+]^{\otimes n}$ and $[\mathbb{C}^2]^{\otimes n}$. }
    \label{tab:cvspvsqbits}
    
\end{table}

\subsection{Stochastic time evolution}

The stochastic master equation gives the time evolution of a state
\begin{equation}
\frac{d}{dt}\ket{\psi (t)} =-{\mathcal H}\ket{\psi (t)}
\end{equation}
with solution $\ket{\psi (t)}= e^{-t{\mathcal H}}\ket{\psi (0)}$
which can be checked as
\begin{equation}
\frac{d}{dt}e^{-t{\mathcal H}}\ket{\psi(0)} = -{\mathcal H} e^{-t{\mathcal H}} \ket{\psi(0)} = -{\mathcal H} \ket{\psi(t)}. 
\end{equation}
\subsection{Quantum time evolution}
The quantum Schr{\"o}dinger's equation \index{Schr{\"o}dinger's equation} gives the time evolution of a state in quantum mechanics
\begin{equation}
\frac{d}{dt}\ket{\psi (t)} =-\imath {\mathcal H}\ket{\psi (t)}
\end{equation}
with solution $\ket{\psi (t)}= e^{-\imath t {\mathcal H}}\ket{\psi (0)}$
which can be checked as
\begin{equation}
\frac{d}{dt}e^{-\imath t{\mathcal H}}\ket{\psi(0)} = -\imath {\mathcal H} e^{-\imath t{\mathcal H}} \ket{\psi(0)} = -\imath  {\mathcal H} \ket{\psi(t)}. 
\end{equation}
The operator $\mathcal{H}$ is called the Hamiltonian. Its properties depend whether we are working in a stochastic or quantum system.

\begin{example}(Exact Solution of the T.I.S.E.) 
If Hermitian ${\mathcal A}^2 = \eye$ then 
\begin{equation}
e^{-\imath \theta {\mathcal A}} = \eye \cos (\theta) - \imath {\mathcal A} \sin (\theta).  
\end{equation}
\end{example}

\begin{example}(Exact Solution of the T.I.S.E.)
Let ${\mathcal P}^2 = {\mathcal P}$ then   
\begin{equation}
e^{-\imath \theta {\mathcal P}} = \eye+{\mathcal P}(e^{-\imath \theta} - 1)
\end{equation} 
follows by the series expansion 
\begin{equation}\label{eqn:expexpan}
e^{\mathcal A} = \sum_{k=0}^{k=\infty} \frac{{\mathcal A}^k}{k!}. 
\end{equation}
Note that for Hermitian ${\mathcal A} \in {\mathcal L}({\mathbb{C}}^d)$, the Cayley–Hamilton theorem implies that \eqref{eqn:expexpan} can be re-expressed as a new polynomial in ${\mathcal A}$ of degree equal  to the dimension of  the column space  (equivalently  row space).  The characteristic equation
\begin{equation}
    p(\lambda) = \det (\lambda \eye - {\mathcal A}),  
\end{equation}
yields the polynomial $p(\lambda)$.  This polynomial evaluated at ${\mathcal A}$ vanishes identically, thereby bounding the upper limit $k\leq d$ in \eqref{eqn:expexpan}. 
\end{example}

\subsection{Stochastic Hamiltonians}\label{subsec:phsm}

We will now consider the operators which induce time evolution in stochastic mechanics.  These arise in the literature under several names including {\it transition rate matrix},  {\it intensity matrix}, {\it infinitesimal generator matrix} or a {\it stochastic Hamiltonian}.  

\begin{definition}[Stochastic Hamiltonian]
${\mathcal H}$ is infinitesimal stochastic for time evolution in stochastic mechanics given by $e^{-t{\mathcal H}}$ to send stochastic states to stochastic states when: 
\begin{enumerate}
\item its rows sum to zero \eqref{columns-zero}.
\item its off diagonal entries are real and non-positive \eqref{nonpositive}.
\begin{equation}
\label{columns-zero}
\sum_j {\mathcal H}_{ij} = 0
\end{equation}
\begin{equation}
\label{nonpositive}
i \neq j \Rightarrow {\mathcal H}_{ij} \leq 0. 
\end{equation}
\end{enumerate}
\end{definition}

\begin{remark}
The term {\it infinitesimal stochastic} was used in \cite{2012arXiv1209.3632B} while {\it intensity matrix} more commonly appears in the literature to describe the generator of a one dimensional stochastic semigroup. 
\end{remark}

\begin{definition}
A {\it semigroup} is an algebraic structure consisting of a set together with an associative binary $\cdot$ pairing.  Hence, a semigroup relaxes the inverse requirement of a group and the identiy requirement.  A unital semigroup is a semigroup with an identity. 
\end{definition}

\begin{remark}
In our case, a one dimensional (semi)group has a product structure for non-negative $s$, such that $U(s) \cdot U(s') = U(s+s')$ where $U(0) = \eye$ and $U$ is the $s$-dependent exponential image of a (infinitesimal stochastic/Hermitian) generator. 
\end{remark}

\subsection{Quantum Hamiltonians}

We will now recall the standard definition of an operator that generates time-evolution in quantum mechanics.  These operators are called, quantum Hamiltonians (or typically just Hamiltonians), Hermitian or self-adjoint maps. 

\begin{definition}[Quantum Hamiltonian]
${\mathcal H}$ is a quantum Hamiltonian when ${\mathcal H}={\mathcal H}^{\dagger}$).  Then time evolution given by $e^{-\imath t{\mathcal H}}$ sends quantum states to quantum states, ${\mathcal H} = {\mathcal H}^{\dagger}$. 
\end{definition}

\begin{remark}[Basis dependence]
The Hamiltonian in quantum mechanics can be written as  
\begin{equation}
    ({\mathcal H}^{\dagger})_{qr} = \overline{{\mathcal H}}_{rq}
\end{equation} 
or as
$$
{\mathcal H}^\dagger = \sum_l \overline{\lambda}_l (\ketbra{l}{l})^{\dagger} = \sum_{l} \lambda_l \ketbra{l}{l}. 
$$
\end{remark}

\begin{remark}
The eigenvalues of ${\mathcal H}$ take only real values.
\end{remark}

A linear map sending quantum states to quantum states is called an isometry (Definition~\ref{def:isometry}). Isometries represent the general collection of deterministic operations on quantum states.  A subclass of isometries are unitary maps. 
\begin{equation}
\mathcal U_t = e^{-\imath t{\mathcal H}}
\end{equation}
Such maps are characterized by
\begin{equation}
\mathcal U^{\dagger} \mathcal U=\eye. 
\end{equation}

\begin{remark}
With slight abuse of notation we use $\mathcal U$ to denote both quantum and stochastic propagators.  
\end{remark}

A linear map that sends stochastic states to stochastic states is the following {\it stochastic operator} or {\it Markov matrix}. 
\begin{equation}
\mathcal U_t = e^{-t{\mathcal H}}
\end{equation}
Such maps are characterized as
\begin{equation}
 \sum_{j}\mathcal U_{ij} = 1
 \end{equation}

\begin{equation}
 \mathcal U_{ij} \geq 0 
\end{equation}

\begin{definition}
A {\it bijection} is an invertible one-to-one function between elements of two sets.
\end{definition}

\begin{example}
The {\sf COPY} operation is bijective. Let $C(x) = (x, x)$ where $C^{-1}(a, b) = \varnothing~(0)$ and $C^{-1}(a, a) = a$. Then $C^{-1}$ is a left and right inverse, making $C$ bijective as $C^{-1}(C(x))=C^{-1}(x, x) = x$. 
\end{example}

\begin{definition}\label{def:isometry}
An {\it isometry} is a norm preserving bijective map on state(s).  
\end{definition}

\begin{example}[Quantum Isometry]
Consider the {\sf COPY} ($C$) operation defined on one basis (which then does not violate no-cloning \cite{Wootters1982}).  We have $C(\ket{\psi}) = \ket{\psi} \otimes \ket{\psi}$.  Clearly for $|\braket{\psi}{\psi}|^2=1$, $\ket{\psi} \otimes \ket{\psi}$ is normalized.  The inverse of $C$ is defined element wise.  
\end{example}

\subsection{Observables}
In quantum mechanics an observable is given by a self-adjoint (Hermitian) matrix ${\mathcal O}$ and the expected value of ${\mathcal O}$ relative to quantum state $\ket{\varphi}$ is
\begin{equation}\label{eqn:expected}
 \bra{\varphi} {\mathcal O}\ket{\varphi} = \sum_{ij} \overline{\varphi}_i {\mathcal O}_{ij}\varphi_j, 
\end{equation}
where $\ket{\varphi} = \sum_i \varphi_i \ket{i}$. 

\begin{remark}
The L.H.S.~of \eqref{eqn:expected} is alternatively and equivalently written as $\langle \varphi, {\mathcal O}(\varphi) \rangle$, borrowing functional analysis style notation still common in certain circles. 
\end{remark}

In stochastic mechanics, an observable ${\mathcal O}$ takes value ${\mathcal O}_i$ for each configuration $i$ and the expected value of ${\mathcal O}$ in the stochastic state $\varphi$ (equivalently $\ket{\varphi}$) is 
\begin{equation}
{\mathcal O}(\varphi)=\sum_i {\mathcal O}_i \varphi_i
\end{equation}

\begin{definition}\label{def:simplegraph}
A {\it simple graph} contains no-self loops (e.g.~no single node is both the source and the sink of the same edge). 
\end{definition}

\begin{remark}
Some authors alternatively define a {\it simple graph} to be an unweighted, undirected graph containing no graph loops or multiple edges.  
\end{remark}

\begin{definition}\label{symmetricgraph}
A {\it symmetric graph} has only unweighted and undirected edges. Or equivalently, for all nodes $a$ and $b$ in graph $G$, there is an edge from $b$ to $a$ for each edge from $a$ to $b$. 
\end{definition}

\begin{definition}[Adjacency matrix]\label{remark:basis}
Given a simple and symmetric graph $G$, we pick a set of {\it labels} $S$ with cardinality equal to the number of nodes of $G$.  Each ordering of $S$, induces a basis which lifts naturally to a basis to represent $G$ by an adjacency matrix $A$. 
\end{definition}

Let us then develop an extended example for the stochastic case. 
Let $A$ be the adjacency matrix of the simple, symmetric graph as follows. 
\begin{center}
\begin{circuitikz} 
\draw [fill] (-2,2) circle [radius=0.1];
\draw (-1.3, 2) node [black]  {} ;
\draw [thick] (-1,1) 
      to (-2,2) node [black, below=7] {};
      
\draw (-1.7, 1.3 )  node [black] {$1$};  
\draw (-1.7, 2.4 )  node [black] {$\ket{0}$}; 
      
\draw [fill] (-1,1) circle [radius=0.1] node [black,right=6]  {$\ket{1}$} ;
\end{circuitikz}
\end{center}
$A$ is then Pauli $X$ in this case. To form the graph Laplacian from $A$ we define $D_{ii} = \sum_j A_{ji}$ where $\forall$ $i \neq j$ $D_{ji}=0$. And so we arrive at \eqref{eqn:da2.18}. 
\begin{equation}\label{eqn:da2.18}
D =  \eye,  \hspace{20pt}  {\mathcal H} = D-A = \eye-X.
\end{equation}
From a standard calculation, ${\mathcal H}$ yields the following eigensystem \eqref{eqn:eigs1}.
\begin{equation}\label{eqn:eigs1}
\begin{split}
& \lambda_{0} = 0 \hspace{20pt} 
   \ket{\lambda_{0}} = \ket{+}\\
& \lambda_{1} = 2 \hspace{20pt} 
   \ket{\lambda_{1}} = \ket{-}
\end{split}
\end{equation}
Let us then assume that the initial state is \eqref{eqn:init2.20}. 
\begin{equation}\label{eqn:init2.20}
\ket{\varphi(0)} = \ket{0}
\end{equation}
In the eigenbasis \eqref{eqn:init2.20} becomes \eqref{eqn:basischange2.21}. 
\begin{equation}\label{eqn:basischange2.21}
\ket{0} = \frac{1}{\sqrt[]{2}} (\ket{\lambda_0}+\ket{\lambda_1})
\end{equation}
and our time dependent propagator in the Laplacian eigenbasis \eqref{eqn:da2.18} is
\begin{equation}
e^{-t{\mathcal H}} = \sum_j e^{-t\lambda_i}\ketbra{\lambda_i}{\lambda_i}. 
\end{equation}
The time evolution of a state \eqref{eqn:init2.20} starting at node $\ket{0}$ is given as \eqref{eqn:stochprop2.23}. 
\begin{equation}\label{eqn:stochprop2.23}
\begin{split}
e^{-t{\mathcal H}} \ket{0} & = \frac{1}{\sqrt{2}} (e^{-t\lambda_0}\ket{\lambda_0} +  e^{-t\lambda_1}\ket{\lambda_1})\\[6pt]
&= \frac{1}{2}(1+e^{-t\lambda_1})\ket{0} + \frac{1}{2}(1-e^{-t\lambda_1})\ket{1}
\end{split}
\end{equation}
Figure \ref{fig:stochprob2.2} illustrates the time dependence of \eqref{eqn:stochprop2.23}.  The initial state ($\ket{0}$) decays with probability 
$$
\bra{0}e^{-t{\mathcal H}}\ket{0} = \frac{(1+e^{-t\lambda_1})}{2}, 
$$
while the probability of measuring state $\ket{1}$ increases with probability 
$$ 
\bra{1}e^{-t{\mathcal H}}\ket{1} = \frac{(1-e^{-t\lambda_1})}{2}, 
$$ 
for $\lambda_1 = 2$. The long time behavior is given as follows. 
\begin{equation}
    \lim_{t \to \infty}\ket{\phi(t)} = \frac{1}{2}\ket{0} + \frac{1}{2}\ket{1}
\end{equation}


\begin{figure}[htbp!]
    \centering
    \includegraphics[width=11cm]{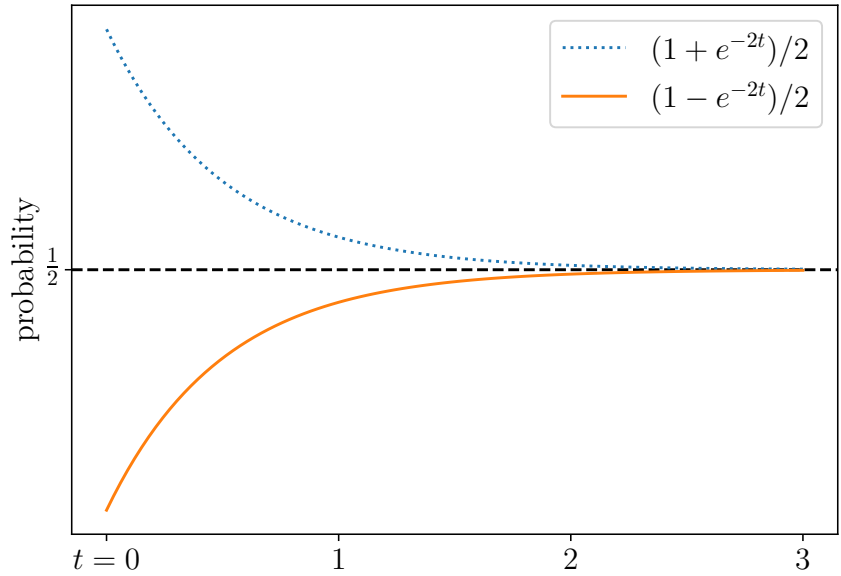}
    \caption{Stochastic probability versus time generated by the graph Laplacian \eqref{eqn:da2.18} with initial state $\ket{0}$.  The final state is the equal probabilistic mixture of $\ket{0}$ and  $\ket{1}$.}
    \label{fig:stochprob2.2}
\end{figure}

    
    
    
    

Let us repeat the analysis for the case of a quantum particle on the same graph yet with a complexified form of the generator from \eqref{eqn:da2.18}. 
The time-dependent quantum particle of a particle walking on this same graph is given as \eqref{eqn:quantumprop2.25} with initial state \eqref{eqn:quantuminit2.26}. 

\begin{align}\label{eqn:quantumprop2.25}
e^{-\imath t{\mathcal H}}& = \sum_i e^{-\imath t\lambda_i}\ketbra{\lambda_i}{\lambda_i}
\\
\label{eqn:quantuminit2.26}
\ket{0}& = \frac{1}{\sqrt[]{2}} \ket{\lambda_0} + \frac{1}{\sqrt[]{2}} \ket{\lambda_1} 
\end{align}

The dynamics readily simplify to the following form \eqref{eqn:simpleqev2.27}, from which we arrive at the time dependent probability of measuring $\ket{0}$ as \eqref{eqn:quantumprobzero2.28}. 

\begin{equation}\label{eqn:simpleqev2.27}
e^{-\imath t{\mathcal H}}\ket{0} = \frac{1}{\sqrt[]{2}} \ket{\lambda_0} + \frac{e^{-\imath 2t}}{\sqrt[]{2}} \ket{\lambda_1}
\end{equation}

\begin{equation}\label{eqn:quantumprobzero2.28} 
\begin{split}
\text{Prob}(\ket{0})&= \frac{1}{2}\left[ \frac{1}{\sqrt[]{2}} (1 + e^{-\imath 2t}) \right]
\left[ \frac{1}{\sqrt[]{2}} (1 + e^{\imath 2t}) \right]\\
&=\frac{1}{4}(2+e^{-\imath 2t} + e^{\imath 2t})\\
&=\frac{1+\cos{2}t}{2}
\end{split}
\end{equation}


\begin{figure}[htbp]
    \centering
    \includegraphics[width=11cm]{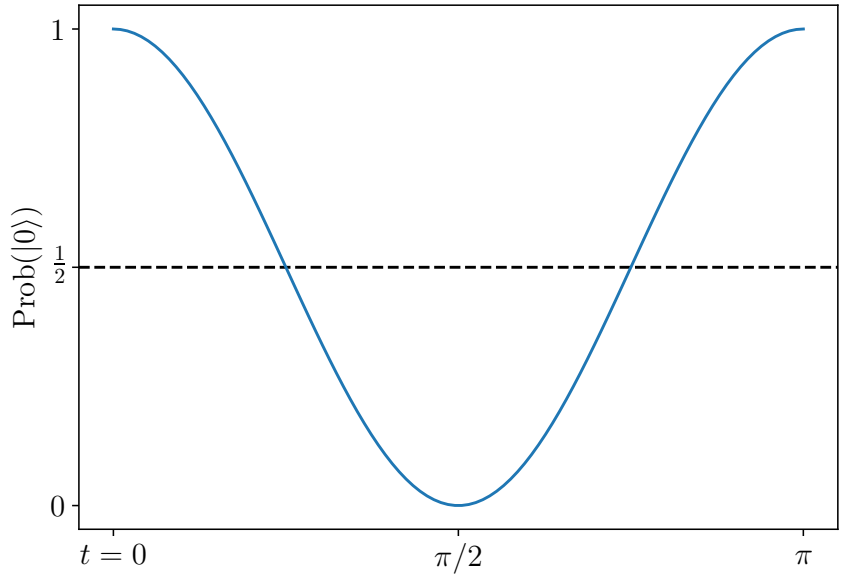}
    \caption{Quantum probability versus time generated by the graph Laplacian \eqref{eqn:da2.18} with initial state $\ket{0}$ exhibits strong oscillations.}
    \label{fig:prob-quantum}
\end{figure}

Figure \ref{fig:prob-quantum} illustrates that probability for quantum evolution of a closed systems exhibits oscillations. These oscillations are central to the function of a gate model quantum processor. 

\begin{example}[Local and composite rotation gates]\label{ex:lcrg}
In the gate model of quantum computation, various techniques exist to create quantum logic gates \cite{NC}. For example, a Z rotation is evolved as
\begin{equation}
\begin{split}
e^{-\imath \theta Z} &= 
e^{-\imath \theta} \ketbra{0}{0} +
e^{\imath \theta} \ketbra{1}{1} \\ 
&=e^{-\imath \theta}(\ketbra{0}{0} + e^{2\imath \theta}\ketbra{1}{1})\\ 
&\sim \ketbra{0}{0} + e^{\imath 2\theta} \ketbra{1}{1}. 
\end{split}
\end{equation}
It is left to the reader to consider the following gate generated by non-commuting terms. 
\begin{equation}
\text{\sf H}=\frac{1}{\sqrt[]{2}}(X+Z) \index{Quantum Gates}
\end{equation}
and to find $t$ s.t.~$e^{-\imath t \text{\sf H}} \sim  \text{\sf H}$ (up to a unit modulus complex number) realizes the Hadamard gate.  The gate can also be created by a product of unitary gates \cite{NC}. 
\end{example}

\begin{remark}[Quantum Circuits]
We will now begin to utilize an evident and widely used graphical depiction of quantum gates.  This will be further developed and connected to the theory of tensor networks in \S~\ref{chap:tensornets}.  Here we will adapt an intuitive approach and focus solely on the evident applications at hand.  
\end{remark}

\begin{definition}[Quantum Circuit---Sketch]
A {\it quantum circuit} also called {\it quantum computational network} is a model for a computation representing a sequence of unitary operators as a sequence of quantum gates on an $n$-qubit register. The identity gates are given as wires while the gates appear as boxes on the wires in which they act non-trivially upon. Time goes from right to left across the page herein (to match how equations are written). 
\end{definition}

It is typical to consider a tunable Hamiltonian that will be fully controllable and used to sequence quantum gates.  The standard Hamiltonian assumed herein follows in Definition \ref{def:IMTF}. 

\begin{definition}[Tunable Ising Model with Transverse Field]\label{def:IMTF}
The tunable Ising model allows the application of any gate formed from the Hamiltonian 
\begin{equation}
    {\mathcal H} = \sum_{i,j}J_{ij}Z_iZ_j + \sum_i h_i Z_i + \sum_i \delta_i X_i
\end{equation}
where typically gates are fundamentally assumed to act on one or two qubits at a time. Furthermore, it is typical to consider each gate as being generated by commuting Hamiltonian terms (see Example \ref{ex:lcrg}). 
\end{definition}

\begin{definition}[Single Qubit Gate]
A single qubit gate is given by the following quantum gate $U_t$. 
\begin{equation*}
\Qcircuit @C=1em @R=1em {
\lstick{U_t = e^{-\imath t{\mathcal H}} \longrightarrow} & \gate{U_t} & \qw
}
\end{equation*}
\end{definition}

As an extended example, let us create a {\sf CN}\index{Quantum Gates} gate acting as\\
\begin{center}
\begin{circuitikz}[scale=2]
\draw (-2,0) circle [radius=0.002] ;
\draw [fill] (2,0) circle [radius=0.1] ;
\draw (1,0) -- (3,0);
\draw (3.4, 0) node [black] {$\ket{a}$};
\draw (2,0) -- (2,-1);
\draw (2,-0.8) circle [radius=0.2] ;
\draw (1, -0.8) -- (3,-0.8);
\draw (3.4, -0.8) node [black] {$\ket{b}$};
\draw (0.6, 0) node [black] {$\ket{a}$};
\draw (0.25, -0.8) node [black] {$\ket{a\oplus b}$};
\draw (2, -1.4) node [black] {$
\longleftarrow$ time};
\end{circuitikz}
\end{center}

One readily expands the {\sf CN} gate as
\begin{equation}\label{eqn:cnfact}
    \text{{\sf CN}}=(P_1 \otimes X + P_0 \otimes \eye). 
\end{equation}
From the fact that $\text{{\sf CN}}^\dagger = \text{{\sf CN}}$ and $\text{{\sf CN}}^2=\eye$, it can be established that
\begin{equation}
\exists \hspace{5pt} \phi \hspace{5pt} | \hspace{5pt} e^{-\imath \phi \text{{\sf CN}}}=  \cos(\phi)\cdot \eye -\imath \sin(\phi)\cdot {\sf CN}\sim \text{{\sf CN}}.
\end{equation}
Rewriting \eqref{eqn:cnfact} as 
\begin{equation}
\text{{\sf CN}}=(\eye \otimes H )(P_1 \otimes Z + P_0 \otimes \eye)(\eye \otimes H^\dagger),  
\end{equation}
the two-body term, $P_1 \otimes Z + P_0 \otimes \eye$, can be realized with a Hamiltonian of the form
\begin{equation}\label{lb}
    \alpha Z_1 + \beta Z_2 + \gamma Z_1 Z_2
\end{equation}
as per Definition \ref{def:IMTF}. 

\begin{remark}
$P_1\otimes U + P_0\otimes \eye$ is unitary if and only if $U$ is unitary.  
\end{remark}

\begin{remark}
One readily solves for $\alpha, \beta, \gamma$ such that \eqref{lb} becomes $P_1 \otimes Z + P_0 \otimes \eye.$ Here we drop terms proportional to $\eye$ since $e^{-\imath \alpha \eye} = e^{-\imath \alpha}\eye \sim \eye$ is a global phase (see Definition \ref{def:gauge}).
\end{remark}

\begin{definition}[Unit and Scalar Gauge]\label{def:gauge}
In quantum theory, a global phase is undetectable.  Hence it is common to consider an equivalency class where $\ket{\psi}$ and $e^{\imath \phi}\ket{\psi}$ are equivalent. This is called working in the {\it unit gauge}: sometimes we work in the {\it scalar gauge}, $\mathbb{C}_{/\{0\}}$ (the field $\mathbb{C}$ exclude $0$). This amounts to mapping numbers picked up during calculation as 
$$
\mathbb{C}_{/\{0\}}\rightarrow 1. 
$$ 
(See also Definition \ref{def:global}). 
\end{definition}

Then we arrive at the sequence\\
\begin{center}
\begin{circuitikz}[scale=1.5] 
\draw  (-0.3,0) rectangle (1.3, 2.5);
\draw (0.5,1.2)node{$e^{-\imath t P_{1}\otimes Z}$};
\draw (-3, 2) -- (-0.3, 2);
\draw (-1, 0.5) -- (-0.3, 0.5);
\draw (-3, 0.5) -- (-2, 0.5);
\draw  (-2, 0) rectangle (-1, 1);
\draw (-1.5,0.5)node{\large \sf H};
\draw (1.3, 2) -- (4, 2);
\draw (1.3, 0.5) -- (2, 0.5);
\draw (3, 0.5) -- (4, 0.5);
\draw  (2,0) rectangle (3, 1);
\draw (2.5,0.5)node{\large \sf H};
\draw (5, 1.2)  node [black] {$\sim$};

\draw [fill] (7,2) circle [radius=0.1] ;
\draw (6.2,2) -- (7.8,2);
\draw (7,2) -- (7,0);
\draw (7,0.2) circle [radius=0.2] ;
\draw (6.2,0.2) -- (7.8,0.2);
\end{circuitikz}
\end{center}

\begin{proposition}
One can find $t$ over the reals such that
$$ 
\text{{\sf CN}}\sim (\eye \otimes H)\cdot (e^{-\imath tP_{1}\otimes Z})\cdot (\eye \otimes H^\dagger). 
$$ 
\end{proposition}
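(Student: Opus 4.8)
The plan is to evaluate the operator on the right-hand side explicitly. The key observation is that $P_1\otimes Z$ is block diagonal with respect to the control qubit, so its exponential --- and hence the whole conjugated expression --- can be read off, reducing the statement to a one-qubit fact about $e^{-\imath tX}$; one then fixes $t$ and compares with $\text{\sf CN}$.

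First I would compute $e^{-\imath t P_1\otimes Z}$. On the range of $P_0\otimes\eye$ the generator $P_1\otimes Z$ acts as $0$, and on the range of $P_1\otimes\eye$ it acts as $Z$ on the second factor; equivalently $(P_1\otimes Z)^2=P_1\otimes\eye$, so the exponential series collapses to
\begin{equation*}
e^{-\imath t P_1\otimes Z}=P_0\otimes\eye+P_1\otimes e^{-\imath tZ}.
\end{equation*}
Conjugating the second tensor factor by $H$ and using $HH^\dagger=\eye$ together with $HZH^\dagger=X$ (immediate from the $2\times 2$ matrices, or from the fact that $H$ interchanges $X$ and $Z$) yields
\begin{equation*}
(\eye\otimes H)\, e^{-\imath t P_1\otimes Z}\,(\eye\otimes H^\dagger)=P_0\otimes\eye+P_1\otimes e^{-\imath tX}.
\end{equation*}

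Next I would apply the exact-solution identity noted earlier for involutions: since $X^2=\eye$, $e^{-\imath tX}=\cos t\cdot\eye-\imath\sin t\cdot X$. Choosing $t=\pi/2$ kills the identity part, $e^{-\imath(\pi/2)X}=-\imath X$, so the conjugated expression equals $P_0\otimes\eye-\imath\,P_1\otimes X$. Using $P_aP_b=\delta_{ab}P_a$ one checks
\begin{equation*}
P_0\otimes\eye-\imath\,P_1\otimes X=\bigl[(P_0-\imath P_1)\otimes\eye\bigr]\,\text{\sf CN},
\end{equation*}
and $(P_0-\imath P_1)\otimes\eye=e^{-\imath\pi/4}\,e^{\imath(\pi/4)Z_1}$ is, up to a global phase, a single-qubit $Z$-rotation of the control line --- a gate that the tunable Ising model of Definition~\ref{def:IMTF} supplies for free through its $h_1Z_1$ term. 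Hence the circuit equals $\text{\sf CN}$ up to a global phase and this local $Z$-phase, which is the sense in which ``$\sim$'' is meant. Taking $t=\pi/2$ (more generally any $t\equiv\pi/2\pmod{\pi}$, with the evident change of sign) therefore proves the claim.

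I expect the only genuine subtlety to be this last step of phase bookkeeping. No real $t$ makes the conjugated exponential equal $\text{\sf CN}$ even up to a scalar: matching the $P_0\otimes\eye$ blocks forces the scalar to be $1$, and then one would need $e^{-\imath tX}=X$, which is impossible since $\det e^{-\imath tX}=1\neq-1=\det X$. The discrepancy that survives is precisely a diagonal single-qubit rotation on the control qubit, so care is needed to read ``$\sim$'' as equality up to a global phase together with such (free) local $Z$-rotations. Granted that reading, the rest is the one-line computation above, anchored by the block decomposition $e^{-\imath tP_1\otimes Z}=P_0\otimes\eye+P_1\otimes e^{-\imath tZ}$.
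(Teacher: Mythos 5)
Your computation is correct, and it is in essence the argument the paper intends but leaves implicit: the text arrives at the proposition by writing $\text{\sf CN}=(\eye\otimes H)(P_1\otimes Z+P_0\otimes\eye)(\eye\otimes H^\dagger)$, invoking the involution identity $e^{-\imath(\pi/2)A}\sim A$ for $A^2=\eye$, expanding in the Ising ($Z$) basis and ``dropping terms proportional to $\eye$'' as a global phase. Your block decomposition $e^{-\imath tP_1\otimes Z}=P_0\otimes\eye+P_1\otimes e^{-\imath tZ}$, the conjugation $HZH^\dagger=X$, and the choice $t=\pi/2$ reproduce exactly that route, just more explicitly.

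Where you go beyond the paper is the phase bookkeeping, and you are right about it. The full controlled-$Z$ generator is $P_1\otimes Z+P_0\otimes\eye=\tfrac{1}{2}(\eye+Z_1+Z_2-Z_1Z_2)$, so discarding only the $\eye$ term leaves $\tfrac{1}{2}(Z_1+Z_2-Z_1Z_2)$, not $P_1\otimes Z=\tfrac{1}{2}(Z_2-Z_1Z_2)$; the paper's displayed exponent silently drops the $\tfrac{1}{2}Z_1$ piece as well. Consequently, with ``$\sim$'' read strictly as global-phase equivalence (Definitions \ref{def:gauge} and \ref{def:global}), no real $t$ makes $(\eye\otimes H)e^{-\imath tP_1\otimes Z}(\eye\otimes H^\dagger)$ equal to $\text{\sf CN}$ — your determinant argument settles this — and what survives at $t=\pi/2$ is precisely the phase gate $(P_0-\imath P_1)\otimes\eye=e^{-\imath\pi/4}e^{\imath(\pi/4)Z_1}$ on the control, which is harmless given the $h_iZ_i$ terms of Definition \ref{def:IMTF}. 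So your proof is correct provided ``$\sim$'' is understood to absorb such a local $Z$-rotation (or, equivalently, one restores the generator to the full $P_1\otimes Z+P_0\otimes\eye$, i.e.\ $\pi\,P_1\otimes P_1$, in which case equality holds up to a genuine global phase). This is a sharper and more honest statement than the paper's sketch, and it would be worth saying explicitly.
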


\begin{theorem}[{\sf CN}s plus local rotations is universal]
{\sf CN}, together with local unitary rotations is a universal gate set for quantum computation \cite{NC}.  \index{Quantum Gates}
\end{theorem}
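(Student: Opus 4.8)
The plan is to establish \emph{exact} universality: every unitary $U$ acting on the $2^n$-dimensional state space of $n$ qubits can be written as a finite product of {\sf CN} gates and single-qubit unitaries. Since all single-qubit rotations are assumed available, and the preceding construction together with Definition~\ref{def:IMTF} realizes {\sf CN} from the tunable Ising model, it suffices to prove this purely group-theoretic decomposition statement. Following \cite{NC}, the argument proceeds in two reductions: first, decompose an arbitrary unitary into a product of two-level unitaries; second, implement each two-level unitary using {\sf CN}s and single-qubit gates.

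First I would show that any $U \in \mathrm{U}(2^n)$ factors as a product of at most $\binom{2^n}{2}$ \emph{two-level} unitaries --- unitaries acting as the identity on all but a two-dimensional coordinate subspace $\spn\{\ket{s},\ket{t}\}$. This is a Givens-rotation style elimination: by left-multiplying $U$ with suitably chosen two-level unitaries one clears the below-diagonal entries of the first column, then the second, and so on, until the product is reduced to a diagonal unitary (itself one-level, absorbable into the last factor). Taking inverses recovers $U$ as the desired product, with the scalar-gauge convention of Definition~\ref{def:gauge} absorbing overall phases.

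Next I would implement a single two-level unitary $V$ supported on $\spn\{\ket{s},\ket{t}\}$. Choose a Gray-code sequence $s = g_0, g_1, \dots, g_m = t$ of bit strings in which consecutive strings differ in exactly one bit. Applying {\sf CN} gates (controlled on the bits that agree) along $g_0 \to g_1 \to \cdots \to g_{m-1}$ transports the problem to a two-level unitary on the basis states $\ket{g_{m-1}}$ and $\ket{g_m}$, which differ in a single bit; this is a multiply-controlled single-qubit gate $\tilde V$, which one applies and then undoes the {\sf CN} ladder. It remains to express multiply-controlled single-qubit unitaries and multiply-controlled {\sf NOT}s over {\sf CN} and single-qubit gates: a $k$-controlled gate splits recursively into Toffoli gates plus one controlled-$U$, each Toffoli decomposes into {\sf CN}s and single-qubit gates, and a single controlled-$U$ uses the $ABC$ identity --- writing $U = e^{\imath\alpha} A X B X C$ with $ABC = \eye$, the controlled-$U$ equals a phase gate on the control times $(\eye\otimes A)\,{\sf CN}\,(\eye\otimes B)\,{\sf CN}\,(\eye\otimes C)$.

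The main obstacle is this last reduction: verifying that every single-qubit $U$ admits the $ABC$ decomposition (an Euler-angle argument) and, more delicately, that the recursive breakdown of $k$-controlled gates goes through --- either with a logarithmic number of ancilla qubits or, ancilla-free, at the cost of a larger but still finite {\sf CN} count. One should also flag that \emph{universal} here means every unitary is reachable (exactly, since all single-qubit gates are in hand), not that it is reachable with polynomially many gates; circuit efficiency is a separate question. Composing the two-level decomposition with the per-two-level implementation then expresses an arbitrary $n$-qubit unitary over {\sf CN} and single-qubit rotations, proving the theorem.
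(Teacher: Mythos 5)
Your proposal is correct and is essentially the standard two-level-unitary/Gray-code/$ABC$-decomposition argument from Nielsen and Chuang, which is exactly what the paper relies on: the theorem is stated there with a citation to \cite{NC} and no independent proof. The only cosmetic point is that the Gray-code ladder uses multiply-controlled {\sf NOT}s rather than bare {\sf CN} gates, which you in any case handle in your subsequent reduction of $k$-controlled gates.
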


\begin{remark}
\S~\ref{sec:variational} makes use of a result by Shi which established the following.  A gate comprising the controlled not (a.k.a.~Feynman gate) plus any one-qubit gate whose square does not preserve the computational basis is universal for quantum computation \cite{SHI02}. 
\end{remark}

\begin{remark}
\S~\ref{sec:qma} makes use of a result by Bernstein and Vazirani which showed that arbitrary quantum circuits may be simulated using real-valued gates operating on real-valued wave functions~\cite{BV97}.
\end{remark}

\begin{table}[htbp!]
\begin{center}
\renewcommand{\arraystretch}{1.5}
\resizebox{0.75\textwidth}{!}{\begin{minipage}{\textwidth}
\begin{tabular}{ p{3cm}|p{5cm}|p{5cm} }
         & {\bf quantum mechanics} & {\bf stochastic mechanics} \\\hline 
  {\bf state}  
& vector $ \psi \in \mathbb{C}^n$ with
$$ \sum_i |\psi_i|^2 = 1 $$
  &
vector $\psi \in \mathbb{R}^n$ with $$ \sum_i \psi_i = 1 $$ 
and we typically insist that,  $$ \psi_i \ge 0 $$ 
\\\hline 
{\bf observable} & $n \times n$ matrix ${\mathcal O}$ with 
$$  {\mathcal O}^\dagger = {\mathcal O}$$
where $({\mathcal O}^\dagger)_{i j} \bydef \overline{{\mathcal O}}_{j i}$
  & vector ${\mathcal O} \in \mathbb{R}^n$  \\\hline 
{\bf expected value} &
$$ \langle \psi |{\mathcal O}| \psi \rangle \bydef \sum_{i,j} \overline{\psi}_i {\mathcal O}_{i j} \psi_j $$
&  $$ \langle {\mathcal O} \psi \rangle \bydef \sum_i {\mathcal O}_i \psi_i $$ 
\\\hline
  {\bf symmetry}\break (linear map sending states to states) & unitary $n \times n$ matrix: $$ U U^\dagger = U^\dagger U = \eye $$
 & stochastic $n \times n$ matrix: $$ \sum_j U_{i j} = 1 , \quad U_{i j} \ge 0 $$
\\\hline  
  {\bf symmetry \break generator} & self-adjoint $n \times n$ matrix: $${\mathcal H}={\mathcal H}^\dagger$$ 
 & infinitesimal stochastic $n \times n$ \break matrix:
$$ \sum_j {\mathcal H}_{i j}=0 , \quad  i\neq j \;$$ 
\newline $$\Rightarrow \; {\mathcal H}_{i j} \le 0 $$ 
\\\hline
{\bf symmetries from symmetry \break generators} &
$$ \mathcal{U}(t) = \exp(-\imath t{\mathcal H}) $$ &
$$ \mathcal{U}(t) = \exp(-t{\mathcal H}) $$
\\\hline
{\bf equation of \hfill \break motion} & $$\imath \frac{d}{dt} \psi(t) = {\mathcal H} \psi(t)$$ with solution $$\psi(t) = \exp(-\imath t{\mathcal H})\psi(0)$$ & $$\frac{d}{dt} \psi(t) = -{\mathcal H} \psi(t)$$ with solution $$\psi(t) = \exp(-t{\mathcal H})\psi(0)$$ 
 \end{tabular} 
\end{minipage} }
\caption{Summary of quantum versus stochastic mechanics reproduced from \cite{2012arXiv1209.3632B}.}
\end{center}
\end{table}

\clearpage 
\section{Walks on Graphs: quantum vs stochastic}

We previously considered as an extended example, showing how a simple graph $G$ can be used to define both stochastic and quantum walks. We will extend our previous example then move towards a more general theory. This segment follows partially a blog post on Azimuth by Tomi Johnson and edited by several of us.\footnote{Quantum Network Theory (Part 1), Azimuth \url{https://johncarlosbaez.wordpress.com/2013/08/05/quantum-network-theory-part-1/}.} The focus of the article was joint work appearing in \cite{faccin2013degree}. 

As per Definition \ref{def:simplegraph}, an example of a simple graph appears in Figure~\ref{fig:simpleG}.  To avoid complications, let’s stick to simple graphs with a finite number $n$ of nodes. Let’s also assume you can get from every node to every other node via some combination of edges i.e.~the graph is connected.  So finally we note that for the graph in Figure~\ref{fig:simpleG}, there is at most one edge between any two nodes, there are no edges from a node to itself and all edges are undirected. 

\begin{figure}
\centering
\tikzset{every picture/.style={line width=0.75pt}} 

\begin{tikzpicture}[x=0.75pt,y=0.75pt,yscale=-1,xscale=1]

\draw  [color={rgb, 255:red, 155; green, 155; blue, 155 }  ,draw opacity=1 ][fill={rgb, 255:red, 0; green, 0; blue, 0 }  ,fill opacity=1 ][line width=3.75]  (166,40) .. controls (166,26.19) and (177.19,15) .. (191,15) .. controls (204.81,15) and (216,26.19) .. (216,40) .. controls (216,53.81) and (204.81,65) .. (191,65) .. controls (177.19,65) and (166,53.81) .. (166,40) -- cycle ;
\draw  [color={rgb, 255:red, 155; green, 155; blue, 155 }  ,draw opacity=1 ][fill={rgb, 255:red, 0; green, 0; blue, 0 }  ,fill opacity=1 ][line width=3.75]  (304,156.75) .. controls (304,145.29) and (313.29,136) .. (324.75,136) .. controls (336.21,136) and (345.5,145.29) .. (345.5,156.75) .. controls (345.5,168.21) and (336.21,177.5) .. (324.75,177.5) .. controls (313.29,177.5) and (304,168.21) .. (304,156.75) -- cycle ;
\draw  [color={rgb, 255:red, 155; green, 155; blue, 155 }  ,draw opacity=1 ][fill={rgb, 255:red, 0; green, 0; blue, 0 }  ,fill opacity=1 ][line width=3.75]  (303,50) .. controls (303,32.33) and (317.33,18) .. (335,18) .. controls (352.67,18) and (367,32.33) .. (367,50) .. controls (367,67.67) and (352.67,82) .. (335,82) .. controls (317.33,82) and (303,67.67) .. (303,50) -- cycle ;
\draw  [color={rgb, 255:red, 155; green, 155; blue, 155 }  ,draw opacity=1 ][fill={rgb, 255:red, 0; green, 0; blue, 0 }  ,fill opacity=1 ][line width=3.75]  (182,181.75) .. controls (182,166.42) and (194.42,154) .. (209.75,154) .. controls (225.08,154) and (237.5,166.42) .. (237.5,181.75) .. controls (237.5,197.08) and (225.08,209.5) .. (209.75,209.5) .. controls (194.42,209.5) and (182,197.08) .. (182,181.75) -- cycle ;
\draw  [color={rgb, 255:red, 155; green, 155; blue, 155 }  ,draw opacity=1 ][fill={rgb, 255:red, 0; green, 0; blue, 0 }  ,fill opacity=1 ][line width=3.75]  (264,250.5) .. controls (264,237.52) and (274.52,227) .. (287.5,227) .. controls (300.48,227) and (311,237.52) .. (311,250.5) .. controls (311,263.48) and (300.48,274) .. (287.5,274) .. controls (274.52,274) and (264,263.48) .. (264,250.5) -- cycle ;
\draw [color={rgb, 255:red, 155; green, 155; blue, 155 }  ,draw opacity=0.6 ][fill={rgb, 255:red, 155; green, 155; blue, 155 }  ,fill opacity=0.67 ][line width=4.5]    (191,65) -- (204.5,155) ;

\draw [color={rgb, 255:red, 155; green, 155; blue, 155 }  ,draw opacity=0.6 ][fill={rgb, 255:red, 155; green, 155; blue, 155 }  ,fill opacity=0.67 ][line width=4.5]    (216,40) -- (303,50) ;

\draw [color={rgb, 255:red, 155; green, 155; blue, 155 }  ,draw opacity=0.6 ][fill={rgb, 255:red, 155; green, 155; blue, 155 }  ,fill opacity=0.67 ][line width=4.5]    (335,82) -- (324.75,136) ;

\draw [color={rgb, 255:red, 155; green, 155; blue, 155 }  ,draw opacity=0.6 ][fill={rgb, 255:red, 155; green, 155; blue, 155 }  ,fill opacity=0.67 ][line width=4.5]    (304,156.75) -- (235.5,172) ;

\draw [color={rgb, 255:red, 155; green, 155; blue, 155 }  ,draw opacity=0.6 ][fill={rgb, 255:red, 155; green, 155; blue, 155 }  ,fill opacity=0.67 ][line width=4.5]    (231.5,199) -- (271.5,235) ;

\draw [color={rgb, 255:red, 155; green, 155; blue, 155 }  ,draw opacity=0.6 ][fill={rgb, 255:red, 155; green, 155; blue, 155 }  ,fill opacity=0.67 ][line width=4.5]    (318.5,176) -- (302.5,232) ;

\draw (287.5,250.5) node  [font=\Large,color={rgb, 255:red, 255; green, 255; blue, 255 }  ,opacity=1 ] [align=left] {$\bf{|5\rangle} $};
\draw (324.75,156.75) node  [font=\Large,color={rgb, 255:red, 255; green, 255; blue, 255 }  ,opacity=1 ] [align=left] {$\bf{|3\rangle} $};
\draw (209.75,181.75) node  [font=\Large,color={rgb, 255:red, 255; green, 255; blue, 255 }  ,opacity=1 ] [align=left] {$\bf{|4\rangle} $};
\draw (335,50) node  [font=\Large,color={rgb, 255:red, 255; green, 255; blue, 255 }  ,opacity=1 ] [align=left] {$\bf{|2\rangle} $};
\draw (191,40) node  [font=\Large] [align=left] {$\textcolor[rgb]{1,1,1}{\bf{|1\rangle} }$};

\end{tikzpicture}
\caption{Example of a simple graph, used herein to compare quantum versus stochastic walks.}  
\label{fig:simpleG}
  \end{figure}
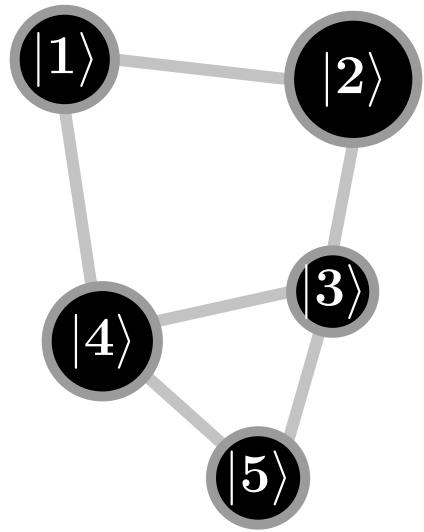  

In this particular example, the graph represents a network of $n = 5$ nodes, where nodes 3 and 4 have degree (number of edges) 3, and nodes 1, 2 and 5 have degree 2.

\begin{definition}
The {\it degree of a node} in a network (a.k.a.~graph) is the number of connections this node has to other nodes while the {\it degree distribution} is the probability distribution of these degrees over the entire network. Standard concepts in network theory appear in the book \cite{9780199206650}. 
\end{definition}

As mentioned in Definition \ref{remark:basis}, every simple graph defines a matrix $A$, called the \textit{adjacency matrix}. For a network with $n$ nodes, this matrix is of size $n \times n$, and each element $A_{i j}$ is unity if there is an edge between nodes $i$ and $j$, and zero otherwise.  

\begin{remark}
We will fix and use the basis defined in accordance with Figure~\ref{fig:simpleG} throughout this section.  
\end{remark}

Using the natural ordering per Figure~\ref{fig:simpleG}, the adjacency matrix is

$$ \left( \begin{matrix} 0 & 1 & 0 & 1 & 0 \\ 1 & 0 & 1 & 0 & 0 \\ 0 & 1 & 0 & 1 & 1 \\ 1 & 0 & 1 & 0 & 1 \\ 0 & 0 & 1 & 1 & 0 \end{matrix} \right) $$

By this construction, every adjacency matrix is symmetric (iff $A =A^\top$) and further, because each $A$ is real, it is self-adjoint (iff $A=A^\dagger$). This is an attractive features as a self-adjoint matrix generates a continuous-time \textbf{quantum walk}. Informally, a quantum walk is an evolution arising from a quantum walker moving on a network.

The state of a quantum walk is represented by a size $n$ complex column vector $\psi$ (written equivalently $\ket{\psi}$). Each element $\langle j | \psi \rangle$, is the so-called amplitude associated with node $i$ and the probability of the walker being found on that node (if measured) is the modulus of the amplitude squared $|\langle j | \psi \rangle|^2$. Here $i$ is the standard basis vector with a single non-zero $i$-th entry equal to unity, and $\langle u , v \rangle = u^\dagger v$ remains the usual inner vector product.

A quantum walk evolves in time according to the Schr{\"o}dinger's equation \eqref{schr_eq} under some Hamiltonian, $\mathcal{H}$. If the initial state is $\psi(0)$ then the solution to the T.I.S.E.~is again written as
\begin{equation*}
   \psi(t) = \exp(- \imath t \mathcal{H}) \psi(0). 
\end{equation*}

\begin{remark}
Recall that the probabilities $\{ | \langle j | \psi (t) \rangle |^2 \}_i$ are guaranteed to be correctly normalized when the Hamiltonian $\mathcal{H}$ is self-adjoint.
\end{remark}

There are several common matrices defined by graphs. Perhaps the most familiar is the graph Laplacian. The Laplacian $\mathcal L$ is the $n \times n$ matrix $\mathcal {L} = D - A$, where the degree matrix $D$ is an $n \times n$ diagonal matrix with elements given by the degrees

$$ D_{i i}=\sum_{j} A_{i j}  $$.

For the graph drawn in Figure~\ref{fig:simpleG}, the degree matrix and Laplacian are:

$$ \left( \begin{matrix} 2 & 0 & 0 & 0 & 0 \\ 0 & 2 & 0 & 0 & 0 \\ 0 & 0 & 3 & 0 & 0 \\ 0 & 0 & 0 & 3 & 0 \\ 0 & 0 & 0 & 0 & 2 \end{matrix} \right)  \qquad \mathrm{and} \qquad \left( \begin{matrix} 2 & -1 & 0 & -1 & 0 \\ -1 & 2 & -1 & 0 & 0 \\ 0 & -1 & 3 & -1 & -1 \\ -1 & 0 & -1 & 3 & -1 \\ 0 & 0 & -1 & -1 & 2 \end{matrix} \right) $$.

The Laplacian is self-adjoint and generates a quantum walk. Recall from \S~\ref{subsec:phsm} this matrix also has another property; it is infinitesimal stochastic. Recall that this means that its off diagonal elements are non-positive and its columns sum to zero. This is interesting because an infinitesimal stochastic matrix generates a continuous-time stochastic walk. Such walks have been studied extensively in the literature, including work by myself and others, partially reviewed in \cite{2017arXiv170208459B} and appearing as research in \cite{faccin2013degree, Faccin_2014} as well as \cite{Zimbors2013, Lu_2016} which considered time-symmetry breaking in quantum walks.

Recall that a stochastic walk is an evolution arising from a stochastic walker moving on a network. A state of a stochastic walk is represented by a size $n$ non-negative column vector $\psi$. Each element $\langle j | \psi \rangle$ of this vector is the probability of the walker being found on node $i$.  A stochastic walk evolves in time according to the master equation
\begin{equation}
    \frac{d}{d t} \psi(t)= - {\mathcal L} \psi(t). 
\end{equation}
where $H$ is called the stochastic Hamiltonian. If the initial state is $\psi(0)$ then the solution is written as 
\begin{equation*}
   \psi(t) = \exp(- t {\mathcal L}) \psi(0). 
\end{equation*}
The probabilities $\{ \langle j| \psi (t) \rangle \}_i$ are guaranteed to be non-negative and correctly normalized when the stochastic Hamiltonian ${\mathcal L}$ is infinitesimal stochastic.

\subsection{Normalized Laplacians}
To analyze the important uniform escape model we need to go beyond the class of generators that produce both quantum and stochastic walks. Further, we have to determine a related quantum walk. We'll see below that both tasks are achieved by considering the normalized Laplacians: one generating the uniform escape stochastic walk and the other a related quantum walk.

The two normalized Laplacians (studied in e.g.~\cite{faccin2013degree} as well as many other places) are as follows. 

\begin{definition}[Asymmetric Normalized Laplacian]
The asymmetric normalized Laplacian is given as 
$$
S = \mathcal{L} D^{-1}
$$
which generates the uniform escape stochastic walk by $S$. 
\end{definition}

\begin{definition}[Symmetric Normalized Laplacian]
The symmetric normalized Laplacian is given as  
$$
Q = D^{-1/2} \mathcal{L} D^{-1/2}
$$
which generates a unitary quantum walk $Q$.  Proposition \ref{prop:ssg} and \ref{prop:ssgc} will illuminate exactly why this choice was made: so as to exploit a relationship between eigenvectors between asymmetric and symmetric normalized Laplacians.    
\end{definition}

For the graph drawn in Figure~\ref{fig:simpleG}, the asymmetric normalized Laplacian $S$ is given as \eqref{eqn:anl}. 

\begin{equation}\label{eqn:anl}
    \left( \begin{matrix} 1 & -1/2 & 0 & -1/3 & 0 \\ -1/2 & 1 & -1/3 & 0 & 0 \\ 0 & -1/2 & 1 & -1/3 & -1/2 \\ -1/2 & 0 & -1/3 & 1 & -1/2 \\ 0 & 0 & -1/3 & -1/3 & 1 \end{matrix} \right)
\end{equation}

\begin{remark}
The identical diagonal elements indicate that the total rates of leaving each node are identical, and the equality within each column of the other non-zero elements indicates that the walker is equally likely to hop to any node connected to its current node. This is, after all, how the uniform escape model takes its name.  
\end{remark}

For the same graph drawn in Figure~\ref{fig:simpleG}, the symmetric normalized Laplacian $Q$ is given as \eqref{eqn:snl}. 

\begin{equation}\label{eqn:snl}
    \left( \begin{matrix} 1 & -1/2 & 0 & -1/\sqrt{6}  & 0 \\ -1/2 & 1 & -1/\sqrt{6}  & 0 & 0 \\ 0 & -1/\sqrt{6}  & 1 & -1/3 & -1/\sqrt{6}  \\ -1/\sqrt{6} & 0 & -1/3 & 1 & -1/\sqrt{6}  \\ 0 & 0 & -1/\sqrt{6}  & -1/\sqrt{6}  & 1 \end{matrix} \right)
\end{equation}

The diagonal elements being identical in the quantum case indicates that all the nodes are of equal energy.  Walks not obeying this are said to exhibit {\it disorder}. 

Returning to the graph from Figure~\ref{fig:simpleG}, we have consequently defined two matrices: one $S$ that generates a stochastic walk, and one $Q$ that generates a quantum walk. The natural question to ask is whether these walks are related. This question was studied extensively in \cite{faccin2013degree}. Before considering aspects of this relationship, consider the following definition. 

\begin{definition}[Similar Matricies]
Two $n \times n$ matrices ($A$ and $B$) are called {\it similar} if there exists an invertible $n \times n$ matrix $P$ such that 
\begin{equation}
    B = P^{-1} A P.
\end{equation}
\end{definition}

Underpinning a relationship between the quantum and stochastic generators is the known property that: 
\begin{proposition}[Spectral Similarity of Generators]\label{prop:ssg}
 $S$ and $Q$ are similar by the transformation \eqref{eqn:simu}. 
\begin{equation}\label{eqn:simu}
   S = D^{1/2} Q D^{-1/2}
\end{equation}
\end{proposition}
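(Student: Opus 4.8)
The plan is to verify the identity \eqref{eqn:simu} by direct substitution of the definitions $S = \mathcal{L}D^{-1}$ and $Q = D^{-1/2}\mathcal{L}D^{-1/2}$, and then to observe that the conjugating matrix is invertible, which promotes the identity to a statement of similarity in the sense of the preceding definition.

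First I would check that $D$ is a legitimate object to take square roots of. Since the graph is assumed connected with $n\geq 2$ nodes, every node has degree at least one, so $D$ is diagonal with strictly positive entries $D_{ii}$. Hence $D^{1/2}$, defined entrywise by $(D^{1/2})_{ii}=\sqrt{D_{ii}}$, and $D^{-1/2}$ with $(D^{-1/2})_{ii}=1/\sqrt{D_{ii}}$, are both well defined, diagonal, and mutually inverse; moreover, being simultaneously diagonal in the standard basis, all powers of $D$ commute and satisfy $D^{a}D^{b}=D^{a+b}$.

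Next I would compute $D^{1/2} Q D^{-1/2} = D^{1/2}\bigl(D^{-1/2}\mathcal{L}D^{-1/2}\bigr)D^{-1/2}$ and collapse the outer factors, using $D^{1/2}D^{-1/2}=\eye$ on the left and $D^{-1/2}D^{-1/2}=D^{-1}$ on the right, to obtain $\mathcal{L}D^{-1}=S$. This is exactly \eqref{eqn:simu}. Taking $P = D^{-1/2}$ (so that $P^{-1}=D^{1/2}$) in the definition of similar matrices, the relation reads $S = P^{-1} Q P$ with $P$ invertible, so $S$ and $Q$ are similar.

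There is no real obstacle here: the computation is a one-line cancellation, and the only point that genuinely requires a remark is the invertibility of $D^{1/2}$, which is where connectedness of the graph (equivalently, the absence of isolated nodes) enters. I would simply flag that hypothesis and move on.
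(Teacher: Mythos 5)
Your proposal is correct and follows exactly the route the paper takes (implicitly): the similarity relation is an immediate consequence of substituting $Q = D^{-1/2}\mathcal{L}D^{-1/2}$ into $D^{1/2}QD^{-1/2}$ and cancelling, giving $\mathcal{L}D^{-1}=S$, with $P=D^{-1/2}$ as the invertible conjugating matrix. Your added remark that connectedness (no isolated nodes) guarantees $D$ has strictly positive diagonal entries, so $D^{\pm 1/2}$ exist, is a sensible clarification of a hypothesis the paper leaves tacit.
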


From Proposition \ref{prop:ssg} we deduce the following. 
\begin{proposition}[Quantum spectral similarity---with Faccin et al., Phys.~Rev.~X 3, 041007 (2013) \cite{faccin2013degree}]\label{prop:ssgc}
Consider, 
\begin{equation}
   \begin{aligned} 
   Q \ket{\phi_k} &= \epsilon_k \ket{\phi_k},  \\  
   (D^{1/2}  Q D^{-1/2} ) (D^{1/2} \ket{\phi_k} ) &= \epsilon_k ( D^{1/2} \ket{\phi_k} ), \\ 
   S \ket{\pi_k} &= \epsilon_k \ket{\pi_k}.
   \end{aligned} 
\end{equation}
Hence, any eigenvector $\ket{\phi_k}$ of $Q$ associated to eigenvalue $\epsilon_k$ implies that $\ket{\pi_k} \propto D^{1/2} \ket{\phi_k}$ is an eigenvector of $S$ associated to the same eigenvalue.   
\end{proposition}

One also establishes the converse. Any eigenvector $\ket{\pi_k}$ of $S$ implies an eigenvector  $\ket{\phi_k} \propto D^{-1/2} \ket{\pi_k}$ of $Q$ associated to the same eigenvalue $\epsilon_k$. And because $Q$ is self-adjoint, the symmetric normalized Laplacian can be decomposed as 
\begin{equation*}
   Q = \sum_k \epsilon_k \Phi_k,
\end{equation*}
where $\epsilon_k$ is real and $\Phi_k=\ketbra{\phi_k}{\phi_k}$ are orthogonal. Multiplying from the left by $ D^{1/2}$ and the right by $ D^{-1/2}$ results in a similar decomposition for $S$
\begin{equation*}
   S = \sum_k \epsilon_k \Pi_k,
\end{equation*}
with orthogonal projectors $\Pi_k  = D^{1/2} \Phi_k D^{-1/2}$.

We now have all the ingredients necessary to study the walks generated by the normalized Laplacians and the relationship between them.

\begin{definition}[Summary see Figure~\ref{fig:similarity}---with Faccin et al., Phys.~Rev.~X 3, 041007 (2013) \cite{faccin2013degree}]
$G$ is a simple undirected graph.  Labeling the nodes of $G$ lifts to specify:
\begin{enumerate}
    \item $A$ the adjacency matrix (generator of a quantum walk). 
    \item $D$ the diagonal matrix of the degrees. 
    \item ${\mathcal L}$ the symmetric Laplacian (generator of stochastic and quantum walks), which when normalized by $D$ returns both: 
    \item[3.1] $S$ the generator of the uniform escape stochastic walk and
    \item[3.2] $Q$ the quantum walk generator to which ${\mathcal L}$ is similar. 
\end{enumerate}
\end{definition}

\subsection{Stochastic walk}

The uniform escape stochastic walk generated by $S$ has a practically useful stationary state. A stationary state of a stochastic walk is stationary (invariant) with respect to changes in parameter playing the role of time. From the master equation $\frac{d}{d t} \ket{\psi(t)} = -S \ket{\psi(t)}$ the stationary state must be an eigenvector $\ket{\pi_0^i}$ of $S$ with eigenvalue $\epsilon_0 = 0$. A pair of well-known theorems hold:

\begin{theorem}[Uniqueness of the Stationary State \cite{2012arXiv1209.3632B}]\label{theorem:uss}
There always exists a unique (up to multiplication by a positive number) positive eigenvector $\ket{\pi_0}$ (abbreviated as $\ket{\pi}$) of $S$ with eigenvalue $\epsilon_0 = 0$, i.e., a unique stationary state $\ket{\pi}$.
\end{theorem}

The following theorem is well known, see e.g.~\cite{2012arXiv1209.3632B}. 

\begin{theorem}[Global attractor of the uniform escape model]
Consider the uniform escape stochastic walk generated by $S$. 
Regardless of the initial state $\ket{\psi(0)}$, the stationary state $\ket{\pi}$ is obtained in the long-time limit: $\lim_{t \rightarrow \infty} \ket{\psi(t)} = \ket{\pi}$.
\end{theorem}

\begin{definition}[All 1's Vector]
Let boldface $\ket{\boldsymbol{+}} = \sum \ket{i}$ denote the unnormalized all ones vector, where non-boldface $\sqrt{2}\ket{+} = \ket{0} + \ket{1}$ is the familiar eigenstate of the $X$ operator. 
\end{definition}

To determine this unique stationary state (Theorem \ref{theorem:uss}), consider the Laplacian $\mathcal{L}$, which is both infinitesimal stochastic and symmetric. Among other things, this means the rows of $\mathcal{L}$ sum to zero
\begin{equation*}
    \sum_j \mathcal{L}_{i j} = 0 
\end{equation*}
which means the all ones vector $\ket{\boldsymbol{+}}$ is an eigenvector of $\mathcal{L}$ with zero eigenvalue
\begin{equation*}
   \mathcal{L} \ket{\boldsymbol{+}} = 0 
\end{equation*}
Inserting the identity $\eye = D^{-1} D$ into this equation we recover that  
$D \ket{\boldsymbol{+}}$ 
is a zero eigenvector of $S$
\begin{equation*}
   \mathcal{L} \ket{\boldsymbol{+}} =  (\mathcal{L} D^{-1}) (D \ket{\boldsymbol{+}}) = S (D \ket{\boldsymbol{+}}) = 0 
\end{equation*}
Therefore one must normalize $\ket{\boldsymbol{+}}$ to recover the long-time stationary state of the walk, viz., 
\begin{equation*}
   \ket{\pi} = \frac{D }{\sum_i D_{i i}}\ket{\boldsymbol{+}}. 
\end{equation*}

\begin{proposition}[Long-Time Probability---with Faccin et al., Phys.~Rev.~X 3, 041007 (2013) \cite{faccin2013degree}]\label{prop:ltp}
For each element $\langle j| \pi \rangle$ of this state (where $j$ indexes nodes), the long-time probability of finding a walker at node $i$, is proportional to the degree $D_{i i}$ of node $i$. 
\end{proposition}

\begin{remark}
We can validate Proposition \ref{prop:ltp} for the graph from Figure~\ref{fig:simpleG}, where $\ket{\pi}$ is as follows. 
$$ \left( \begin{matrix} 1/6 \\ 1/6 \\ 1/4 \\ 1/4 \\ 1/6 \end{matrix} \right) $$
This hence implies that the long-time probability $1/6$ for nodes $1$, $2$ and $5$, and $1/4$ for nodes $3$ and $4$.
\end{remark}

\subsection{Quantum walks}  

Any quantum process in finite dimensions can be viewed as a spin-free quantum walk of a single particle on the graph given by the support of the generator.  For example, if the time generator is unitary $U(t)$ then if the entry $\bra{q}U(t)\ket{r}$ is non zero, then the corresponding graph edge $r$, $q$ is present. Little remains known.  Such walks arise in exciton transport studies and quantum walks represent a universal model of quantum computation \cite{Childs_2009}. 

Let us now consider the quantum walk generated by $Q$.  Below we recover the analytical results \cite{faccin2013degree} obtained by exploiting the operator similarity of $S$ and $Q$.

In contrast to stochastic walks, for a quantum walk every eigenvector $\ket{\phi_i^k}$ of $Q$ is a stationary state of the quantum walk. The stationary state $\ket{\phi_0}$ is of particular interest, both physically and mathematically. Physically, since eigenvectors of  $Q$ correspond to states of well-defined energy equal to the associated eigenvalue, $\ket{\phi_0}$ is the state of lowest energy $\epsilon_0 = 0$, hence the name ground state (ground eigenvalue/ground energy).

Mathematically, the relationship between eigenvectors implied by the similarity of $S$ and $Q$ means 
\begin{equation}
    \ket{\phi_0} \propto D^{-1/2} \ket{\pi} \propto  D^{1/2} \ket{\boldsymbol{+}}. 
\end{equation} 
Therefore, for a system in its ground state, the probability of measurement resulting in a particle at node $i$ is given by 
\begin{equation}\label{eqn:probgs}
    | \langle j | \phi_0 \rangle |^2 \propto | \langle j | D^{1/2} |+\rangle |^2 = D_{i i}.
\end{equation}

The probability \eqref{eqn:probgs} shows proportionality with a nodes degree and is therefore identical to $\langle j | \pi \rangle$: where $\ket{\pi}$ is the stationary state  of the stochastic walk (Proposition \ref{prop:ltp}). Hence, a zero energy quantum walk $Q$ leads to the same distribution over the nodes as the long-time limit of the uniform escape stochastic walk $S$ \cite{faccin2013degree}.

Hence, the work \cite{faccin2013degree} determined that the standard notion of degree distribution plays a role in quantum walks. But what if the walker starts in some other initial state? We therefore must ask, ``is there is some quantum walk analogue of the unique long-time state of a stochastic walk?''

Clearly, the quantum walk generally does not converge to a stationary state. Yet there is a probability distribution that can be thought to characterize the quantum walk in the same way as the long-time state characterizes the stochastic walk. This is the long-time average probability vector $P$.

Provided a complete lack of knowledge as to the time that had passed since the beginning of a quantum walk, the best estimate for the probability of measuring the walker to be at node $i$ would be the long-time average probability \eqref{eqn:ltap}. 
\begin{equation}\label{eqn:ltap}
   \displaystyle{ \langle j | P \rangle = \lim_{T \rightarrow \infty} \frac{1}{T} \int_0^T | \psi_i (t) |^2 d t } 
\end{equation}

Equation \ref{eqn:ltap} will be simplified. We begin by inserting the decomposition $Q= \sum_k \epsilon_k \Phi_k $ into $\ket{\psi(t)}  = e^{-\imath Q t} \ket{\psi(0)}$ to get $ \ket{\psi(t)}  = \sum_k e^{-\imath\epsilon_k t} \Phi_k \ket{\psi(0)}$ and thus,
\begin{equation*}
    \langle j | P \rangle = \lim_{T \rightarrow \infty} \frac{1}{T} \int_0^T \biggl| \sum_k e^{-\imath \epsilon_k t} \langle j| \Phi_k |\psi (0) \rangle \biggr|^2 d t .  
\end{equation*}
Due to the integral over all time, the interferences between terms corresponding to different eigenvalues average to zero, leaving
\begin{equation}\label{eqn:ltap2.43}
   \langle j | P \rangle = \sum_k | \langle j| \Phi_k | \psi(0) \rangle |^2. 
\end{equation}
The long-time average probability \eqref{eqn:ltap2.43} is then the sum of terms contributed by the projections of the initial state onto each eigenspace.

Hence we arrive at a distribution \eqref{eqn:ltap2.43} that characterizes a quantum walk for a general initial state. Our approach of better understanding the long-time average probability is through the term $| \langle j| \Phi_0| \psi (0) \rangle |^2 $ associated with the zero energy eigenspace, since we have subsequently charachterized this space.

For example, we know the zero energy eigenspace is one-dimensional and spanned by the eigenvector $\phi_0$. This means that the projector is just the usual outer product
\begin{equation}
   \Phi_0 = | \phi_0 \rangle \langle \phi_0 | ,
\end{equation}
where we have normalized $\ket{\phi_0}$ according to the inner product $\langle \phi_0| \phi_0\rangle = 1$. The zero eigenspace contribution to the long-time average probability then breaks into the product of two probabilities, as  
\begin{equation}\label{eqn:2probprod}
 \begin{split}
   | \langle j| \Phi_0|\psi (0) \rangle |^2 &= | \langle j| \phi_0\rangle \langle \phi_0|  \psi (0) \rangle |^2   = | \langle j| \phi_0\rangle |^2 | \langle \phi_0 | \psi (0) \rangle |^2   ={}\\
   &=\langle j |  \pi \rangle | \langle \phi_0 |  \psi (0) \rangle |^2 .
   \end{split}
\end{equation}

The first probability $\langle j | \pi \rangle$ in the product \eqref{eqn:2probprod} corresponds to finding a quantum state in the zero energy eigenspace at node $i$ (as we found above).  The second probability $| \langle \phi_0| \psi (0)\rangle |^2$ from \eqref{eqn:2probprod} relates to being in this eigenspace to begin with (which for the one dimensional zero energy eigenspace means just the inner product with the ground state). 

This turns out to be enough to say something interesting about the long-time average probability for all states. Sketching the results from \cite{faccin2013degree}, we have illustrated how we can break the long-time probability vector $P$ into a sum of two normalized probability vectors

\begin{equation}\label{eqn:twoterms2.46}
   P = (1- \eta) \ket{\pi} + \eta \ket{\Omega}, 
\end{equation}
the first $\ket{\pi}$ is the degree dependent stochastic stationary state associated with the zero energy eigenspace and the second is associated with the higher energy eigenspaces $\ket{\Omega}$, with 
\begin{equation}
   \langle j | \Omega \rangle = \dfrac{ \displaystyle\sum_{k\neq 0} | \langle j| \Phi_k|  \psi (0) \rangle |^2  }{ \eta}.
\end{equation}

The weight of each term in \eqref{eqn:twoterms2.46} is governed by the parameter
\begin{equation*}
   \eta =  1 - | \langle \phi_0| \psi (0)\rangle |^2. 
\end{equation*}

One could consider $\eta$ as unity minus the probability of the walker being in the zero energy eigenspace, or equivalently the probability of the walker being outside the zero energy eigenspace. So even though we don't know anything about $\ket{\Omega}$ we know its importance is controlled by a parameter $\eta$ that governs how close the long-time average distribution $P$ of the quantum walk is to the corresponding stochastic stationary distribution $\ket{\pi}$. Can we say anything physical about when $\eta$ is big or small?

As the eigenvalues of $Q$ have a physical interpretation in terms of energy, the answer is yes. The quantity $\eta$ is the probability of being outside the zero energy state. Call the next lowest eigenvalue $\Delta = \min_{k \neq 0} \epsilon_k$ the energy gap. If the quantum walk is not in the zero energy eigenspace then it must be in an eigenspace of energy greater or equal to $\Delta$. Therefore the expected energy $E$ of the quantum walker must bound $\eta$,  $E \ge \eta \Delta$. A quantum walk as such with low energy is hence similar to a stochastic walk in the long-time limit (we already knew this exactly in the zero energy limit).

\subsection{Perron’s theorem}

As we have seen in our previous extended example, stochastic systems evolve to their low eigenvalue eigenstates.  As we have cared to contrast quantum versus stochastic evolution by comparison of dynamics subject to the same generator, it is worth mentioning the following.  So called stoquastic Hamiltonian's are often considered in complexity theory as examples having elements of quantum mechanics and classical theory of stochastic matrices \cite{BDOT06, Hormozi_2017, Bravyi2010, Jordan2010}. Indeed, stoquastic Hamiltonians, those for which all off-diagonal matrix elements in the standard basis are real and non-positive (see Definition \ref{def:stoquastic}, commonly describe physical systems. 

\begin{remark}
This subsegment follows partially a blog post on Azimuth written by the author and edited by John Baez.\footnote{Network Theory (Part 20), Azimuth \url{https://johncarlosbaez.wordpress.com/2012/08/06/network-theory-part-20/}.} The focus of the post resulted in the book~\cite{2012arXiv1209.3632B} and influenced other studies including \cite{faccin2013degree}. 
\end{remark}

\begin{definition}[Stoquastic Hamiltonian \cite{BDOT06, Jordan2010, Bravyi2010}]\label{def:stoquastic}
Hamiltonians where all the off-diagonal elements in the standard basis are real and non-positive are called {\it stoquastic}. 
\end{definition}

More formally:

\begin{definition}[Stoquastic Hamiltonian \cite{BDOT06, Jordan2010, Bravyi2010}]
The Hamiltonian, 
\begin{equation}
    {\mathcal H} = \sum_i {\mathcal H}_i 
\end{equation}
is stoquastic if in the standard local basis ${\mathscr B}$ the terms ${\mathcal H}_i$ all have matrix entries that are non-positive,
\begin{equation}
    \bra{l}{\mathcal H}\ket{k} \leq 0~~~\forall~ l, k\in {\mathscr B} ~|~ l \neq k. 
\end{equation}
\end{definition}

Here we will establish some of the core elements to recover several known results about stoquastic Hamiltonians and the comparison of quantum versus stochastic walks on graphs, generated by stoquastic Hamiltonians. To begin with, we note that a simple graph can consist of many separate graphs, called components. 

\begin{definition}[Connected Simple Graph]
A simple graph is connected if it is nonempty and there is a path of edges connecting any vertex to any other.
\end{definition}

In quantum mechanics, one often considers observables that have positive expected values:
\begin{equation}
    \langle \psi|{\mathcal O} |\psi \rangle > 0 
\end{equation}
for every quantum state $\psi \in \mathbb{C}^n$. These are called {\it positive definite}. But in stochastic mechanics one often things about matrices that are positive in a more naive sense:

\begin{definition}
An $n \times n$ real matrix $T$ is positive if all its entries are positive:
\begin{equation}
   T_{i j} > 0  
\end{equation}
for all $1 \le i, j \le n$.
\end{definition}

\begin{definition}
A vector $\psi \in \mathbb{R}^n$ is positive if all its components are positive:
\begin{equation}
    \psi_i > 0
\end{equation}
for all $1 \leq i \leq n$.
\end{definition}

\begin{remark}[Nonnegative]
One will also define nonnegative matrices and vectors in the same way, replacing $> 0$ by $\geq 0$. A good example of a nonnegative vector is a stochastic state.
\end{remark}

In 1907, Perron proved the following fundamental result about positive matrices \cite{Perron1907}. 

\begin{theorem}[Perron’s Theorem \cite{Perron1907}]\label{theorem:perron} 
Given a positive square matrix $T$, there is a positive real number $r$, called the Perron–Frobenius eigenvalue of $T$, such that $r$ is an eigenvalue of $T$ and any other eigenvalue $\lambda$ of $T$ has $|\lambda| < r$. Moreover, there is a positive vector $\psi \in \mathbb{R}^n$ with $T \psi = r \psi$. Any other vector with this property is a scalar multiple of $\psi$. Furthermore, any nonnegative vector that is an eigenvector of $T$ must be a scalar multiple of $\psi$.
\end{theorem}

\begin{remark}
Hence, if $T$ is positive, it has a unique eigenvalue with the largest absolute value. This eigenvalue is positive. Up to a constant factor, it has an unique eigenvector. We can choose this eigenvector to be positive. And then, up to a constant factor, it’s the only nonnegative eigenvector of $T$.
\end{remark}

\begin{definition}[Strongly Connected Graph]
A directed graph is {\it strongly connected} if there is a directed path of edges going from any vertex to any other vertex.
\end{definition}

\begin{definition}[Irreducible Matrix]
A nonnegative square matrix $T$ is irreducible if its graph $G_T$ is strongly connected.
\end{definition}

\begin{theorem}[Perron--Frobenius Theorem \cite{Perron1907,2012arXiv1209.3632B}]\label{theorem:pft}
Given an irreducible nonnegative square matrix $T$, there is a positive real number $r$, called the {\it Perron–-Frobenius eigenvalue} of $T$, such that $r$ is an eigenvalue of $T$ and any other eigenvalue $\lambda$ of $T$ has $|\lambda| \leq r$. Moreover, there is a positive vector $\psi \in \mathbb{R}^n$ with $T\psi = r \psi$. Any other vector with this property is a scalar multiple of $\psi$. Furthermore, any nonnegative vector that is an eigenvector of $T$ must be a scalar multiple of $\psi$.
\end{theorem}

\begin{example}
The only conclusion of Theorem \ref{theorem:pft} that is weaker than Theorem \ref{theorem:perron} is that there may be other eigenvalues with $|\lambda| = r$. For example, the Pauli-$X$ matrix is irreducible and nonnegative.  Its Perron--Frobenius eigenvalue is $1$, but it also has $-1$ as an eigenvalue. In general, Perron--Frobenius theory says quite a lot about the other eigenvalues on the circle $|\lambda| = r$.
\end{example}

\begin{definition}[Dirichlet Operator]
${\mathcal L}$ is a {\it Dirichlet operator} if it’s both self-adjoint and infinitesimal stochastic
\end{definition}

With these theorems (\ref{theorem:pft} and \ref{theorem:perron}), we arrive at several consequences. 

\begin{proposition}[Baez-Biamonte 2018 \cite{2012arXiv1209.3632B}]\label{prop:ido}
Let ${\mathcal L}$ be an irreducible Dirichlet operator with n eigenstates. In stochastic mechanics, there is only one valid state that is an eigenvector of ${\mathcal L}$: the unique so-called {\it Perron--Frobenius state}. The other $n-1$ eigenvectors are forbidden states of a stochastic system: the stochastic system is either in the Perron–Frobenius state, or in a superposition of at least two eigensvectors. In quantum mechanics, all $n$ eigenstates of ${\mathcal L}$ are valid states.
\end{proposition}


\begin{proof}
To establish Proposition \ref{prop:ido}, we note first that, the matrix $H$ is rarely nonnegative: its off-diagonal entries will always be nonnegative while its diagonal entries can be negative.  This can be fixed by a simple {\it energy shift} as, 
\begin{equation}\label{eqn:thc}
    T = H + c \eye 
\end{equation}
where $c > 0$ is chosen from $H$. 
\end{proof}

\begin{remark}
The matrix $T$ from \eqref{eqn:thc} has 
\begin{enumerate}
    \item the same eigenvectors as $H$, 
    \item off-diagonal matrix entries which are the same as those of $H$, so $T_{i j}$ is nonzero for $i \ne j$ exactly when the graph we started with has an edge from $i$ to $j$. 
\end{enumerate}
So, for $i \ne j$, the graph $G_T$ will have an directed edge going from $i$ to $j$ precisely when our original graph had an edge from $i$ to $j$. And that means that if our original graph was connected, $G_T$ will be strongly connected. Thus, by definition, $T$ is irreducible. 
\end{remark}

Since $T$ is nonnegative and irreducible, the Perron–Frobenius theorem implies the following.  

\begin{proposition}[Baez-Biamonte 2018 \cite{2012arXiv1209.3632B}]
Suppose $H$ is the Dirichlet operator coming from a connected finite simple graph with edges labelled by positive numbers. Then the eigenvalues of $H$ are real. Let $\lambda$ be the largest eigenvalue. Then there is a positive vector $\psi \in \mathbb{R}^n$ with $H\psi = \lambda \psi$. Any other vector with this property is a scalar multiple of $\psi$. Furthermore, any nonnegative vector that is an eigenvector of H must be a scalar multiple of $\psi$.
\end{proposition}

\begin{proof}
The eigenvalues of $H$ are real since $H$ is self-adjoint. Notice that if $r$ is the Perron–Frobenius eigenvalue of $T = H + c \eye$ and
\begin{equation}
    T \psi = r \psi
\end{equation}
then 
\begin{equation}
    H \psi = (r - c)\psi
\end{equation}
By the Perron–Frobenius theorem the number $r$ is positive, and it has the largest absolute value of any eigenvalue of $T$. Thanks to the subtraction, the eigenvalue $r - c$ may not have the largest absolute value of any eigenvalue of $H$. It is, however, the largest eigenvalue of $H$, so we take this as our $\lambda$. The rest follows from the Perron–Frobenius theorem. 
\end{proof}

\begin{definition}
A Dirichlet operator is irreducible if it comes from a connected finite simple graph with edges labelled by positive numbers.
\end{definition}

Proof of the following is left to the reader, see \cite{2012arXiv1209.3632B}. 

\begin{theorem}
Suppose $H$ is an irreducible Dirichlet operator. Then $H$ has zero as its largest real eigenvalue. There is a positive vector $\psi \in \mathbb{R}^n$ with $H\psi = 0$. Any other vector with this property is a scalar multiple of $\psi$. Furthermore, any nonnegative vector that is an eigenvector of $H$ must be a scalar multiple of $\psi$.
\end{theorem}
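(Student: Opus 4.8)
\emph{Proof plan.} The plan is to deduce the whole statement from the preceding proposition. By the definition just given, an irreducible Dirichlet operator is exactly a Dirichlet operator coming from a connected finite simple graph whose edges are labelled by positive numbers, so the hypothesis of the preceding proposition holds verbatim. That proposition already gives us: the spectrum of $H$ is real; writing $\lambda$ for the largest eigenvalue, there is a positive vector $\psi\in\mathbb{R}^n$ with $H\psi=\lambda\psi$; this $\psi$ is unique up to a positive scalar; and every nonnegative eigenvector of $H$ is a scalar multiple of $\psi$. Consequently the entire theorem reduces to the single claim $\lambda=0$, after which all four assertions are obtained by reading the proposition with $\lambda$ replaced by $0$.

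First I would verify that $0$ is an eigenvalue of $H$. Being a Dirichlet operator, $H$ is infinitesimal stochastic, so its columns sum to zero; since $H$ is also self-adjoint, its rows sum to zero as well. But ``all rows of $H$ sum to zero'' is precisely the statement $H\ket{\boldsymbol{+}}=0$, where $\ket{\boldsymbol{+}}=\sum_i\ket{i}$ is the all-ones vector. Hence $0$ is an eigenvalue with a positive eigenvector, and in particular $\lambda\geq 0$.

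Next I would show $\lambda\leq 0$, i.e.\ that $H$ is negative semidefinite. The cleanest route is to observe that, by the structure of a Dirichlet operator recalled above, $-H=D-W$ is the weighted graph Laplacian $\mathcal{L}$ of the underlying connected graph with positive edge weights $w_{ij}$, and the standard quadratic-form identity
\[
\bra{x}\mathcal{L}\ket{x}=\frac{1}{2}\sum_{i,j} w_{ij}\,|x_i-x_j|^2\ \geq\ 0
\]
holds for every $x\in\mathbb{C}^n$; thus $\mathcal{L}$ is positive semidefinite and every eigenvalue of $H=-\mathcal{L}$ is $\leq 0$. (An alternative, not using the graph structure explicitly, is Gershgorin applied to $H$: the off-diagonal entries are nonnegative and each row sums to zero, so the Gershgorin disc of row $j$ is centred at $H_{jj}\leq 0$ with radius $-H_{jj}$, hence contained in the closed left half-plane; since the spectrum is real, all eigenvalues are $\leq 0$ again.) Combined with the previous paragraph, this pins $\lambda=0$.

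Finally I would assemble the conclusion: with $\lambda=0$, the positive vector $\psi$ supplied by the proposition satisfies $H\psi=0$; by the uniqueness clause of the proposition any vector $v$ with $Hv=0$ is a scalar multiple of $\psi$ (so in particular $\psi$ is proportional to $\ket{\boldsymbol{+}}$); and by its last clause any nonnegative eigenvector of $H$ is a scalar multiple of $\psi$. That is exactly the theorem. I do not anticipate a real obstacle: the Perron--Frobenius content has already been absorbed into the preceding proposition, and the only genuinely new ingredient is the semidefiniteness estimate of the third paragraph. The one point worth flagging is the sign convention in force in this part of the text --- here a Dirichlet operator is minus a graph Laplacian, so $0$ sits at the \emph{top} of its spectrum, which is why it is the proposition's ``largest eigenvalue'' that is the relevant one.
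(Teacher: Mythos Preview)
Your argument is correct and is exactly what the paper intends: the theorem is stated with the remark that its proof is left to the reader, immediately after the proposition you invoke, so reducing everything to $\lambda=0$ via that proposition is the expected route. Your two devices for the inequality $\lambda\le 0$ (the Laplacian quadratic form and Gershgorin) both work, and the sign-convention flag is apt---in this subsection the operator $H$ is treated as having nonnegative off-diagonals (so $-H$ is the Laplacian), which is why the Perron--Frobenius eigenvalue sits at the top of the spectrum.
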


\section{Subadditivity of entropy of stochastic generators}\label{sec:subadd}

\begin{definition}
A simple undirected graph with edges weighted by real numbers gives rise to a {\it generalized symmetric adjacency matrix}.  For edges labeled $l$ and $m$ weighted by $w \in \mathbb{R}$, the $l$-$m^{\textup th}$ entry of the corresponding adjacency matrix is $w$.
\end{definition} 

\begin{definition}
A {\it generalized Laplacian} arises as 
\begin{equation}
    \mathcal{L} = \mathcal{D} - \mathcal{A} 
\end{equation}
where $A$ is a generalized symmetric adjacency matrix and $D$ stores on its diagonal entries the sums of the corresponding rows of $A$. 
\end{definition}

\begin{theorem}[Biamonte-DeDomenico 2016~\cite{De_Domenico_2016}]\label{thm:gl}
	Given two generalized Laplacians and their sum  $\mathcal{L}_C=\mathcal{L}_A+\mathcal{L}_B$, and corresponding Gibbs state density matrices $ \mathcal{\rho_{C}} = e^{\beta(\mathcal{L_{A}+L_{B}})}/\mathcal{Z} $, the von Neumann entropy $S(\rho) = Tr\{\rho \ln_2 \rho\}$ is subadditive as, 
	\begin{equation}
	S(\mathbf{\rho_{C}}) \leq S(\mathbf{\rho_{A}}) + S(\mathbf{\rho_{B}}). 
	\end{equation}
\end{theorem}

\begin{remark}
We adopt the notation that $ S(\rho_A) \equiv S_A $, $ S(\rho_B) \equiv S_B $, etc.
\end{remark}
	
To prove Theorem \ref{thm:gl} we will establish and then combine three propositions. Proof of the following proposition is well known. 

\begin{proposition}
The Kullback-Leibler divergence satisfies, 
\begin{equation}
    \mathcal{D}_{1}(\boldsymbol{\rho}||\boldsymbol{\sigma})=\text{Tr}[{\boldsymbol{\rho}(\ln_{2}\boldsymbol{\rho}-\ln_{2}\boldsymbol{\sigma})]} \geq 0.
\end{equation}		
\end{proposition}



\begin{proposition}[Biamonte-DeDomenico 2016~\cite{De_Domenico_2016}]
The following quantity is non-negative as
		    \begin{eqnarray}
			\label{comp2}
			\Tr{[\mathcal{L}_{X}\boldsymbol{\rho}_{X}]} \geq 0. 
			\end{eqnarray}
\end{proposition}

\begin{proof}
From the Cholesky decomposition of $\mathbf{L}_A$, $ \mathcal{L}_B $, $ \mathcal{L}_C $, $\mathbf{\rho}_A$, $ \mathbf{\rho}_B $, and $ \mathbf{\rho}_C $ we obtain,
\begin{eqnarray}
			\label{comp2}
			\Tr{[\mathcal{L}_{X}\boldsymbol{\rho}_{X}]} &=& \Tr{[(\mathcal{D}\mathcal{D}^{\dagger}) (\mathcal{Q}\mathcal{Q}^{\dagger}])} \nonumber\\
			&=&\Tr{[(Q^{\dagger}D)(Q^{\dagger}D)^{\dagger}]}\geq 0.
\end{eqnarray}
\end{proof}

\begin{proposition}[Biamonte-DeDomenico 2016~\cite{De_Domenico_2016}]
The following quantity is non-negative as
   \begin{equation}\label{comp3}
			\ln_{2} Z_X \geq 0. 
   \end{equation}
\end{proposition}

\begin{proof}
We have that $ Z_X=\sum\limits_{i=1}^{N}e^{-\beta\lambda_{i}(\mathcal{L}_{X})} $. Using $\lambda_{1}(\mathcal{L}_{X})=0$, since $ \mathcal{L}_{X} $ is a generalized Laplacian (or Perron-Frobenius theorem), we get,  
		\begin{eqnarray}
			Z_X=1+\sum\limits_{i=2}^{N}e^{-\beta\lambda_{i}(\mathcal{L}_{X})}\geq 1.
		\end{eqnarray}
This leads to Equation~\eqref{comp3}.
\end{proof}

\begin{proof}[Proof of Theorem \ref{thm:gl}---Biamonte-DeDomenico 2016~\cite{De_Domenico_2016}]
Adding together the three propositions and including versions corresponding to both $ \mathbf{\rho}_A $ and $ \mathbf{\rho}_B $ for the first two propositions, we arrive at,
		\begin{eqnarray}
			\mathcal{D}(\boldsymbol{\rho}_{C} ||\boldsymbol{\rho}_{A}) + \mathcal{D}(\boldsymbol{\rho}_{C} ||\boldsymbol{\rho}_{B}) + \nonumber\\
			+ \beta\Tr{[\mathcal{L}_{A}\boldsymbol{\rho}_{A}]} + \beta\Tr{[\mathcal{L}_{B}\boldsymbol{\rho}_{B}]} + \ln_{2} Z_C \geq 0.
		\end{eqnarray}
Expansion yields, 
	 	\begin{eqnarray}
	 		-S_C + \beta\Tr{[\mathcal{L}_{A}\boldsymbol{\rho}_{C}]} +  \ln_{2}Z_{A} \nonumber\\
	 		-S_C + \beta\Tr{[\mathcal{L}_{B}\boldsymbol{\rho}_{C}]} +  \ln_{2}Z_{B} + \nonumber\\
	 		\beta \Tr{[\mathcal{L}_{A}\boldsymbol{\rho}_{A}]} + \beta \Tr{[\mathcal{L}_{B}\boldsymbol{\rho}_{B}]} + \ln_{2} Z_C\geq0.
	 	\end{eqnarray}
We further use $ \mathcal{L}_C=\mathcal{L}_A+\mathcal{L}_B $ and von Neumann entropy corresponding to a Gibbs state density matrix $ S(\boldsymbol{\rho}_{X})=  \beta \Tr{[\mathcal{L}_{X}\boldsymbol{\rho}_{X}]} +  \ln_{2}Z_{X} $, to obtain,
    \begin{eqnarray}
			S_{A} + S_{B} -2S_{C} + \ln_{2} Z_C + \beta\Tr{[\mathcal{L}_C\boldsymbol{\rho}_C]}\geq 0.\nonumber
	\end{eqnarray}
From the expansion of $S_C$, we recover Theorem \ref{thm:gl}.
\end{proof}

\section{Google page rank---a ground eigenvector problem}  

Evidently nodes in a complex network have different roles and their influence on system
dynamics varies widely depending on their topological characteristics. (this is particularly relevant in walks that break time reversal symmetry, as introduced in \cite{Zimbors2013} and experimentally probed in \cite{Lu_2016} and characterized in \cite{biamonte2017topological}).

One of the simpler (and widely applied) characteristics in network analysis~\cite{albert2002statistical} is the degree centrality, defined as the number of edges incident on that node. Many real world networks have been found to follow a widely heterogeneous distribution of degree values~\cite{albert2002statistical}.

Despite the complexity of the linking pattern, the degree distribution of a network directly (and in some regimes dominantly) affects the associated dynamics. In fact, it can be shown
that the probability of finding a memoryless random walker at a given node of a symmetric network in the stationary low-energy state, is just proportional to the degree of such a node~\cite{noh2004random}.

We will now summarize our findings to explain Figure \ref{fig:similarity}. 

In~\cite{faccin2013degree} together with coauthors, we consider the relationship between the stochastic and the quantum version of such processes, with the ultimate goal of shedding light on the meaning of degree centrality in the case of quantum networks.

We considered a stochastic evolution governed by the Laplacian matrix $\mathcal{L}_{S}=\mathcal LD^{-1}$, the stochastic generator that characterizes classical random walk dynamics and leads to an occupation probability proportional to node degree.

In the quantum version, a Hermitian generator is required and we proposed the symmetric Laplacian matrix ${\mathcal L}_{Q}=D^{-\frac 12}\mathcal L D^{-\frac 12}$, generating a valid quantum walk that, however, does not lead to a stationary state, making difficult a direct comparison between classical and quantum versions of the dynamics.
A common and useful workaround to this issue was to average the occupation probability over time, as in~\eqref{eqn:ltap2.43}. 

The generators of the two dynamics are spectral similar (see Figure~\ref{fig:similarity}) and hence share the same eigenvalues, while the eigenvectors are related by the transformation $\ket{\phi_i^C} = D^{-\frac 12}\ket{\phi_i^Q}$.
As a consequence, if the system is in the ground state the average probability to find the walker on a node will be the same as in the classic case, which will depend solely on the degree of each node.
For the cases in which the system is not in the ground state, it is possible to define a quantity
\begin{equation*}
  \varepsilon = 1 - \bra{\phi_0^Q} \rho_0 \ket{\phi_0^Q},
\end{equation*}
describing how far from the classical case the probability distribution of the quantum walker will be. In the case of uniformly distributed initial state $\rho_0$, this provides a measure for the heterogeneity of the degree distribution of a quantum network \cite{faccin2013degree}.

\begin{remark}[Directed Networks]
Many networks of practical significance are not symmetric, but directed.  How can these networks be encoded into quantum systems when considering the constraint that Hamiltonians must be Hermitian? (One approach is to consider directing transport by breaking time-reversal symmetry \cite{Zimbors2013}). 
\end{remark}

The non-symmetric adjacency matrix representing the directed connectivity of the World Wide Web, a.k.a.~the Google matrix $G$, satisfies the Perron-Frobenius theorem \cite{2012arXiv1209.3632B} and hence there is a maximal eigenvalue corresponding to an eigenvector of
positive entries $G\ket{p} = \ket{p}$. The eigenvector $\ket{p}$ corresponds to the limiting distribution of occupation probabilities of a random web surfer---it represents a unique attractor for the dynamics independently of the initial state. The vector $\ket{p}$ is known as the Page-Rank. A dumping factor is often included in the computation to assure the Perron-Frobenius theorem is satisfied. 
 
Several recent studies embed $G$ into a quantum system and consider quantum versions of Google's Page-Rank \cite{QuantumPageRank,garnerone2012pagerank, paparo2014google}. The work~\cite{Garnerone2012google} relied on an adiabatic quantum algorithm to compute the Page-Rank of a given directed network, whereas Burillo et al.~\cite{sanchez2012quantum} rely on a mixture of unitary and dissipative evolution to define a ranking that converges faster than classical PageRank.

The page-ranking vector $\ket{p}$ is an eigenvector of $\eye-G$ corresponding to the zero eigenvalue (the lowest). This fact leads to a definition of a Hermitian operator which can play the role of a Hamiltonian, defined as:
\begin{equation}
  \mathcal{H} = (\eye -G)^\dagger(\eye -G),
\end{equation}
though non-local (if $G$ is $k$-local then $h$ is $k^2$-local), its ground state represents the target Page-Rank which could be found by adiabatic quantum annealing into the ground state.  Using a quantum computer to accelerate the calculation of various network properties has been considered widely, see e.g.~the survey~\cite{2017arXiv170208459B}.

\begin{figure}[p]
  \centering
  \includegraphics[width=0.75\textwidth]{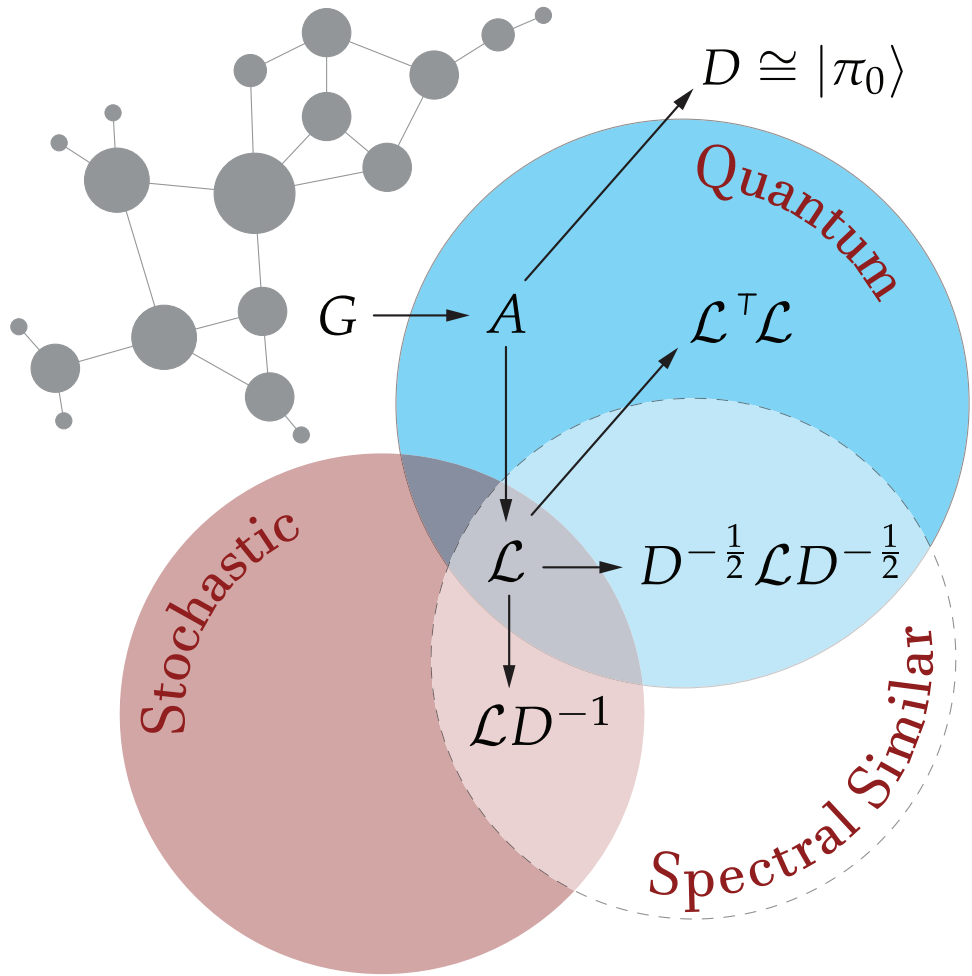}
  \caption{Known mappings between quantum and stochastic generators. Here $G=(V,E)$ is a graph with adjacency matrix $A$, $D$ is a diagonal matrix of node degrees.  This yields the graph Laplacian   $\mathcal{L}=D-A$, and hence, the stochastic walk generator $\mathcal{L}_{S}=\mathcal{L}D^{-1}$, from this a similarity transform results in $\mathcal{L}_{Q}=D^{-\frac{1}{2}}\mathcal{L}D^{-\frac{1}{2}}$, which generates a valid quantum walk and exhibits several interesting connections to the classical case.   The mapping $\mathcal{L}\longrightarrow \mathcal{L}^\top \mathcal{L}$ preserves the lowest $0$ energy ground state, opening the door for adiabatic quantum annealing which solves computational problems by evolving a system into its ground state. (Figure from \cite{2017arXiv170208459B} which was modified from \cite{faccin2013degree}). }
\label{fig:similarity}
\end{figure}

\section{Kitaev's quantum phase estimation algorithm}\label{sec:kqpea}
\index{Quantum Algorithms}\index{Phase Estimation}

Quantum algorithms have a long and rich history, with many seminal  developments.  Early milestones include the Deutsch-Jozsa algorithm~\cite{DJ1992}, the Bernstein–Vazirani algorithm~\cite{Bernstein1997}, Simon's algorithm~\cite{Simon1997} as well as others~\cite{Cleve1998}. 

The {\it quantum phase estimation algorithm} (a.k.a.~{\it quantum eigenvalue estimation algorithm}), estimates the eigenvalue of an eigenvector of a unitary operator. Loosely speaking, given a $U$ and a quantum state $\ket{\psi}$ such that $U\ket{\psi} = e^{\imath \cdot \lambda}\ket{\psi}$, the algorithm estimates $\lambda$ with high probability within additive error $\varepsilon$, using ${\mathcal O}(1/\varepsilon)$ controlled-$U$ gates.  Phase estimation is a central component (subroutine) to many other quantum algorithms, such as Shor's algorithm \cite{NC}. This algorithm has appeared in my own reseach, including \cite{WBA11, sim11, Wang2015, Lanyon2010}. 

Let 
\begin{equation}
\mathcal{H}\in \mathscr{L}(\mathbb{C}^{\otimes d}_2) \hspace{30pt} \Phi \in \mathbb{C}^{\otimes d}_2
\end{equation}
with
\begin{equation}
\mathcal{H}=\mathcal{H}^\dagger \hspace{30pt} \mathcal{H} \varphi_i = \lambda_i\varphi_i  
\end{equation}
so $\mathcal{H}$ is a finite dimensional Hamiltonian, then\\
\begin{equation}
\label{Phi}
\Phi = \sum_i e^{- \imath \lambda_i} \braket{\varphi_i}{\phi}\ket{\varphi_i} 
\end{equation} 
is a solution to Schr{\"o}dinger's equation
\begin{equation}
\label{schr_eq}
\imath \frac{d}{dt}\Phi=\mathcal{H}\Phi. 
\end{equation}
In general, eigenvectors evolve under the T.I.S.E.~as 
\begin{equation}
\varphi_i (t) = e^{-\imath t\lambda_i}\varphi_i(0). 
\end{equation}

\begin{definition} (The global phase)\label{def:global}
We say a global phase is not detectable as $|\braket{\phi}{\varphi}|^2$ is invariant under $\varphi \mapsto e^{-\imath \alpha} \varphi$, $\psi \mapsto e^{-\imath \alpha} \psi$ $  \forall \alpha, \beta \in \mathbb{R}$. Some authors will write 
for $\varphi \in \mathbb{C}^{\otimes d}_2$ and for $c \in \mathbb{C}/_{\{0\}}$,
$c\varphi \sim \varphi$ (spoken, ``is reducible to'' or hence ``equivalent'') and call this a {\it ray in Hilbert space}.  Mathematically this is a pairing $\mathbb{C}/_{\{0\}} \times \mathbb{C}^{\otimes d}_2$ which induces an equivalence $\mathbb{C}/_{\{0\}} \times \mathbb{C}^{\otimes d}_2 \sim \mathbb{C}^{\otimes d}_2$ as  
\begin{equation}
\frac{|\braket{c\phi}{d\varphi}|^2}{\braket{c\phi}{c\phi}\braket{d\varphi}{d\varphi}} =  \frac{\braket{c\phi}{d\varphi}\braket{d\varphi}{c\phi}}{\braket{c\phi}{c\phi}\braket{d\varphi}{d\varphi}} = |\braket{\phi}{\varphi}|^2
\end{equation}
for unit vectors $\varphi$ and $\phi$ where $ \braket{c\phi}{c\phi} = \overline{c}c$. Here we work in the unit normalized gauge, fixing probability such that $\braket{\varphi}{\varphi}=1$ which is indeed invariant under $\varphi \mapsto e^{-\imath\alpha} \varphi$. We will write $e^{-\imath \alpha} \varphi \sim \varphi$. 
(See also Definition \ref{def:gauge}). 
\end{definition}

\begin{proposition}
Equation \eqref{Phi} satisfies \eqref{schr_eq}. 
\end{proposition}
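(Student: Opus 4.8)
The plan is to treat $\Phi$ in \eqref{Phi} as the time-dependent state $\Phi(t) = \sum_i e^{-\imath \lambda_i t}\braket{\varphi_i}{\phi}\ket{\varphi_i}$ (the displayed formula being the shorthand with the time variable suppressed) and to verify directly, by termwise differentiation, that it satisfies \eqref{schr_eq} with initial datum $\Phi(0)=\phi$. First I would record that, since $\mathcal{H}=\mathcal{H}^\dagger$ acts on the finite-dimensional space $\mathbb{C}^{\otimes d}_2$, the spectral theorem supplies an orthonormal eigenbasis $\{\varphi_i\}$ with $\mathcal{H}\varphi_i=\lambda_i\varphi_i$ and a resolution of identity $\sum_i \ketbra{\varphi_i}{\varphi_i}=\eye$; evaluating the sum at $t=0$ then gives $\Phi(0)=\sum_i \braket{\varphi_i}{\phi}\ket{\varphi_i}=\phi$, so the initial condition holds.

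Next I would differentiate. Because the index set is finite, interchanging $\tfrac{d}{dt}$ with the sum is immediate and there are no convergence subtleties, so
\begin{equation*}
\frac{d}{dt}\Phi(t) = \sum_i (-\imath \lambda_i)\, e^{-\imath \lambda_i t}\braket{\varphi_i}{\phi}\ket{\varphi_i}.
\end{equation*}
On the other hand, applying $\mathcal{H}$ term by term and using $\mathcal{H}\ket{\varphi_i}=\lambda_i\ket{\varphi_i}$ yields
\begin{equation*}
\mathcal{H}\Phi(t) = \sum_i e^{-\imath \lambda_i t}\braket{\varphi_i}{\phi}\,\mathcal{H}\ket{\varphi_i} = \sum_i \lambda_i\, e^{-\imath \lambda_i t}\braket{\varphi_i}{\phi}\ket{\varphi_i}.
\end{equation*}
Multiplying the first display by $\imath$ then shows $\imath \tfrac{d}{dt}\Phi(t) = \sum_i \lambda_i\, e^{-\imath \lambda_i t}\braket{\varphi_i}{\phi}\ket{\varphi_i} = \mathcal{H}\Phi(t)$, which is precisely \eqref{schr_eq}.

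I would close by noting that this is consistent with the evolution law $\varphi_i(t)=e^{-\imath t \lambda_i}\varphi_i(0)$ for individual eigenvectors stated just above: $\Phi$ is merely the superposition of these eigenmodes with fixed coefficients $\braket{\varphi_i}{\phi}$, the latter being the decomposition of the initial state $\phi$ in the eigenbasis. There is essentially no obstacle here — the only points worth flagging are the implicit reinstatement of the time variable in \eqref{Phi} and the appeal to self-adjointness of $\mathcal{H}$ to guarantee a complete orthonormal eigenbasis (so that $\Phi(0)=\phi$ really holds); once those are in place the verification is a one-line termwise differentiation.
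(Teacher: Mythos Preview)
Your proof is correct. The paper states this proposition without supplying a proof, leaving the verification to the reader; your termwise differentiation using the eigenbasis of the self-adjoint $\mathcal{H}$ is exactly the intended elementary check.
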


We indeed can't measure a global phase, but we can measure a relative one. Consider the Hamiltonian \eqref{eqn:relativeH} (see also \cite{sim11}). 

\begin{equation}\label{eqn:relativeH}
\mathcal{H} \rightarrow \ketbra{1}{1}\otimes \mathcal{H} 
\hspace{20pt} \textnormal{or} \hspace{20pt} \mathcal{H} \rightarrow P_1\otimes \mathcal{H} 
\end{equation}

Hence from \eqref{eqn:relativeH} one partitions an enlarged (double) eigenspace, into a zero subspace 
\begin{equation}
\hspace{10pt} \ker \{\ketbra{1}{1}\otimes \mathcal{H}\} = \spn \{\ket{0}\otimes \ket{q}\}
\end{equation}\\
and the eigenspace of the original $\mathcal{H}$ is the same with eigenvectors $\otimes$-concatenated by $\ket{1}$
\begin{equation}
(\ketbra{1}{1}\otimes \mathcal{H})\ket{1}\otimes \ket{\varphi_i} = \lambda_i \ket{1} \otimes \ket{\varphi_i}
\end{equation}
\noindent
We consider the T.I.S.E.
\begin{equation}
e^{-\imath  t\ketbra{1}{1}\otimes \mathcal{H}} \underbrace{\left(\frac{1}{\sqrt[]{2}}\left(\ket{0} + \ket{1}\right)\otimes \ket{\varphi_i}\right)}_{\text{initial state}}
\end{equation}
by linearity
\begin{equation}
\begin{split}
=& \frac{1}{\sqrt[]{2}} \left( \ket{0, \varphi_i} + e^{-\imath \lambda_i t} \ket{1, \varphi_i}\right) \\ =& \frac{1}{\sqrt[]{2}} ( \ket{0} + e^{-\imath \lambda_i t} \ket{1})\ket{\varphi_i} 
\end{split}
\end{equation}\\
We change to the $X$-basis with the Hadamard gate $H = \frac{1}{\sqrt{2}}\left(X + Z\right)$ and then measure in $Z$.
\begin{equation}
\begin{split}
  \frac{1}{\sqrt{2}} H( \ket{0} + e^{-\imath \lambda_i t} \ket{1})
= \frac{1}{2}( \ket{0} - e^{-\imath \lambda_i t} \ket{1})
+ \frac{1}{2}( \ket{1} + e^{-\imath \lambda_i t} \ket{0})
\end{split}
\end{equation}

The probabilities for measuring $\ket{0}$ and $\ket{1}$, respectively are (see Figure \ref{fig:logo}) 
\begin{equation}
\begin{split}
\text{Prob}(\ketbra{0}{0}) =& \frac{1}{4}(1+e^{-\imath  \lambda_i t})(1+e^{\imath \lambda_i t})\\ =& \frac{1}{4}(1+e^{-\imath \lambda_i t} + e^{\imath \lambda_i t} + 1)\\ =& \frac{1}{2}(1+\cos(\lambda_i t))
\end{split}
\end{equation}
\noindent and 
\begin{equation}
\text{Prob}(\ketbra{1}{1}) =\frac{1}{2}(1-\cos(\lambda_i t)). 
\end{equation}
By sampling from the original wavefunction and measuring $\bra{0}$ we can recover $\lambda_i$---see also Figure \ref{fig:logo}. The described procedure was the quantum phase estimation algorithm \cite{KSV02} related to how it appeared in \cite{sim11}. 




\begin{figure}[p]
  \centering 
  \includegraphics[width=0.7\textwidth]{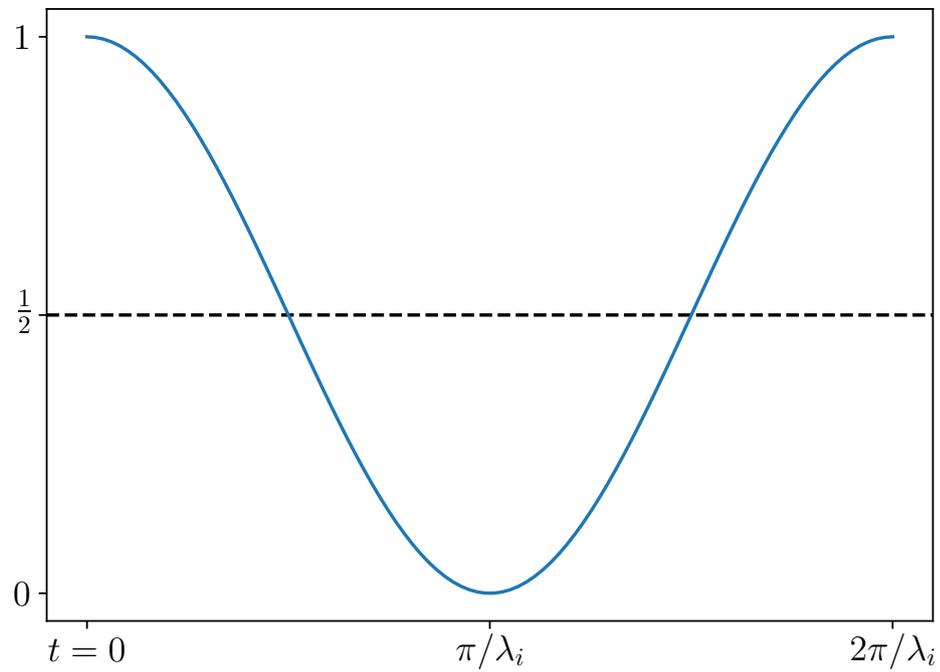}
  \caption{A plot of $\frac{1}{2} + \frac{1}{2}\cos \lambda_i t$. The plane wave $\cos \lambda_i t$ has angular frequency $\lambda_i$ with period $\frac{2\pi}{\lambda_i}$.}
\label{fig:logo}
\end{figure}

From phase estimation we can obtain the eigenvalue of a given eigenvector. The following questions now follow
\begin{enumerate}
\item How do we simulate $\mathcal{H}$ using elementary gates?
\item How do we prepare $\ket{\psi_i}$?
\end{enumerate}

For the first question, consider $ \mathcal{H} = A + B$ with $\left[ A,B \right] \neq  0 $. It can be shown that it is possible to approximate the evolution of $\mathcal{H}$.  Let us recall the following. 

\begin{definition}(Lie Product Formula). For Hermitian matrices ${\mathcal A}$ and ${\mathcal B}$,  
\begin{equation}
e^{{\mathcal A} + {\mathcal B}}=\lim_{k\rightarrow \infty} \left(e^{{\mathcal A}/k}e^{{\mathcal B}/k} \right)^k.
\end{equation}
\end{definition}

\begin{definition}[Suzuki--Trotter Expansion]
The second order operator Taylor expansion establishes that 
\begin{equation}
e^{({\mathcal A}+{\mathcal B})\delta} = e^{\delta {\mathcal A}}e^{\delta {\mathcal B}} + O(\delta^2). 
\end{equation}
\end{definition}

Similarly to the product formula, 
It is often typical that one of the terms ${\mathcal A}$ or ${\mathcal B}$ is considered as a diagonal operator in the standard computational $Z$ basis.  And that the non-diagonal second term is typically 1- or 2-body.  

We will first simulate ${\mathcal A}$. Let ${\mathcal A}$ be diagonal in $\mathscr{L}(\mathbb{C}^{\otimes d}_2)$. This term could embed e.g.~the $3$-{\sf SAT} problem \index{Boolean Satisfiability} and hence may contain 3-body interactions. To develop an example to simulate 3-body interactions consider
\begin{equation}\label{eqn:zzz}
    \alpha Z\otimes Z \otimes Z. 
\end{equation}

\begin{proposition}
The basis vector $\ket{a,b,c}$ for $a,b,c \in \mathbb{B}$ evolves under the propagator $e^{-\imath t \alpha Z_1Z_2Z_3}$ as 
\begin{equation}
e^{\imath \alpha (-1)^{a\odot b\odot c}}\ket{a,b,c}
\end{equation}
where $\odot$ denotes the operation of negated exclusive {\sf OR}, called logical equvlience ($a\odot b\odot c=1\oplus a \oplus b \oplus c$). 
\end{proposition}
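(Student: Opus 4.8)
The plan is to exploit the fact that $\ket{a,b,c}$ is already an eigenvector of the generator $Z_1Z_2Z_3$, so that the propagator acts on it by a mere scalar phase, and then to rewrite that phase in Boolean (XOR/XNOR) language.

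First I would recall from the elementary properties established earlier (the example giving $Z\ket{j}=(-1)^j\ket{j}$ for $j\in\{0,1\}$), together with the identity-omission convention for $Z_a$, that
\begin{equation}
Z_1 Z_2 Z_3\ket{a,b,c} = (Z\ket{a})\otimes(Z\ket{b})\otimes(Z\ket{c}) = (-1)^a(-1)^b(-1)^c\ket{a,b,c} = (-1)^{a+b+c}\ket{a,b,c},
\end{equation}
so $\ket{a,b,c}$ is an eigenvector of $Z_1Z_2Z_3$ with eigenvalue $(-1)^{a+b+c}\in\{\pm1\}$.

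Next I would apply the series expansion $e^{\mathcal A}=\sum_{k=0}^{\infty}\mathcal A^k/k!$ recalled earlier to $\mathcal A = -\imath t\alpha Z_1Z_2Z_3$ acting on this eigenvector: since $(Z_1Z_2Z_3)^k\ket{a,b,c}=(-1)^{k(a+b+c)}\ket{a,b,c}$, every term acts by the corresponding power of the scalar eigenvalue and the series re-sums to
\begin{equation}
e^{-\imath t\alpha Z_1 Z_2 Z_3}\ket{a,b,c} = e^{-\imath t\alpha(-1)^{a+b+c}}\ket{a,b,c}
\end{equation}
(equivalently one invokes the functional calculus directly, noting $Z_1Z_2Z_3$ is Hermitian and squares to $\eye$). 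Then I would translate the integer exponent: only the parity of $a+b+c$ matters inside $(-1)^{(\cdot)}$, so $(-1)^{a+b+c}=(-1)^{a\oplus b\oplus c}$, and by the stated definition $a\odot b\odot c \bydef 1\oplus a\oplus b\oplus c$ one has $(-1)^{a\odot b\odot c} = -(-1)^{a\oplus b\oplus c} = -(-1)^{a+b+c}$; substituting yields $e^{-\imath t\alpha(-1)^{a+b+c}} = e^{\imath t\alpha(-1)^{a\odot b\odot c}}$, which is the claimed expression (absorbing $t$ into $\alpha$, or taking $t=1$, to match the statement literally).

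The computation is entirely routine; the only thing that deserves care — and the single potential pitfall — is the bookkeeping of the two sign conventions: the $\pm$ coming from the eigenvalue $(-1)^{a+b+c}$ versus the extra $1\oplus$ concealed in the paper's definition of the ternary $\odot$, which together are exactly what flips $-\imath\alpha$ into $+\imath\alpha$ in the final answer. If one wished to be thorough one would also check the underlying truth-table identity $x\odot y = 1\oplus x\oplus y$, but this is immediate and the proposition supplies the three-argument version as a definition anyway.
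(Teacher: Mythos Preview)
Your proof is correct. The paper states this proposition without proof, so there is no argument to compare against; your direct eigenvector computation is exactly the natural route, and you correctly identified both the sign bookkeeping and the stray $t$ in the statement.
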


\begin{proposition}
The following circuit simulates the 3-body interaction \eqref{eqn:zzz} as $e^{-\imath t \alpha Z_1Z_2Z_3}$. 
\begin{center}
\begin{circuitikz}[scale=1.5] 
\draw [fill] (7,2) circle [radius=0.1] ;
\draw (6,2) -- (8,2);
\draw (7,2) -- (7,0.8);
\draw (7,1) circle [radius=0.2] ;
\draw (6,1) -- (8,1);
\draw (6,0) -- (8,0);
\draw (8,1) -- (8,-0.2);
\draw (8,0) circle [radius=0.2] ;
\draw [fill] (8,1) circle [radius=0.1] ;
\draw (8,2) -- (9,2);
\draw (8,1) -- (9,1);
\draw (8,0) -- (8.50,0);
\draw  (8.50,0.50) rectangle (9.50,-0.5);
\draw (9, 0)  node [black] {$e^{-\imath\alpha Z}$};
\draw (9.50,0) -- (10,0);
\draw (9,2) -- (10,2);
\draw (9,1) -- (10,1);
\draw [fill] (10,1) circle 
[radius=0.1] ;
\draw (10,1) -- (10,-0.2);
\draw (10,0) circle [radius=0.2] ;
\draw (10,2) -- (11,2);
\draw (10,1) -- (11,1);
\draw (10,0) -- (11,0);
\draw (11,2) -- (11,0.8);
\draw [fill] (11,2) circle 
[radius=0.1] ;
\draw (11,1) circle [radius=0.2] ;
\draw (11,2) -- (12,2);
\draw (11,1) -- (12,1);
\draw (11,0) -- (12,0);
\draw (5.7, 2)  node [black] {};
\draw (5.7, 1)  node [black] {};
\draw (5.7, 0)  node [black] {};
\end{circuitikz}
\end{center}
\end{proposition}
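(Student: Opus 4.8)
The plan is to verify the claim at the level of operators, peeling the controlled--\textsc{not} gates off the middle single--qubit rotation one layer at a time. First I would record the only nontrivial ingredient: for a single $\mathrm{CNOT}$ with control $c$ and target $t$, written $\mathrm{CNOT}_{c\to t}=P_0^c\otimes\eye + P_1^c\otimes X_t$, one has the conjugation identity $\mathrm{CNOT}_{c\to t}\,Z_t\,\mathrm{CNOT}_{c\to t}=Z_c Z_t$. This is a two--line computation: expand the product and use $P_0 P_1 = 0$, $P_0 - P_1 = Z$, and $XZX=-Z$ — the last being exactly the relation already invoked in the $\mathbb{Z}_2$--symmetry remark above. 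I would also note that $\mathrm{CNOT}$ is an involution, $\mathrm{CNOT}^2=\eye$, and that it commutes with $Z$ on any wire it does not touch.

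Next I would read the circuit off the page. Labelling the three wires $1,2,3$ from top to bottom, the diagram is the palindrome $U = C_1 C_2\, e^{-\imath t\alpha Z_3}\, C_2 C_1$ with $C_1=\mathrm{CNOT}_{1\to2}$ and $C_2=\mathrm{CNOT}_{2\to3}$; since the word of gates reads the same in both directions, the paper's right--to--left time convention is immaterial here. Then I would use $V e^{A} V^{-1}=e^{VAV^{-1}}$ for $V$ a $\mathrm{CNOT}$ (with $V^{-1}=V$): the inner layer gives $C_2\, e^{-\imath t\alpha Z_3}\, C_2 = e^{-\imath t\alpha\, C_2 Z_3 C_2}=e^{-\imath t\alpha Z_2 Z_3}$, and the outer layer then gives $C_1\, e^{-\imath t\alpha Z_2 Z_3}\, C_1 = e^{-\imath t\alpha\,(C_1 Z_2 C_1)(C_1 Z_3 C_1)}=e^{-\imath t\alpha Z_1 Z_2 Z_3}$, using $C_1 Z_2 C_1 = Z_1 Z_2$ and $C_1 Z_3 C_1 = Z_3$. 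This is precisely the propagator of \eqref{eqn:zzz}.

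As an independent cross--check I would track a computational basis vector $\ket{a,b,c}$ straight through the five gates: the two $\mathrm{CNOT}$s carry it to $\ket{a,\,a\oplus b,\,a\oplus b\oplus c}$, the rotation multiplies that branch by $e^{-\imath t\alpha(-1)^{a\oplus b\oplus c}}$, and the two uncomputing $\mathrm{CNOT}$s restore $\ket{a,b,c}$; since $(-1)^{a\oplus b\oplus c}=(-1)^a(-1)^b(-1)^c$ is exactly the eigenvalue of $Z_1 Z_2 Z_3$ on $\ket{a,b,c}$, this reproduces the phase $e^{\imath\alpha(-1)^{a\odot b\odot c}}$ of the preceding Proposition (recall $a\odot b\odot c = 1\oplus a\oplus b\oplus c$).

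There is no genuine obstacle here; the content is entirely bookkeeping. The only points requiring care are (i) matching the control/target assignments and the time direction to the figure, which is painless because the gate sequence is a palindrome, and (ii) the sign in $XZX=-Z$, which must be tracked so that conjugation yields $Z_c Z_t$ rather than $-Z_c Z_t$. I would also remark that the box in the figure is to be read as carrying the evolution time, i.e.\ as $e^{-\imath t\alpha Z}$ (equivalently, $\alpha$ absorbs $t$), so that the stated propagator matches exactly.
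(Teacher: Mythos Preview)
Your argument is correct. The paper states this proposition without proof, so there is nothing to compare against directly; your basis-vector cross-check is precisely what the immediately preceding proposition (on the action of $e^{-\imath t\alpha Z_1Z_2Z_3}$ on $\ket{a,b,c}$) invites, and your operator-level conjugation argument via $\mathrm{CNOT}_{c\to t}\,Z_t\,\mathrm{CNOT}_{c\to t}=Z_cZ_t$ is the standard and cleanest route.
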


Now we will focus on simulating ${\mathcal B}$, which we will assume is a sum over 2-body terms. For example terms such as $X_i X_j$. In a circuit this would be implemented as

\begin{center}
\begin{circuitikz}[scale=1.5]
\draw (0,1.8) -- (0.5,1.8);
\draw (0,0) -- (0.5,0);
\draw  (0.5,2.3) rectangle (1.5,1.3);
\draw (1, 1.8)  node [black] {$\sf H$};
\draw  (0.5,0.5) rectangle (1.5,-0.5);
\draw (1, 0)  node [black] {$\sf H$};
\draw (1.5,1.8) -- (2,1.8);
\draw (1.5,0) -- (2,0);
\draw [fill] (2,1.8) circle 
[radius=0.1] ;
\draw (2,1.8) -- (2,-0.2);
\draw (2,0) circle [radius=0.2] ;
\draw (2,1.8) -- (4,1.8);
\draw (2,0) -- (2.5,0);
\draw  (2.55,0.5) rectangle (3.5,-0.5);
\draw (3, 0)  node [black] {$e^{-\imath\alpha Z}$};
\draw (3.5,0) -- (4,0);
\draw [fill] (4,1.8) circle 
[radius=0.1] ;
\draw (4,1.8) -- (4,-0.2);
\draw (4,0) circle [radius=0.2] ;
\draw (4,1.8) -- (4.5,1.8);
\draw (4,0) -- (4.5,0);
\draw  (4.5,2.3) rectangle (5.5,1.3);
\draw (5, 1.8)  node [black] {$\sf H$};
\draw  (4.5,0.5) rectangle (5.5,-0.5);
\draw (5, 0)  node [black] {$\sf H$};
\end{circuitikz}
\end{center}
We have hence implemented the evolution for a 2-body term in the $X$ basis.  In general we note that $\alpha X_i X_j = \alpha (HZ_iH)\otimes (HZ_jH)$ and since we used $Z_iZ_j$ interactions to realize the {\sf CN} gate, we would typically implement $X_1 X_2$ terms without the need for {\sf CN} gates. 

\begin{remark}
Readers might complain that we realized a $ZZ$ interaction using two {\sf CN} gates. This method does scale up to realize many-body interactions and it is sometimes the case that direct access to the Hamiltonian (presumably to realize a controlled-$Z$ gate) is not always possible. 
\end{remark}

\begin{proposition}
It can be shown that,  
\begin{equation}
e^{-\imath t\sum J_{ij}X_iX_j}= H^{\otimes n}e^{-\imath t\sum J_{ij}Z_iZ_j}H^{\otimes n}
\end{equation}
for $H$ the Hadamard gate.  Likewise, $Z$ terms are typically enough to be transformed into other Hamiltonians using local rotations, see for instance work on simulating electronic structure Hamiltonians \cite{WBA11}. 
\end{proposition}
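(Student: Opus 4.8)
The plan is to reduce the claim to the single-qubit identity $HZH = X$ together with the elementary fact that conjugating a matrix exponential by a unitary exponentiates the conjugated generator. First I would record that $H = \frac{1}{\sqrt 2}(X+Z)$ is both Hermitian and unitary, so $H^\dagger = H$ and $H^2 = \eye$; in particular $H^{\otimes n}$ is its own inverse, which is why the statement legitimately writes $H^{\otimes n}$ on both sides rather than $H^{\otimes n}$ and $(H^{\otimes n})^\dagger$. A direct computation --- either from the matrix forms of $H$, $X$, $Z$, or by expanding $\tfrac{1}{2}(X+Z)Z(X+Z)$ and using $X^2=Z^2=\eye$ together with the anticommutation $XZ+ZX=0$ --- gives $HZH = X$, and trivially $H\,\eye\,H = \eye$.

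Next I would lift this to the $n$-qubit register. Since $H^{\otimes n} = H_1 \otimes \cdots \otimes H_n$ factorizes across tensor slots and each $Z_iZ_j$ acts as $Z$ on slots $i,j$ and as $\eye$ elsewhere, we get $H^{\otimes n}(Z_iZ_j)H^{\otimes n} = (H_iZ_iH_i)\otimes(H_jZ_jH_j) = X_iX_j$ for every pair $i<j$, the identity factors passing through untouched. By linearity this extends to the whole Ising generator, $H^{\otimes n}\bigl(\sum_{ij}J_{ij}Z_iZ_j\bigr)H^{\otimes n} = \sum_{ij}J_{ij}X_iX_j$, with no assumption needed about whether the individual terms commute.

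Finally I would invoke the conjugation identity for the exponential: for any unitary $U$ and any operator $A$, the power series $e^A = \sum_{k\geq 0} A^k/k!$ combined with $UA^kU^\dagger = (UAU^\dagger)^k$ (telescoping the inserted $U^\dagger U = \eye$ factors) yields $Ue^A U^\dagger = e^{UAU^\dagger}$. Applying this with $U = H^{\otimes n}$ and $A = -\imath t\sum_{ij}J_{ij}Z_iZ_j$ gives exactly $H^{\otimes n} e^{-\imath t\sum J_{ij}Z_iZ_j} H^{\otimes n} = e^{-\imath t\sum J_{ij}X_iX_j}$, which is the claim. The closing remark about $Z$ terms being ``typically enough'' follows by the same mechanism: any local rotation $R$ for which $RZR^\dagger$ is the desired single-qubit operator conjugates a diagonal Ising-type generator into the corresponding target Hamiltonian, so one simply reads off the required product of local rotations, as in the electronic-structure simulations of \cite{WBA11}.

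There is no genuine obstacle here; the only points demanding care are the tensor-product bookkeeping in the second step (checking that the identity slots are undisturbed) and the observation in the first step that $H^{\otimes n}$ requires no separate inverse because it squares to the identity --- everything else is a direct consequence of the power-series definition of the exponential.
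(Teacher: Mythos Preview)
Your argument is correct and complete. The paper does not supply a separate proof of this proposition; it simply records, just above, the key identity $\alpha X_iX_j = \alpha (HZ_iH)\otimes(HZ_jH)$ together with the corresponding circuit picture, and leaves the exponentiated version as an immediate consequence. Your write-up is precisely the clean justification of that step --- the single-qubit relation $HZH=X$, its tensor lift, and the conjugation rule $Ue^AU^\dagger=e^{UAU^\dagger}$ --- so you are taking the same route the paper implicitly intends.
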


\section{Finding the ground state of a graph on a quantum processor}

As we introduced early in \S~\ref{chap:progGS}, the ground state problem was considered here in several forms.  
\begin{enumerate}
    \item Stochastic systems as described here, evolve to their lowest-eigenvalue state(s). 
    \item Page-rank is a ground state sampling problem, which (as explained) can be preformed on a quantum processor (such as an adiabatic processor or ground state computer). 
    \item Phase-estimation can be used to determine eigenvalues of operators on a quantum processor.  
\end{enumerate}
To prepare a canidate lowest-energy state $\ket{\psi}$ on a quantum processor, a classical iterative optimization method can be uses (as explored next in \S~\ref{chap:varintro}). If $\ket{\psi}$ is the ground state of the Hamiltonian then we consider the following optimization problem (a variational upper bound). 

\begin{equation}\label{eqn:varmin2.68}
\min_{\varphi} \bra{\varphi}{\mathcal H}\ket{\varphi} \geq \bra{\psi} {\mathcal H} \ket{\psi}
\end{equation}

Note that a solution to this optimization problem corresponds to a ground state of the Hamiltonian ${\mathcal H}$. Several approaches will be presented to solve optimization problems given in the form \eqref{eqn:varmin2.68}.




\chapter{Tensor Networks and Parameterised Quantum Circuits}
\label{chap:tensornets}

Tensor networks in physics are often traced back to a 1971 paper by Penrose \cite{Pe71} but actually date back further, appearing in various forms in the work of Cayley. \index{Tensor Networks}  Such network diagrams appear in digital circuit theory, and form the foundations of quantum computing---starting with the work of Feynman and others in the 1980's \cite{Feynman1986} and further extended by Deutsch in his `quantum computational network model'~\cite{Deutsch73}.   

Most of the modern interest in tensor networks stems from their use in numerical algorithms employing tensor contraction.  There are many reviews and surveys devoted to tensor network algorithms---see~\cite{2014AnPhy.349..117O, Vidal2010, MPSreview08, TNSreview09, 2011AnPhy.326...96S, 2010arXiv1006.0675S, Schollw, 2014EPJB...87..280O,2013arXiv1308.3318E, 2011JSP...145..891E, 2016arXiv160303039B, MAL-059, Pervishko_2019, MAL-067,2017arXiv170809213R}, as well as, {\it Tensor Networks in a Nutshell}, which I wrote with Ville Bergholm \cite{biamonte2017tensor} as well as my book, {\it Quantum Tensor Networks: a pathway to modern diagrammatic reasoning}~\cite{biamonte2019lectures}. 

The Feynman gate is a (if not the) central building block for quantum information processing tasks.  It has been used extensively in diagrammatic reasoning in categorical quantum mechanics \cite{CD, redgreen}. Let's recall the building blocks used in the tensor network construction of the Feynman gate.  Here we will review and introduce tensor networks by considering this gate. 

\begin{figure}[p]\label{fig:F2-presentation}
\begin{center}
\includegraphics[width=.9\textwidth]{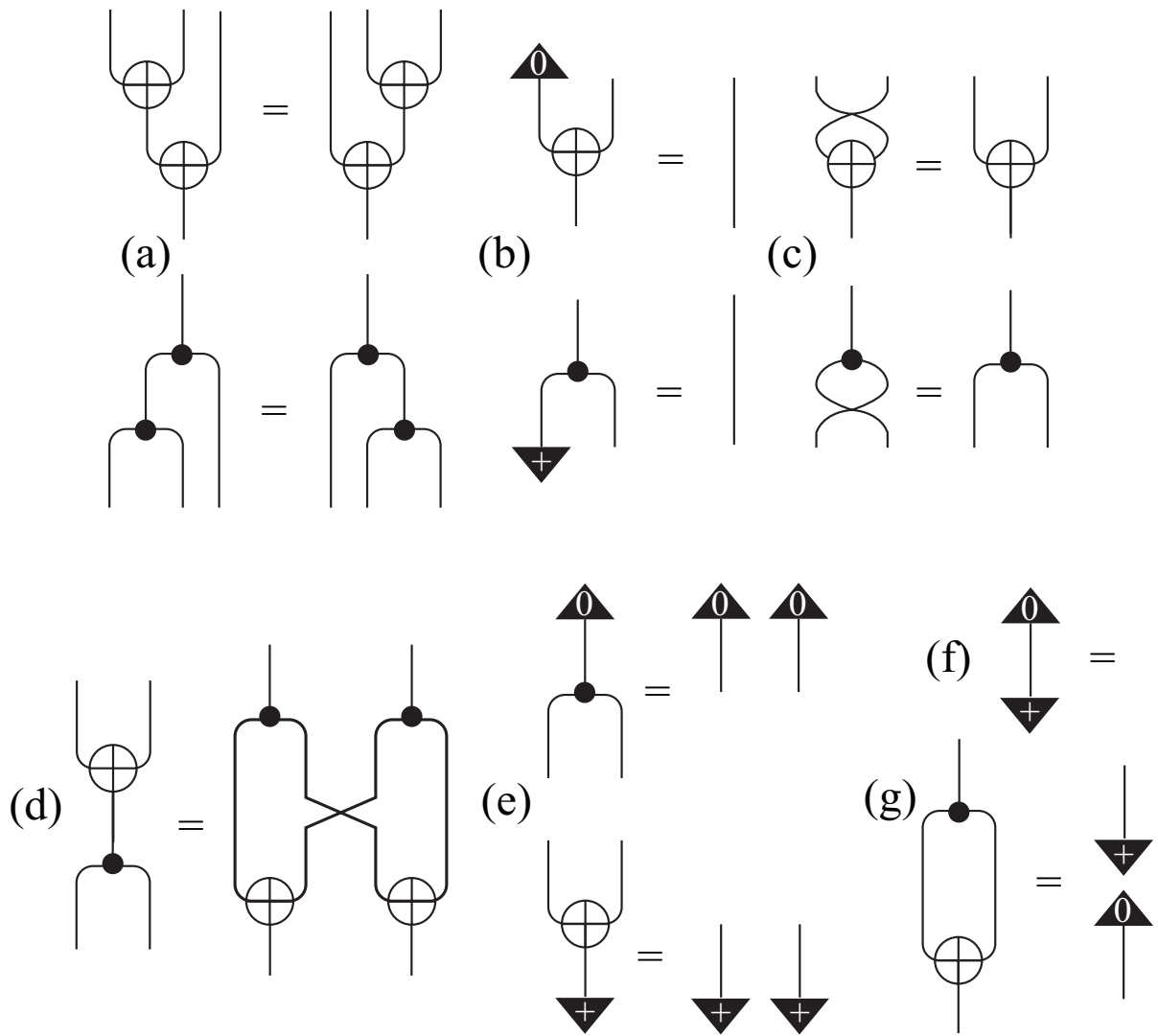}
\end{center} 
\caption{Lafont's 2003 model of circuits included the bialgebra (d) and Hopf (g) relations between the building blocks needed to form a controlled-{\sf NOT} gate---see also ZX calculus \cite{redgreen}.  [Redrawn from \cite{Lafont03towardsan} as it appeared in \cite{2011AIPA....1d2172B}].  [(a) associativity; (b) gate unit laws; (c) symmetry; (e) copy laws; (f) unit scalar given as a blank on the page.]  }
\end{figure}

As related to diagramatic reasoning in quantum theory \cite{CD, redgreen}, an algebraic theory of logic gates \cite{Lafont03towardsan} was cast into the setting of tensor network states---and used at the crossroads of condensed matter theory and quantum computation \cite{2011AIPA....1d2172B, 2011JPhA...44x5304B, 2013JPhA...46U5301B}.  We adapted these and other tools~\cite{CD, redgreen} and discovered efficient tensor network descriptions of finite Abelian lattice gauge theories \cite{2012JPhA...45a5309D}.  These tools also lead to the discovery of a wide class of efficiently contractable tensor networks, representing counting problems \cite{2015JSP...160.1389B}.  The methods (in part) trace some of their roots back to categorical quantum mechanics in work~\cite{CD, redgreen} surrounding the so called, ZX-calculus (stabilizer tensor networks).

\section{Clifford gates}
\index{Quantum Gates}
\begin{definition}
The collection of Clifford gates is generated with the following. 
\begin{enumerate}
    \item[(a)] The controlled-{\sf NOT} gate, {\sf CN}. 
    \item[(b)] The Hadamard gate $H = \frac{1}{\sqrt{2}}(X+Z)$. 
    \item[(c)] The phase gate 
$P = \ket{0}\bra{0} + \imath \ket{1}\bra{1}$. 
\item[(d)] The Pauli gates generating the Pauli algebra on a single qubits with 16 elements.
\end{enumerate}
The gates (a-d) are presented graphically in Figure \ref{fig:cGates}.     
\end{definition}

\begin{figure}[h]
    \centering
    \includegraphics[width=.7\textwidth]{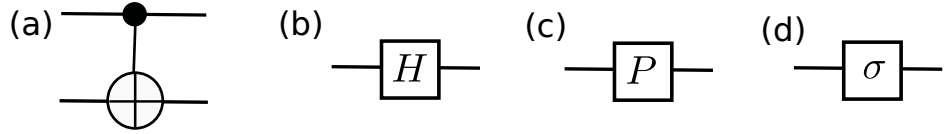}
    \caption{Clifford generators drawn as quantum gates.}
    \label{fig:cGates}
\end{figure}

\begin{remark}[Single qubit Clifford group]
The standard properties of single qubit gates follow. 
\begin{enumerate}
 \item[(i)] $HXH=Z$ and $HZH = X$
 \item[(ii)] $PXP^\dagger = Y$ and $PYP^\dagger = Z = P^2$
\end{enumerate}
These gates above generate the single qubit Clifford group.   
\end{remark}

\begin{remark}[Properties of Clifford circuits]
We now list elementary properties of Clifford circuits.  
 \begin{enumerate}
  \item[(i)] Clifford circuits generate the Clifford group.
  \item[(ii)] Let $P_n$ be the collection of $16^n$ $n$-letter words with $\otimes$ as concatenation generated from the alphabet 
  \begin{equation}
      \{\pm\eye, \pm X,\pm Y,\pm Z, \pm \imath\eye, \pm \imath X,\pm \imath Y,\pm \imath Z\}. 
  \end{equation}
All operators $g$ in the Clifford group acts as an involution when $P_n$ is conjugated by $g$, that is 
\begin{equation}
    g P_n g^\dagger = P_n.
\end{equation}
 \item[(iii)] Note, alternative notation to write $P_1$ could be 
 \begin{equation}
     P_1 = \{\pm 1, \pm \imath\}\{\eye, X, Y, Z\}.
 \end{equation}
 \item[(iv)] Defining properties of the Pauli matrices: 
 \begin{equation}
     X^2=Y^2=Z^2=\eye = -\imath XYZ.  
 \end{equation}
  \end{enumerate}
\end{remark}

\section{Tensor network building blocks}
A universal model of computation can be expressed in terms of networks (i.e.~circuits built from gates)~\cite{NC, 2011JPhA...44x5304B}.  The first gate to consider copies binary inputs ($0$ and $1$) like this 
\begin{subequations}
\begin{align}
        0 &\rightarrow 0,0\label{eqn:copy}\\
        1 &\rightarrow 1,1\label{eqn:copy2}
\end{align}
\end{subequations}
In the diagrammatic tensor network language, the {\sf COPY}-gate is 
\begin{center}
\includegraphics[width=0.1\textwidth]{copy}
\end{center} 
and graphically, equation \eqref{eqn:copy} and \eqref{eqn:copy2} become 
\begin{center}
\includegraphics[width=0.6\textwidth]{copy-basis}
\end{center} 
The next gate preforms the exclusive OR operation ({\sf XOR}).  Given two binary inputs (say $a$ and $b$), the output ($a\oplus b = a + b - 2 ab$) is $1$ iff exactly a single input is $1$ (that is, addition modulo 2).  The gate is drawn as 
\begin{center}
\includegraphics[width=0.1\textwidth]{xor}
\end{center} 
The {\sf XOR} \index{Logic Gates} gate allows one to realize any linear Boolean function.  Let $f$ be a function from $n$-long bit strings $x_1 x_2\dots x_n$ to single bits $\in \{0,1\}$. Then $f(x_1,x_2,\dots,x_n)$ is linear over $\oplus$ if it can be written as 
\begin{equation}\label{eqn:linear}
f = c_1 x_1 \oplus c_2 x_2 \oplus \cdots \oplus c_{n-1} x_{n-1} \oplus c_n x_n
\end{equation}
where ${\bf c} \bydef (c_1, c_2, \dots ,c_{n-1}, c_n)$ is any $n$-long Boolean string.  Hence, there are $2^n$ linear Boolean functions and note that negation is not allowed.  When negation is allowed a constant $c_0\oplus f$ is added (mod 2) to equation \ref{eqn:linear} and for $c_0=1$, the function is called affine. In other words, negation is equivalent to allowing constant $1$ as 
\begin{equation}\label{eqn:constants}
\includegraphics[width=0.28\textwidth]{constants}
\end{equation}
which sends Boolean variable $x$ to $1-x$.  Using the polarity representation of $f$, 
\begin{equation}
\hat f ({\bf x}) = (-1)^{f({\bf x})}
\end{equation}
we note that linear Boolean functions index the columns of the $n$-fold tensor product of $2\times 2$ Hadamard matrices (that is, $H^{\otimes n}$ where the $i$--$j$th entry of each $2\times 2$ is $\sqrt{2}H_{ij} \bydef (-1)^{i\cdot j}$). Importantly (where equality is up to a scalar), 
\begin{equation}\label{eqn:xorH}
\includegraphics[width=0.3\textwidth]{xor-H-copy}
\end{equation} 
up to isometry as there could be an omitted scale factor depending on conventions \cite{redgreen}.  By equation \ref{eqn:xorH} one can think of {\sf XOR} as being a copy operation in another basis.  We send binary $0$ to $\ket{0}\bydef(1,0)^\top$ and $1$ to $\ket{1}\bydef(0,1)^\top$ where $\top$ is transpose. Then {\sf XOR} acts as a copy operation: 
\begin{subequations}
\begin{align}
        \ket{+} &\rightarrow \ket{+,+}\label{eqn:xcopy}\\
        \ket{-} &\rightarrow \ket{-,-}\label{eqn:xcopy2}
\end{align}
\end{subequations}
using $H^2 = \eye$, $\ket{+} \bydef H\ket{0}$ and $\ket{-} \bydef H\ket{1}$.  

Concatenating the {\sf COPY}- and {\sf XOR} gates yields the logically reversible Feynman gate \cite{NC, redgreen, biamonte2019lectures}) \index{Quantum Gates}
\begin{equation}\label{fig:feynman}
\includegraphics[width=0.3\textwidth]{feynman}
\end{equation} 

A simplistic methodology to connect quantum circuits with indexed tensor networks starts with the definition of two tensors, in terms of components. 
\begin{center}
\includegraphics[width=0.45\textwidth]{delta}
\end{center} 
In (a) we have 
\begin{equation}
\delta^i_{~jk} = 1 - (i + j + k)+ i j + i k + j k 
\end{equation} 
where the indicies ($i$, $j$ and $k$) take values $\in \{0,1\}$. In other words, the following contractions evaluate to unity. 
\begin{center}
\includegraphics[width=0.42\textwidth]{delta-basis}
\end{center} 
Likewise for (b) we have 
\begin{equation} 
\oplus^q_{~r s} = 1-(q+r+s)+2(qr+qs+sr)-4qrs
\end{equation} 
where the following contractions evaluate to unity (the {\sf XOR} tensor is fully symmetric, hence the three rightmost contractions are identical by wire permutation). 
\begin{center}
\includegraphics[width=0.8\textwidth]{xor-basis}
\end{center} 
Then the Feynman gate ({\sf CN}) \index{Quantum Gates} is given as the following tensor contraction\footnote{Equation \eqref{eqn:cnot} expands to $1 - (i + j + q + r) + i j + i q + j q + i r + j r + 2 (q r - i q r - j q r)$.} 
\begin{equation}\label{eqn:cnot}
\sum_m\delta^{ij}_{~~m}\oplus_{~qr}^{m} = {\text {\sf CN}}^{ij}_{qr}
\end{equation}
where we raised an index on $\delta$.  
All quantum circuits can be broken into their building blocks and thought of as indexed tensor contractions in this way.

\subsection{Reversible logic}
A reversible computer is built using gates that implement bijective functions.  Quantum gates are unitary: hence reversible classical gates are a subclass.  Let us recall the critical implication of reversible logic.  

We will consider $n$-long bit strings in lexicographic order indexed by natural numbers $i$. So $y_0 = 00\cdots 0$, $y_2 = 00\cdots 10$ etc.~We will further consider inputs as being uniformly distributed over the $y$'s and define the change in Shannon's entropy between a circuits input and output (implementing $g$) as 
\begin{equation}\label{eqn:deltaS}
\Delta S \bydef \sum_i P\{g(y_i)\} \ln_2 P\{g(y_i)\} -  \sum_i P(y_i) \ln_2 P(y_i) 
\end{equation}
where the probability $P\{y_i\} = 2^{-n}, \forall i$ for the uniform distribution.  Equation \eqref{eqn:deltaS} vanishes identically iff $g$ is a reversible function.  

For $g$ non-reversible (a.k.a.~a non-injective surjective function), there exists at least one pair $y_i$, $y_j$ such that $g(y_j)=g(y_i)$ and hence, information is lost as the input can not be uniquely recovered from the output (so the Shannon entropy of the output distribution is strictly $<n$) and hence,  \eqref{eqn:deltaS} is non-vanishing.  The vanishing of \eqref{eqn:deltaS} is a central implication of reversible computation, and provides an abstract argument related to Landauer's principle. 

Universal classical computation can be realized with reversible logic gates.  However, using the Feynman gate is not enough since it only can be used to implement linear functions.  An additional reversible gate must be added, such as the Toffoli or Fredkin gate(s).  

\subsection*{Stabilizer tensor networks} 

Stabilizer circuits use gates from the normalizer of the $n$ qubit Pauli group---generated by the Clifford gates \cite{stabs}.  The gates include: (i) the single qubit Pauli gates $X$, $Y$ and $Z$; (ii) the Feynman gate; (iii) the Hadamard gate; (iv) the phase gate $P=\ket{0}\bra{0}+ \imath \ket{1}\bra{1}$.  
\index{Quantum Circuits} \index{Quantum Gates}

\begin{theorem}[Minimal Stabilizer Tensor Generators~\cite{B17}]\label{theorem:mstg}
The following generating tensors are sufficient to simulate any stabilizer quantum circuit: \begin{enumerate}
    \item[(a)] a vector $\ket{t}\bydef \ket{0}+\imath \ket{1}$,
    \item[(b)] the Hadamard gate and 
    \item[(c)] the {\sf XOR}- and {\sf COPY} tensors and 
    \item[(d)] a covector $\bra{+}\bydef \bra{0}+\bra{1}$.  
\end{enumerate}
\end{theorem}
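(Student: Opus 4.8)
The plan is to reduce the statement to the standard fact, recalled in the section on Clifford gates, that every stabilizer circuit is built from preparations of computational-basis states, the Hadamard gate $H$, the phase gate $P=\ketbra{0}{0}+\imath\ketbra{1}{1}$, the Feynman gate $\mathsf{CN}$, the single-qubit Paulis, and computational-basis measurements. It then suffices to exhibit each of these ingredients as a tensor contraction of the four generators (a)--(d). Throughout I would work in the scalar gauge of Definition~\ref{def:gauge}, discarding the stray $\sqrt{2}$ factors and global phases that spider fusion produces, so that these do not obscure which operator is realised.

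First I would dispatch the easy constituents. The Feynman gate is already the contraction $\mathsf{CN}^{ij}_{qr}=\sum_m\delta^{ij}_{\ m}\oplus^{m}_{\ qr}$ of the $\mathsf{COPY}$ and $\mathsf{XOR}$ tensors by~\eqref{eqn:cnot}; $H$ is generator (b); the covector $\bra{+}$ is generator (d), and $\bra{0}\propto\bra{+}H$, $\bra{1}\propto\bra{0}X$ then supply the computational-basis effects once $X$ is in hand. For input states I would note that the one-legged $\mathsf{XOR}$ tensor equals $\ket{+}+\ket{-}\propto\ket{0}$ and the one-legged $\mathsf{COPY}$ tensor equals $\ket{0}+\ket{1}\propto\ket{+}=H\ket{0}$, so $\ket{0}$ is preparable from the spider tensors alone. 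Hence the only substantive task is to synthesise the phase gate $P$, after which $Z=P^{2}$, $X=HZH$, and $Y=PXP^{\dagger}$ deliver the Paulis (with no residual phase, using the identities listed in the single-qubit Clifford remark).

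The key step is therefore the construction of $P$ out of $\ket{t}=\ket{0}+\imath\ket{1}$ and the $\mathsf{COPY}$ tensor, and this is essentially diagrammatic state injection. Because $\mathsf{COPY}$ is fully symmetric, its two-legged instances give a cup and a cap; bending the output leg of $\ket{t}$ against a cap yields the covector $\widetilde{\bra{t}}=\bra{0}+\imath\bra{1}$ (the transpose of $\ket{t}$, not its adjoint). I would then verify that contracting $\widetilde{\bra{t}}$ into one output leg of the three-legged $\mathsf{COPY}$ tensor produces the one-in-one-out map $\sum_{i}(\widetilde{\bra{t}})_{i}\,\ketbra{i}{i}=\ketbra{0}{0}+\imath\ketbra{1}{1}=P$ on the nose; using $\bra{t}$ instead would give $P^{\dagger}$, which is harmless since $P^{\dagger}=P^{3}$. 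Together with the previous paragraph this assembles the full Clifford generating set, preparations, and measurements, and hence an arbitrary stabilizer circuit, from (a)--(d).

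The hard part will not be any single idea but the bookkeeping: keeping transpose distinct from conjugate when bending $\ket{t}$, tracking exactly where discarded scalars and phases go so the scalar gauge does not silently change the realised operator, and presenting the reduction of the stabilizer formalism to $\{\text{prep }\ket0,\,H,\,P,\,\mathsf{CN},\,\text{Pauli},\,\text{measure}\}$ carefully enough that preparations and measurements are genuinely covered rather than just the unitary part. I would also remark, as motivation for why $\ket{t}$ (and not some other ancilla) is the right fourth generator, that $\ket{t}=P\ket{+}$ is precisely the green $\pi/2$-phase state of the $ZX$-calculus, so the gadget above is the statement that a $\pi/2$ green phase can be absorbed into a phase-free spider that is fed by $\ket{t}$.
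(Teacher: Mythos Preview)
Your proposal is correct and follows essentially the same route as the paper: contract $\ket{t}$ into one leg of the \textsf{COPY} tensor to produce the phase gate $P$, then build $Z$, $X$, $Y$ from $P$ and $H$, and take the Feynman gate from the \textsf{COPY}/\textsf{XOR} contraction~\eqref{eqn:cnot}. The only cosmetic difference is that the paper first uses \textsf{COPY} as a pointwise product on vectors to form the family $\ket{t^k}=\ket{0}+\imath^{k}\ket{1}$ and then lifts $\ket{t^1}$ and $\ket{t^2}$ to $P$ and $Z$ directly, whereas you obtain $Z$ as $P^2$; and the paper obtains the cup/cap by feeding $\bra{+}$ from (d) into \textsf{COPY} rather than by appealing to ``two-legged instances'' of the spider, so when you clean up the write-up make that arity-reduction step explicit.
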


\begin{proof}
We will establish this by recovering the Clifford gates (i-iv).  By linearity, the copy tensor induces a product between vector pairs, producing a third vector where the coefficients of the input vector pair are multiplied.  This allows us to recover from (a) the family $\ket{t^k}\bydef \ket{0}+ \imath^k \ket{1}$ for integer $k$ as, for instance, 
\begin{equation}
\includegraphics[width=0.3\textwidth]{tproduct}
\end{equation} 
which recovers the vector $\sqrt{2}\ket{+}$ from  \eqref{eqn:xcopy}.  Then from (d) we can recover a cup (or cap) allowing one to raise/lower indicies (\ref{eqn:tcup}.a), and importantly, (\ref{eqn:tcup}.b) illustrates that 
\begin{equation}\label{eqn:tcup}
\includegraphics[width=0.55\textwidth]{cupt}
\end{equation} 
$\ket{t^k}\bydef \ket{0}+ \imath^k \ket{1}$ lifts to a unitary operator where the Clifford gate $P$ is recovered for $k=1$---thereby establishing (iv).  Considering (\ref{eqn:tcup}.a) above, we have $t^4$ which turns into the identity operator.  The L.H.S.~of this equation (\ref{eqn:tcup}.a) then corresponds to a Bell state or costate depending on your convention.  From equation (\ref{eqn:tcup}.b) above, we create a map corresponding to $k^k$. For $k=2$ we recover the standard Pauli $Z$ matrix, then $HZH= X$ and $PXP^3 = Y$.  So we recover the Pauli gates (i).  The Feynman gate was constructed in (\ref{fig:feynman}) establishing (ii) and the Hadamard gate (iii) was assumed.   
\end{proof}

\begin{definition}
The cross on wire 
\begin{center}
    \begin{circuitikz}
    \draw (-0.5,0) -- (0.5, 0);
    \draw (-0.1,0.2) -- (0.1,-0.2);
    \draw (-0.1,-0.2) -- (0.1,0.2);
    \end{circuitikz}
\end{center}
denotes the $Z$ gate. 
\end{definition}

Before continuing on, we note that various other identities can be derived for stabilizer tensors, such as the following.\footnote{Showing that $Z_i\ket{\psi} = Z_j \ket{\psi}$ for all $i, j$ for $\ket{\psi}$ a {\sf GHZ}-state.}

\begin{center}

\begin{circuitikz}
\draw (-0.5,0) -- (0.5, 0);
\draw (-0.1,0.2) -- (0.1,-0.2);
\draw (-0.1,-0.2) -- (0.1,0.2);
\draw [fill] (0.5,0) circle [radius=0.1] ;
\draw (0.5,0) -- (1, 0.5);
\draw (0.5,0) -- (1, -0.5);

\draw (1.5, 0) node {$=$};

\draw (2,0) -- (3, 0);
\draw [fill] (3,0) circle [radius=0.1] ;
\draw (3,0) -- (3.5, 0.5);
\draw (3,0) -- (3.5, -0.5);

\path (3.3,0.3) coordinate (x);
\draw[rotate around={25:(x)}] (3.2,0.45) -- (3.4,0.15);
\draw[rotate around={25:(x)}] (3.2,0.15) -- (3.4,0.45);

\draw (4, 0) node {$=$};

\draw (4.5, 0) -- (5.5, 0);
\draw [fill] (5.5,0) circle [radius=0.1] ;
\draw (5.5,0) -- (6, 0.5);
\draw (5.5,0) -- (6, -0.5);

\path (5.8,-0.3) coordinate (xx);
\draw[rotate around={-25:(xx)}] (5.7,-0.45) -- (5.9,-0.15);
\draw[rotate around={-25:(xx)}] (5.7,-0.15) -- (5.9,-0.45);

\draw (0,1); 
\end{circuitikz}
\end{center}

\subsection{Heisenberg picture}

We consider the following unitary operator as 
\begin{equation}
U_t = e^{-\imath t \mathcal{H}}, \hspace{30pt} U_t U_{t}^{\dagger} = \eye 
\end{equation}
with $\mathcal{H}^{\dagger} = \mathcal{H}$. The time evolution of a quantum state $\ket{\varphi_0}$ is 
\begin{equation}
\ket{\varphi_t} = e^{-\imath t \mathcal{H}} \ket{\varphi_0}, \hspace{30pt} \forall t \braket{\varphi_t}{\varphi_t} =1. 
\end{equation}
This is called \textit{Schr\"odinger's picture}.

A second time evolution formalism is called \textit{Heisenberg's picture}. Suppose we have a quantum system in state $\ket{\psi}$, we apply $U$ ($U U^{\dagger} = \eye$)
\begin{equation}
UN\ket{\psi} = UNU^{\dagger}U \ket{\psi}
\end{equation}

Then the evolution of operator $N$ is given by 
\begin{equation}
\label{eq:evoN}
N \to U N U^{\dagger}
\end{equation}

\begin{enumerate}
\item We want to follow the evolution of a number of $N$'s to reconstruct the evolution of $\ket{\psi}$.
\item Evolution in \eqref{eq:evoN} is linear so we will follow a complete basis of $n \times n$ matrices.
\end{enumerate} 

We call $P$ the Pauli group. As mentioned before it contains $4 \cdot 4^n$ elements. This elements are tensor products of $X, Y, Z, \eye$ and with prefactors $\pm 1, \pm \imath$. There's a multiplicative group homomorphism 
$$MN \to UMN U^{\dagger} = (UMU^{\dagger})(UNU^{\dagger}) $$
so we can follow just a generating set of the group. A good one for the Pauli group is $\{X_1, ..., X_n, Z_1, ..., Z_n \}$.

The set of operators that leave $P$ fixed under conjugation form the normalizer called the Clifford group $C$. This group $C$ is much smaller than the unitary group on $n$-qubits, ${\mathfrak U}(2^n)$, yet contains many operations of interest. Some gates in $C$ are the Hadamard gate
$$H = \frac{1}{\sqrt{2}} (X+Z)$$
the phase gate
\begin{equation}
    P = \ket{0}\bra{0} + \imath \ket{1}\bra{1},
\end{equation} 
and the {\sf CN} gate 
$$
\mathsf{CN} \ket{i j} = \ket{i, i \oplus j}
$$
\begin{center}
\begin{circuitikz}[scale=1.5]
\draw (-2,0) circle [radius=0.002] ;
\draw [fill] (2,0) circle [radius=0.1] ;
\draw (1,0) -- (3,0);
\draw (3.4, 0) node [black] {$\ket{i}$};
\draw (2,0) -- (2,-1);
\draw (2,-0.8) circle [radius=0.2] ;
\draw (1, -0.8) -- (3,-0.8);
\draw (3.4, -0.8) node [black] {$\ket{j}$};
\draw (0.6, 0) node [black] {$\ket{i}$};
\draw (0.25, -0.8) node [black] {$\ket{i\oplus j}$};
\end{circuitikz}
\end{center}


We will derive basic circuit identities and state them graphically, as is common \cite{NC} in quantum circuits.  Here we will adopt a tensor network approach and state identities that are also tensor symmetries \cite{biamonte2019lectures}. 

\begin{proposition}[{\sf CN}$^2=\eye$] This is given graphically as follows. (See also e.g.~bialgebra law in \cite{redgreen}). 

\begin{center}
\begin{circuitikz}

\draw (1.2,1) -- (3.8,1);
\draw (1.2,0) -- (3.8,0);

\draw [fill] (2,1) circle 
[radius=0.1] ;
\draw (2,1) -- (2,-0.2);
\draw (2,0) circle [radius=0.2] ;

\draw [fill] (3,1) circle 
[radius=0.1] ;
\draw (3,1) -- (3,-0.2);
\draw (3,0) circle [radius=0.2] ;

\draw (4.5, 0.5) node {$=$};

\draw (4.9,1) -- (6.1,1);
\draw (4.9,0) -- (6.1,0);
\draw (5.5,0) circle [radius=0.2];
\draw (5.3,0) -- (5.7,0);
\draw (5.5,-0.2) -- (5.5,0.2);
\draw [fill] (5.5,1) circle [radius=0.1] ;
\draw (5.6,0.15) .. controls (5.8,0.5)  .. (5.6,0.95);
\draw (5.4,0.15) .. controls (5.2,0.5)  .. (5.4,0.95);

\draw (6.8, 0.5) node {$=$};

\draw (7.5,1) -- (9,1);
\draw (7.5,0) -- (9,0);

\draw (0,-0.5); 
\end{circuitikz}
\end{center}
\end{proposition}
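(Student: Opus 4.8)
The plan is to verify $\text{\sf CN}^2=\eye$ in two ways: a one-line algebraic check, and a diagrammatic derivation in the spirit of this chapter, which is the content the graphical claim is really after. For the algebraic route I would start from the operator form already recorded above, $\text{\sf CN}=P_1\otimes X+P_0\otimes\eye$, and simply square it. Expanding the product yields the four terms $P_1P_1\otimes X^2$, $P_1P_0\otimes X$, $P_0P_1\otimes X$ and $P_0P_0\otimes\eye$. Now invoke the elementary projector facts established earlier: $P_a^2=P_a$, the orthogonality $P_0P_1=P_1P_0=0$, and $X^2=\eye$. The two cross terms vanish, and the surviving terms collapse to $(P_0+P_1)\otimes\eye=\eye\otimes\eye$, using $P_0+P_1=\eye$ from \eqref{eqn:proj1}. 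That is the entire computation; there is no genuine obstacle here, only the bookkeeping of four summands.

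For the diagrammatic proof I would instead use the decomposition $\text{\sf CN}=\sum_m\delta^{ij}_{~~m}\oplus^m_{~qr}$ from \eqref{eqn:cnot}, i.e.\ a {\sf COPY}-node on the control wire joined by a single edge to an {\sf XOR}-node on the target wire. Stacking two copies of this picture, I would first apply coassociativity of {\sf COPY} (respectively associativity of {\sf XOR}, relation (a) of Lafont's model in Figure~\ref{fig:F2-presentation}) to merge the two control-side nodes, and likewise the two target-side nodes, into the single intermediate diagram drawn in the middle of the claim: one {\sf COPY}-node connected to one {\sf XOR}-node by \emph{two} parallel edges. The second and final step is the Hopf/bialgebra relation (relations (d) and (g) of Figure~\ref{fig:F2-presentation}; equivalently the Hopf law of the ZX calculus \cite{redgreen}): a {\sf COPY}-node wired to an {\sf XOR}-node by two edges equals the disconnected diagram — counit followed by unit on each side — so the loop pops and one is left with two bare wires, which is exactly the identity on two qubits.

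The only point requiring care in the diagrammatic argument is the tracking of scalar prefactors: the Frobenius and bialgebra rewrites are ordinarily stated only up to an overall nonzero scalar, so strictly one recovers $\text{\sf CN}^2\sim\eye$ in the scalar gauge of Definition~\ref{def:gauge}, and one must either fix conventions as in \cite{redgreen} or check that the accumulated factor is $1$ — which the algebraic computation above confirms outright. Apart from this scalar bookkeeping both proofs are short, and I would present the algebraic identity as the rigorous statement and the diagrammatic rewrite as the conceptual one, emphasizing that the latter is just the special case of the bialgebra/Hopf law referenced in the proposition.
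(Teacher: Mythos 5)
Your proof is correct and follows essentially the same route the paper intends: the middle picture in the proposition is exactly the merged {\sf COPY}--{\sf XOR} double-edge diagram obtained by (co)associativity, and the final step is the Hopf/bialgebra pop cited from \cite{redgreen}, with your remark on scalar bookkeeping in the gauge of Definition~\ref{def:gauge} being the right caveat. The additional algebraic check via $\text{\sf CN}=P_1\otimes X+P_0\otimes\eye$ is a correct and welcome supplement that fixes the scalar outright, but it is not a departure from the paper's approach.
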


\begin{proposition}[$Z^2=\eye$] This identity is given graphically as follows. 

\begin{center}
    \begin{circuitikz}

\draw (0,0) -- (1.4,0);

\draw (0.4,0.2) -- (0.6,-0.2);
\draw (0.4,-0.2) -- (0.6,0.2);
\draw (0.8,0.2) -- (1,-0.2);
\draw (0.8,-0.2) -- (1,0.2);

\draw (2, 0) node {$=$};

\draw (2.5,0) -- (3.5,0);
\end{circuitikz}  
\end{center}
\end{proposition}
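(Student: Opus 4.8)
The plan is to reduce the diagrammatic identity to the one-line operator statement $Z^2 = \eye$ and then verify the latter directly. First I would unfold the definition $Z = \ketbra{0}{0} - \ketbra{1}{1}$ fixed alongside equations~\eqref{eqn:P1}--\eqref{eqn:P0} and multiply the two copies, using orthonormality $\braket{a}{b} = \delta_{ab}$ of the computational basis so that the cross terms $\ketbra{0}{0}\,\ketbra{1}{1}$ and $\ketbra{1}{1}\,\ketbra{0}{0}$ vanish:
\[
Z^2 = \ketbra{0}{0}\ketbra{0}{0} + \ketbra{1}{1}\ketbra{1}{1} = \ketbra{0}{0} + \ketbra{1}{1} = \eye .
\]

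An alternative route, closer to the spectral language used in the surrounding text, is to invoke the eigenrelation $Z\ket{j} = (-1)^j\ket{j}$ for $j\in\{0,1\}$ recorded in the Example following~\eqref{eqn:proj2}: applying $Z$ twice gives $Z^2\ket{j} = (-1)^{2j}\ket{j} = \ket{j}$, and a linear operator that agrees with $\eye$ on the basis $\{\ket{0},\ket{1}\}$ of $\mathbb{C}_2$ equals $\eye$. One could equally note that $Z$ is diagonal in that basis with entries $\pm1$, whose squares are all $1$.

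Finally I would translate this back into the picture: by the convention just fixed above, a cross on a wire denotes $Z$ and vertical juxtaposition along a wire is operator composition (the contraction $\sum_k Z^{i}_{~k}\, Z^{k}_{~j}$), so two crosses stacked on one wire evaluate to $(Z^2)^i_{~j} = \delta^i_{~j}$, which is exactly the bare wire (the identity tensor). There is no real obstacle here; the only point requiring care is that the diagrammatic reading --- cross $= Z$, stacking $=$ product, plain wire $=$ identity --- is the one established earlier, so that the drawn equation is literally $Z^2 = \eye$.
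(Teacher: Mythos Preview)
Your proof is correct. The paper does not supply a proof for this proposition at all; it is stated as a bare graphical identity and immediately followed by the next proposition, so there is nothing to compare against. Your direct computation from $Z = \ketbra{0}{0} - \ketbra{1}{1}$, the spectral alternative via $Z\ket{j} = (-1)^j\ket{j}$, and the translation back to the diagrammatic conventions are all sound and, if anything, more explicit than the paper requires for what it treats as self-evident.
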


\subsection{Stabilizer tensor theory}

Here we will build towards establishing Theorem \ref{theorem:gk}. 
$U$ stabilizes a quantum state $\ket{\varphi}$ iff $U \ket{\varphi} = (+1) \ket{\varphi}$. Stabilizers of $\ket{\varphi}$ form a group. Stabilizers states in terms of $n$ commuting and different operators $S$ from the Pauli algebra. The operators that generate this group are 
$$\left\{\eye, X, Y, Z, \pm \imath, \otimes \right\}  $$

\begin{example}[Single qubit stabilizers]
\
\begin{center}
  \begin{tabular}{| c | c | }
    \hline
    $X$ & $\ket{+} = \frac{\ket{0} + \ket{1}}{\sqrt{2}}$  \\ \hline
    $Y$ & $\ket{y^+} = \frac{\ket{0} + \imath \ket{1}}{\sqrt{2}}$  \\ \hline
     $Z$ & $\ket{0}$  \\ \hline
     $-X$ & $\ket{-} = \frac{\ket{0} - \ket{1}}{\sqrt{2}}$  \\ \hline
     $-Y$ & $\ket{y^{-}} = \frac{\ket{0} - \imath \ket{1}}{\sqrt{2}}$  \\ \hline
     $-Z$ &  $ \ket{1}$  \\ \hline
  \end{tabular}
\end{center}
\end{example}

\begin{example} 
The Bell state is a stabilizer state.
\begin{equation}
\sqrt{2} \ket{\Phi^{+}} = \sum_{a,b} (a \oplus \lnot b  ) \ket{a,b} = \ket{00} + \ket{11}
\end{equation}
The group that corresponds to this state is 
\begin{equation}
S = \left\{\eye \otimes \eye, X\otimes X,-Y\otimes Y,Z \otimes Z\right\}
\end{equation}
This group is of order $2^n$ and is indeed abelian.
\end{example} 





Before presenting the main theorem of this segment (Theorem \ref{theorem:gk}), we will state some helpful graphical identities. 

\begin{remark}
The {\sf COPY}-tensor has stabilizer generators $X_iX_jX_k$, $Z_i Z_j$ for $i$,  $j$, $k =1$, $2$, $3$ a qubit index. For example,  
\begin{equation}
    Z_iZ_j(\ket{000}+\ket{111})=\ket{000}+\ket{111}. 
\end{equation}
The following will present these identities graphically (Proposition \ref{prop:stabscopy}). 
\end{remark}

\begin{proposition}[Stabilizers of {\sf COPY}]\label{prop:stabscopy}
The following the the Pauli stabilizers of {\sf COPY}. 

\begin{center}

\begin{circuitikz}
\draw (-0.5,0) -- (0.5, 0);
\draw [fill] (0.5,0) circle [radius=0.1] ;
\draw (0.5,0) -- (1, 0.5);
\draw (0.5,0) -- (1, -0.5);

\draw (1.5, 0) node {$=$};

\draw (2.4,0.2) -- (2.6,-0.2);
\draw (2.4,-0.2) -- (2.6,0.2);

\draw (2,0) -- (3, 0);
\draw [fill] (3,0) circle [radius=0.1] ;
\draw (3,0) -- (3.5, 0.5);
\draw (3,0) -- (3.5, -0.5);

\path (3.3,0.3) coordinate (x);
\draw[rotate around={25:(x)}] (3.2,0.45) -- (3.4,0.15);
\draw[rotate around={25:(x)}] (3.2,0.15) -- (3.4,0.45);

\draw (4, 0) node {$=$};

\draw (4.9,0.2) -- (5.1,-0.2);
\draw (4.9,-0.2) -- (5.1,0.2);

\draw (4.5, 0) -- (5.5, 0);
\draw [fill] (5.5,0) circle [radius=0.1] ;
\draw (5.5,0) -- (6, 0.5);
\draw (5.5,0) -- (6, -0.5);

\path (5.8,-0.3) coordinate (xx);
\draw[rotate around={-25:(xx)}] (5.7,-0.45) -- (5.9,-0.15);
\draw[rotate around={-25:(xx)}] (5.7,-0.15) -- (5.9,-0.45);

\draw (6.5, 0) node {$=$};

 \draw (7, 0) -- (8, 0);
 \draw [fill] (8,0) circle [radius=0.1] ;
 \draw (8,0) -- (8.5, 0.5);
 \draw (8,0) -- (8.5, -0.5);

\path (8.3,0.3) coordinate (y);
\draw[rotate around={25:(y)}] (8.2,0.45) -- (8.4,0.15);
\draw[rotate around={25:(y)}] (8.2,0.15) -- (8.4,0.45);

 \path (8.3,-0.3) coordinate (yy);
 \draw[rotate around={-25:(yy)}] (8.2,-0.45) -- (8.4,-0.15);
 \draw[rotate around={-25:(yy)}] (8.2,-0.15) -- (8.4,-0.45);

\draw (0,1); 

\end{circuitikz}
    
\end{center}

\begin{center}

\begin{circuitikz}
\draw (0,0) circle [radius=0.2] ;
\draw (0, -0.2) -- (0, 0.2);
\draw (-0.5,0) -- (0.5, 0);
\draw [fill] (0.5,0) circle [radius=0.1] ;

\path (0.5, 0) coordinate (x);
\draw [rotate around={45:(x)}] (0.5, 0) -- (0.8, 0);
\draw [rotate around={45:(x)}] (1.2, 0) -- (1.4, 0);
\draw [rotate around={45:(x)}] (0.8, 0.2) rectangle (1.2, -0.2);
\draw (0.87, 0.35)  node [black] {$y$};
\draw [rotate around={-45:(x)}] (0.5, 0) -- (0.8, 0);
\draw [rotate around={-45:(x)}] (1.2, 0) -- (1.4, 0);
\draw [rotate around={-45:(x)}] (0.8, 0.2) rectangle (1.2, -0.2);
\draw (0.87, -0.35)  node [black] {$y$};

\draw (1.5, 0) node {$=$};

\draw (2,0) -- (2.3, 0);
\draw (2.7,0) -- (3, 0);
\draw (2.3, 0.2) rectangle (2.7, -0.2);
\draw (2.5, 0)  node [black] {$y$};
\draw [fill] (3,0) circle [radius=0.1] ;

\path (3, 0) coordinate (x2);
\draw [rotate around={45:(x2)}] (3, 0) -- (3.3, 0);
\draw [rotate around={45:(x2)}] (3.7, 0) -- (3.9, 0);
\draw [rotate around={45:(x2)}] (3.3, 0.2) rectangle (3.7, -0.2);
\draw (3.36, 0.35)  node [black] {$y$};

\draw [rotate around={-45:(x2)}] (3,0) -- (3.9, 0) (3.5,0) circle [radius=0.2] (3.5, -0.2) -- (3.5, 0.2);

\draw (4.5, 0) node {$=$};

\draw (5,0) -- (5.3, 0);
\draw (5.7,0) -- (6, 0);
\draw (5.3, 0.2) rectangle (5.7, -0.2);
\draw (5.5, 0)  node [black] {$y$};
\draw [fill] (6,0) circle [radius=0.1] ;

\path (6, 0) coordinate (x3);
\draw [rotate around={-45:(x3)}] (6, 0) -- (6.3, 0);
\draw [rotate around={-45:(x3)}] (6.7, 0) -- (6.9, 0);
\draw [rotate around={-45:(x3)}] (6.3, 0.2) rectangle (6.7, -0.2);
\draw (6.35, -0.35)  node [black] {$y$};

\draw [rotate around={45:(x3)}] (6,0) -- (6.9, 0) (6.5,0) circle [radius=0.2] (6.5, -0.2) -- (6.5, 0.2);

\end{circuitikz}
\end{center}

\begin{center}
    
\begin{circuitikz}
\draw (0,0) circle [radius=0.2] ;
\draw (0, -0.2) -- (0, 0.2);
\draw (-0.5,0) -- (0.5, 0);
\draw [fill] (0.5,0) circle [radius=0.1] ;
\draw (0.5,0) -- (1.1, 0.6);
\draw (0.5,0) -- (1.1, -0.6);

\path (0.8,-0.3) coordinate (x1);
\draw [rotate around={25:(x1)}](0.8,-0.3) circle [radius=0.2] ;
\draw [rotate around={90:(x1)}] (0.65,-0.15) -- (0.95,-0.45);

\path (0.8,0.3) coordinate (x2);
\draw [rotate around={25:(x2)}](0.8,0.3) circle [radius=0.2] ;
\draw [rotate around={90:(x2)}] (0.65,0.15) -- (0.95,0.45);

\draw (1.5, 0) node {$=$};
 \draw (2,0) -- (3, 0);
 \draw [fill] (3,0) circle [radius=0.1] ;
 \draw (3,0) -- (3.5, 0.5);
\draw (3,0) -- (3.5, -0.5);
\end{circuitikz}
\end{center}
\end{proposition}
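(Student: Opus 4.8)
\emph{Proof proposal.} The plan is to unfold each pictured equation into an ordinary operator identity and verify it by evaluating both sides on the two computational basis vectors. Recall from the tensor-component formula above that $\delta^{i}_{\,jk}$ equals $1$ precisely when $i=j=k$, so read as a map from one wire to two the {\sf COPY} tensor sends $\ket{0}\mapsto\ket{00}$ and $\ket{1}\mapsto\ket{11}$; closing its remaining leg yields the {\sf GHZ} vector $\ket{000}+\ket{111}$. ``Pushing a Pauli string through {\sf COPY}'' is, after fixing a time direction for the picture, the same as a stabilizer relation $P(\ket{000}+\ket{111})=\ket{000}+\ket{111}$, and I will verify the statements in this symmetric three-leg form.

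Everything reduces to two elementary ``commutation through {\sf COPY}'' rules. First, a $Z$ on any one leg may be slid to any other single leg, and two $Z$'s on two different legs cancel: $Z$ acts as $+1$ on the $\ket{0\cdots 0}$ branch and $-1$ on the $\ket{1\cdots 1}$ branch, and {\sf COPY} synchronizes the legs, so the overall sign depends only on the parity of the number of legs carrying $Z$. Concretely $(Z\otimes Z)\,\delta=\delta$ and $(Z\otimes\eye)\,\delta\, Z=(\eye\otimes Z)\,\delta\, Z=\delta$, which are exactly the three equalities of the first row. Second, an $X$ on any one leg moves to an $X$ on each of the other legs, because $X$ flips $0\leftrightarrow 1$ while {\sf COPY} keeps all legs equal: $(X\otimes X)\,\delta=\delta\, X$, and hence $(X\otimes X)\,\delta\, X=\delta$, which is the last block, and closing the loop gives the $X_aX_bX_c$ stabilizer $(X\otimes X\otimes X)(\ket{000}+\ket{111})=\ket{000}+\ket{111}$. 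Each of these is a one-line check on $\ket{0}$ and $\ket{1}$.

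The middle block, with its labelled boxes, records the corresponding $Y$-relation, obtained by composing the previous two rules through $Y=\imath XZ$: a $Y$ on one leg pushes to a $Y$ on one remaining leg and an $X$ on the other, and the leftover factor $\imath^{2}=-1$ is precisely the sign that already appears in the single-qubit stabilizer table as the ``$-Y\otimes Y$'' entry for the Bell state. Composing the $Z$-rule with itself also produces $Z_aZ_c$, so the generating set $\{X_aX_bX_c,\ Z_aZ_b,\ Z_bZ_c\}$ named in the preceding remark stabilizes {\sf COPY}, together with all of its products.

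I do not anticipate any genuine mathematical difficulty: every claim is a finite computation over a two-element basis. The only thing requiring care is bookkeeping --- reading each diagram with a consistent orientation of the wires so that ``push through'' means the intended composition, and tracking overall scalars, which is harmless here since we work in the scalar/unit gauge of Definition~\ref{def:gauge} and the only nontrivial signs are the expected ones on the $Y$-relations. A convenient independent check is to redo the argument in the Heisenberg picture of the previous section, computing $g\,P\,g^{\dagger}$ for the relevant Clifford data, which reproduces the same relations.
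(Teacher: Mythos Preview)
Your proposal is correct and matches the paper's implicit approach: the paper does not give a separate proof of this proposition but simply states the graphical identities, having prepared the reader with the preceding Remark that the {\sf COPY} tensor is the GHZ vector $\ket{000}+\ket{111}$ with stabilizer generators $X_iX_jX_k$ and $Z_iZ_j$, illustrated by the single check $Z_iZ_j(\ket{000}+\ket{111})=\ket{000}+\ket{111}$. Your plan---evaluate each diagram as an operator identity on $\ket{0},\ket{1}$, reduce everything to the two ``push-through'' rules for $Z$ and $X$, and derive the $Y$-row from $Y=\imath XZ$ while invoking the scalar gauge for the resulting sign---is exactly this verification carried out in full, so there is nothing to add.
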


\begin{proposition}[Gate-{\sf COPY}]
The following identity (gate-{\sf COPY}) is derived. 

\begin{center}
\begin{circuitikz}
\draw (5,0) circle [radius=0.2] ;
\draw (5, -0.2) -- (5, 0.2);
\draw (5.6,0) circle [radius=0.2] ;
\draw (5.6, -0.2) -- (5.6, 0.2);
\draw (4.5,0) -- (6, 0);
\draw [fill] (6,0) circle [radius=0.1] ;
\draw (6,0) -- (6.7, 0.6);
\draw (6,0) -- (6.7, -0.6);

\path (6.35,-0.3) coordinate (x3);
\draw [rotate around={25:(x3)}](6.35,-0.3) circle [radius=0.2] ;
\draw [rotate around={90:(x3)}] (6.2,-0.15) -- (6.5,-0.45);

\path (6.35,0.3) coordinate (x4);
\draw [rotate around={25:(x4)}](6.35,0.3) circle [radius=0.2] ;
\draw [rotate around={90:(x4)}] (6.2,0.15) -- (6.5,0.45);

 \draw (7.3, 0) node {$=$};

\draw (8,0) -- (9, 0);
\draw [fill] (9,0) circle [radius=0.1] ;
\draw (9,0) -- (9.7, 0.6);
\draw (9,0) -- (9.7, -0.6);

\path (9.35,-0.3) coordinate (x5);
\draw [rotate around={25:(x5)}](9.35,-0.3) circle [radius=0.2] ;
\draw [rotate around={90:(x5)}] (9.2,-0.15) -- (9.5,-0.45);

\path (9.35,0.3) coordinate (x6);
\draw [rotate around={25:(x6)}](9.35,0.3) circle [radius=0.2] ;
\draw [rotate around={90:(x6)}] (9.2,0.15) -- (9.5,0.45);

\draw (0,1); 
\end{circuitikz}
\end{center}
\end{proposition}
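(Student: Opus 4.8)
The plan is to reduce the left-hand circuit to the right-hand one by simplifying the two {\sf NOT} gates that sit in series on the input leg of the {\sf COPY} tensor; since both sides carry {\sf NOT} gates in identical positions on the two output legs, it is enough to show that two {\sf NOT} gates on a single wire act as the bare identity wire.

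First I would invoke the Pauli relation $X^{2}=\eye$ recorded earlier (from $X^{2}=Y^{2}=Z^{2}=\eye=-\imath XYZ$): composing two {\sf NOT} gates on one wire collapses them to the identity. Diagrammatically this is just sliding one {\sf NOT} into the other and erasing the resulting loop; algebraically, writing ${\sf COPY}=\sum_{b}\ketbra{bb}{b}$, the left-hand operator is $(X\otimes X)\cdot{\sf COPY}\cdot X\cdot X=(X\otimes X)\cdot{\sf COPY}$, which is exactly the right-hand operator, and the identity follows at once.

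A route that stays inside the diagrammatic calculus, which I would present as the formal proof, uses only the preceding Proposition on stabilizers of {\sf COPY}: the relation ``{\sf NOT} on the input leg equals a {\sf NOT} on each output leg'' (the $X_{i}X_{j}X_{k}$ stabilizer of {\sf COPY}) lets me push each of the two input {\sf NOT} gates through the {\sf COPY} node; together with the {\sf NOT} gates already drawn on the outputs this leaves three {\sf NOT} gates stacked on each output leg, which reduce to one by $X^{2}=\eye$, recovering precisely the right-hand circuit. I do not anticipate a genuine obstacle here — the whole content is involutivity of $X$ plus the already-established stabilizer relation — so the only point needing care is the bookkeeping of which leg each pushed-through gate lands on and checking that the surviving output {\sf NOT} gates occupy the same legs on both sides.
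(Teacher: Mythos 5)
Your proof is correct and is essentially the route the paper intends: the identity is an immediate consequence of the involution $X^2=\eye$ on the input leg (equivalently, of pushing the input {\sf NOT} gates through the copy dot via the $X_iX_jX_k$ stabilizer from the preceding proposition and cancelling pairs of $X$ gates on each output leg). The paper leaves this derivation implicit, so nothing in your argument deviates from or falls short of it.
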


We will now state the main theorem of this segment.

\begin{remark}
The Gottesman–Knill theorem states that stabilizer circuits---circuits that only consist of gates from the normalizer of the qubit Pauli group, a.k.a.~Clifford group---can be simulated in polynomial time on a probabilistic classical computer. 
\end{remark}

Here we will construct a sequence of graphical rewrites to establish this theorem by algebraic properties of tensor contraction.  

\begin{theorem}[Graphical Proof Gottesman--Knill Theorem] \label{theorem:gk}
For $n$-qubits acted on by $L$ Clifford gates, there exists a confluent sequence of rewrites, that establishes the Gottesman--Knill theorem in ${\mathcal O}(\text{poly}(n, L))$ steps. 
\end{theorem}

\begin{remark}
More over the networks remain efficiently contractible for circuits with $\ln$(n) non-Clifford gates.
\end{remark}

Theorem \ref{theorem:gk} is proven by using the graphical tensor calculus to derive the gate identities in paper \cite{stabs}. We will establish this piece-wise, by working example.  

\begin{remark}
To prove the theorem (\ref{theorem:gk}), we need to understand how
\begin{equation}\label{eqn:systoev}
   \{Z_1, \ldots Z_n, X_1, \ldots X_n\} 
\end{equation}
evolve under conjugation by $H$, $P$, $\mathsf{CN}$. 
To simulate the evolution of, 
\eqref{eqn:systoev} there are $2n$ tensor contractions.  Each can be done graphically.  We can establish the following (The negative one above an arrow $\myarrow$ means that the mapping has an inverse).  
\begin{equation}\label{eqn:xtozbasis}
H\colon Z \myarrow X,~~~~H\colon  X\myarrow Z
\end{equation}
\begin{equation}
P\colon  X \myarrow Y, Z\myarrow Z
\end{equation}
\begin{equation}
\begin{split}
\mathsf{CN}\colon  \hspace{2pt} X\otimes \eye & \myarrow X\otimes X \\
\eye \otimes X & \myarrow \eye \otimes X \\
Z \otimes \eye & \myarrow Z \otimes\eye \\
\eye\otimes Z & \myarrow Z\otimes Z
\end{split}
\end{equation}
\begin{equation}
\begin{split}
XYX=-X \\
YZY^\dagger=-Z
\end{split}
\end{equation}
\end{remark}

\begin{proposition}[Computational and $\pm$-Basis Change] \label{prop:capmbc}

This relationship \eqref{eqn:xtozbasis} is given graphically as follows. 

\begin{center}
\begin{circuitikz}
\draw(0, 1) node{};

	\draw (0,0) -- (0.5,0);

	\draw (0.5,0.3) rectangle (1.1,-0.3);
	\draw [scale=1.1](0.75, 0)  node [black] {$\sf H$};
	\draw (1.1,0) -- (2.3,0);

	\draw (1.6, 0.2) -- (1.8, -0.2);
	\draw (1.6, -0.2) -- (1.8, 0.2);

	\draw (2.3,0.3) rectangle (2.9,-0.3);
	\draw [scale=1.1](2.4, 0)  node [black] {$\sf H$};
	\draw (2.9,0) -- (3.5,0);

\draw (4, 0) node {$=$};

	\draw (4.5,0) -- (5.5,0);
    \draw (5,0.2) -- (5,-0.2);
	\draw (5,0) circle [radius=0.2] ;
\end{circuitikz}
\end{center}

\end{proposition}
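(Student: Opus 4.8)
The plan is to reduce the graphical identity to the single operator equation $HZH = X$, together with its mirror $HXH = Z$, and then observe that these two are equivalent because $H = H^\dagger = H^{-1}$, which is exactly what the $\myarrow$ decoration records. First I would recall from Example~\ref{ex:lcrg} and the definition of the Hadamard gate that $H = \frac{1}{\sqrt 2}(X+Z)$, and that $H^2 = \eye$: since $X^2 = Z^2 = \eye$ and $XZ + ZX = 0$, we get $(X+Z)^2 = 2\eye$, so $H^2 = \eye$. Hence conjugation by $H$ is an involution on operators, so the rewrite is its own two-sided inverse; this is precisely the content of reading $\myarrow$ in both directions.

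Next I would verify $HZH = X$ by acting on the computational basis. Using $H\ket{0} = \ket{+}$, $H\ket{1} = \ket{-}$ and self-inverseness ($H\ket{+} = \ket{0}$, $H\ket{-} = \ket{1}$), together with $Z\ket{\pm} = \ket{\mp}$, one computes $HZH\ket{0} = HZ\ket{+} = H\ket{-} = \ket{1}$ and $HZH\ket{1} = HZ\ket{-} = H\ket{+} = \ket{0}$, i.e. $HZH = \ketbra{1}{0} + \ketbra{0}{1} = X$. Conjugating this relation once more by $H$ and using $H^2 = \eye$ gives $Z = HXH$, which is the reverse arrow. (A direct $2\times2$ matrix multiplication gives the same conclusion in one line and could be offered as an alternative.)

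Finally I would translate this into the diagram. In the tensor-network conventions fixed earlier in this chapter the cross on a wire denotes the $Z$ matrix and the circle-with-bar denotes $X$ (the bit-flip {\sf NOT} box); the left-hand side of the claimed picture is literally the wire contraction $H \cdot Z \cdot H$ read along the strand, so the operator identity $HZH = X$ is exactly the equality of the two sides as $2\times 2$ tensors, with no further diagrammatic moves needed beyond naming the boxes.

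The only real obstacle is scalar bookkeeping: depending on whether $H$ is taken normalised as $\frac{1}{\sqrt 2}(X+Z)$ or left unnormalised, and likewise for the $\delta$/$\oplus$ tensor conventions used to draw the boxes, the two sides can differ by a global factor. This is irrelevant in the scalar gauge of Definition~\ref{def:gauge}, but I would state the convention explicitly so that the picture is presented either as an honest equality or as an equality up to a unit scalar, consistent with the rest of the chapter.
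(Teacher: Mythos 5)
Your argument is correct and matches the paper's own route: the proposition is just the graphical rendering of the standard single-qubit Clifford relations $HZH=X$ and $HXH=Z$ (stated earlier in the remark on the single-qubit Clifford group), which the paper asserts without further derivation and which you verify directly via $H\ket{0}=\ket{+}$, $H\ket{1}=\ket{-}$, $Z\ket{\pm}=\ket{\mp}$ and $H^2=\eye$. Your closing remark on scalar conventions is harmless but unnecessary here, since with the normalised $H=\tfrac{1}{\sqrt2}(X+Z)$ the identity $HZH=X$ is exact and no gauge freedom is invoked.
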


\begin{proposition}
From $H^2=\eye$ and Proposition \ref{prop:capmbc} we recover the R.H.S.~of~\eqref{eqn:xtozbasis}. 

\begin{circuitikz}

\draw(0, 1) node{};

	\draw (-1.2,0.1) node{};
	\draw (0,0) -- (0.5,0);

	\draw (0.5,0.3) rectangle (1.1,-0.3);
	\draw [scale=1.1](0.75, 0)  node [black] {$\sf H$};
	\draw (1.1,0) -- (2.3,0);

    \draw (1.7,0.2) -- (1.7,-0.2);
	\draw (1.7,0) circle [radius=0.2] ;

	\draw (2.3,0.3) rectangle (2.9,-0.3);
	\draw [scale=1.1](2.4, 0)  node [black] {$\sf H$};
	\draw (2.9,0) -- (3.5,0);

\draw (4, 0) node {$=$};

	\draw (4.5,0) -- (5,0);

	\draw (5,0.3) rectangle (5.6,-0.3);
	\draw [scale=1.1](4.85, 0)  node [black] {$\sf H$};
    \draw (5.6,0) -- (6.1,0);
    
	\draw (6.1,0.3) rectangle (6.7,-0.3);
	\draw [scale=1.1](5.85, 0)  node [black] {$\sf H$};
    \draw (6.7,0) -- (7.4,0);

	\draw (7.3, 0.2) -- (7.5, -0.2);
	\draw (7.3, -0.2) -- (7.5, 0.2);

    \draw (7.4,0) -- (8.1,0);
	\draw (8.1,0.3) rectangle (8.7,-0.3);
	\draw [scale=1.1](7.65, 0)  node [black] {$\sf H$};
   
    \draw (8.7,0) -- (9.4,0);
	\draw (9.4,0.3) rectangle (10,-0.3);
	\draw [scale=1.1](8.8, 0)  node [black] {$\sf H$};
    \draw (10,0) -- (10.6,0);
    
\draw (11.2, 0) node {$=$};    
    
    \draw (11.8,0) -- (12.8,0);

	\draw (12.2, 0.2) -- (12.4, -0.2);
	\draw (12.2, -0.2) -- (12.4, 0.2);

	\draw (12.3, 0.5) node{$Z$};
\end{circuitikz}
\end{proposition}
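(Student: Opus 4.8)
The plan is to show that $HXH = Z$ --- equivalently, that conjugating the Pauli $X$ gate by the Hadamard produces the Pauli $Z$ gate, which is exactly the right-hand side $H\colon X \myarrow Z$ of \eqref{eqn:xtozbasis} --- by reducing it to the companion relation already proved in Proposition \ref{prop:capmbc}, namely the graphical identity reading $H\!-\!Z\!-\!H = X$ (with $Z$ drawn as a cross on the wire and $X$ as the circle-with-bar). Nothing new needs to be computed; the whole argument is a two-move graphical rewrite.

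First I would substitute. Proposition \ref{prop:capmbc} is an equality, so it may be read from right to left: a lone $X$ gate on a wire equals the three-box word $H\!-\!Z\!-\!H$. Inserting this into the middle of $H\!-\!X\!-\!H$ turns the expression into the five-box word $H\!-\!H\!-\!Z\!-\!H\!-\!H$. Second I would cancel: the leftmost adjacent pair of Hadamard boxes composes to the identity by $H^2 = \eye$ (the Hadamard is its own inverse, as noted when it was introduced, and $H = H^{\dagger}$), so that pair is erased; the rightmost pair of Hadamard boxes is erased the same way. What survives on the wire is a single $Z$ gate, i.e.\ a single cross, which is exactly the claimed right-hand side of \eqref{eqn:xtozbasis}. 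I would display this as the three-panel figure $\;H\!-\!X\!-\!H \;=\; H\!-\!H\!-\!Z\!-\!H\!-\!H \;=\; Z\;$.

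There is no genuine obstacle here; the only thing to watch is scalar bookkeeping. The stabilizer-tensor identities of this chapter hold up to unit-modulus scalars (we work in the scalar gauge of Definition \ref{def:gauge}), but $H^2 = \eye$ holds on the nose, so the rewrite introduces no phase and the conclusion $HXH = Z$ is exact rather than merely projective. The same pattern --- substitute one known conjugation relation, then collapse with an involutive identity --- reproduces the remaining entries $P\colon X \myarrow Y$, $P\colon Z \myarrow Z$, and the four $\mathsf{CN}$ conjugations listed alongside, but none of those are needed for the present statement.
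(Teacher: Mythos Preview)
Your proposal is correct and matches the paper's approach exactly: the proposition's displayed three-panel diagram is precisely the rewrite you describe---substitute the middle $X$ by $HZH$ using Proposition~\ref{prop:capmbc} read right-to-left, then collapse the two flanking $HH$ pairs via $H^2=\eye$ to leave a lone $Z$. Your remark that no phase is introduced (since $H^2=\eye$ holds exactly) is a nice clarification the paper leaves implicit.
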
 

\begin{remark}
The other identities follow accordingly and are encapsulated as part(s) of the following examples. 
\end{remark}

\begin{proposition}[Stabilizers Transform Covariently---see e.g.~\cite{biamonte2019lectures}]\label{prop:covarient}
Let $N$ be a stabilizer of state $\ket{o}$. Then $M N M^\dagger$ is a stabilizer of the state $(M\ket{o})$. 
\end{proposition}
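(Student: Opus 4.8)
The statement to prove is Proposition~\ref{prop:covarient}: if $N$ stabilizes $\ket{o}$, meaning $N\ket{o} = \ket{o}$, and $M$ is any invertible (in our setting, unitary) operator, then $MNM^\dagger$ stabilizes $M\ket{o}$. The plan is to give the short, direct algebraic argument and then note how it renders graphically in the tensor-network calculus developed in this section, since the surrounding material is phrased diagrammatically.

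First I would write down the one-line computation. Set $\ket{o'} \bydef M\ket{o}$. Then
\begin{equation}
(M N M^\dagger)\ket{o'} = M N M^\dagger M \ket{o} = M N \ket{o} = M\ket{o} = \ket{o'},
\end{equation}
where the middle equality uses $M^\dagger M = \eye$ (unitarity of $M$) and the next uses the hypothesis $N\ket{o} = \ket{o}$. This already establishes that $\ket{o'}$ is a $(+1)$-eigenvector of $MNM^\dagger$, which is exactly the definition of ``$MNM^\dagger$ stabilizes $M\ket{o}$'' recalled earlier in the subsection (``$U$ stabilizes $\ket{\varphi}$ iff $U\ket{\varphi} = (+1)\ket{\varphi}$'').

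Second, I would add the remark that the set of such conjugated operators forms a group isomorphic to the original stabilizer group: the map $N \mapsto MNM^\dagger$ is a multiplicative homomorphism, since $(MN_1M^\dagger)(MN_2M^\dagger) = M N_1 N_2 M^\dagger$, so a generating set of stabilizers of $\ket{o}$ maps to a generating set of stabilizers of $M\ket{o}$. This is the point that makes the proposition useful downstream (e.g.\ in tracking how $\{Z_1,\dots,Z_n,X_1,\dots,X_n\}$ evolve under $H$, $P$, $\mathsf{CN}$ as in Theorem~\ref{theorem:gk}), so it is worth recording even though it is immediate.

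There is essentially no obstacle here — the only thing to be careful about is the hypothesis on $M$: the clean identity needs $M^\dagger M = \eye$, so I would state the proposition for $M$ unitary (or, if one wants the general $\mathscr{L}$ setting, for $M$ invertible with $M^\dagger$ replaced by $M^{-1}$). If desired, the diagrammatic version just slides the box $M$ along the wire through the stabilizer loop and cancels $M^\dagger M$ on the adjacent wire, which is the graphical content of the algebra above; I would include a one-line pointer to that picture rather than redraw it.
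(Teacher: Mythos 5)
Your proof is correct and matches the paper's own argument: the paper establishes the proposition through exactly this one-line conjugation identity (it appears in the worked Bell-state example as $u\ket{00}=uN\ket{00}=uNu^\dagger u\ket{00} = uNu^\dagger\ket{\Phi^+}$), relying on unitarity of $M$ and the hypothesis $N\ket{o}=\ket{o}$ just as you do. Your added remark on the homomorphism $N\mapsto MNM^\dagger$ is consistent with how the paper applies the result to stabilizer generators.
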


The following Lemma is used as Lemma \ref{lemma:invariance} in \S~\ref{sec:variational}.  It is stated below as Lemma~\ref{lemma:normalizer} as it is relevant to Proposition \ref{prop:covarient}.   

\begin{lemma}[Clifford Normalizer] \label{lemma:normalizer} 
Let $\mathcal{C}$ be the set of all Clifford circuits on $n$ qubits, and let $\mathcal{P}$ be the set of all elements of the Pauli group on $n$ qubits.  Let $C\in\mathcal{C}$ and $P\in\mathcal{P}$ then it can be shown that 
$$
CPC^\dagger \in \mathcal{P}
$$ 
or in other words 
$$
C\left(\sigma^a_\alpha \sigma^b_\beta \cdots \sigma^c_\gamma \right)C^\dagger = \sigma^{a'}_{\alpha'} \sigma^{b'}_{\beta'} \cdots \sigma^{c'}_{\gamma'}
$$ 
and so Clifford circuits act by conjugation on tensor products of Pauli operators to produce tensor products of Pauli operators.   
\end{lemma}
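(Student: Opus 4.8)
The plan is to reduce the statement to a finite check on generators, exploiting two facts already recorded in this chapter: first, that the Clifford group $\mathcal{C}$ is generated by $H$, $P$ and $\mathsf{CN}$ (see the definition of Clifford gates above), and second, that the Pauli group $\mathcal{P}$ on $n$ qubits is generated by $\{X_1,\dots,X_n,Z_1,\dots,Z_n\}$ together with the scalar phases $\{\pm 1,\pm\imath\}$. Conjugation $N\mapsto CNC^\dagger$ by a fixed unitary $C$ is a group homomorphism, since $C(MN)C^\dagger=(CMC^\dagger)(CNC^\dagger)$ and $C\eye C^\dagger=\eye$, and the scalar phases are manifestly fixed. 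Therefore it suffices to prove that $CX_iC^\dagger\in\mathcal{P}$ and $CZ_iC^\dagger\in\mathcal{P}$ for every $i$ and every generator $C$ of $\mathcal{C}$; the general case then follows formally.

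First I would dispatch the single-generator cases. For $C=H$ on qubit $i$, the rewrite \eqref{eqn:xtozbasis} gives $HZ_iH=X_i$ and $HX_iH=Z_i$, permuting the generators on qubit $i$, while generators on the other qubits commute past $H$ and are left fixed. For $C=P$ on qubit $i$, we have $PX_iP^\dagger=Y_i$ and $PZ_iP^\dagger=Z_i$, and $Y_i=\imath X_iZ_i\in\mathcal{P}$, so the image again lies in $\mathcal{P}$. For $C=\mathsf{CN}$ on a control--target pair, the four conjugation rules $X\otimes\eye\mapsto X\otimes X$, $\eye\otimes X\mapsto\eye\otimes X$, $Z\otimes\eye\mapsto Z\otimes\eye$, $\eye\otimes Z\mapsto Z\otimes Z$ (stated in the remark preceding Proposition \ref{prop:capmbc}) cover every relevant single-qubit generator, and the graphical identities needed to verify each of them have been derived in this section. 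Hence each generator of $\mathcal{C}$ conjugates $\mathcal{P}$ into itself.

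Finally I would close by induction on circuit length. Writing $C=C_k\cdots C_1$ as a product of generators, we get $C\,\mathcal{P}\,C^\dagger=C_k\cdots C_1\,\mathcal{P}\,C_1^\dagger\cdots C_k^\dagger$, and applying the single-generator case $k$ times shows $C\,\mathcal{P}\,C^\dagger\subseteq\mathcal{P}$; since $C^{-1}=C^\dagger$ is itself a Clifford circuit, the reverse inclusion gives $C\,\mathcal{P}\,C^\dagger=\mathcal{P}$, which in particular yields the claimed formula $C(\sigma^a_\alpha\sigma^b_\beta\cdots\sigma^c_\gamma)C^\dagger=\sigma^{a'}_{\alpha'}\sigma^{b'}_{\beta'}\cdots\sigma^{c'}_{\gamma'}$. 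The main point to be careful about — not a deep obstacle, but the one that needs explicit attention — is the bookkeeping of the $\imath$ phases and of the identity factors on untouched qubits, so that the image is literally an element of the $4\cdot 4^n$-element group $\mathcal{P}$ and not merely a Pauli word up to normalization; this matters precisely because the graphical rewrites above are carried out in the scalar gauge, so the scalar prefactors must be reinstated by hand when one works inside $\mathcal{P}$ rather than its projective image.
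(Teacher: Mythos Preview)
Your proposal is correct and follows exactly the approach the paper has in mind: the paper states this lemma without a dedicated proof (writing ``it can be shown that''), but the surrounding remark lists precisely the generator-by-generator conjugation rules for $H$, $P$ and $\mathsf{CN}$ acting on $X_i$, $Z_i$ that you invoke, and the subsequent graphical derivations verify them case by case. Your contribution is to make explicit the two structural steps the paper leaves implicit---that conjugation is a group homomorphism, so checking on Pauli generators suffices, and that the Clifford side is handled by induction on circuit length---together with the careful remark about tracking the $\pm 1,\pm\imath$ phases when working in $\mathcal{P}$ itself rather than in the scalar gauge used for the graphical rewrites.
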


To apply Lemma \ref{lemma:normalizer}, essentially, we are interested in Heisenberg evolution of $X_i$ and $Z_j$ for $i, j$ ranging over all qubit labels. Hence, we arrive at the following. Denote by $\sigma$ some Pauli observable evolving under a Clifford circuit $U$.  Then 
\begin{equation}
    U: \sigma \rightarrow \sigma' = U \sigma U^\dagger = \sigma' U U^\dagger 
\end{equation}
where from Lemma \ref{lemma:normalizer}, $\sigma'$ is a Pauli string. To prove Theorem \ref{theorem:gk}, this is exactly our strategy.  We consider the circuit $U \sigma U^\dagger$.  Then we determine (by graphical rewrites) $\sigma U^\dagger = U^\dagger \sigma'$. We show that this is always possible with the considered Clifford generators.  Hence, graphical rewrites exist to prove Theorem~\ref{theorem:gk}.

We will now consider an extended example related to Bell state stabilizers which will illustrate the building blocks needed to prove Theorem \ref{theorem:gk}.  Let us now verify by tensor contraction that ${X_1X_2,-Y_1Y_2,Z_1Z_2}$ are indeed stabilizers. The following circuit produces the Bell state $\ket{\Phi^+}$. 

\begin{center} 
	\begin{circuitikz}
	\draw (1,1) -- (2,1);

	\draw (1,0) -- (2,0);
	\draw [fill] (2,1) circle 
[radius=0.1] ;
	\draw (2,1) -- (2,-0.2);
	\draw (2,0) circle [radius=0.2] ;

	\draw (2,1) -- (2.7,1);
	\draw (2.7,1.3) rectangle (3.3,0.7);
	\draw [scale=1.1](2.73, 0.93)  node [black] {$\sf H$};
	\draw (3.3,1) -- (4,1);

	\draw (4, 1.4) -- (4, 0.6);
	\draw (4, 1.4) -- (4.5, 1);
	\draw (4, 0.6) -- (4.5, 1);
	\draw (4.18, 1) node [black] {$0$};

	\draw (2,0) -- (4,0);
	\draw (4, 0.4) -- (4, -0.4);
	\draw (4, 0.4) -- (4.5, 0);
	\draw (4, -0.4) -- (4.5, 0);
	\draw (4.18, 0) node [black] {$0$};

\draw[dotted,rounded corners=2ex] (1.5,-0.5) rectangle (3.5,1.5);
\draw (2.5, -0.8) node {$u$};

\draw (3, 2.5) node [black] {$\underset{\overbrace{\hphantom{qwertyuiopasdfg}}}{\ket{\Phi^+}} $};

\draw (4.15, 1.7) node [black] {$\underset{\overbrace{\hphantom{qwer}}}{\ket{00}} $};
\end{circuitikz}
\end{center}

Such that 
\begin{equation}
    \ket{\Phi^+} = \frac{\ket{00}+\ket{11}}{\sqrt[]{2}} = u\ket{00}. 
\end{equation}

We begin by application of Proposition \ref{prop:covarient}. First, consider stabilizers of the initial state, as follows. 

\begin{equation}
    Z_1Z_2\ket{00}=\ket{00}
\end{equation}

\begin{equation}
    Z\otimes \eye \ket{00}=\ket{00}
\end{equation}

\begin{equation}
    \eye\otimes Z\ket{00}=\ket{00}
\end{equation}

Hence, the stabilizers of $\ket{00}$ form the Abelian group \eqref{eqn:abgroup}. 
\begin{equation}\label{eqn:abgroup}
    \{\eye, Z_1, Z_2, Z_1Z_2\}
\end{equation}
Acting on the initial state $\ket{00}$ with $u$ yeilds: 
\begin{equation}
\begin{split}
u \ket{00} = \ket{\Phi^+}, \\ N\ket{\Phi^+} = \ket{\Phi^+}. 
\end{split}
\end{equation}
From Proposition \ref{prop:covarient} we know that 
\begin{equation}
\begin{split}
N\ket{00}=\ket{00},\\
u\ket{00}=uN\ket{00}=uNu^\dagger u\ket{00} = uNu^\dagger \ket{\Phi^+},\\
N\rightarrow uNu^\dagger. \\
\end{split}
\end{equation}
Hence, Proposition \ref{prop:covarient} asserts that if \eqref{eqn:abgroup} is a stabilizer of the initial state, then \eqref{eqn:abgroupcov} is a stabilizer of the final state $u\ket{00}$

\begin{equation}\label{eqn:abgroupcov}
    \{\eye, u Z_1 u^\dagger, u Z_2 u^\dagger, u Z_1Z_2 u^\dagger\}
\end{equation}

Recall Lemma \ref{lemma:normalizer}. We then must determine e.g.~$Z_1Z_2 u^\dagger = u^\dagger \sigma'$. This is done graphically as follows.  

\begin{center}
\begin{circuitikz}
\draw (0,1.5); 

\draw (-0.6, 0.5) node {$Z\otimes Z\ket{\Phi^+}:~~~~~~~~~~~~$};
\draw (0,0) -- (3,0);
\draw (0,1) -- (2,1);

\draw (0.4,0.2) -- (0.6,-0.2);
\draw (0.4,-0.2) -- (0.6,0.2);

\draw (0.4,1.2) -- (0.6,0.8);
\draw (0.4,0.8) -- (0.6,1.2);

\draw[-{Stealth[black]}] (0.8,0.1) to [bend right = 40] (1.15,0.65);

\draw [fill] (1.3,1) circle 
[radius=0.1] ;
\draw (1.3,1) -- (1.3,-0.2);
\draw (1.3,0) circle [radius=0.2] ;

\draw[-{Stealth[black]}] (2,0.1) -- (2.5,0.1);

\draw (2,1.3) rectangle (2.6,0.7);
\draw [scale=1.1](2.11, 0.93)  node [black] {$\sf H$};
\draw (2.6,1) -- (3,1);

\draw (3, 1.4) -- (3, 0.6);
\draw (3, 1.4) -- (3.5, 1);
\draw (3, 0.6) -- (3.5, 1);
\draw (3.18, 1) node [black] {$0$};

\draw (3, 0.4) -- (3, -0.4);
\draw (3, 0.4) -- (3.5, 0);
\draw (3, -0.4) -- (3.5, 0);
\draw (3.18, 0) node [black] {$0$};

\draw (4, 0.5) node {$=$};
\end{circuitikz} 
\end{center}

\begin{center}
    \begin{circuitikz}


\draw (4.5,0) -- (6.5,0);
\draw (4.5,1) -- (6.5,1);

\draw (4.7,1.2) -- (4.9,0.8);
\draw (4.7,0.8) -- (4.9,1.2);

\draw [fill] (5.3,1) circle [radius=0.1] ;
\draw (5.3,1) -- (5.3,-0.2);
\draw (5.3,0) circle [radius=0.2] ;
\draw (5.2,0.7) -- (5.4,0.4);
\draw (5.2,0.4) -- (5.4,0.7);

\draw (5.9,0.2) -- (6.1,-0.2);
\draw (5.9,-0.2) -- (6.1,0.2);

\draw (6.5,1.3) rectangle (7.1,0.7);
\draw [scale=1.1](6.18, 0.93)  node [black] {$\sf H$};

\draw (6.5,0) -- (7.5,0);
\draw (7.1,1) -- (7.5,1);

\draw (7.5, 1.4) -- (7.5, 0.6);
\draw (7.5, 1.4) -- (8, 1);
\draw (7.5, 0.6) -- (8, 1);
\draw (7.68, 1) node [black] {$0$};

\draw (7.5, 0.4) -- (7.5, -0.4);
\draw (7.5, 0.4) -- (8, 0);
\draw (7.5, -0.4) -- (8, 0);
\draw (7.68, 0) node [black] {$0$};

\draw (8.5, 0.5) node {$=$};

\draw (9,0) -- (11,0);
\draw (9,1) -- (10,1);

\draw [fill] (9.5,1) circle [radius=0.1] ;
\draw (9.5,1) -- (9.5,-0.2);
\draw (9.5,0) circle [radius=0.2] ;

\draw (10,1.3) rectangle (10.6,0.7);
\draw [scale=1.1](9.38, 0.93)  node [black] {$\sf H$};
\draw (10.6,1) -- (11,1);

\draw (11, 1.4) -- (11, 0.6);
\draw (11, 1.4) -- (11.5, 1);
\draw (11, 0.6) -- (11.5, 1);
\draw (11.18, 1) node [black] {$0$};

\draw (11, 0.4) -- (11, -0.4);
\draw (11, 0.4) -- (11.5, 0);
\draw (11, -0.4) -- (11.5, 0);
\draw (11.18, 0) node [black] {$0$};

\draw (0,-0.5); 

\end{circuitikz}
\end{center}
Which simply shows that $Z\otimes Z$ is a stabilizer of $\ket{\Phi^+}$, as expected. Consider then evolution of $Z_1$.  We arrive at the following rewrites. 

\begin{center}
\begin{circuitikz}

\draw(0, 2) node{};

	\draw (0,1) -- (1,1);
	\draw (0,0) -- (4.8,0);
    
	\draw [fill] (0.5,1) circle [radius=0.1] ;
	\draw (0.5,1) -- (0.5,-0.2);
	\draw (0.5,0) circle [radius=0.2] ;

	\draw (1,1.3) rectangle (1.6,0.7);
	\draw [scale=1.1](1.18, 0.93)  node [black] {$\sf H$};
	\draw (1.6,1) -- (2.8,1);

	\draw (2, 1.2) -- (2.2, 0.8);
	\draw (2, 0.8) -- (2.2, 1.2);

	\draw (2.8,1.3) rectangle (3.4,0.7);
	\draw [scale=1.1](2.85, 0.93)  node [black] {$\sf H$};
	\draw (3.4,1) -- (4.8,1);

	\draw [fill] (4.2,1) circle [radius=0.1] ;
	\draw (4.2,1) -- (4.2,-0.2);
	\draw (4.2,0) circle [radius=0.2] ;

\draw (5.5, 0.5) node {$=$};

	\draw (6,1) -- (8.5,1);
	\draw (6,0) -- (8.5,0);

	\draw [fill] (6.5,1) circle [radius=0.1] ;
	\draw (6.5,1) -- (6.5,-0.2);
	\draw (6.5,0) circle [radius=0.2] ;
    
    \draw (7.25,1) circle [radius=0.2] ;
    \draw (7.25,1.2) -- (7.25,0.8);

    \draw [fill] (8,1) circle [radius=0.1] ;
	\draw (8,1) -- (8,-0.2);
	\draw (8,0) circle [radius=0.2] ;

\draw (9, 0.5) node {$=$};

	\draw (10,1) -- (12.5,1);
	\draw (10,0) -- (12.5,0);
    
	\draw [fill] (10.5,1) circle [radius=0.1] ;
	\draw (10.5,1) -- (10.5,-0.2);
	\draw (10.5,0) circle [radius=0.2] ;
    
    \draw [fill] (11.3,1) circle [radius=0.1] ;
	\draw (11.3,1) -- (11.3,-0.2);
	\draw (11.3,0) circle [radius=0.2] ;
 	\draw (11.1,0.5) -- (11.5,0.5);
	\draw (11.3,0.5) circle [radius=0.2] ;
    
    \draw (12.2,1.2) -- (12.2,0.8);
	\draw (12.2,1) circle [radius=0.2] ;

\draw (13, 0.5) node {$=$};
\end{circuitikz}

\begin{circuitikz}
\draw (0,2) node{};

\draw (0,1) -- (2.5,1);
	\draw (0,0) -- (2.5,0);
    
	\draw [fill] (0.5,1) circle [radius=0.1] ;
	\draw (0.5,1) -- (0.5,-0.2);
	\draw (0.5,0) circle [radius=0.2] ;
    
    \draw [fill] (1.3,1) circle [radius=0.1] ;
	\draw (1.3,1) -- (1.3,-0.2);
	\draw (1.3,0) circle [radius=0.2] ;
    
    \draw (2.1,1.2) -- (2.1,0.8);
	\draw (2.1,1) circle [radius=0.2] ;
    
    \draw (2.1,0.2) -- (2.1,-0.2);
	\draw (2.1,0) circle [radius=0.2] ;

\draw (3, 0.5) node {$=$};

	\draw (4,1) -- (5,1);
	\draw (4,0) -- (5,0);

    \draw (4.5,1.2) -- (4.5,0.8);
	\draw (4.5,1) circle [radius=0.2] ;
    
    \draw (4.5,0.2) -- (4.5,-0.2);
	\draw (4.5,0) circle [radius=0.2] ;

\end{circuitikz}
\end{center}

Which arrives graphically at $X\otimes X$ being a stabilizer of $\ket{\Phi^+}$.  From this, Theorem \ref{theorem:gk} follows directly as the required graphical rewrites have all been established.

\chapter{Variational Quantum Search and Optimization}\label{chap:varintro}
\index{Variational Quantum Algorithms}

The topic of real-world Noisy Intermediate Scale Quantum Technology (NISQ) processors is exploding in interest. As NISQ processors operate in the presence of noise and systematic fabrication errors, novel methodologies to program and control NISQ processors are now emerging.  
Situated between a quantum simulator and a gate-model based device, a leading methodology, known as variational quantum algorithms, utilizes a classical-to-quantum feedback loop to control and tune a quantum system to produce desired output(s) \cite{2014NatCo...5E4213P, farhi2014quantum}. Interesting recent findings include the discovery of barren plateaus \cite{McClean_2018} and (the author together with coauthors) of reachability deficits \cite{2019arXiv190611259A}, a recent connection between variational algorithms and contextuality has been made in \cite{Kirby_2019}, and novel findings relating barren plateaus to circuit depth appeared in \cite{2020arXiv200100550C}. 

As variational quantum algorithms rely on minimization of some cost function, an important finding shows how to evaluate the gradients of said cost function exactly \cite{Mitarai_2018,Schuld_2019}. In addition, less general, though equally interesting work includes studying level-1 QAOA \cite{2019arXiv190507047H} and recent findings related to circuit-parameter concentrations between instance solutions \cite{2019arXiv191008187F}.

Variational quantum algorithms seek to reduce quantum state preparation requirements while necessitating measurements of individual qubits in the computational basis \cite{2014NatCo...5E4213P, farhi2014quantum}.   A sought reduction in coherence time is mediated through an iterative classical-to-quantum feedback and optimization process.  Systematic errors which map to deterministic yet unknown control parameters---such as time-variability in the application of specific Hamiltonians or poor pulse timing---can have less impact on variational algorithms, as states are prepared iteratively and varied over to minimize objective function(s).  

In the contemporary noisy intermediate-scale quantum (NISQ) enhanced technology setting \cite{2018arXiv180100862P}, variational quantum algorithms are used to prepare quantum state(s) for one of three purposes.
\begin{enumerate}
\item[(i)]  In the case of variational quantum approximate optimization ({\sf QAOA} \index{Variational Quantum Algorithms} \cite{farhi2014quantum}), a state is prepared by alternating a Hamiltonian representing a penalty function (such as the \NP{}-hard Ising embedding of \SAT) with a Hamiltonian representing local tunneling terms.  The state is measured and the resulting bit string minimizes the penalty function.  The minimization process is iterated by updating Hamiltonian application times. 

\item[(ii)] In the case of variational eigenvalue minimization (VQE \cite{2014NatCo...5E4213P}) \index{Variational Quantum Algorithms}, the state is repeatedly prepared and measured to obtain a  set of expected values which is individually calculated and collectively minimized. (applications of VQE include \cite{2018QuIP...17..223Z, 2020arXiv200500544U}) 

\item[(iii)] In the case of generative quantum enhanced machine learning \cite{2017arXiv171205304V} \index{Quantum Machiene Learning}, a quantum circuit is tuned subject to a given training dataset which is hence used to represent (non-linear) probability distributions such as $p(x) = |\psi(x)|^2/\mathcal{Z}$ or $p(x) = e^{-\mathcal{H}(x)}/\mathcal{Z}$ where $p(x)$ is the expected value of sampling bit string $x$, $\psi$ is a wave function, $\mathcal{H}$ is a Hamiltonian and $\mathcal{Z}$ is a normalization factor. 

\end{enumerate} 

In the present chapter, we will present a general framework that is expressive enough to describe any known variational algorithm.  We will however zoom in and focus on variational search and optimization, leaving a more detailed  study of universal variational quantum computation~\cite{UVQC} to Chapter \ref{sec:variational}. We begin with the idea of an approximate or random algorithm.


\section{Random vs quantum complexity} 

One-sided vs two-sided error: The answer returned by a {\it deterministic algorithm} is always expected to be correct.  That is not the case for random algorithms, including Monte Carlo algorithms.   

\begin{definition}[One-Sided Error]
 For {\it decision problems}, random algorithms are generally classified as either
 \begin{enumerate}
     \item {\scshape No}- (a.k.a.~false-) biased or 
     \item {\scshape Yes}- (a.k.a.~true-) biased, 
 \end{enumerate}
 subject to the following. 
 \begin{enumerate}
     \item A false-biased algorithm is always correct when it returns false. 
     \item A true-biased algorithm is always correct when it returns true.
 \end{enumerate}
\end{definition}

\begin{definition}[Two-Sided Error]
Probabilistic (a.k.a.~random) algorithms that have no bias are said to have {\it two-sided errors}. The answer provided by the algorithm (true or false; {\scshape Yes} or {\scshape No}) will be incorrect, or correct, with some bounded probability.
\end{definition}

\begin{example}[Randomized Primality Testing]
The Solovay–Strassen primality test is used to determine whether a given number is prime and functions as follows. 
\begin{enumerate}
    \item The algorithm always returns {\it true} for prime number inputs. 
    \item For composite inputs, the algorithm always returns {\it false} with probability at least $\frac{1}{2}$ and {\it true} with probability less than $\frac{1}{2}$. 
\end{enumerate}
Thus, {\it false} answers are certain to be correct, whereas {\it true} answers remain uncertain---this is said to be a {\it $\frac{1}{2}$-correct false-biased algorithm}.
\end{example}

Informally, a problem is in {\sf BPP} if there is a corresponding algorithm with the following properties:
\begin{enumerate}
    \item the algorithm has access to {\it coin flips} to make random decisions, 
    \item the algorithm is guaranteed to run in polynomial time (in the input size), 
    \item on any given run of the algorithm, it has a probability of at most $\frac{1}{3}$ of giving the wrong answer, whether the answer is {\scshape Yes} or {\scshape No}.
\end{enumerate}

\begin{definition}(bounded-error probabilistic polynomial time). 
{\sf BPP} is the class of decision problems solvable in polynomial time with an error probability bounded away from $\frac{1}{3}$ for all instances. 
\end{definition}

More formally as a language (decision) problem. 

\begin{definition} \label{def:bpp}
A language ${\mathcal L}$ is in {\sf BPP} if and only if there exists a computational machine ${\mathcal M}$ that runs for polynomial time on all inputs and 
\begin{enumerate}
    \item $\forall x \in {\mathcal L}$, ${\mathcal M}$ outputs $1$ with probability greater than or equal to $\frac{2}{3}$, 
    \item $\forall x \notin {\mathcal L}$, ${\mathcal M}$ outputs $1$ with probability less than or equal to $\frac{1}{3}$. 
\end{enumerate}
\end{definition}

\begin{remark}
{\sf BPP} contains {\sf P}, the class of problems solvable in polynomial time with a deterministic machine, since a deterministic machine is a special case of a probabilistic machine.
\end{remark}

In practice, an error probability of $\frac{1}{3}$ in Definition \ref{def:bpp} might not be acceptable.  The choice of $\frac{1}{3}$ in Definition \ref{def:bpp} is however, arbitrary. The probability can take any constant between 0 and $\frac{1}{2}$ (exclusive) and the set {\sf BPP} will be unchanged. 

Furthermore, the probability does not even have to be constant:
the same class of problems is defined by 
\begin{enumerate}
    \item allowing error as high as $\frac{1}{2} - n^{-c}$  or 
    \item or requiring error as small as $2^{-n^c}$,
\end{enumerate}
where $c$ is any positive constant, and $n$ is the length of input.

Hence, there is a probability of error, but if the algorithm is run many times, the chance that the majority of the runs are wrong drops off exponentially as a consequence of the {\bf Chernoff bound}. This makes it possible to create a highly accurate algorithm by merely running the algorithm several times and taking a {\it majority vote}. 

\begin{example}
If we instead defined {\sf BPP} with the restriction that the algorithm can be wrong with probability at most $2^{-100}$, this would result in the same class of problems.
\end{example}

We now arrive at {\sf BQP} \cite{KSV02},  the quantum analogue of the complexity class {\sf BPP}.

\begin{definition}
Bounded-error quantum polynomial time ({\sf BQP}) is the class of decision problems solvable by a quantum computer in polynomial time, with an error probability of at most $\frac{1}{3}$ for all instances.
\end{definition}

A decision problem is a member of {\sf BQP} if there exists a quantum algorithm that solves the decision problem with {\it high probability} and is guaranteed to run in polynomial time in the size of the input. A run of the algorithm will correctly solve the decision problem with a probability of at least $\frac{2}{3}$. More formally, {\sf BQP} is the languages associated with certain bounded-error uniform families of quantum circuits \cite{NC}. 

\begin{definition}\label{def:bqp}
 A language ${\mathcal L}$ is in {\sf BQP} if and only if there exists a polynomial-time uniform family of quantum circuits $\{Q_n:n \in \mathbb{N}\}$, such that $\forall n \in \mathbb{N}$, $Q_n$ takes $n$ qubits as input and outputs $1$ bit: 
 \begin{enumerate}
     \item $\forall x\in {\mathcal L}$, $\mathrm{Pr}(Q_{|x|}(x)=1)\geq \tfrac{2}{3}$, 
     \item $\forall x\notin {\mathcal L}$,  $\mathrm{Pr}(Q_{|x|}(x)=0)\geq  \tfrac{2}{3}$. 
 \end{enumerate}
\end{definition}

\begin{remark}
Similarly to other {\bf bounded error} probabilistic classes such as {\sf BPP}, the choice of $\frac{1}{3}$ in Definition \ref{def:bqp} is arbitrary. We can run the algorithm a constant number of times and take a majority vote to achieve any desired probability of correctness less than 1, using the aforementioned Chernoff bound. The complexity class is unchanged by allowing error as high as $\frac{1}{2} - n^{-c}$ on the one hand, or requiring error as small as $2^{-n^c}$ on the other hand, where $c$ is any positive constant, and $n$ is the length of input.
\end{remark}

\begin{definition}
 {\sf PSPACE} is the set of all decision problems that can be solved using a polynomial amount of computational space.
\end{definition}

The exact relationship of {\sf BQP} to {\sf P}, {\sf NP}, and {\sf PSPACE} remains unknown. However: 
\begin{remark}
Known class relationships include (see Figure \ref{fig:bqp}):
\begin{center}
   {\sf P} $\subseteq$ {\sf BPP} $\subseteq$ {\sf BQP} $\subseteq$ {\sf PSPACE} 
\end{center}
\end{remark}
Hence, the class of problems that can be efficiently solved by quantum computers includes all problems that can be efficiently solved by deterministic classical computers.  It however does include problems that cannot be solved by classical computers with polynomial space resources (regardless of the time taken to solve each instance). The suspected class relationships are given in Figure \ref{fig:bqp}.  


\begin{figure}[ht!]
    \centering
    \includegraphics[width=0.5\textwidth]{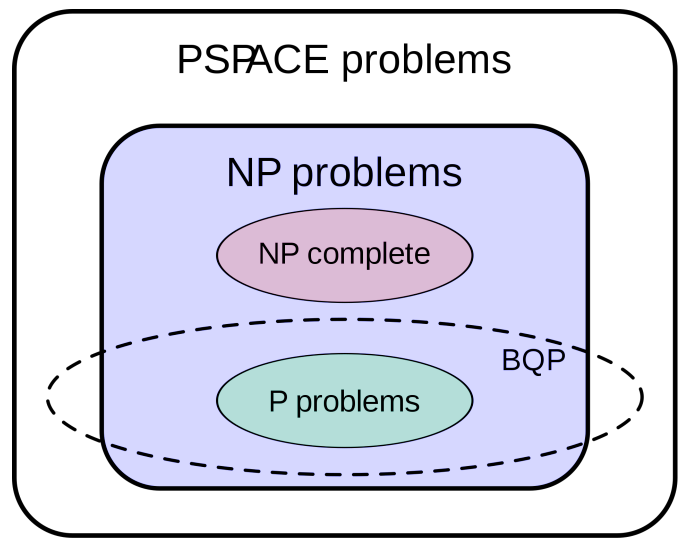}
    \caption{The suspected relationship of {\sf BQP} to other classes \cite{NC}.}
    \label{fig:bqp}
\end{figure}



\section{Variational quantum computation framework}

\begin{definition}[Variational Statespace---Biamonte \cite{UVQC}]
The variational statespace of a $p$-parameterized $n$-qubit state preparation process is the union of $\ket{\psi({\boldsymbol \theta})}$ over real assignments of ${\boldsymbol \theta}$, 
\begin{equation}
    \bigcup_{{\boldsymbol \theta} \subset \mathbb{R}^{\times p}} \{ \ket{\psi({\boldsymbol \theta})} \} \subseteq \mathbb{C}_2^{\otimes n}. 
\end{equation} 
\end{definition} 

Variational statespace examples include preparing $\ket{\psi({\boldsymbol \theta})}$ by either: 
 
 \begin{enumerate}
     \item[(1.i)]  A quantum circuit with e.g.~${\boldsymbol \theta} \in [0, 2\pi)^{\times p}$ tunable parameters as 
\begin{equation}\label{eqn:cir}
\ket{\psi({\boldsymbol \theta})} = \prod _{l=1}^pU_l \ket{0}^{\otimes n}, 
\end{equation}

where $U_l$ is adjusted by $\theta_l$ for $l=1$ to $p$. 

\item[(1.ii)] By tuning accessible time-dependent, piece-wise continious and appropriately bounded parameters (function $\theta_k(t)$ corresponding to Hermitian $A^{(k)}$) as 
\begin{equation}\label{eqn:time}
\ket{\psi} =\mathcal{T}\{ e^{-\imath \sum \theta_k(t) A^{(k)}}\}\ket{0}^{\otimes n}, 
\end{equation} 
where $\mathcal{T}$ time orders the sequence and superscript $k$ indexes the $k$th operator $A^{(k)}$.
 \end{enumerate}
  
\begin{definition}[Variational Sequence] \label{def:sequence}
A variational sequence specifies parameters to prepare a state in a variational statespace.  It can be given by defining a specific sequence of gates or by specifying control parameter values. 
\end{definition}

Contrasting (1.i) and (1.ii) above suggests an interesting connection between variational algorithms and optimal quantum control. Indeed, we will provide an objective function that can be efficiently calculated given access to a suitable quantum processor.  Minimization of this objective function produces a close 2-norm approximation to the output of a given quantum circuit.  The variational sequence minimizing the objective function is given precisely in the form (1.i Eq.~\ref{eqn:cir}).  Finding a shorter sequence to accomplish this same task \eqref{eqn:cir} by finding a variational sequence in the form \eqref{eqn:time} can be stated as optimization of a suitable objective function.  The definition of such a {\it universal} objective function, is a contribution of this work.  

\begin{definition}[Objective Function] \label{def:ofunc}
We consider an objective function as the expected value of an operator expressed with real coefficients $\mathcal{J}^{a b \cdots c}_{\alpha \beta \cdots \gamma}$ in the Pauli basis as, 
\begin{equation}\label{eqn:Ham}
  \mathcal{H} = \sum \mathcal{J}^{a b \cdots c}_{\alpha \beta \cdots \gamma} \cdot  \sigma^a_\alpha \sigma^b_\beta \cdots \sigma^c_\gamma 
\end{equation}
\index{Ising Model}
where Greek letters index Pauli matrices, Roman letters index qubits and the sum is over a subset of the $4^n$ elements in the basis, where the tensor ($\otimes$) is omitted. 
\end{definition}

We are concerned primarily with Hamiltonians where $\mathcal{J}^{a b \cdots c}_{\alpha \beta \cdots \gamma}$ is given and known to be non-vanishing for at most some $\text{poly}(n)$ terms.  This wide class includes Hamiltonians representing electronic structure \cite{2017Natur.549..242K}.  More generally, such Hamiltonians are of bounded cardinality. 

\begin{definition}[Objective Function Cardinality---Biamonte \cite{UVQC}] \label{def:cardinality}
The number of terms in the Pauli basis $\{\openone , X, Y, Z\}^{\otimes n}$ needed to express an objective function.  
\end{definition}

\begin{definition}[Bounded Objective Function---Biamonte \cite{UVQC}] \label{def:efcom}
A family of objective functions is {\it efficiently computable} when uniformly generated by calculating the expected value of an operator with $\text{poly}(n)$ bounded cardinality over 
\begin{equation} 
\begin{split}
\Omega &  \subset \{\openone , X, Y, Z\}^{\otimes n}.
 \end{split}
\end{equation}
\end{definition}

\begin{definition}[Poly-Computable Objective Function---Biamonte \cite{UVQC}]\label{def:of}
An objective function
\begin{equation}
    f\colon\ket{\phi}^{\times \mathcal{O}(\text{poly}(n))} \rightarrow \mathbb{R}_+
\end{equation}
is called poly-computable provided $\text{poly}(n)$ independent physical copies of $\ket{\phi}$ can be efficiently prepared to evaluate a bounded objective function.  
\end{definition}  

Efficiently computable objective function examples include: 
\begin{enumerate}
\item[(2.i)] Calculating the expected value of Hamiltonian's in the Pauli basis known to have bounded cardinality.  
\item[(2.ii)] Calculating the expected value of the $\mathcal{O}(\ln n)$ generated ring without inverses with $\mathcal{H}$ of bounded cardinality 
\begin{equation}
\{\mathcal{H}, \cdot, +, \mathbb{R}\}
\end{equation}  
which includes the divergence $\mathbb{E}^2 = \langle \mathcal{H}^2\rangle- \langle \mathcal{H}\rangle^2$ that vanishes if and only if the prepared state is an eigenstate of the Hamiltonian $\mathcal{H}$.   
\end{enumerate}

Acceptance (as follows) must be shown by providing a solution to the optimisation problem defined by the objective function. 

\begin{definition}[Accepting a Quantum State---Biamonte \cite{UVQC}]
An objective function $f$ {\it accepts} $\ket{\phi}$ when given $\mathcal{O}(\text{poly}~n)$ copies of $\ket{\phi}$, 
\begin{equation}
    f(\ket{\phi}^{\times \mathcal{O}(\text{poly}(n)}) = f(\ket{\phi}, \ket{\phi}, \cdots, \ket{\phi})< \Delta
\end{equation}
 evaluates strictly less than a chosen real parameter $\Delta > 0$.  
  \end{definition} 
  
\begin{remark}
Appropriately bounded objective functions (Hamiltonians) from Chapter \ref{chap:progGS} are readily seem to represent Poly-computable objective functions.  For example, consider some penalty function described by the following Hamiltonian of bounded cardinality 
\begin{equation}\label{eqn:obj}
\mathcal{H} = \sum_{i< j}J_{ij}Z_iZ_j + \sum_i h_i Z_i. \index{Ising Model} 
\end{equation}

Assume further that this Hamiltonian embeds some objective function $f({\bf x})$.  Then the quantum state $\ket{{\bf x}}$ will minimize \eqref{eqn:obj}.  This is shown by measuring \eqref{eqn:obj} in the $Z$ basis and recovering ${\bf x}$ and calculating $f({\bf x})$ classically from the description of $\mathcal{H}$. 
\end{remark}

The following theorem (\ref{thm:e2overlap}) applies rather generally to variational quantum algorithms that minimise energy by adjusting a variational state to cause an objective function to accept. Herein acceptance will imply the preparation of a quantum state, which begs to establish the following. 

\begin{theorem}[Energy to Overlap Variational Stability Lemma---Biamonte \cite{UVQC}]\label{thm:e2overlap}
Let non-negative $\mathcal{H}= \mathcal{H}^\dagger\in \mathscr{L}(\mathbb{C}_d)$ have spectral gap $\Delta$ and non-degenerate ground eigenvector $\ket{\psi}$ of eigenvalue $0$.  
Consider then a unit vector $\ket{\phi}\in \mathbb{C}_d$ such that 
\begin{equation}
\bra{\phi }\mathcal{H}\ket{\phi } < \Delta 
\end{equation} 
it follows that 
\begin{equation}
1 - \frac{\bra{\phi}\mathcal{H}\ket{\phi} }{\Delta} \leq | \braket{\phi}{\psi}|^2 \leq 1 - \frac{\bra{\phi}\mathcal{H}\ket{\phi} }{\text{Tr}\{ \mathcal{H}\}}.
\end{equation} 
\end{theorem}

\begin{proof}(Theorem \ref{thm:e2overlap}---Biamonte \cite{UVQC}).
Let $\{\ket{\psi_x}\}_{x=0}^{d-1}$ be the eigenbasis of $\mathcal{H}$.  The integer variable $x$ monotonically orders this basis by corresponding eigenvalue
\begin{equation}
\lambda_0 = 0 \leq \lambda_1 = \Delta \leq\cdots. 
\end{equation}
Consider $\ket{\phi}= \sum c_x \ket{\psi_x}$ and $P(x) = |c_x|^2$ with $c_x = \braket{\psi_x}{\phi}$.  Then 
\begin{equation}
 \bra{\phi}\mathcal{H}\ket{\phi} = \sum P(x)\lambda(x) = \sum_{\neg 0} P(x)\lambda(x)
\end{equation} 
Consider $1 = \braket{\phi}{\phi} = \sum P(x) = P(0) + \sum_{\neg 0} P(x)$.  So $1 - P(0) \geq P(x)~~ \forall x \neq 0$. Hence 
\begin{equation}
\sum_{\neg 0}P(x)\lambda(x) \leq (1-P(0))  \sum_{\neg 0}\lambda(x) 
\end{equation} 
From $\lambda(0) = 0$ we have that 
\begin{equation}
(1-P(0)) \sum\lambda(x) = (1-P(0))\cdot \text{Tr}\{\mathcal{H}\} = (1-|\braket{\psi_0}{\phi}|^2)\cdot \text{Tr}\{\mathcal{H}\}
\end{equation} 
from which the upper bound follows. 

For the lower bound, starting from the assumption $\bra{\phi}\mathcal{H}\ket{\phi} < \Delta$ we consider a function of the integers $\tilde{\lambda}(x)\cdot \Delta = \lambda(x)$ $\forall x>0$.  Then 
\begin{equation}
\bra{\phi}\mathcal{H}\ket{\phi} = \sum P(x)\lambda(x) = \sum_{\neg 0} P(x)\lambda(x) =  \Delta\cdot \sum_{\neg 0} P(x)\tilde{\lambda}(x)
\end{equation} 
hence $\sum_{\neg 0} P(x)\tilde{\lambda}(x) < 1$ as $\tilde{\lambda}(x)\geq 1$, $\forall x \geq 0$. We then have that 
\begin{equation}
\sum_{\neg 0} P(x) \leq \sum_{\neg 0} P(x)\tilde{\lambda}(x) < 1 
\end{equation} 
and so $1 - \sum_{\neg 0} P(x)\tilde{\lambda}(x)  \geq 0$.  Hence, 
\begin{equation}
1 = \braket{\phi}{\phi} = P(0) + \sum_{\neg 0} P(x) \leq P(0) + \sum_{\neg 0} P(x)\tilde{\lambda}(x) 
\end{equation} 
We establish that 
\begin{equation} 
1 \leq P(0) + \sum_{\neg 0} \frac{P(x)\lambda(x)}{\Delta} = |\braket{\psi_0}{\phi}|^2 + \frac{\bra{\phi}\mathcal{H}\ket{\phi}}{\Delta}
\end{equation} 
which leads directly to the desired lower bound.   
\end{proof} 

\section{Variational quantum search} 
\index{Variational Quantum Algorithms}
Here we will consider a two-level variant of the alternating operator ansatz following \cite{2018arXiv180509337M}.  The procedure can be readily adapted to 3-{\sf SAT} (as first developed in \cite{farhi2014quantum}).  Quantum search has been studied in the variational framework in several recent studies, including \cite{2020PhRvA.101c2346Z}. 

\begin{definition}
In this section, we let
\begin{enumerate}
    \item $n$ be the number of qubits and 
    \item $N=2^n$ be the size of the search space, 
    \item where we are searching for a particular bitstring $\bm{\omega} = \omega_1, \omega_2, \omega_3, ..., \omega_n$.
\end{enumerate}
\end{definition}

\begin{remark}[Defining \textsf{QAOA} Hamiltonians]\label{remark:qaoaops}
As in \cite{farhi2014quantum} and elsewhere, \textsf{QAOA} works by sequencing a pair of operators.  One driver (defined in the $X$-basis typically) and the other defined as the classical penalty function we seek to minimize.  
\end{remark}

To specify the quantum algorithm, as in \cite{2018arXiv180509337M} we consider a pair of rank-$1$ projectors \eqref{def:omega-op}.
\begin{align}
  \label{def:omega-op}
  P_{\bm{\omega}} &= \ketbra{\bm{\omega}}{\bm{\omega}}\\
  \label{def:plus-op}
  P_{\bm{+}} &= \ketbra{+}{+}^{\otimes n} = \ketbra{\bm{s}}{\bm{s}}
\end{align}
where 
\begin{equation}
    \ket{\bm{s}}=\frac{1}{\sqrt{N}}\sum_{\bm{x}\in \left\{0,1\right\}^n} \ket{\bm{x}}. 
\end{equation}

To find $\ket{\bm{\omega}}$, we consider a split-operator variational ansatz, formed by sequencing a pair of operators (see Remark \ref{remark:qaoaops}). These operators prepare a state $\ket{\varphi(\bm{\alpha}, \bm{\beta})}$, with vectors $\bm{\alpha} = \alpha_1, \alpha_2, ..., \alpha_p$ and $\bm{\beta}=\beta_1, \beta_2, ..., \beta_p$. We seek to minimize the orthogonal complement of the subspace for the searched string \eqref{eq:projector-w}.
\begin{align}
\label{eq:projector-w}
P_{\bm{\omega}^{\perp}} &= \eye - P_{\bm{\omega}}
\end{align}
\index{Variational Quantum Algorithms}

\begin{remark}
We sometimes call \eqref{def:plus-op} the driver Hamiltonian or diffusion operator. 
\end{remark}

The state is varied to minimize this orthogonal component \eqref{eq:minimization}.
\begin{align}
\label{eq:minimization}
\min_{\bm{\alpha},\bm{\beta}} \bra{\varphi(\bm{\alpha}, \bm{\beta})}P_{\bm{\omega}^{\perp}} \ket{\varphi(\bm{\alpha},\bm{\beta})} \geq \min_{\ket{\phi}} \bra{\phi}P_{\bm{\omega}^{\perp}}\ket{\phi}
\end{align}
To prepare the state we develop the sequence \eqref{state_prep}.
\begin{align}
\label{state_prep}
\ket{\varphi(\bm{\alpha},\bm{\beta})} = \mathcal{K}(\beta_p) \mathcal{V}(\alpha_p)\cdots\mathcal{K}(\beta_1)\mathcal{V}(\alpha_1)\ket{\bm{s}}
\end{align}
Where the operators are defined as \eqref{eq:operator-v} and \eqref{eq:operator-k}.
\begin{align}
\label{eq:operator-v}
\mathcal{V}(\alpha) &\bydef e^{\imath \alpha P_{\bm\omega}}\\
\label{eq:operator-k}
\mathcal{K}(\beta) &\bydef e^{\imath \beta P_{\bm{+}}}
\end{align}
The length of the sequence is $2p$, for integer $p$.
We consider now the following problems on which the variational algorithm will work.

\begin{example}[Standard Oracle, Variational Diffusion]\label{def:standar-oracle-variational-diffusion}
Using a computer program, one can find $p$ angles $\bm\beta = (\beta_1, ..., \beta_p)$ and fixing $\bm\alpha = (\alpha_1 = \pi, ..., \alpha_p = \pi) $ to minimize \eqref{eq:minimization} via the sequence \eqref{state_prep}, given the operators \eqref{eq:operator-v} and \eqref{eq:operator-k}.
\end{example}

In this problem (Example \ref{def:standar-oracle-variational-diffusion}) we have fixed the standard black-box oracle of Grover's algorithm and the algorithm optimizes to find the angles in the diffusion operator. We will also consider a restricted variational problem where all the diffusion operators must apply the same variational angle.

\begin{example}[Standard Oracle, Restricted Variational Diffusion---Morales, Tlyachev and Biamonte \cite{2018arXiv180509337M}]
\label{def:standard-oracle-restricted-variational-diffusion} As in Example \ref{def:standar-oracle-variational-diffusion}, using a computer program one can find $p$ angles $\bm\beta = (\beta, ..., \beta)$ and choosing $\bm\alpha = (\alpha_1 = \pi, ..., \alpha_p = \pi)$.
\end{example}

A third problem to which we will apply the variational algorithm is considering both the oracle and the diffusion angles as variational parameters. We consider in this case a phase matching condition, meaning that angles are restricted to be equal.

\begin{example}[Restricted Variational Oracle and Diffusion---Morales, Tlyachev and Biamonte \cite{2018arXiv180509337M}]
\label{def:restricted-variational-oracle-diffusion} As Example \ref{def:standar-oracle-variational-diffusion} we will use a computer program to find $2p$ angles $(\bm\alpha, \bm\beta) = (\alpha_1, ..., \alpha_p, \beta_1, ..., \beta_p)$ with the restriction  $\bm\beta = \bm\alpha = \alpha_1, \alpha_2, ..., \alpha_p$ and $\alpha_1 = \alpha_2 = ... = \alpha_p$.
\end{example}

Finally we consider variations of the oracle angles of the oracle and the diffusion operator separately.

\begin{example}[Variational Oracle and Diffusion---Morales, Tlyachev and Biamonte \cite{2018arXiv180509337M}]
\label{def:variationa-oracle-diffusion} As problem \ref{def:standar-oracle-variational-diffusion} we will use a computer program to find $2p$ angles $(\bm\alpha, \bm\beta) = (\alpha_1, ..., \alpha_p, \beta_1, ..., \beta_p)$.
\end{example}

\begin{remark}
We also call this last variation (Example \ref{def:variationa-oracle-diffusion}) a two-level split operator ansatz (following Morales, Tlyachev and Biamonte \cite{2018arXiv180509337M}). Note that the angles obtained in \eqref{eq:minimization} only minimize the selected cost function for a particular number of qubits. Once the number of qubits change, the angles obtained in the minimization do not necessarily give the highest probability to find the searched item. Also it's important to note that these angles are independent of $\bm\omega$, if we fix the number of qubits in the problem and run the algorithm with a particular set of angles, then these angles give the same probability no matter the $\bm\omega$ we are looking for. As stated earlier our objective is to see if variational algorithms are able to recover Grover's algorithm, for this we need a way of comparing both algorithms.
\end{remark}

Following Morales, Tlyachev and Biamonte \cite{2018arXiv180509337M}, to compare these variational algorithms with Grover's algorithm, consider the two-level split-operator ansatz case for $p=1$. Here we recover Grover's operators as the optimal solution for finding a particular string. To prove this, first note that there is only one pair of angles $(\alpha, \beta)$, so we consider \eqref{eq:operator-v}
and \eqref{eq:operator-k} directly. Since $\ketbra{\bm\omega}{\bm\omega}$ is a projector we can expand \eqref{eq:operator-v}. 
\begin{equation}
\begin{split}
\mathcal{V}(\alpha) &= e^{\imath \alpha \ket{\bm\omega}\bra{\bm\omega}} = \eye + (e^{\imath \alpha} - 1)\ket{\bm\omega}\bra{\bm\omega}
\\
&= \eye - (e^{\imath \widetilde{\alpha}} + 1)\ket{\bm\omega}\bra{\bm\omega}
\end{split}
\end{equation}
Where in the last step we have defined $\widetilde{\alpha} = \alpha - \pi $. Now we expand the unitary for the driver Hamiltonian \eqref{eq:driver}.
\begin{equation}
\begin{split}
\label{eq:driver}
\mathcal{K}(\beta)&=  e^{\imath \beta \ket{\bm s}\bra{\bm s}}\\
&= H^{\otimes n}(\eye + (e^{\imath \beta} - 1)\ket{\bm 0}\bra{\bm 0}) H^{\otimes n}\\
&\sim  H^{\otimes n}(-\eye + (e^{\imath \widetilde{\beta}} + 1)\ket{\bm 0}\bra{\bm 0}) H^{\otimes n}\\
&= (e^{\imath \widetilde{\beta}}+1)\ket{\bm s}\bra{\bm s} - \eye
\end{split}
\end{equation}
Where $\sim$ relates the equivalence class of operators indiscernible by a global phase, $H$ is the Hadamard gate and $\widetilde{\beta}= \beta  - \pi$. Notice that for $\widetilde{\alpha} = \widetilde{\beta} = 0$ Grover's oracle and diffusion operators are recovered.

Again following Morales, Tlyachev and Biamonte \cite{2018arXiv180509337M}, to see that the variational search includes Grover's operators for the case $p > 1$, let us impose $\alpha_1 = \alpha_2 = ... = \alpha_p$ and $\beta_1 = \beta_2 = ... = \beta_p$. In Figure~\ref{fig:oracle} and Figure~\ref{fig:diffusion} the circuits for the oracle and the diffusion operator respectively are shown.  If $i$ pairs of operators \eqref{eq:operator-v} and \eqref{eq:operator-k} are applied to the initial state $\ket{\bm{s}}$ as in \eqref{state_prep}, then we write the prepared state as \eqref{state2-1}. 
\begin{align}
\label{state2-1}
\ket{\varphi_i} &= A_i\frac{1}{\sqrt{N-1}}\sum_{x\neq \bm\omega} \ket{x} + B_i \ket{\bm\omega}
\\
\label{recursion1}
A_{i+1} &= \left(1-\frac{2}{N}\right)A_i - 2\frac{\sqrt{N-1}}{N}B_i
\\[6pt]
\label{recursion2}
B_{i+1} &= 2\frac{\sqrt{N-1}}{N}A_i - \left(1-\frac{2}{N}\right)B_i
\end{align}
We can relate the amplitudes of one step with the amplitudes of the next step with a recursion such as those that appear in \eqref{recursion1} and \eqref{recursion2}; we express the effect of the operators for the variational search over the state as a matrix \eqref{eq:matrix}.
\begin{align}
\label{eq:matrix}
\begin{pmatrix}
    1+\dfrac{a(N-1)}{N}    & -a(b+1)\dfrac{\sqrt{N-1}}{N}  \\[9pt]
    -a\dfrac{\sqrt{N-1}}{N}   & (b+1)\left(1+\dfrac{a}{N}\right) 
\end{pmatrix}
\end{align}
Here $a = e^{\imath \alpha} -1$ and $b = e^{\imath \beta} -1$. Notice that for $a = b = -2$ the same relation between amplitudes at different steps in \eqref{recursion1} and \eqref{recursion2} up to a global phase in the definition of the Grover operators is obtained. Thus, the variational search space includes Grover's original algorithm. From this matrix it is also possible to see that if the target state is changed, then the angles found through the variational algorithm will give the same probabilities.

\begin{remark}[Comparison with Grover---Morales, Tlyachev and Biamonte \cite{2018arXiv180509337M}]
Grover's search is provably optimal in the large size limit.  The success probability of Grover's algorithm goes from unity for two-qubits, decreases for three- and four-qubits and returns near unity for five-qubits then oscillates ever-so-close to unity, reaching unity in the infinite qubit limit. The variational approach employed here found an experimentally discernible improvement of 5.77\% and 3.95\% for three- and four-qubits respectively as shown in Figure \ref{fig:alpha3d} ~\cite{2018arXiv180509337M}. 
\end{remark}


\begin{table}[p] \setlength{\tabcolsep}{2.5pt}
\centering
\begin{tabular}{ c c c c } 
 \hline
 \hline
 $N$ & $100 \times (P_{\textrm{variational}} - P_{\textrm{Grover}})/P_{\textrm{Grover}}$ & step $p_{\textrm{max}}$  & angle\\ 
 \hline
 $2^3$ & 5.77\% & 2 & 2.12$^\text{rad}$  \\ 
 $2^4$ & 3.95\% & 3 & 2.19$^\text{rad}$\\ 
 $2^5$ & 0.08\% & 4 & 2.76$^\text{rad}$ \\
 $2^6$ & 0.34\% & 6 & 2.60$^\text{rad}$ \\
 \hline
 \hline
\end{tabular}
 \caption{Percentage increase between the highest probability for finding the solution after measurement obtained in Grover and the two-level variational ansatz \cite{2018arXiv180509337M}. Percent given as a function of $N = 2^n$ where $n$ is the number of qubits and at step $p_{\textrm{max}}$ on which the probability of finding the solution string is maximum. The same table is obtained for the two-level split-operator ansatz with one angle or with $2p$ angles. Both the diffusion and oracle use the same angle. (Table originally from \cite{2018arXiv180509337M}). 
 } 
 \label{table:comparison}
\end{table}

\begin{figure}[p!] 
\begin{center}
\mbox{
\Qcircuit @C=2.5em @R=3em {
& \gate{X^{1-\omega_1}} & \ctrl{3} & \gate{X^{1-\omega_1}} & \qw \\
& \gate{X^{1-\omega_2}} & \control \qw & \gate{X^{1-\omega_2}} & \qw \\
& \vdots &  & \vdots &  \\
& \gate{X^{1-\omega_n}} & \gate{R_{\alpha}} $e^{i\alpha}$ \qw & \gate{X^{1-\omega_n}} & \qw
}
}
\index{Quantum Circuits}
\caption{Oracle circuit used in Grover search and its variational incarnation.}
\label{fig:oracle}
\end{center}
\end{figure}
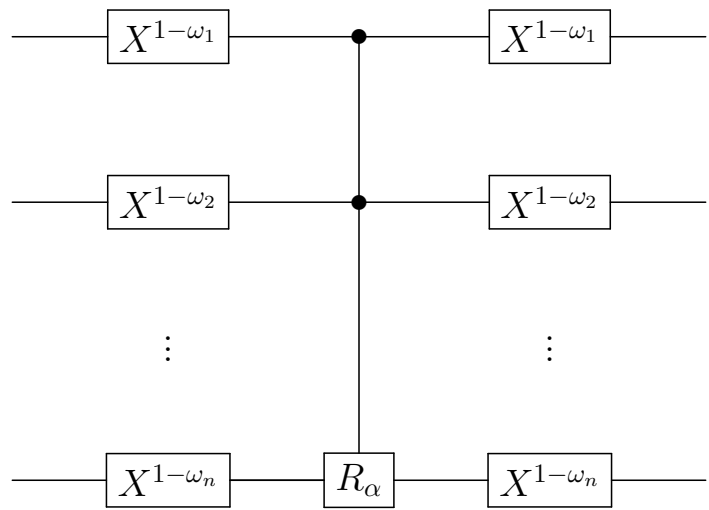

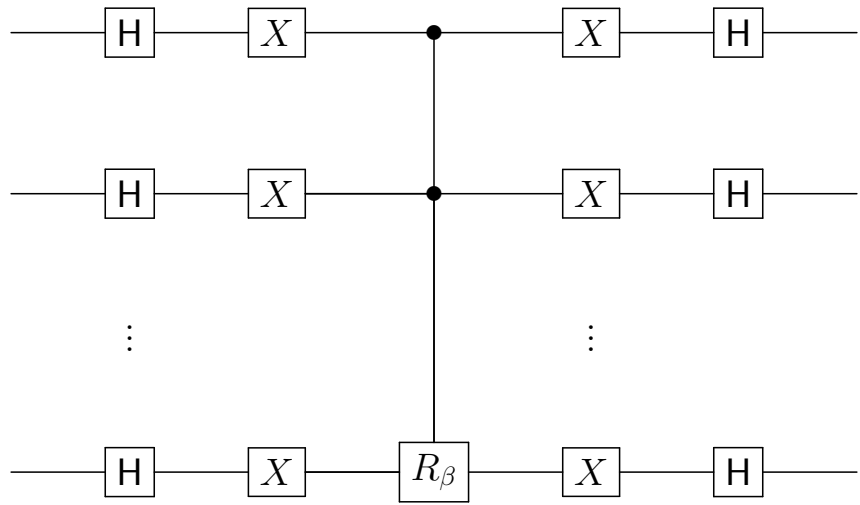
\begin{figure}[p!]
\begin{center}
\mbox{
\Qcircuit @C=2.5em @R=3em {
& \gate{\sf H} &\gate{X} & \ctrl{3} & \gate{X} &  \gate{\sf H} & \qw \\
& \gate{\sf H} &\gate{X} & \ctrl{2} \qw & \gate{X} & \gate{\sf H} & \qw \\
& \vdots & &  & \vdots &  \\
& \gate{\sf H} &\gate{X} & \gate{R_{\beta}} $e^{i\alpha}$ \qw & \gate{X} & \gate{\sf H} & \qw
}
} 
\index{Quantum Circuits}
\caption{Diffusion circuit used in Grover search and variational Grover search.}
\label{fig:diffusion}
\end{center}
\end{figure}

Realization of diffusion and oracle circuits (Figures \ref{fig:oracle} and \ref{fig:diffusion}). Oracle and diffusion operators can be rewritten via $n$-body control gates as
\begin{equation}
    \mathcal{V}(\alpha) = \bigotimes_{i=1}^{n}X_i^{1-\omega_i}\biggr(\eye_{2^n-1}\oplus e^{\imath \alpha}\biggl)\bigotimes_{i=1}^{n}X_i^{1-\omega_i}
\end{equation}
and
\begin{equation}
    \mathcal{K}(\beta) = H^{\otimes n}X^{\otimes n}\biggr(\eye_{2^n-1} \oplus e^{\imath \beta} \biggl)X^{\otimes n}H^{\otimes n}
\end{equation}
and therefore can be realized using $O(n^2)$ basic gates \cite{Barenco1995}. Here operator $\eye_{2^n-1}$ is the $(2^n-1)\times(2^n-1)$ identity matrix. (See also the gate realizations in \cite{3qbit-experiment} which can be readily bootstrapped to realize the circuits in Figures \ref{fig:oracle} and \ref{fig:diffusion}).

\begin{figure}[ht!]
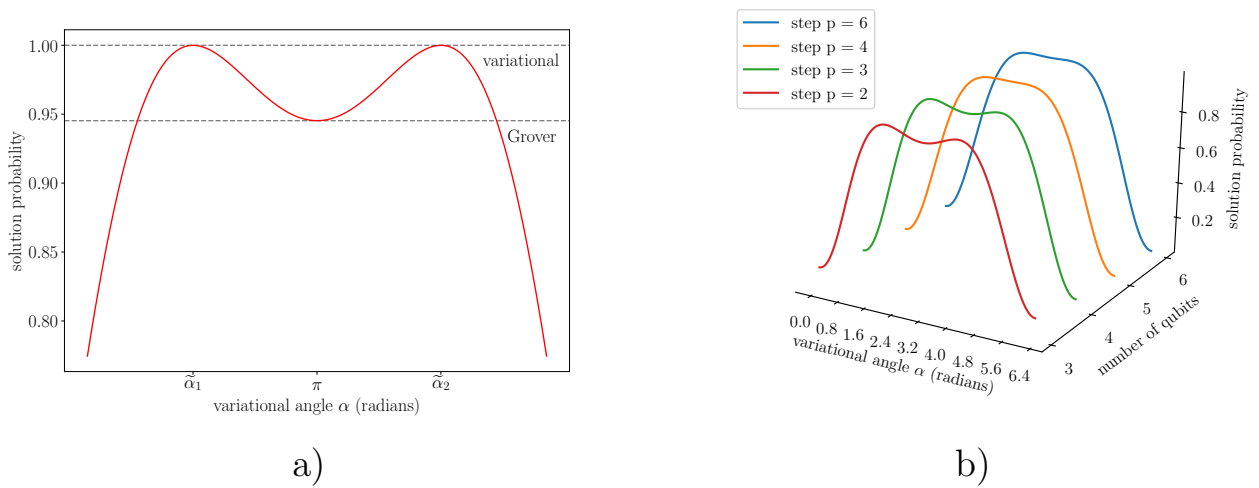

\begin{subfigure}[b]{0.49\textwidth}
        \includegraphics[width=\textwidth]{qaoa-1.pdf}
        \caption{}
    \end{subfigure}
    \begin{subfigure}[b]{0.49\textwidth}              
        \includegraphics[width=\textwidth]{qaoa-2.pdf}
        \caption{}
    \end{subfigure}
    \caption{(color online) (left) Grover's algorithm takes a saddle point between two hills.  Variational search recovers the hill peaks.  Note that the valley becomes increasingly less pronounced past four qubits, providing negligible range for improvement. (right) Probability as a function of the variational angle for the 3 qubit case. Grover's algorithm is recovered in the case $\alpha = \pi$, the variational algorithm obtains angles $\widetilde{\alpha}_{1} = 2.12^{\text{rad}}$  and $\widetilde{\alpha}_2 = 2\pi - \widetilde{\alpha}_{1}$. (Figure origonally from \cite{2018arXiv180509337M}).  }
    \label{fig:alpha3d}
\end{figure}

\section{On Kitaev's $k$-controlled $U$ factorization} 
\index{Universal Quantum Gates}

We have employed $k$-controlled gates when considering {\sf QAOA} (see Figure \ref{fig:oracle} and \ref{fig:diffusion}). A classical problem of importance is how to realize these gates as a sequence of two-body gates.  Here we were inspired by work appearing in the book \cite{KSV02} and with some small but notable improvements present a competitive method to realize such gates.  We quantify the scaling of the method exactly.  

\begin{remark}
The decomposition of arbitrary $n$-qubit gates into universal gate sets has been studied extensively \cite{Barenco1995,Vidal2004}.  Here we develop just a decomposition for $k$-controlled unitary gates.  Our method is inspired by a Toffoli and related factorization(s) appearing in the book \cite{KSV02}---see Figure \ref{fig:decomposition-2}. 
\end{remark}

\begin{remark}
The results of the method are competitive with the best contemporary findings.  See for example the results in \cite{Yu2013} which argue that a Toffoli gate requires a total of five two-qubit gates. 
\end{remark}


We will begin by recalling some basic facts about the Pauli group which will later will be used to define a new factorization for $k$-controlled $X$ gates.

\begin{remark}[Pauli group algebra]
The complete properties of the Hermitian Pauli (group-) algebra can be derived from the following identity:
\begin{equation}
    \sigma_i \sigma_j = \eye \delta_{ij} + \imath \epsilon^{ijk}\sigma_k
\end{equation}
Where $\sigma_i$ corresponds to an element in the Pauli group algebra. 
In particular it is easy to see that for $i\neq j$
\begin{equation}\label{eq:anticom}
    \{\sigma_i, \sigma_j\} = 0, 
\end{equation}
where $\{\cdot , \cdot\}$ is the anti-commutator.
\end{remark}


\begin{definition}[Group commutator]\label{def:taction}
Consider unitaries $U, K\in  \mathscr{L}\big(A\big)$. Then the group commutators of $U$ and $K$  is defined as 
\begin{equation}
    [U, K] = U K U^\dagger K^\dagger. 
\end{equation}
\end{definition} 

\begin{lemma} \label{lem:conj}
For $i\neq j$ and $t \in \mathbb{R}$
\begin{equation}\label{eq:conj}
    \sigma_i e^{-\imath t \sigma_j}\sigma_i = e^{\imath t \sigma_j}. 
\end{equation}
\end{lemma}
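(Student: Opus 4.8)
\textbf{Proof proposal for Lemma~\ref{lem:conj}.}

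The plan is to reduce the statement to the single structural fact that distinct Pauli matrices anticommute, which was recorded just above in~\eqref{eq:anticom}, namely $\{\sigma_i,\sigma_j\}=0$ for $i\neq j$. First I would recall that $\sigma_i$ is Hermitian and involutory, so $\sigma_i = \sigma_i^\dagger$ and $\sigma_i^2 = \eye$; this means conjugation by $\sigma_i$ is the map $A \mapsto \sigma_i A \sigma_i$, and it is an algebra automorphism (it preserves sums and products since $\sigma_i^2=\eye$). Consequently it commutes with forming power series, so $\sigma_i e^{-\imath t\sigma_j}\sigma_i = e^{-\imath t\,\sigma_i\sigma_j\sigma_i}$. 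The whole lemma therefore collapses to showing $\sigma_i\sigma_j\sigma_i = -\sigma_j$ for $i\neq j$.

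That last identity is immediate from anticommutation: from $\sigma_i\sigma_j = -\sigma_j\sigma_i$ we multiply on the right by $\sigma_i$ to get $\sigma_i\sigma_j\sigma_i = -\sigma_j\sigma_i^2 = -\sigma_j$. Substituting back gives $\sigma_i e^{-\imath t\sigma_j}\sigma_i = e^{\imath t\sigma_j}$, which is exactly~\eqref{eq:conj}.

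If a more self-contained exponential-level argument is preferred (avoiding the ``automorphism commutes with power series'' step), I would instead use the closed form from the earlier Example on exact solutions of the T.I.S.E.: since $\sigma_j^2 = \eye$, one has $e^{-\imath t\sigma_j} = \eye\cos t - \imath\,\sigma_j\sin t$. Then $\sigma_i e^{-\imath t\sigma_j}\sigma_i = \sigma_i^2\cos t - \imath\,\sigma_i\sigma_j\sigma_i\sin t = \eye\cos t + \imath\,\sigma_j\sin t = e^{\imath t\sigma_j}$, again using $\sigma_i\sigma_j\sigma_i = -\sigma_j$ and $\sigma_i^2 = \eye$. Both routes are short; there is no real obstacle here. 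The only point requiring a sentence of care is justifying that conjugation passes through the exponential series term-by-term (or, on the alternative route, invoking the finite trigonometric expansion, which sidesteps that entirely), so I would phrase the writeup via the second route to keep it fully elementary.
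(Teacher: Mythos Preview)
Your proposal is correct, and your second route---expanding $e^{-\imath t\sigma_j}=\eye\cos t-\imath\,\sigma_j\sin t$ and then using $\sigma_i\sigma_j\sigma_i=-\sigma_j$ from anticommutation---is exactly the argument the paper gives. Your first route via the automorphism passing through the power series is a sound alternative, but since you indicate you would write up the second route, there is nothing to add.
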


\begin{proof}
Consider that
\begin{equation}
    \sigma_i e^{-\imath t \sigma_j}\sigma_i = \cos (t) \eye - \imath \sin (t) \sigma_i \sigma_j \sigma_i.  
\end{equation}
Using the anti-commutator property of \eqref{eq:anticom}, we establish that $\sigma_i \sigma_j \sigma_i  = -\sigma_j$. Hence, 
\begin{equation}
    \sigma_i e^{-\imath t \sigma_j}\sigma_i = \cos (t) \eye + \imath \sin (t) \sigma_j = e^{\imath t \sigma_j}. 
\end{equation}
\end{proof}

\begin{lemma}\label{lem:dag} 
For $i\neq j$ and using \eqref{eq:conj}
\begin{equation}
    [\sigma_i, e^{-\imath t \sigma_j}] = \sigma_i e^{-\imath t \sigma_j}\sigma_i e^{\imath t \sigma_j} = e^{\imath 2 t \sigma_j}. 
\end{equation}
\end{lemma}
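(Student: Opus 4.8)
The plan is to reduce the statement directly to the two facts already in hand: Definition~\ref{def:taction} (the group commutator) and Lemma~\ref{lem:conj} (the conjugation identity $\sigma_i e^{-\imath t \sigma_j}\sigma_i = e^{\imath t \sigma_j}$ for $i\neq j$). There is essentially no analytic content left to supply; everything is bookkeeping with the fact that the single-qubit Pauli matrices are Hermitian and unitary.

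First I would unpack the group commutator using Definition~\ref{def:taction} with $U=\sigma_i$ and $K=e^{-\imath t \sigma_j}$. Since $\sigma_i^\dagger=\sigma_i$ and $(e^{-\imath t \sigma_j})^\dagger = e^{\imath t \sigma_j}$ (as $\sigma_j$ is self-adjoint), this gives
\begin{equation}
[\sigma_i, e^{-\imath t \sigma_j}] = \sigma_i\, e^{-\imath t \sigma_j}\, \sigma_i^\dagger\, (e^{-\imath t \sigma_j})^\dagger = \sigma_i\, e^{-\imath t \sigma_j}\, \sigma_i\, e^{\imath t \sigma_j},
\end{equation}
which is exactly the middle expression in the Lemma.

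Next I would apply Lemma~\ref{lem:conj} to the first three factors, replacing $\sigma_i e^{-\imath t \sigma_j}\sigma_i$ by $e^{\imath t \sigma_j}$, leaving
\begin{equation}
[\sigma_i, e^{-\imath t \sigma_j}] = e^{\imath t \sigma_j}\, e^{\imath t \sigma_j} = e^{\imath 2 t \sigma_j},
\end{equation}
where the last equality is just additivity of exponents for commuting operators (both factors are functions of the single operator $\sigma_j$). This completes the argument.

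The ``main obstacle'' is really only cosmetic: one should make sure the adjoint bookkeeping is stated cleanly (that $\sigma_i$ is its own inverse and that conjugating the exponential by $\sigma_i$ is the same as negating $t$, which is precisely the content of the already-proved Lemma~\ref{lem:conj}). Beyond that the result is immediate, so I would keep the proof to the three displayed lines above.
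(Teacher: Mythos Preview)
Your proposal is correct and is essentially the same as the paper's proof: apply Lemma~\ref{lem:conj} to replace $\sigma_i e^{-\imath t \sigma_j}\sigma_i$ by $e^{\imath t \sigma_j}$, then combine the two commuting exponentials. You are slightly more explicit in justifying the first equality via Definition~\ref{def:taction} and the Hermiticity/unitarity of the Pauli matrices, whereas the paper treats that equality as part of the lemma statement and only writes out the second step.
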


\begin{proof}
From Lemma \ref{lem:conj}, we establish that
\begin{equation}
   \sigma_i e^{-\imath t \sigma_j}\sigma_i e^{\imath t \sigma_j} = e^{\imath  t \sigma_j} e^{\imath  t \sigma_j}   =e^{\imath 2 t \sigma_j}. 
\end{equation}
\end{proof}

Now that we have established the necessary facts of the Pauli group algebra, we define the notation that we will use for the decomposition.

\begin{definition}[Controlled gates]\label{def:ctrl-notation}
Denote by $U^i_j$ the gate $U$ acting on qubit $j$ and controlled by qubit $i$.  Specifically let $X$ be the {\sf NOT} gate and $V$ be its square-root. Then $X^i_j$ is the {\sf CN} gate and $V^i_j$ is the controlled-$V$ gate (it is also true that $V^i_j = \sqrt{X}^i_j$). We will denote qubit with label $j$ to initially be in state $\ket{j}$. If the control qubits are given by indices $i_1$, ..., $i_k$ and the target qubit is $j$, then we denote this gate as $U^{i_1 i_2 \cdots i_k}_j$.
\end{definition}

\begin{remark}[Unwanted phase factor]
One needs to keep in mind that $X, V \notin ~{\bf SU}(2)$. Indeed, $\det X = -1, \ \det V = \imath$. Exponentials as $e^{\imath t \sigma_i}$ all belong to ${\bf SU}(2)$ and cannot emulate $X$ or its roots without adding an extra multiplier. For this reason, there will be trailing phase multipliers in what follows.  Kitaev's gate choice also had this feature \cite{KSV02}.
\end{remark}

Before fully defining our decomposition, we start by showing a specific case, the Toffoli gate (Figure \ref{fig:decomposition-2}) and the $k$-controlled $X$ gate (Figure \ref{fig:decomposition-k}).

\begin{lemma}[Toffoli gate]\label{lem:toffoli}
The Toffoli gate, or using notation of Definition \ref{def:ctrl-notation}, the gate $X^{a_1 a_2}_b$,  has a factorization into four control gates as 
\begin{equation}
    X^{a_1 a_2}_b = V^{a_1}_b Z^{a_2}_b (V^\dagger)^{a_1}_b Z^{a_2}_b.
\end{equation}
The circuit for the decomposition of the Toffoli gate is shown in Figure \ref{fig:decomposition-2}. Note that from the previous decomposition, we can also give a factorization for the gate $Z^{a_1 a_2}_b$ by multiplying from and left and right by a Hadamard gate acting on qubit $b$ as  
\begin{equation}
Z^{a_1 a_2}_b = \sqrt{Z}^{a_1}_b X^{a_2}_b  (\sqrt{Z}^\dagger)^{a_1}_b X^{a_2}_b.  
\end{equation}
\end{lemma}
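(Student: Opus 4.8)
The plan is to verify directly that the four-gate word on the right-hand side implements the doubly-controlled $X$ gate, using the Pauli-algebra facts assembled just before the statement (Lemmas~\ref{lem:conj} and \ref{lem:dag}). First I would write $V = e^{-\imath (\pi/4)\sigma_x}$ up to a phase, so that $V^\dagger = e^{\imath (\pi/4)\sigma_x}$ on the target qubit $b$, and recall that $Z^{a_2}_b$ acts as $\openone$ when qubit $a_2$ is $\ket{0}$ and as $\sigma_z$ on qubit $b$ when $a_2$ is $\ket{1}$. Likewise $V^{a_1}_b$ is $\openone$ when $a_1=\ket 0$ and $V$ on $b$ when $a_1=\ket 1$.

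The core of the argument is a four-way case split on the control values $(a_1,a_2)\in\{0,1\}^2$. If $a_1=0$ then $V^{a_1}_b$ and $(V^\dagger)^{a_1}_b$ are both trivial, and the product collapses to $Z^{a_2}_b Z^{a_2}_b = \openone$ (since $\sigma_z^2=\openone$), giving the identity on $b$, as required whenever $a_1=0$. If $a_1=1$ and $a_2=0$, then the two $Z^{a_2}_b$ factors are trivial, and the product collapses to $V^{a_1}_b (V^\dagger)^{a_1}_b = V V^\dagger = \openone$ on $b$, again the identity. The only nontrivial case is $a_1=a_2=1$: here the product acting on $b$ is
\begin{equation}
V\,\sigma_z\,V^\dagger\,\sigma_z = \sigma_z\,(\,\sigma_z V \sigma_z\,)\,V^\dagger = \sigma_z\, V^\dagger\, V^\dagger,
\end{equation}
where I used that $\sigma_z$ anticommutes with $\sigma_x$ so $\sigma_z e^{-\imath t\sigma_x}\sigma_z = e^{\imath t\sigma_x}$ (this is exactly Lemma~\ref{lem:conj} with $i=z$, $j=x$). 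Since $V^\dagger V^\dagger = e^{\imath (\pi/2)\sigma_x}$, and $e^{\imath (\pi/2)\sigma_x} = \imath\sigma_x$, we get $\sigma_z \cdot \imath\sigma_x = \imath\,\sigma_z\sigma_x$, which equals $\sigma_x$ up to a unit-modulus phase (indeed $\sigma_z\sigma_x = \imath\sigma_y$, so the whole thing is $-\sigma_y\cdot$… — the precise residual phase is the ``trailing phase multiplier'' flagged in the Remark on unwanted phase factors). So on the $(1,1)$ branch the word acts as $X$ on $b$ up to a global phase, which is the desired behaviour of the Toffoli gate. Combining the four branches establishes $X^{a_1a_2}_b = V^{a_1}_b Z^{a_2}_b (V^\dagger)^{a_1}_b Z^{a_2}_b$ up to the stated scalar gauge.

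For the second identity I would simply conjugate the target leg by a Hadamard: since $H X H = Z$ and $H$ fixes the controls, sandwiching the whole decomposition between $H_b$ on the left and right turns $X^{a_1a_2}_b$ into $Z^{a_1a_2}_b$, turns $V=\sqrt{X}$ on $b$ into $H V H = \sqrt{Z}$ on $b$, and turns $Z^{a_2}_b$ into $H Z H$ controlled on $a_2$, i.e. $X^{a_2}_b$; this yields $Z^{a_1a_2}_b = \sqrt Z^{a_1}_b X^{a_2}_b (\sqrt Z^\dagger)^{a_1}_b X^{a_2}_b$. The main obstacle — really a bookkeeping nuisance rather than a genuine difficulty — is keeping track of the residual global phase on the $(1,1)$ branch, since $X$ and $V$ live outside $\mathbf{SU}(2)$; I would handle this by working throughout in the scalar gauge $\mathbb{C}_{/\{0\}}\to 1$ of Definition~\ref{def:gauge}, so that the phase is absorbed and the equalities hold on the nose as circuit identities.
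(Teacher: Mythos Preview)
Your overall strategy matches the paper's exactly: case-split on $(a_1,a_2)\in\{0,1\}^2$, observe that three of the four branches collapse to the identity, and then use Lemma~\ref{lem:conj}/\ref{lem:dag} on the $(1,1)$ branch. The treatment of the $Z^{a_1a_2}_b$ identity via Hadamard conjugation is also fine.

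However, your $(1,1)$ computation contains a real error, not just a bookkeeping nuisance. The rearrangement
\[
V\,\sigma_z\,V^\dagger\,\sigma_z \;=\; \sigma_z\,(\sigma_z V \sigma_z)\,V^\dagger
\]
is false: the right-hand side equals $\sigma_z^2\, V\,\sigma_z\,V^\dagger = V\,\sigma_z\,V^\dagger$, which is missing the trailing $\sigma_z$. Following your chain you land on $-\sigma_y$, and then assert this is ``$\sigma_x$ up to a unit-modulus phase'' --- but distinct Pauli matrices are never scalar multiples of one another, so this cannot be absorbed into the scalar gauge. The correct move is simply to group as $V\,(\sigma_z V^\dagger \sigma_z)$ and apply Lemma~\ref{lem:conj} directly: $\sigma_z V^\dagger \sigma_z = \sigma_z\, e^{\imath t\sigma_x}\,\sigma_z = e^{-\imath t\sigma_x} = V$, hence $V\sigma_z V^\dagger \sigma_z = V^2 = e^{-\imath(\pi/2)\sigma_x} = -\imath\,\sigma_x$, which \emph{is} $X$ up to the expected phase. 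Equivalently, this is exactly Lemma~\ref{lem:dag} with $\sigma_i=Z$, $\sigma_j=X$, $t=\pi/4$ (read as the inverse of $[Z,V]$), which is how the paper phrases it.
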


\begin{proof}
All gates in the factorization act in part on the same qubit $b$. We just need to check the resulting operator over qubit $b$ given the different combinations for $a_1,a_2 \in \{0,1\}^2$. It is clear that for $a_1,a_2 \in \{00,01,10\}$ the result is an identity operation on qubit $b$. To prove that $VZV^\dagger Z = X$, consider Lemma \ref{lem:dag}. The result follows by replacing $\sigma_i = Z$, $\sigma_i = X$ and $t = \pi/4$.
\end{proof}

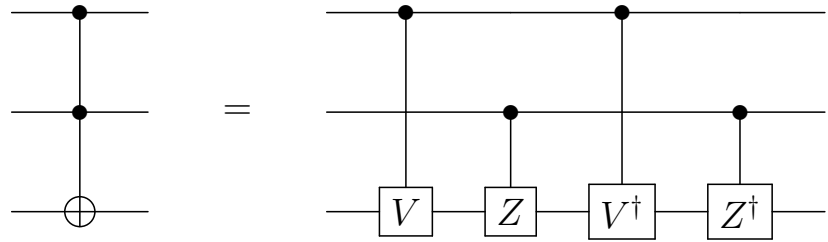
\begin{figure}[ht!]
\begin{center}
\mbox{
\Qcircuit @C=1.4em @R=1.2em @!R {
& \ctrl{1} & \qw & & & \ctrl{2} & \qw & \ctrl{2} & \qw & \qw \\
& \ctrl{1} & \qw & \push{\rule{.3em}{0em}=\rule{.3em}{0em}} & &  \qw &  \ctrl{1} & \qw & \ctrl{1} & \qw \\
& \targ & \qw    & & & \gate{V} &  \gate{Z} & \gate{V^{\dagger}} & \gate{Z^{\dagger}} & \qw
}
}
\caption{Kitaev decomposition for 2-controlled (Toffoli) gate. The structure of the decomposition appeared in \cite{KSV02}, where the gates $(V, Z)$ are chosen here specifically as they satisfy the group commutator relation. [Note that the decomposition introduces a factor of $\imath$ in front of the controlled $X$. This requires slight modification of algorithms in applications.]
}
\label{fig:decomposition-2}
\end{center}
\end{figure}

\begin{lemma}\label{lem:general decomposition X}
For a k-controlled $X^{a_1,a_2,...,a_k}_b$ gate, we have the following decomposition. Assume first that $k$ is even, thus $k=2q$ with $q$ a natural number. Then consider the following factorization \eqref{eqn:factor1}.
\begin{equation}\label{eqn:factor1}
    X^{a_1...a_{2q}}_b = V^{a_1...a_q}_b Z^{a_{q+1}...a_{2q}}_b (V^\dagger)^{a_1...a_q}_b Z^{a_{q+1}...a_{2q}}_b 
\end{equation}
If $k$ is odd, then $k = 2q+1$. In this case we consider the following factorization \eqref{eqn:factor2}.
\begin{equation}\label{eqn:factor2}
    X^{a_1...a_{2q+1}}_b = V^{a_1...a_q}_b Z^{a_{q+1}...a_{2q+1}}_b (V^\dagger)^{a_1...a_q}_b Z^{a_{q+1}...a_{2q+1}}_b 
\end{equation}

The circuits for $3$- and $k$-controlled $X$ gate are shown in Figures \ref{fig:decomposition-3} and \ref{fig:decomposition-k}.

\begin{figure}[ht!]
\begin{center}
\mbox{
\Qcircuit @C=1.4em @R=1.2em @!R {
& \ctrl{1} & \qw & & & \ctrl{1} & \qw & \ctrl{1} & \qw & \qw \\
& \ctrl{1} & \qw & \push{\rule{.3em}{0em}=\rule{.3em}{0em}} & &  \ctrl{2} &  \qw & \ctrl{2} & \qw & \qw \\
& \ctrl{1} & \qw    & & & \qw &  \ctrl{1} & \qw & \ctrl{1} & \qw \\
& \targ & \qw    & & & \gate{V} &  \gate{Z} & \gate{V^{\dagger}} & \gate{Z^{\dagger}} & \qw
}
}
\caption{Decomposition for 3-controlled gate. [Note that the decomposition introduces a factor of $\imath$ in front of the controlled $X$. This requires slight modification of algorithms in applications.]
}
\label{fig:decomposition-3}
\end{center}
\end{figure}
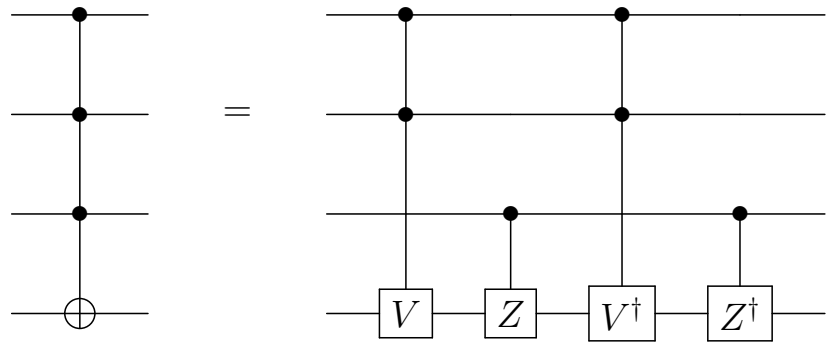

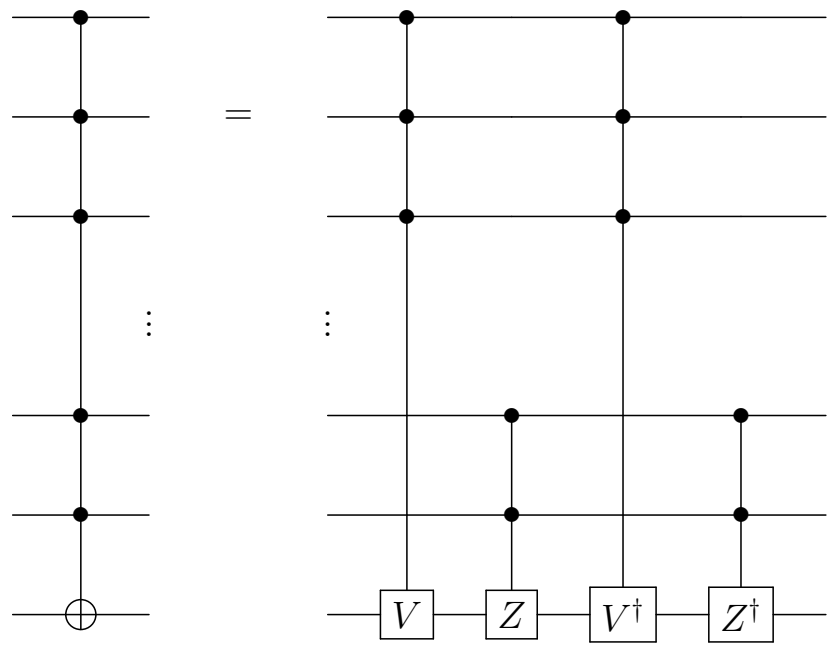
\begin{figure}[ht!]
\begin{center}
\mbox{
\Qcircuit @C=1.4em @R=1.2em @!R {
& \ctrl{1} & \qw    & & & \ctrl{1} & \qw & \ctrl{1} & \qw & \qw \\
& \ctrl{1} & \qw &  \push{\rule{.3em}{0em}=\rule{.3em}{0em}} & &  \ctrl{1} &  \qw & \ctrl{1} & \qw & \qw \\
& \ctrl{2} & \qw    & & & \ctrl{4} &  \qw & \ctrl{4} & \qw & \qw \\
&  & \vdots & & \vdots & \\
& \ctrl{1} & \qw         & & & \qw & \ctrl{1} & \qw & \ctrl{1} & \qw\\
& \ctrl{1} & \qw         & & & \qw & \ctrl{1} & \qw & \ctrl{1} & \qw\\
& \targ & \qw            & & & \gate{V} &  \gate{Z} & \gate{V^{\dagger}} & \gate{Z^{\dagger}} & \qw
}
}
\caption{Decomposition for $k$-controlled gate. [Note that the decomposition introduces a factor of $\imath$, yielding a controlled $\imath X$. This requires slight modification of algorithms in applications.]
}
\label{fig:decomposition-k}
\end{center}
\end{figure}

\end{lemma}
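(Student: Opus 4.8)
The strategy is to reduce the $k$-controlled case to exactly the same algebraic identity already established for the Toffoli gate in Lemma \ref{lem:toffoli}, namely $V Z V^\dagger Z = X$ with $V = \sqrt{X}$ chosen so that $(V,Z)$ satisfies the group-commutator relation of Lemma \ref{lem:dag} (with $\sigma_i = Z$, $\sigma_j = X$, $t = \pi/4$). The key observation is that in both factorizations \eqref{eqn:factor1} and \eqref{eqn:factor2}, every one of the four gates $V^{a_1\cdots a_q}_b$, $Z^{a_{q+1}\cdots}_b$, $(V^\dagger)^{a_1\cdots a_q}_b$, $Z^{a_{q+1}\cdots}_b$ acts nontrivially on the \emph{same} target qubit $b$, and the control registers are disjoint: the first and third gates are controlled by $\{a_1,\dots,a_q\}$ while the second and fourth are controlled by the complementary set ($\{a_{q+1},\dots,a_{2q}\}$ in the even case, $\{a_{q+1},\dots,a_{2q+1}\}$ in the odd case). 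So the whole product is block-diagonal over the $2^k$-dimensional control space, and it suffices to analyze the induced operator on qubit $b$ for each assignment of the classical control bits.

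First I would fix an arbitrary control assignment $(a_1,\dots,a_k)\in\{0,1\}^k$ and split it into the two groups according to the chosen partition. There are exactly three cases. (i) If not all of $a_1,\dots,a_q$ equal $1$, then $V^{a_1\cdots a_q}_b$ and $(V^\dagger)^{a_1\cdots a_q}_b$ both act as the identity on $b$, so the product collapses to $Z^{\cdots}_b \cdot \eye \cdot Z^{\cdots}_b$; this is either $\eye$ (if the second group is not all ones) or $Z^2 = \eye$ (if it is) — in every subcase the identity, which is the correct behaviour since the full $k$-fold {\sf AND} of the control bits is $0$. (ii) Symmetrically, if all of $a_1,\dots,a_q$ equal $1$ but not all of the second group equals $1$, the two $Z$ gates act as identity and we get $V^\dagger V = \eye$ on $b$, again correct. (iii) If \emph{all} $k$ control bits equal $1$, then all four gates fire and the induced operator on $b$ is precisely $V Z V^\dagger Z$, which by Lemma \ref{lem:dag} equals $e^{\imath 2t\sigma_j}$ with $t=\pi/4$, i.e.\ $e^{\imath\pi X/2} = \imath X$. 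This reproduces the $k$-controlled $X$ up to the stated trailing phase factor of $\imath$, exactly as flagged in the figure captions and in the Toffoli case. The even/odd distinction in \eqref{eqn:factor1} versus \eqref{eqn:factor2} changes only the sizes $q$ and $k-q$ of the two control groups and nothing in the argument above; the accompanying $Z^{a_1a_2}_b$ factorization follows by conjugating with $H$ on qubit $b$ and using $HXH=Z$, $H V H = \sqrt{Z}$, as in Lemma \ref{lem:toffoli}.

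The only real subtlety — and the step I expect to need the most care — is the recursive unpacking of what the notation $V^{a_1\cdots a_q}_b$ actually \emph{means} as a two-body circuit: it is itself a $q$-controlled gate that must be expanded by reapplying Lemma \ref{lem:general decomposition X} (splitting $q$ again into halves, and so on), and one should verify that this recursion bottoms out correctly and that the phase factors introduced at each level of recursion are tracked consistently so that the final accumulated multiplier is a single $\imath$ rather than some uncontrolled power of $\imath$. I would handle this by induction on $k$: the base cases $k=1$ ({\sf CN}) and $k=2$ (Lemma \ref{lem:toffoli}) are established, and the inductive step uses the three-case analysis above together with the inductive hypothesis applied to the two strictly smaller controlled gates $V^{a_1\cdots a_q}$ and $Z^{a_{q+1}\cdots}$. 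The depth and two-body-gate count then satisfy an obvious recurrence in $k$, which I would state but not belabor, since the combinatorics of the count is routine once the correctness of the identity is in hand.
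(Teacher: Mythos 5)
Your proposal is correct and follows essentially the same route as the paper: a case analysis over all control bitstrings $a_1\cdots a_k\in\{0,1\}^k$ that reduces the only nontrivial case to the group-commutator identity $VZV^\dagger Z$ of Lemma \ref{lem:dag} (exactly as in the Toffoli case, Lemma \ref{lem:toffoli}), with the trailing factor of $\imath$ acknowledged as in the figure captions. Your extra remarks on unpacking the multi-controlled $V$ and $Z$ recursively and tracking phases by induction go beyond what the paper writes, but they refine rather than change the argument.
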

\begin{proof}
The proof follows in the same way as in Lemma \ref{lem:toffoli} but considering all possible bitstrings $a_1 ... a_k \in \{0,1\}^k$.
\end{proof}

Now we show the decomposition proposed for general $k$-controlled rotations in the $X$ and $Z$ axis (up to of course a controlled phase factor $\imath$). Note that having defined a decomposition for this rotations, we can define a general decomposition for any $k$-controlled unitary.

\begin{lemma}\label{lem:general decomposition rotation}

For a k-controlled $X$ rotation acting on qubit $b$, $R_x(\theta)^{a_1,a_2,...,a_k}_b$, a factorization for $k$ even follows as \eqref{eqn:decomp1} where $k=2q$.
\begin{equation}\label{eqn:decomp1}
    R_x(\theta)^{a_1..a_{2q}}_b = R_x(\theta/2)^{a_1...a_q}_b Z^{a_{q+1}...a_{2q}}_b R_x(-\theta/2)^{a_1...a_q}_b Z^{a_{q+1}...a_{2q}}_b   
\end{equation}
For the $k$ odd case we have the following \eqref{eqn:decomp2}.
\begin{equation}\label{eqn:decomp2}
    R_x(\theta)^{a_1...a_{2q+1}}_b = R_x(\theta/2)^{a_1...a_q}_b Z^{a_{q+1}...a_{2q+1}}_b R_x(-\theta/2)^{a_1...a_q}_b Z^{a_{q+1}...a_{2q+1}}_b  
\end{equation}
\end{lemma}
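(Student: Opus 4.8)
The plan is to follow verbatim the template of Lemma~\ref{lem:toffoli} and Lemma~\ref{lem:general decomposition X}: every factor on the right-hand side of \eqref{eqn:decomp1} and \eqref{eqn:decomp2} acts non-trivially only on the target qubit $b$, so the asserted identity is equivalent to a finite case check over the $2^{k}$ assignments of the control bits $a_1,\dots,a_k\in\{0,1\}^{k}$, verifying in each case that the operator induced on $b$ coincides with the operator induced by $R_x(\theta)^{a_1\cdots a_k}_b$. Using the notation of Definition~\ref{def:ctrl-notation}, I would group the controls exactly as in the statement: the first and third factors controlled by $a_1,\dots,a_q$, the two $Z$-factors controlled by the remaining $q$ (respectively $q+1$, in the odd case) indices.

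Next I would dispatch the cases where some control is zero. If some control among $a_1,\dots,a_q$ equals $0$, then $R_x(\theta/2)^{a_1\cdots a_q}_b = R_x(-\theta/2)^{a_1\cdots a_q}_b = \eye$ on $b$, so the product collapses to $Z^{a_{q+1}\cdots}_b\,Z^{a_{q+1}\cdots}_b$, which is $\eye$ (either both copies are $\eye$, or else $Z_b^2=\eye$). If instead $a_1,\dots,a_q$ are all $1$ but some later control is $0$, the two $Z$-factors become $\eye$ and the product is $R_x(\theta/2)R_x(-\theta/2)=\eye$ on $b$. In every such case the induced operator is the identity, which is precisely what $R_x(\theta)^{a_1\cdots a_k}_b$ does whenever a control is $0$; this is the only point at which the parity of $k$ enters, and \eqref{eqn:decomp2} is handled identically with $q+1$ controls on the $Z$-factors.

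The one genuine computation is the fully-controlled block. Writing $R_x(\pm\theta/2) = e^{\mp\imath\theta X/4}$ and invoking Lemma~\ref{lem:conj} (equivalently the anticommutation \eqref{eq:anticom}, so that $ZXZ=-X$), one gets $Z\,e^{\imath\theta X/4}\,Z = e^{-\imath\theta X/4}$, hence
\begin{equation}
R_x\!\left(\tfrac{\theta}{2}\right) Z\, R_x\!\left(-\tfrac{\theta}{2}\right) Z
  = e^{-\imath\theta X/4}\bigl(Z\,e^{\imath\theta X/4}\,Z\bigr)
  = e^{-\imath\theta X/4}\,e^{-\imath\theta X/4}
  = e^{-\imath\theta X/2}
  = R_x(\theta),
\end{equation}
which is just Lemma~\ref{lem:dag} read with $\sigma_i = Z$, $\sigma_j = X$, $t = \theta/4$. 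Here the identity holds on the nose: unlike the $V=\sqrt{X}$ factorizations \eqref{eqn:factor1}--\eqref{eqn:factor2}, where the trailing $\imath$ arises because $R_x(\pi) = -\imath X$, the rotation version carries no residual phase. Conjugating the whole identity by a Hadamard on $b$ (using $HXH=Z$) yields the stated $R_z$ analogue, and then an Euler decomposition $U \sim R_z R_x R_z$ of an arbitrary single-qubit unitary --- with the leftover global phase folded into a $k$-controlled phase gate acting on the control register --- reduces any $k$-controlled unitary to these ingredients.

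The step I expect to demand the most care is not any single verification but the recursion together with its cost accounting: each auxiliary factor $Z^{a_{q+1}\cdots}_b$ is itself a $\lceil k/2\rceil$-controlled $Z$ (a same-sized controlled-$X$ up to Hadamards on $b$), so \eqref{eqn:decomp1}/\eqref{eqn:decomp2} must be applied recursively, and one must track that the number of controls roughly halves at each level in order to extract the exact two-body gate count advertised for the method. I would therefore make the recursion and its base cases explicit ($k=1$, the bare controlled gate; $k=2$, Lemma~\ref{lem:toffoli}) before unrolling it and collecting the count.
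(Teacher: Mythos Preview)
Your approach is correct and essentially identical to the paper's: case-check over all control bitstrings, with the only nontrivial case $a_1\cdots a_k=1\cdots1$ reducing to the group-commutator identity of Lemma~\ref{lem:dag} (via Lemma~\ref{lem:conj}) with $\sigma_i=Z$, $\sigma_j=X$, $t=\theta/4$. Your treatment of the zero-control cases, the phase remark, the $R_z$ variant, and the recursion/cost discussion are all fine but go beyond what this lemma requires---the paper defers those to a subsequent remark and the lemmas on $g(k)$.
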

\begin{proof}
As in Lemma \ref{lem:general decomposition X}, we check for all possible bitstrings $a_1 ... a_k \in \{0,1\}^k$. The important case is when $a_1...a_k = 111...1$. Lemma \ref{lem:dag} proves that
$$R_x(\theta) = R_x(\theta/2) Z R_x(-\theta/2) Z $$
which verifies the given decomposition.
\end{proof}

\begin{remark}
From these decompositions \eqref{eqn:decomp1} and \eqref{eqn:decomp2}, we can also obtain similar results for $R_z(\phi)^{a_1,a_2,...,a_k}_b$.
\end{remark}

\begin{lemma}
The number of $1$-controlled gates needed to implement a $k$-controlled $X$ gate is given by the function $g(k)$ such that  $g(1)=1$ and defined recursively \eqref{eq:recursion_formula} or equivalently 
\eqref{eq:recursion_formula2}.
\begin{equation} \label{eq:recursion_formula}
g(k)=\begin{cases}4g(k/2)&k \text{ even}\\ 2g(\lfloor k/2 \rfloor) + 2g(\lfloor k/2 \rfloor +1)&k \text{ odd}\end{cases}
\end{equation}
\begin{equation}\label{eq:recursion_formula2}
    g(k) = 2g(\lfloor k/2 \rfloor) + 2g(\lceil k/2 \rceil)
\end{equation}
\end{lemma}

\begin{proof}
We just need to prove this by induction. For the initial condition we have that the $2$-controlled gates are counted correctly. Assume the property holds for $1$, $2$, \ldots, $k-1$ and let us prove it for $k$. We want to prove that $g(k)$ counts the number of $2$-controlled gates needed to implement the $k$-controlled $X$ gate. By Lemma \ref{lem:general decomposition rotation} we arrive at the decomposition for $k$ even \eqref{eqn:decomp1} where $k=2q$.
\begin{equation}\label{eqn:decomp1b}
    R_x(\theta)^{a_1...a_{2q}}_b = R_x(\theta/2)^{a_1...a_q}_b Z^{a_{q+1}...a_{2q}}_b R_x(-\theta/2)^{a_1...a_q}_b Z^{a_{q+1}...a_{2q}}_b  
\end{equation}
For $k$ odd, we arrive at the following \eqref{eqn:decomp2}.
\begin{equation}\label{eqn:decomp2b}
    R_x(\theta)^{a_1...a_{2q+1}}_b = R_x(\theta/2)^{a_1...a_q}_b Z^{a_{q+1}...a_{2q+1}}_b R_x(-\theta/2)^{a_1...a_q}_b Z^{a_{q+1}...a_{2q+1}}_b 
\end{equation}
For $k$ even,  $g(k) = 4 g(k/2)$. Comparing with the decomposition from Lemma \ref{lem:general decomposition rotation}, we see that it counts correctly two-body gates. The same reasoning applies for $k$ odd.
\end{proof}

\begin{lemma}[Solution of recursion \eqref{eq:recursion_formula2}]
The solution for the recurrence of $g(k)$, defined in  \eqref{eq:recursion_formula} has the form 
\begin{equation}
f(k) = 3k2^{\lfloor log_2 k \rfloor} - 2^{1+2\lfloor log_2 k \rfloor}. 
\end{equation} 
\end{lemma}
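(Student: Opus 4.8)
The plan is to prove $g(k)=f(k)$ for all $k\ge 1$ by strong induction on $k$, where $g$ is the quantity defined recursively in the previous lemma (equivalently \eqref{eq:recursion_formula} or \eqref{eq:recursion_formula2}) and $f(k)=3k\,2^{\lfloor\log_2 k\rfloor}-2^{1+2\lfloor\log_2 k\rfloor}$. The base case $k=1$ is the direct check $f(1)=3\cdot 1\cdot 2^0-2^1=1=g(1)$. For the inductive step I fix $k\ge 2$, set $m=\lfloor\log_2 k\rfloor\ge 1$ (so $2^m\le k\le 2^{m+1}-1$), and use the symmetric form $g(k)=2g(\lfloor k/2\rfloor)+2g(\lceil k/2\rceil)$, applying the induction hypothesis to the two strictly smaller arguments $\lfloor k/2\rfloor<k$ and $\lceil k/2\rceil<k$.

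The arithmetic heart of the argument is determining which value of $\lfloor\log_2(\cdot)\rfloor$ appears when the hypothesis is invoked. One checks the elementary range facts $2^{m-1}\le\lfloor k/2\rfloor\le 2^m-1$, valid for every $k$ with $\lfloor\log_2 k\rfloor=m$, so that $\lfloor\log_2\lfloor k/2\rfloor\rfloor=m-1$ always; and $2^{m-1}\le\lceil k/2\rceil\le 2^m$, so that $\lfloor\log_2\lceil k/2\rceil\rfloor=m-1$ except in the single boundary case $k=2^{m+1}-1$, where $\lceil k/2\rceil=2^m$ and this exponent jumps to $m$. In the generic case both applications give $g(\lfloor k/2\rfloor)=3\lfloor k/2\rfloor\,2^{m-1}-2^{2m-1}$ and $g(\lceil k/2\rceil)=3\lceil k/2\rceil\,2^{m-1}-2^{2m-1}$; adding and using $\lfloor k/2\rfloor+\lceil k/2\rceil=k$ collapses everything to $g(k)=3k\,2^{m}-2^{2m+1}=f(k)$.

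For the boundary case $k=2^{m+1}-1$ I substitute $g(2^m-1)$ from the hypothesis together with $g(2^m)=f(2^m)=2^{2m}$ (itself immediate from $g(2^m)=4g(2^{m-1})$, hence $4^m$), and a short computation gives $g(k)=2[3(2^m-1)2^{m-1}-2^{2m-1}]+2\cdot 2^{2m}=2^{2m+2}-3\cdot 2^m$, which matches $f(2^{m+1}-1)=3(2^{m+1}-1)2^m-2^{2m+1}=2^{2m+2}-3\cdot 2^m$. Alternatively this boundary case can be folded into the generic computation by noticing the coincidence $3\cdot 2^m\cdot 2^{m-1}-2^{2m-1}=2^{2m}=f(2^m)$, so the ``exponent $m-1$'' expression happens to evaluate correctly even at the argument $2^m$; then the generic identity $g(\lfloor k/2\rfloor)+g(\lceil k/2\rceil)$ applies verbatim for all $k$.

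The only genuine subtlety — and thus the main obstacle — is precisely this boundary behaviour of $\lfloor\log_2(\cdot)\rfloor$ at powers of two; everything else is bookkeeping with finite geometric sums. I would record the range inequalities $2^{m-1}\le\lfloor k/2\rfloor$ and $\lceil k/2\rceil\le 2^m$ up front so the case split is transparent, and I would spot-check the small values $g(2),\dots,g(8)=4,10,16,28,40,52,64$ against $f$ to guard against an off-by-one in the exponents.
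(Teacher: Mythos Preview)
Your proof is correct and follows essentially the same approach as the paper: strong induction on $k$, reducing via the recursion to the two half-size arguments and tracking the value of $\lfloor\log_2(\cdot)\rfloor$. The paper splits into even/odd cases rather than using the unified symmetric form with $\lfloor k/2\rfloor+\lceil k/2\rceil=k$, and it does not explicitly address the boundary $k=2^{m+1}-1$; your observation that the ``exponent $m-1$'' expression coincidentally evaluates to $2^{2m}=f(2^m)$ is exactly what makes the paper's terse argument go through there, so your treatment is in fact the more careful of the two.
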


\begin{proof}
Note that if $k=2^p$ then $g(k) = f(k) = k^2$. Now we show that when $2^p \leq k \leq 2^{p+1}$ we have $g(k) = f(k) = 3k2^p - 2^{1+2p}$. 

To prove this for $g$, we proceed again by strong induction. The base case is trivial. Assume then that $k$ is even. Then, 
\begin{equation}
    g(k) = 4g(k/2) = 4 [3(k/2)^{p-1} - 2^{1+2(p-1)}] = 3k2^{p} - 2^{1+2p}. 
\end{equation}
When $k$ is odd, we have
\begin{equation} \label{eqn:rec1}
\begin{split}
g(k) &=  2 g(\lfloor l/2 \rfloor) + 2 g(\lfloor k/2 \rfloor +1)  \\
 & = [3(k-1)2^{p-1} - 2^{2p}]+  [3(k+1)2^{p-1} - 2^{2p}] =  3k2^{p} - 2^{1+2p}. 
\end{split}
\end{equation}
Which establishes the property for $g$. When $2^p \leq k \leq 2^{p+1}$, then
$$f(k) = 3k2^{p} - 2^{1+2p} $$
and thus $g(k)=f(k)$.
\end{proof}

\begin{lemma}[Lower bound for scaling of $g$]
The function $g(k)$ defined as in \eqref{eq:recursion_formula} has an asymptotic lower bound that scales as $\Omega (k^2)$.
\end{lemma}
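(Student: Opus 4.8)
The plan is to read off the bound directly from the closed form already obtained. By the preceding lemma, for $p=\lfloor \log_2 k\rfloor$ (so that $2^p\le k\le 2^{p+1}$) we have
\begin{equation}
g(k)=3k\,2^{p}-2^{1+2p}=2^{p}\bigl(3k-2^{p+1}\bigr).
\end{equation}
First I would bound each factor from below. Since $k\le 2^{p+1}$ we get $2^{p}\ge k/2$; and since $2^{p}\le k$ we get $2^{p+1}\le 2k$, hence $3k-2^{p+1}\ge 3k-2k=k>0$. Multiplying the two inequalities gives
\begin{equation}
g(k)\ \ge\ \frac{k}{2}\cdot k\ =\ \frac{k^{2}}{2},
\end{equation}
which is exactly the statement that $g(k)=\Omega(k^{2})$ (with implied constant $1/2$).

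An alternative route, avoiding reference to the closed form, is a direct strong induction on the recursion $g(k)=2g(\lfloor k/2\rfloor)+2g(\lceil k/2\rceil)$ with the ansatz $g(k)\ge k^{2}/2$. The inductive step reads
\begin{equation}
g(k)\ \ge\ 2\cdot\frac{\lfloor k/2\rfloor^{2}}{2}+2\cdot\frac{\lceil k/2\rceil^{2}}{2}\ =\ \lfloor k/2\rfloor^{2}+\lceil k/2\rceil^{2}\ \ge\ \frac{k^{2}}{2},
\end{equation}
using the elementary fact that $a^{2}+b^{2}\ge (a+b)^{2}/2$ together with $\lfloor k/2\rfloor+\lceil k/2\rceil=k$; the base cases $g(1)=1$ (and $g(2)=4$, $g(3)=12$ if one wants slack) are checked by hand.

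There is essentially no hard step here: the statement is a routine corollary of the exact solution. The only mild care required is bookkeeping at the boundaries — confirming the inequalities $k/2\le 2^{p}\le k$ are used with the correct (non-strict versus strict) direction when $k$ is an exact power of two, and, in the inductive alternative, handling the floor/ceiling split so that the two subproblem sizes still sum to $k$. I would present the closed-form argument as the main proof and note the inductive one as a remark, since the former also pins down the constant and matches the complementary $O(k^{2})$ bound implied by $f(k)\le 3k\cdot k=3k^{2}$.
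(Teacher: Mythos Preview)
Your proposal is correct. The paper, however, does not use the closed form at all: it proves the slightly sharper bound $g(k)\ge k^2$ by direct strong induction, splitting into $k$ even (where $g(k)=4g(k/2)\ge 4(k/2)^2=k^2$) and $k$ odd (where $2\lfloor k/2\rfloor^2+2\lceil k/2\rceil^2=k^2+1$). Your primary route via the explicit formula $g(k)=2^{p}(3k-2^{p+1})$ is arguably cleaner once that formula is in hand and makes the constant transparent; the paper's route is more self-contained since it does not rely on the preceding lemma. Your alternative inductive argument is essentially the paper's, just packaged uniformly for both parities through the inequality $a^2+b^2\ge(a+b)^2/2$, at the cost of the constant dropping from $1$ to $1/2$.
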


\begin{proof}
Let $g(k)$ be defined as \eqref{eq:recursion_formula}. We proceed by strong induction. Note first that $1^2 \leq g(1)$ and $2^2 \leq g(2)$. Assume that the property holds for $g(k-1), g(k-2), \cdots, g(1)$. Consider the cases $k$ even and $k$ odd separately.
For $k$ even:  
\begin{equation}
    g(k)= 4g(k/2) \geq 4 (k/2)^2 = k^2. 
\end{equation}
For $k$ odd: 
\begin{equation}
    g(k) \geq 2 (\lfloor k/2 \rfloor)^2 + 2 (\lfloor k/2\rfloor +1)^2 \geq 2 (\lfloor k/2 \rfloor)^2 + k^2 \geq l^2. 
\end{equation}
Since for every $k>0$, $g(k) \geq k^2$ we conclude $g(k)=\Omega (k^2)$.
\end{proof}

\begin{lemma}[Upper bound for scaling of $g$]
The function $g(k)$ defined as in \eqref{eq:recursion_formula} has an asymptotic upper bound that scales as $\mathcal{O}(k^2)$.
\end{lemma}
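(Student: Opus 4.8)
The plan is to read the bound off the closed form already proved, rather than attempt a direct induction. The preceding lemma (solution of the recursion) gives $g(k) = f(k) = 3k\,2^{\lfloor \log_2 k\rfloor} - 2^{1+2\lfloor \log_2 k\rfloor}$. Writing $p = \lfloor \log_2 k\rfloor$, so that $2^p \leq k$, the subtracted term $2^{1+2p} = 2\cdot 4^p$ is positive and may be discarded in an upper bound, giving
\begin{equation}
g(k) = 3k\,2^p - 2\cdot 4^p \;\leq\; 3k\,2^p \;\leq\; 3k\cdot k \;=\; 3k^2,
\end{equation}
so that $g(k) = \mathcal{O}(k^2)$. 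Combined with the matching lower bound lemma this yields $g(k) = \Theta(k^2)$; and optimizing $3u - 2u^2$ over $u = 2^p/k \in (1/2, 1]$ (maximum $9/8$ at $u = 3/4$) pins down the sharp constant $g(k) \leq \tfrac{9}{8}k^2$, approached as $k \to \tfrac43 2^p$.

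If a self-contained argument independent of the closed form is wanted, I would instead run strong induction on $k$ with the strengthened invariant $g(k) \leq 3k^2 - 2k$. The base cases $k = 1, 2$ hold directly ($g(1) = 1 = 3-2$, $g(2) = 4 \leq 8$). For even $k$ one has $g(k) = 4g(k/2) \leq 4\big(3(k/2)^2 - 2(k/2)\big) = 3k^2 - 4k \leq 3k^2 - 2k$. For odd $k = 2q+1$ with $q \geq 1$ one has $g(k) = 2g(q) + 2g(q+1) \leq 12q^2 + 4q + 2$, and a one-line check shows $12q^2 + 4q + 2 \leq 12q^2 + 8q + 1 = 3(2q+1)^2 - 2(2q+1)$ exactly when $q \geq 1$, closing the induction; dropping the $-2k$ then gives $\mathcal{O}(k^2)$.

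The only point worth flagging is why the naive hypothesis $g(k) \leq C k^2$ does not close on its own: at an odd argument the recursion splits into two nearly equal halves $g(q)$ and $g(q+1)$, and $2Cq^2 + 2C(q+1)^2$ exceeds $C(2q+1)^2$ by a fixed positive additive amount, so the induction step fails regardless of $C$. Both fixes above — routing through the exact formula, or carrying the linear correction $-2k$ that exactly absorbs this overshoot — resolve it. Beyond this there is no real obstacle, since the heavy lifting is already contained in the closed-form lemma.
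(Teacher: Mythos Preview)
Your proposal is correct and takes essentially the same approach as the paper: both read off $\mathcal{O}(k^2)$ from the closed form $f(k)=3k\,2^{\lfloor\log_2 k\rfloor}-2^{1+2\lfloor\log_2 k\rfloor}$, bounding $2^{\lfloor\log_2 k\rfloor}\leq k$. The paper retains a lower bound on the subtracted term to get the slightly tighter constant $\tfrac{5}{2}k^2$, whereas you simply drop it to get $3k^2$; your additional induction argument and the sharp constant $\tfrac{9}{8}$ go beyond what the paper provides.
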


\begin{proof}
We use the solution of the recursion $f(k) = 3k2^{\lfloor log_2 k \rfloor} - 2^{1+2\lfloor log_2 k \rfloor}$ to bound the scaling for the number of gates. Notice that
\begin{equation}
    f(k) = 3k2^{\lfloor log_2 k \rfloor} - 2^{1+2\lfloor log_2 k \rfloor} \leq  3k2^{log_2 k } - 2^{1+2( log_2 k - 1) } = \frac{5}{2}k^2. 
\end{equation}
Thus $f(k)= \mathcal{O}(k^2)$.
\end{proof}

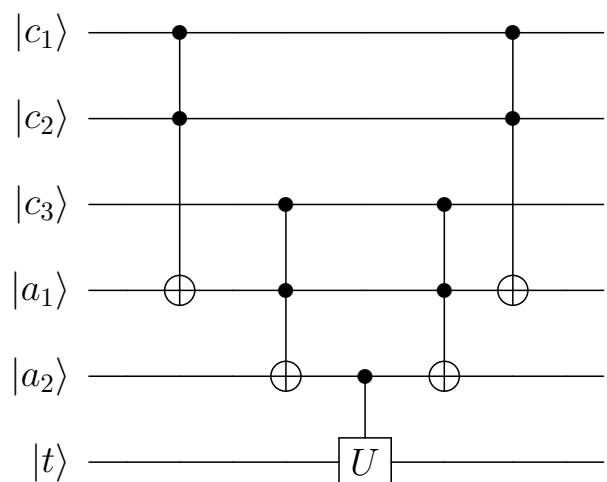
\begin{figure}[ht!]
\begin{center}
\mbox{
\Qcircuit @C=1em @R=1em @!R {
 \lstick{\ket{c_1}} & \qw & \ctrl{1} & \qw & \qw & \qw & \qw & \ctrl{1}  & \qw& \qw \\
 \lstick{\ket{c_2}} & \qw & \ctrl{2} & \qw & \qw & \qw & \qw & \ctrl{2} & \qw & \qw\\
 \lstick{\ket{c_3}} & \qw & \qw & \qw &  \ctrl{1} & \qw & \ctrl{1} & \qw & \qw & \qw \\
 \lstick{\ket{a_1}} & \qw & \targ & \qw & \ctrl{1} & \qw & \ctrl{1} & \targ & \qw & \qw \\
 \lstick{\ket{a_2}} & \qw & \qw & \qw & \targ & \ctrl{1} & \targ & \qw & \qw & \qw\\
 \lstick{\ket{t}}   & \qw & \qw & \qw & \qw & \gate{U} & \qw & \qw & \qw & \qw
}
}
\caption{Standard decomposition~\cite{NC} for 3-{\sf CN} gate which uses ancillary qubits labeled with states $\ket{a_1}$ and $\ket{a_3}$.}
\label{fig:NandC}
\end{center}
\end{figure}

\section{QAOA vs optimization by adiabatic and quantum annealing}\index{Physical Computing} 
Here we will briefly discuss another form of optimization, known now as quantum annealing and closely related to adiabatic quantum optimization.  Several variational algorithms were inspired by algorithms that have their roots in the adiabatic model of quantum computation.   \index{Quantum Algorithms}

\begin{remark}
We must recall that the complete dynamics of a physical system is given by the Schr{\"o}dinger equation 
\begin{equation}
    \imath\hbar \frac{d}{dt}\ket{\psi(t)}= \mathcal{H}(t)\ket{\psi(t)},
\end{equation}
where $\mathcal{H}(t)$ is the Hamiltonian of the system and $\ket{\psi(t)}$ is the state of the system at time $t$. The evolution of the Hamiltonian induces the unitary operator $U(t)$. In the case of a time independent Hamiltonian, the unitary evolution can be expressed as $U(t)=\exp{-\imath t\mathcal{H}}$ for time-independent $\mathcal{H}$.
\end{remark}


Informally, the adiabatic theorem states that when the ground state of a system evolves under a time-dependent Hamiltonian which is varied slowly enough, the system tends to remain in its lowest energy state. In other words, starting in an easy to prepare ground state of Hamiltonian $\mathcal{H}_{0}$, we slowly transform $\mathcal{H}_{0}$ to some final Hamiltonian $\mathcal{H}_{f}$.  In doing so, provided the adiabatic theorem is satisfied, ensure that we evolve to the ground state of $\mathcal{H}_{f}$. Here $\mathcal{H}_{f}$ can embed a problem instance, solved by e.g.~finding the ground state. 

\begin{remark}
In the framework of adiabatic quantum computation, the quantum mechanical system is
\begin{enumerate}
    \item Described by the Hamiltonian $\mathcal{H}_{0}$ at time $t=0$.
    \item Then the system is slowly evolved into the final Hamiltonian $\mathcal{H}_{f}$.
    \item   We set the final Hamiltonian $\mathcal{H}_{f}$ such that finding the ground state of $\mathcal{H}_{f}$ is equivalent to a minimization problem of a function 
\begin{equation}
    f : \left\{ 0,1\right\}^{n}\rightarrow \mathbb{R}. 
\end{equation}
\end{enumerate}
As is typical, the step (3.) above is used to give an explicit form of the final Hamiltonian as 
\begin{equation}
    \mathcal{H}_{f}=\sum_{z\in \left\{ 0,1\right\}^{n} } f(z) \ket{z}\bra{z}, 
\end{equation}
for a binary string $z$. The choice of the initial Hamiltonian $\mathcal{H}_{0}$ is independent of the solution of the problem and will be such that it is not diagonal in the computational basis. One example is to choose $\mathcal{H}_{0}$ to be diagonal in the Hadamard basis, 
\begin{equation}
    \mathcal{H}_{0}=\sum_{z\in \left\{ 0,1\right\}^{n} } h(z) \ket{\hat{z}}\bra{\hat{z}},
\end{equation}        
where $\ket{\hat{z}}$ are the state of $\ket{z}$ in Hadamard basis and $h(0^{n})=0$ and $h(z) \geq 1$ for all $z\neq 0^{n}$.
The time dependent Hamiltonian $\mathcal{H}(s)$ can be defined as
\begin{equation}\label{eqn:hofs}
    \mathcal{H}(s)=(1-s)\mathcal{H}_{0} + s\mathcal{H}_{f},\ \
    s\in \left[0,1\right].
\end{equation}
When $s$ varies from $0 \longrightarrow 1$ the Hamiltonian changes from $\mathcal{H}_{0}\rightarrow \mathcal{H}_{f}$. 
By the adiabatic theorem, the initial ground state, $\ket{\psi_{0}}$ is mapped to the global minimum of the function $f$. 
\end{remark}

\subsection{Approximating adiabatic evolution}

\begin{remark}
The state preparation takes inspiration from the quantum adiabatic algorithm, where a system is initialized in an easy to prepare ground state of a local Hamiltonian $\mathcal{H}_{x}= \sum_{i} \sigma_{x}^{(i)},$ the driver Hamiltonian, which  is  then  slowly  transformed  to  the problem  Hamiltonian, $\mathcal{V}$ \cite{kadowaki1998quantum}. As we will see, Trotterization of this procedure gives a long {\sf QAOA} sequence. 
\end{remark}

\begin{remark}
One can discretize the continuous time evolution of $\mathcal{H}_{s}$ from \eqref{eqn:hofs} by a quantum circuit \cite{van_Dam_2001}.
\end{remark}

\begin{remark}
The approximation is established in two steps following \cite{van_Dam_2001}.
\begin{enumerate}
    \item Discretize the evolution from $\mathcal{H}_0$ to
$\mathcal{H}_f$ by a finite sequence of Hamiltonians
$\mathcal{H}'_1$, $\mathcal{H}'_2,\cdots$ that 
gives rise to the same overall behavior.
\item Show how at any instant the combined 
Hamiltonian $\mathcal{H}'_j = (1-s)\mathcal{H}_0 + s \mathcal{H}_f$ is 
approximated by interleaving two simple 
unitary transformations.
\end{enumerate}
\end{remark}

\begin{remark}[$\ell_2$ induced operator norm]
To express the error of our approximation, we 
use the $\ell_2$ induced operator norm $\norm{\circ}_2$:
\begin{eqnarray*}
\norm{M}_2 & \bydef & \max_{\norm{x}_2=1}{\norm{M x}_2}.
\end{eqnarray*}
\end{remark}

\begin{remark}
The next lemma compares two Hamiltonians $\mathcal{H}(t)$ and $\mathcal{H}'(t)$ and their respective 
unitary transformations $U(T)$ and $U'(T)$.
\end{remark}

\begin{lemma}[Approximation Error \cite{van_Dam_2001}]\label{lemma:aevan}
Let $\mathcal{H}(t)$ and $\mathcal{H}'(t)$ be two time-dependent 
Hamiltonians for $0\leq t \leq T$, and 
let $U(T)$ and $U'(T)$ be the respective 
unitary evolutions that they induce. 
If the difference between the Hamiltonians is 
limited by $\norm{\mathcal{H}(t)-\mathcal{H}'(t)}\leq \delta$ for every $t$, 
then the distance between the induced transformations  
is bounded by $\norm{U(T)-U'(T)}\leq \sqrt{2T\delta}$.
\end{lemma}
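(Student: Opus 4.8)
The plan is to reduce the operator-norm estimate to a statement about individual evolved state vectors, and then run a fundamental-theorem-of-calculus (Gr\"onwall-type) argument on the squared distance between them. Since $\|U(T)-U'(T)\|_2 = \sup_{\||\psi_0\rangle\|=1}\|(U(T)-U'(T))|\psi_0\rangle\|$, it suffices to fix an arbitrary unit vector $|\psi_0\rangle$, set $|\psi(t)\rangle \bydef U(t)|\psi_0\rangle$ and $|\psi'(t)\rangle \bydef U'(t)|\psi_0\rangle$, and prove $\||\psi(T)\rangle-|\psi'(T)\rangle\| \le \sqrt{2T\delta}$. Both $U(t)$ and $U'(t)$ are unitary, so $|\psi(t)\rangle$ and $|\psi'(t)\rangle$ stay unit vectors for all $t$, and they solve $i\,\tfrac{d}{dt}|\psi(t)\rangle = \mathcal{H}(t)|\psi(t)\rangle$ and $i\,\tfrac{d}{dt}|\psi'(t)\rangle = \mathcal{H}'(t)|\psi'(t)\rangle$ respectively (in units $\hbar=1$).

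Next I would study the real-valued function $g(t) \bydef \||\psi(t)\rangle - |\psi'(t)\rangle\|^2 = 2 - 2\,\mathrm{Re}\,\langle\psi'(t)|\psi(t)\rangle$, where the second equality uses normalization of both states. Differentiating the overlap with the two Schr\"odinger equations and using $\mathcal{H}(t)=\mathcal{H}(t)^\dagger$, $\mathcal{H}'(t)=\mathcal{H}'(t)^\dagger$ gives $\tfrac{d}{dt}\langle\psi'(t)|\psi(t)\rangle = i\,\langle\psi'(t)|\big(\mathcal{H}'(t)-\mathcal{H}(t)\big)|\psi(t)\rangle$ (the terms with a common Hamiltonian cancel). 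Hence $\tfrac{d}{dt}g(t) = 2\,\mathrm{Im}\,\langle\psi'(t)|\big(\mathcal{H}(t)-\mathcal{H}'(t)\big)|\psi(t)\rangle$, and by Cauchy--Schwarz together with the definition of the $\ell_2$ induced norm, $\big|\tfrac{d}{dt}g(t)\big| \le 2\,\||\psi'(t)\rangle\|\,\|\mathcal{H}(t)-\mathcal{H}'(t)\|_2\,\||\psi(t)\rangle\| \le 2\delta$ for every $t\in[0,T]$.

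Since $|\psi(0)\rangle = |\psi'(0)\rangle = |\psi_0\rangle$ we have $g(0)=0$, so integrating the derivative bound yields $g(T) \le 2T\delta$, i.e.\ $\||\psi(T)\rangle-|\psi'(T)\rangle\| \le \sqrt{2T\delta}$. Taking the supremum over unit vectors $|\psi_0\rangle$ gives $\|U(T)-U'(T)\|_2 \le \sqrt{2T\delta}$, which is the claim.

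I do not expect a real obstacle here; the only points that need care are (i) justifying differentiation of the overlap $\langle\psi'(t)|\psi(t)\rangle$, which is licensed by the continuous (indeed piecewise-$C^1$) $t$-dependence of the Hamiltonians assumed in the adiabatic setting, and (ii) the bookkeeping of Hermitian conjugates in that differentiation. I would also remark that the same idea applied to the interaction-picture operator $W(t)=U'(t)^\dagger U(t)$, for which $\tfrac{d}{dt}W(t) = i\,U'(t)^\dagger\big(\mathcal{H}'(t)-\mathcal{H}(t)\big)U(t)$ and thus $\|W(T)-I\|_2 \le T\delta$, gives the sharper estimate $\|U(T)-U'(T)\|_2 \le T\delta$; but the stated form $\sqrt{2T\delta}$ is the one invoked downstream (where $\delta$ is itself taken small in the discretization) and already suffices.
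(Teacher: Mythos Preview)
Your proposal is correct and follows essentially the same argument as the paper: both fix a unit initial vector, differentiate the overlap $\langle\psi'(t)|\psi(t)\rangle$ using the two Schr\"odinger equations, bound the derivative by $\delta$, and integrate to get $\mathrm{Re}\,\langle\psi'(T)|\psi(T)\rangle\geq 1-T\delta$, hence $\|U(T)\psi-U'(T)\psi\|\leq\sqrt{2T\delta}$. Your packaging via $g(t)=2-2\,\mathrm{Re}\,\langle\psi'|\psi\rangle$ is slightly cleaner than the paper's (which passes through $|\langle\psi'|\psi\rangle|$), and your side remark that the interaction-picture operator $W(t)=U'(t)^\dagger U(t)$ yields the sharper bound $T\delta$ is a correct and worthwhile observation not made in the paper.
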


\begin{proof}(Lemma \ref{lemma:aevan} \cite{van_Dam_2001}). 
Let $\psi(t)$ and $\psi'(t)$ be the two state trajectories  
of the two Hamiltonians $\mathcal{H}$ and $\mathcal{H}'$ with initially
$\psi(0)=\psi'(0)$. 
Then, for the inner product between the two states
(with initially $\braket{\psi'(0)}{\psi(0)}=1$), 
we have
\begin{eqnarray*}
\frac{d}{dt}
{\braket{\psi'(t)}{\psi(t)}} & = & 
-\imath \bra{\psi'(t)}(\mathcal{H}(t)-\mathcal{H}'(t))\ket{\psi(t)}.
\end{eqnarray*}  
Because at any moment $t$ we have 
$\norm{\ket{\psi(t)}}=\norm{\ket{\psi'(t)}}=1$ 
and $\norm{\mathcal{H}(t)-\mathcal{H}'(t)}\leq \delta$, we see 
that at $t=T$ the lower bound
$|\braket{\psi'(T)}{\psi(T)}|\geq 1-T\delta$
holds.
This confirms that for every vector $\psi$ we have
$\norm{U(T)\ket{\psi}-U'(T)\ket{\psi}}\leq \sqrt{2T\delta}$.
\end{proof}

\begin{remark}
Lemma \ref{lemma:aevan} tells us how we might deviate from the ideal 
Hamiltonian $\mathcal{H}(t)\bydef(1-\frac{t}{T})\mathcal{H}_0 + \frac{t}{T}\mathcal{H}_f$, 
without introducing too much error to the induced evolution. 
\end{remark}

An approximation of the evolution, $\mathcal{H}_{0}\rightarrow \mathcal{H}_{f}$ is given by a sequence of Hamiltonians,
\begin{equation}
  \mathcal{H}^{'}_{1}, \mathcal{H}^{'}_{2},\cdots, \mathcal{H}^{'}_{r},\cdots  
\end{equation}
where 
\begin{equation}
    \mathcal{H}^{'}_{r}=(1-r\Delta s)\mathcal{H}_{0} + r\Delta s \mathcal{H}_{f}
\end{equation}
and the parameter $s$ is slightly varied from $0$ to $1$. In the limit of $\Delta s \rightarrow 0$, the sequence becomes infinite. 
The unitary evolution then becomes 
\begin{equation}
U'(s=1)=    \cdots\exp{-\imath \mathcal{H}'_{r}}\cdots\exp{-\imath \mathcal{H}'_{1}} = \cdots U'_{r}\cdots U'_{1}
\end{equation} 
A second approximation, using the Campbell-Baker-Hausdorff theorem gives, 
\begin{equation}
\label{eq:6}
\begin{split}
    U^{'}_{r} &= \exp{-\imath (1-r\Delta s)\mathcal{H}_{0}}\cdot\exp{-\imath r \Delta s\mathcal{H}_{f}} \\
    &=  W(1-r\Delta s)\cdot V(r\Delta s)
\end{split}
\end{equation}
where $W(x)=\exp{-\imath x \mathcal{H}_0}$, and $V(x)=\exp{-\imath x \mathcal{H}_f}$.
The  evolution reduces to a sequence of unitary operations,
\begin{align}\label{eqn:U}
U'(s=1) &=\cdots U'_{r}U'_{r-1}\cdots U'_{2}U'_{1}
={}\\
&=(\cdots V\cdot V\cdot V\cdot V\cdot V\cdots)(\cdots V\cdot V\cdot W\cdot V\cdot V\cdots)\cdots\notag\\
& \quad (\cdots V\cdot  W\cdot V\cdot W\cdot V\cdots)\cdots(\cdots W\cdot V\cdot W\cdot V\cdot W\cdots)\cdots\notag\\
&\quad(\cdots W\cdot V\cdot W\cdot W\cdots)(\cdots W\cdot W\cdot W\cdot W\cdot W\cdots)
\notag
\end{align}
acting on the ground state of $\mathcal{H}_{0}$. It is clearly seen that the frequency of occurrence of $W$ is increasing but the frequency of occurrence of $V$ is decreasing. In the limit of $\Delta s \rightarrow 0$, the sequence in equation (\ref{eqn:U}) becomes infinite long length with difference frequency of $W$'s and $V$'s. For a finite $\Delta s$, the sequence is not infinite and becomes an approximation of the adiabatic evolution. For a mathematically focused survey on the annealing processes (quantum and stochastic) see e.g.~\cite{Morita2008}.   

\section{Computational phase transitions}\label{sec:cpt}

Here in \S~\ref{sec:cpt} we largely follow \cite{2019arXiv190610705P} and cover some joint work I've recently done with Philathong, Akshay, and Zacharov. 

As information is necessarily represented in physical media, the processing, storage and manipulation of information is governed by the laws of physics. Indeed, the theory of computation is intertwined with the laws governing physical processes \cite{deutsch1985quantum}.  Many physical systems and physical processes can be made to represent, and solve computational problem instances.  Viewed another way, many variants of naturally occurring processes (such as protein folding) have been shown to represent computationally significant problems, such as \NP-hard optimization problems.  But how long does it take for a physical process to solve problem instances?  How can difficult problem instances be generated? 

The physical Church-Turning thesis \cite{church1936unsolvable, turing1937computable} asserts that a universal classical computer can simulate any physical process and vise versa (outside of quantum mechanical processes).  It does not propose the algorithm, yet asserts its existence.  One might wrongly suspect that undecidable problems can be embedded into physical systems: attempts at this fail, i.e.~due to instabilities. What about the \P~vs.~\NP~problem?  If no physical process existed to solve \NP-complete problems in polynomial time, then by the physical Church-Turning thesis, no algorithm would exist either.  Hence if the laws of physics ruled out such a scenario, this would imply that \P~$\neq$~\NP.  

This distinction between polynomial and exponential resources is a course graining that computational complexity theory is based around. We do not know if a physical process can be made to solve \NP-complete problems in polynomial time or not. However, it is asserted that computational phase transitions are a  feature of \NP-complete problems---although specifics of the transition have yet to be formulated (proven) rigorously. We will turn to the theory of computational phase transitions to understand how physics responds to changes in the complexity landscape across the algorithmic phase transition.

This algorithmic phase transition occurs where randomly generated problem instances are thought to be difficult \cite{CrawfordA93, friedgut1999sharp, selman1996critical}.  It is observed by the fact that computer algorithms experience a slowdown around this transition point.  For example, let us consider the familiar problems of $2$- (and $3$)-satisfiability\index{Boolean Satisfiability} (detailed in the next section: the phase transition provably exists for $2$-{\sf SAT} and is only known to be inside a window for $3$-{\sf SAT}).  If we let the number of variables be $N$ and uniformly generate $M$ random clauses over these $N$ variables, computer algorithms appear to slow down at a certain clause to variable ratio (critical clause density of the order parameter $\alpha = M/N$). 

What about physical systems that bootstrap physics to naturally solve problems instances? Instances of these problems can be embedded in the lowest energy configuration of physical models \index{Ising Model} \cite{lucas2014ising, B08, spinlogic2}. Hence, building such a physical system and cooling (annealing) this system can enable a process which solves such problems \index{Physical Computing}  \cite{kirkpatrick1983optimization, utsunomiya2011mapping, inagaki2016coherent, pierangeli2019large, marandi2014network, nixon2013observing, berloff2017realizing, dung2017variable, kalinin2018global}.  For example, a system settling into its low-energy configuration can be programmed such that this low energy configuration represents the solution to $2$- (and $3$)-{\sf SAT} instances.  

We found that the algorithmic phase transition has a statistical signature in Gibbs' states of problem Hamiltonians generated randomly across the algorithmic phase transition.  This was confirmed by exact calculations of 26 binary units (spins) on a mid-scale supercomputer.  Physical observation of the effect is hence within reach of near term and possibly even existing physical computing hardware: such as Ising machines \cite{inagaki2016coherent, pierangeli2019large}, annealers \cite{kirkpatrick1983optimization} and quantum enhanced annealers \index{Physical Computing} \cite{ johnson2011quantum, barends2016digitized, harris2010experimental,harris2018phase,king2018observation}.   

Since the algorithmic phase transition takes places where randomly generated problem instances are thought to contain difficult instances, this discovery provides a practical benchmark for contemporary physics based processors. The problem of finding hard instances is prominent to such emerging technologies.  We propose a physical experiment to witness the algorithmic phase transition signature and to benchmark contemporary physics based processors.

Propositional satisfiability ({\sf SAT}) is the problem of determining the satisfiability of sentences in propositional logic. If $k$ is the number of literals in each clause the problem is called $k$-{\sf SAT}. Determining the satisfiability of a formula in conjunctive normal form where each clause is limited to at most $k$ literals is \NP-complete \index{Boolean Satisfiability} (Cook~\cite{Cook1971}).  We recall from \S~\ref{chap:progGS} that a decision problem $c$ is \NP-complete iff: (i) $c$ is in \NP and (ii) every problem in \NP~is reducible to c in polynomial time. We recall Theorem \ref{thm:3bph}, and state the following. 

\begin{theorem}\label{thm:nppji}
{\scshape Three-body Projector Ising Hamiltonian} is~{\sf NP}-complete. 
\end{theorem}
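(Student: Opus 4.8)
The plan is to verify the two defining conditions of \NP-completeness for {\scshape Three-body Projector Ising Hamiltonian}: first membership in \NP, then \NP-hardness by a polynomial-time (indeed log-space) reduction from \SAT, invoking Cook--Levin \cite{Cook1971}.

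For membership, the key observation is that every admissible $\mathcal{H}=\sum_{l=1}^{\text{poly}(n)}P^{\alpha\beta\gamma}_{abc}$ is diagonal in the computational basis, since each factor $P^\kappa_x=\tfrac12(\eye+(-1)^x Z_{(\kappa)})$ is diagonal and a tensor product (and sum) of diagonal operators is diagonal. Consequently the spectrum of $\mathcal{H}$ is exactly the multiset $\{\bra{z}\mathcal{H}\ket{z}:z\in\{0,1\}^n\}$, and each such number is a non-negative integer, being a sum of $0/1$ contributions $\delta_{z_\alpha a}\delta_{z_\beta b}\delta_{z_\gamma c}$. Hence every instance automatically satisfies the promise — it has a zero eigenvalue precisely when case (2) fails — so this is in fact a genuine language, not merely a promise problem. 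A {\scshape Yes}-certificate is simply a bit string $z$; the verifier evaluates $\bra{z}\mathcal{H}\ket{z}$ by summing the $\text{poly}(n)$ Kronecker-delta products and accepts iff the total is $0$, which runs in polynomial time. Completeness and soundness are immediate from the spectrum description.

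For \NP-hardness I would reduce \SAT directly. Given a formula $\phi=\bigwedge_{l=1}^m C_l$ with $C_l=(\ell_{l,1}\vee\ell_{l,2}\vee\ell_{l,3})$ over variables $v_{l,1},v_{l,2},v_{l,3}$, attach to each clause its unique falsifying pattern: put $b_{l,i}=0$ if $\ell_{l,i}$ occurs positively and $b_{l,i}=1$ if it occurs negated, so that $C_l$ is violated exactly when $v_{l,i}=b_{l,i}$ for all three $i$. Then set
\[
\mathcal{H}_\phi=\sum_{l=1}^m P^{v_{l,1}}_{b_{l,1}}\otimes P^{v_{l,2}}_{b_{l,2}}\otimes P^{v_{l,3}}_{b_{l,3}},
\]
which has exactly $m=\text{poly}$ terms of the required form and is computable from $\phi$ in logarithmic space. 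By construction $\bra{z}\mathcal{H}_\phi\ket{z}$ equals the number of clauses that $z$ falsifies; therefore $\mathcal{H}_\phi$ has a zero eigenvalue iff $\phi$ is satisfiable, and if $\phi$ is unsatisfiable every diagonal entry — hence every eigenvalue — is at least $1$. Since \SAT is \NP-complete, this shows {\scshape Three-body Projector Ising Hamiltonian} is \NP-hard, and with the membership argument, \NP-complete.

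I do not anticipate a genuine obstacle here: the whole argument rests on the diagonal structure (which collapses ``spectrum'' to ``set of diagonal entries'') together with the De Morgan translation of a clause into the projector onto its single forbidden assignment. The only points needing care are orienting the polarities correctly — $b_{l,i}$ is the value that makes the literal \emph{false}, i.e.\ the complement of the literal's sign — and observing that the reverse reading of each term $P^\alpha_a\otimes P^\beta_b\otimes P^\gamma_c$ as the clause $(v_\alpha\neq a)\vee(v_\beta\neq b)\vee(v_\gamma\neq c)$ furnishes the inverse map, so the reduction is actually a bijection between \SAT instances and instances of this Hamiltonian problem, as asserted in the discussion preceding the statement. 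Re-expressing $\mathcal{H}_\phi$ over $\{\eye,Z,\otimes\}$ via $P_a=\tfrac12(\eye+(-1)^aZ)$ leaves all of this intact and, as noted earlier for $h_\wedge$, does not change locality.
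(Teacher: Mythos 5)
Your proposal is correct and follows essentially the same route as the paper: membership in {\sf NP} via a bit-string witness $\ket{z}$ whose energy $\bra{z}\mathcal{H}\ket{z}$ is evaluated term-by-term in polynomial time, and hardness via the embedding of each 3-{\sf SAT} clause as the projector onto its unique falsifying assignment, so that eigenvalues count violated clauses. Your write-up is in fact tighter than the paper's, which only gestures at the reduction (its hardness sentence cites the Cook--Levin reducibility of 3-{\sf SAT} rather than explicitly giving the map from 3-{\sf SAT} into the Hamiltonian problem), whereas you spell out the polarity convention, the diagonal-structure argument collapsing the spectrum to diagonal entries, and the observation that the promise is automatically satisfied.
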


We will proceed informally and complete a construction proof of Theorem \ref{thm:nppji}. As one recalls from \S~\ref{chap:progGS}, in \NP~problems a candidate solution can be checked in polynomial time.

$k$-satisfiability problems can be reduced---generally called Karp reduction---to 3-satisfiability (3-{\sf SAT}) when formulas in conjunctive normal form (CNF) are considered with each clause containing at least 3 literals.

\begin{remark}(The spectrum of Hamiltonian {\sf SAT}) 
Given a 3-{\sf SAT} formula $f$, it is readily established that the 3-body Ising projector form of $f$ has the following eigenvalue problem. 
\begin{equation}
    {\mathcal H}_f \ket{x} = f(x)\ket{x}
\end{equation}
where $f(x)$ counts the number of clauses violated by bit string $x$. 
\end{remark}

\begin{proof}({\scshape Three-body Projector Ising Hamiltonian} is~{\sf NP}-complete---Theorem \ref{thm:nppji}) 
We readily see that every 3-{\sf SAT} instance can be embedded into an instance of {\scshape Three-body Projector Ising Hamiltonian}.  Each candidate solution is evaluated in time proportional to the number of Hamiltonian terms. Moreover, a~{\sf Yes} instance is shown by a witness $\ket{y}$ such that 
\begin{equation}
    \bra{y}{\mathcal H}_f \ket{y}=0. 
\end{equation}
Hence the problem is in {\NP}.  We recall the classical result that 3-{\sf SAT} is Karp reducible ( polynomial-time reduction) to general circuit {\sf SAT}, which is the cannonical \NP-complete problem.  Hence, we establish the completeness result in Theorem~\ref{thm:nppji}. 
\end{proof}

Given the  clause density defined as 
$\alpha \bydef M/N$  where  $M$ is the number of  clauses and $N$ is the number of  variables, we will generate random {\sf SAT} instances.  
As random {\sf SAT} instances are generated increasing $\alpha$, a sharp transition occurs.  The likely-hood of a random instance being satisfiable goes from near unity to near zero across an increasingly small domain. 
Crawford and Auton (1993)~\cite{CrawfordA93} empirically located the 3-{\sf SAT} transition at a clause/variable ratio around $4.27$. 

\begin{figure}[htbp]
  \centering 
  \includegraphics[width=0.7\textwidth]{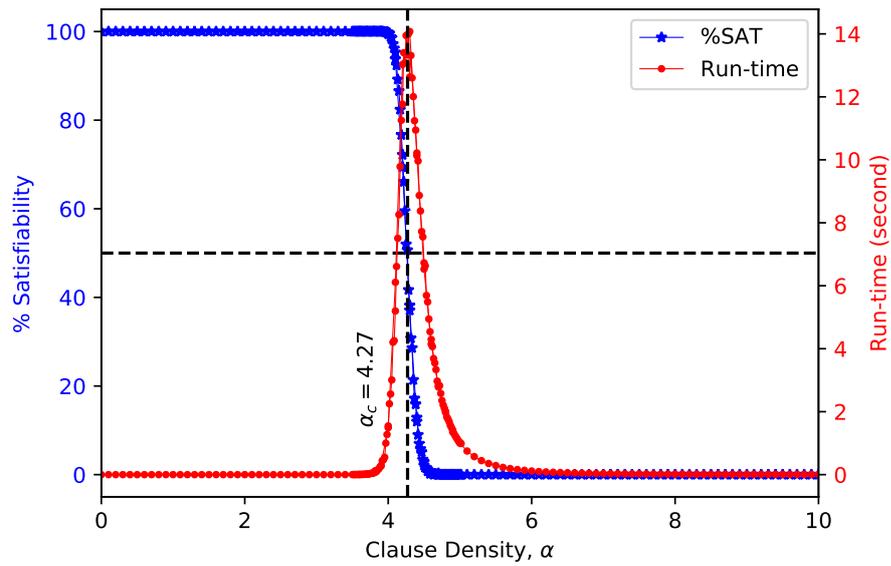}
  \caption{Percent of satisfiable instances (left axis) and run-time (right axis) versus clause density, $\alpha$. We randomly generated $1,000$ $3$-SAT instances with $300$ variables with observed $\alpha_{c} \approx 4.27$. Figure taken from \cite{2019arXiv190610705P}. \label{fig:3satphase} }
\label{fig:PhaseTransitionSAT}
\end{figure}

\begin{example}
As a worked example, let us consider a value assignment of $x_1, x_2, x_3, x_4$ such that $f(x_1, x_2, x_3, x_4) = 1$ and hence (i) show that $f$ in \eqref{eqn:satex} is satisfiable.  
\begin{equation}\label{eqn:satex}
f = (x_1 \vee \overline{x}_3 \vee x_4)\wedge (\overline{x}_2 \vee x_3\vee \overline{x}_4) \wedge (\overline{x}_1\vee x_2 \vee x_3)
\end{equation}
From binary search (by hand) we find the assignment $x_1=0, x_2=0, x_3 =0$ and $x_4$ is free.  The clause density is $0.75$ and the number of nodes in the expansion graph (e.g.~a Boolean decision diagram) is negligible.  
\end{example}

\subsection{Thermal states at the SAT phase transition}

Physical systems at thermal equilibrium are often approximated by a Gibbs state 
\begin{equation}
\rho_{\beta} \bydef \frac{e^{-\beta h}}{\langle e^{-\beta h} \rangle} = \frac{1}{\mathcal{Z}} \sum_{l} e^{-\beta \lambda_l} \ketbra{l}{l} = \dfrac{1}{\sum_l e^{-\beta \lambda_l}} \sum_l e^{-\beta \lambda_l} \ketbra{l}{l}. 
\end{equation} 
Where $\lambda_{i}$ are the eigenvalues. Note that the computational basis forms the energy basis of the Hamiltonian, hence $\ketbra{l}{l}$ is diagonal. We will soon be concerned with the case that the Hamiltonian $\mathcal{H} = \sum_i C_i$ is a sum over projectors onto 3-SAT clauses and the partition function is $\mathcal{Z}=\sum_l e^{-\beta \lambda_l}$.

Consider the monotonic list,
\begin{eqnarray*}
\lambda_0 \leq \lambda_1 \leq \lambda_2 \leq  \ldots \leq \lambda_r
\end{eqnarray*}
and assume that $\lambda_0$ is possibly degenerate. Let $\varphi^{i}_0$ be an eigenvector with eigenvalue $\lambda_0$. Let $i$ index all $d$ potentially degenerate eigenvalues satisfying
\begin{equation}
\mathcal{H} \ket{\varphi^{i}_0} = \lambda_0 \ket{\varphi^i_0}. 
\end{equation}
We are interested in the probability of being in the subspace 
\begin{equation}
  \spn \{ \ket{ \varphi^i_0}  | \mathcal{H} \ket{\varphi^{i}_0} = \lambda_0 \ket{\varphi^i_0} \}.   
\end{equation}
We let $i$ = 1 to $d$ for $d$ the degeneracy count. We are interested in the quantity
\begin{equation} \label{eqn:plotted}
\begin{split}
\frac{1}{\mathcal{Z}} & \sum^d_{i=1} \bra{\varphi^i_0} \biggl[\sum_l e^{-\beta \lambda_l} \ketbra{l}{l} \biggr] \ket{\varphi^i_0}  = \\
& =\frac{1}{\mathcal{Z}}\sum_{i,l} \braket{\varphi^i_0}{l} \braket{l}{\varphi^i_0}e^{-\beta \lambda_l} 
= \frac{1}{\mathcal{Z}} \sum_{i=1}^d e^{-\lambda_0 \beta} = \frac{\displaystyle\sum_{i=1}^d e^{-\beta \lambda_0}}{\displaystyle\sum_{l} e^{-\beta \lambda_l}}.
\end{split}
\end{equation}

\begin{proposition}(Eigenvalues of thermal states). 
It can be shown that 
 \begin{equation}
\rho_\beta = \frac{e^{-\beta \sum_j Z_j}}{\text{tr}\{e^{-\beta \sum_j Z_j}\}}=
\bigotimes_{j} \frac{1}{2}\sum_{b\in\{0,1\}}\left(1-(-1)^b \tanh (\beta)\right) \ket{b}_j\!\bra{b}_j. 
\end{equation}
We then derive a formula for the eigenvalues of a state in terms of the Boolean variables in the general bit string $\ket{x_1, x_2, \dots, x_l}$, viz.,
\begin{equation} \label{eqn:thermalZ}
\bra{x_1, x_2, \dots, x_l}\rho_\beta \ket{x_1, x_2, \dots, x_l} = \prod_{j=1}^l \frac{1}{2}\left(1-(-1)^{x_j}\tanh(\beta)\right). 
\end{equation}
If we were to sample $\rho_\beta$, the probability of measuring $\ket{x_1, x_2, \dots, x_l}$ is given precisely by \eqref{eqn:thermalZ}. 
\end{proposition}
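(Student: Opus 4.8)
The plan is to exploit that $\mathcal{H}=\sum_j Z_j$ is a sum of mutually commuting one-body terms acting on distinct qubits, so that the Boltzmann operator factorizes across tensor factors: $e^{-\beta\sum_j Z_j}=\bigotimes_j e^{-\beta Z_j}$, valid precisely because $[Z_i,Z_j]=0$ and each $Z_j$ acts on its own qubit. First I would diagonalize a single factor. Since $Z\ket{b}=(-1)^b\ket{b}$ for $b\in\{0,1\}$, functional calculus gives $e^{-\beta Z}=\sum_{b\in\{0,1\}}e^{-\beta(-1)^b}\ketbra{b}{b}=e^{-\beta}\ketbra{0}{0}+e^{\beta}\ketbra{1}{1}$.

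Next I would handle the normalization. Taking the trace of one factor, $\operatorname{tr}(e^{-\beta Z})=e^{\beta}+e^{-\beta}=2\cosh\beta$, and since the trace of a tensor product is the product of traces, $\operatorname{tr}(e^{-\beta\sum_j Z_j})=(2\cosh\beta)^{l}$ on $l$ qubits. Dividing, each single-qubit factor of $\rho_\beta$ equals $\frac{1}{2\cosh\beta}\big(e^{-\beta}\ketbra{0}{0}+e^{\beta}\ketbra{1}{1}\big)$. The one genuinely computational step is the elementary identity $\frac{e^{\mp\beta}}{\cosh\beta}=1\mp\tanh\beta$ (immediate from $\cosh\beta=\tfrac12(e^{\beta}+e^{-\beta})$ and $\sinh\beta=\tfrac12(e^{\beta}-e^{-\beta})$), which rewrites the two diagonal weights in the symmetric form $\tfrac12\big(1-(-1)^b\tanh\beta\big)$. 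Tensoring this single-qubit expression over all $j$ then reproduces exactly the displayed product-over-sites form of $\rho_\beta$.

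Finally, the matrix-element formula is immediate once $\rho_\beta$ is in that form: it is manifestly diagonal in the computational basis $\{\ket{x_1,\dots,x_l}\}$, so $\bra{x_1,\dots,x_l}\rho_\beta\ket{x_1,\dots,x_l}$ simply multiplies the site-$j$ diagonal weight evaluated at $b=x_j$, giving $\prod_{j=1}^l \tfrac12\big(1-(-1)^{x_j}\tanh\beta\big)$. The probabilistic statement is then the Born rule: because $\rho_\beta$ is diagonal in this basis, its diagonal entry at $\ket{x_1,\dots,x_l}$ is precisely the probability of obtaining that outcome upon measurement. I do not expect a real obstacle here — the argument is commutativity-driven factorization plus $\cosh/\tanh$ bookkeeping; the only place to be careful is keeping the sign convention $(-1)^b$ consistent with $Z\ket{b}=(-1)^b\ket{b}$, so that at $\beta>0$ the state $\ket{1}^{\otimes l}$ (the ground state of $+\sum_j Z_j$) correctly carries the largest per-site weight $\tfrac12(1+\tanh\beta)$.
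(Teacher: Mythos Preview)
Your argument is correct and complete: factorization via commutativity, single-qubit diagonalization, the $\cosh/\tanh$ identity, and reading off diagonal entries are exactly the right steps, and your sign-tracking is consistent with the paper's convention $Z\ket{b}=(-1)^b\ket{b}$. The paper itself states this proposition without proof (it simply asserts ``It can be shown that''), so your derivation supplies precisely the standard justification the text omits.
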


We want to investigate the properties of a thermal system, for fixed $\beta$ playing the role of an inverse temperature.  We further want to investigate this across the phase transition.  We can intuitively expect that around the phase transition, a thermal system would exhibit a decreased occupancy in its ground state.  

Let $\ket{i}$ denote (possibly degenerate) lowest eigenstates of $\mathcal{H}$.  We label these possibly degenerate states by letting $i$ range from $1$ up to $d$.  Call $\lambda_\text{min}$ the lowest eigenvalue of $\mathcal{H}$, then we have \eqref{eqn:lambdamin}.  

We are concerned with the quantity \eqref{eqn:plowest}, giving occupancy in the low-energy subspace for a system at equilibrium for fixed finite inverse temperature $\beta$. We call this quantity $p\left(\lambda_\text{min}, \beta\right)$, 
\begin{equation}\label{eqn:plowest}
p\left(\lambda_\text{min}, \beta\right) = \frac{1}{\mathcal{Z}}\sum_{i=1}^d 
\bra{i} e^{-\beta \mathcal{H}} \ket{i} 
\end{equation}
\noindent where 
\begin{equation}\label{eqn:lambdamin}
\forall i \in\{1, ..., d\}, ~\bra{i} \mathcal{H} \ket{i}. 
\end{equation}

\begin{proposition}\label{prop:zerotlim}
It can be shown that 
\begin{equation}
\lim_{\beta \rightarrow \infty}
\frac{1}{\mathcal{Z}}\sum_{i=1}^d\bra{i} e^{-\beta \mathcal{H}} \ket{i} = 1
\end{equation}
and hence one can establish that in the zero temperature limit ($\beta$ is inverse temperature), sampling a thermal system can solve {\sf SAT} instances with probability one.  
\end{proposition}

\begin{remark}[On determining a threshold temperature to reveal the algorithmic phase transition]
From Figure \ref{fig:3sattherm} the ground state occupancy of thermal states can reveal an algorithmic phase transition signature.  When the inverse temperature $\beta$ is relatively small, the easy-hard transition around $\alpha = 4.27$ is missing (Proposition \ref{prop:zerotlim} establishes no transition in the zero temperature limit). As pointed out to the author by Vladimir Korepin, a threshold temperature to reveal the algorithmic phase transition appears to be lacking.   
\end{remark}

\begin{figure}[htbp]
  \centering 
  \includegraphics[width=0.8\textwidth]{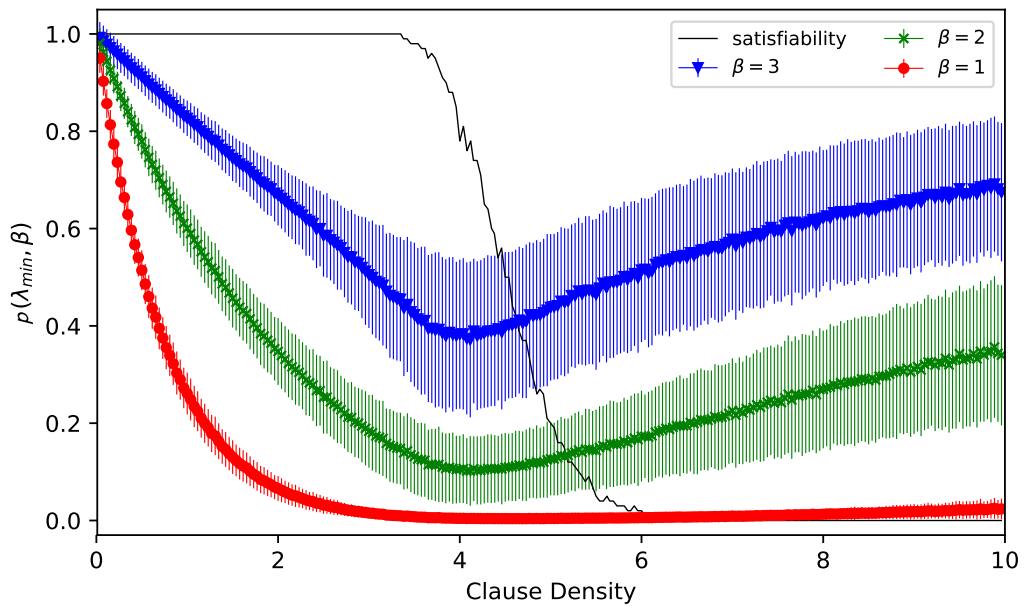}
  \caption{ Ground state occupancy of the thermal states corresponding to Hamiltonians embedding 3-SAT instances across the algorithmic phase transition for 26 spins with $\beta = 1, 2, 3$. Figure taken from \cite{2019arXiv190610705P}. }
\label{fig:3sattherm}
\end{figure}


\section{Low-depth quantum circuits}

The theory and application of low-depth quantum circuits has grown dramatically in recent years.  In the absence of error correction, NISQ era quantum computation is focused on quantum circuits that are short enough and with gate fidelity high enough that these short quantum circuits can be executed without quantum error correction, as in the recent quantum supremacy experiments \cite{Arute2019, Zhong1460}. Herein lies the heart of the variational model: by adjusting parameters in an otherwise fixed quantum circuit, low-depth noisy quantum circuits are pushed to their ultimate use case. We will now upper bound the possible bipartite entanglement that such circuits can possibly generate \cite{UVQC}. 

\subsection{A combinatorial quantum circuit area law} 

Our construction of universal variational quantum computation has not considered whether a restricted form of ansatz is capable of universal quantum computation at some arbitrary depth as Lloyd~\cite{2018arXiv181211075L} and others~\cite{morales2019universality} have.  Instead, the objective function to be minimised is defined in terms of the unitary gates arising in the target circuit to be simulated.  What ansatz states are then required to simulate a given target circuit?

This question appears to be difficult and not much is currently known.  In the case of {\sf QAOA} it was recently shown by myself and coauthors that the ability of an ansatz to approximate the ground state energy of a satisfiability instance worsens with increasing problem density (the ratio of constraints to variables) \cite{2019arXiv190611259A}. These related results however do not immediately apply to our interests here.  

Towards our goals, we have included a derivation showing that reasonable depth circuits can saturate bipartite entanglement---the depth of these circuits scales with the number of qubits and also depends on the interaction geometry present in a given quantum processor.  This result establishes a first relationship between an objective circuit (to be simulated) and a given ansatz state.  Consider the following.  

An ebit is a unit of entanglement contained in a maximally entangled two-qubit (Bell) state.  A quantum state with $q$ ebits of entanglement (quantified by any entanglement measure) contains the same amount of entanglement (in that measure) as $q$ Bell states.  
\begin{lemma}[Combinatorial Quantum Circuit Area Law---Biamonte \cite{UVQC}]\label{lemma:cqcal}
 Let $c$ be the depth of 2-qubit controlled gates in the $n$-qubit hardware-efficient ansatz.  Then the maximum possible number of ebits across any bipartition is 
    $$ E_b = \min \{ \left \lfloor{n/2}\right \rfloor, c \}.$$ 
 \end{lemma}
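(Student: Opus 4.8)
The plan is to bound the entanglement by tracking how quickly a bipartition can be ``opened up'' by controlled gates, and then to exhibit a circuit that saturates the bound. For the upper bound, fix a bipartition $(A, \bar A)$ of the $n$ qubits with $|A| \le \lfloor n/2 \rfloor$. I would argue by induction on the circuit depth, layer by layer. The key fact is that a layer of gates can increase the entanglement entropy across $(A,\bar A)$ by at most the number of two-qubit gates in that layer that straddle the cut, and each such straddling gate contributes at most one ebit (since a two-qubit gate has Schmidt rank at most $2$ when viewed across the cut, so it changes the reduced-state rank on $A$ by at most a factor of $2$, i.e.\ adds at most one ebit). Hence after $c$ layers of $2$-qubit controlled gates the entanglement across the cut is at most $c$ ebits. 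Combined with the trivial bound that the entanglement across $(A,\bar A)$ of \emph{any} $n$-qubit pure state is at most $\min\{|A|,|\bar A|\} \le \lfloor n/2\rfloor$ ebits, we get $E_b \le \min\{\lfloor n/2\rfloor, c\}$.

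For the matching lower bound (achievability), I would construct an explicit hardware-efficient ansatz that attains $\min\{\lfloor n/2\rfloor, c\}$ ebits across some bipartition. Take the ``staircase'' circuit that, starting from $\ket{0}^{\otimes n}$, applies a layer of Hadamards on the qubits destined for side $A$, then a cascade of {\sf CN} gates pairing each such qubit with a fresh partner on side $\bar A$; each {\sf CN} acting on $\ket{+}\ket{0}$ creates one Bell pair, so $k$ such gates produce $k$ ebits across the natural bipartition, up to the limits imposed by the number of available qubit pairs ($\lfloor n/2\rfloor$) and the allotted depth of controlled-gate layers ($c$). One must be slightly careful to arrange these {\sf CN} gates so that they fit into $c$ layers respecting the interaction geometry, but for a fully connected (or linear-with-enough-depth) layout this is routine; the point is simply that the bound is tight.

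The main obstacle, and the place where I expect to spend the most care, is making the ``interaction geometry'' dependence precise in the upper-bound argument: the statement as given depends only on $c$ and $n$, but the achievability in a restricted-connectivity processor genuinely requires enough depth to route entanglement across the cut, so I would either (i) phrase the upper bound purely combinatorially in terms of straddling-gate count (which is what the Schmidt-rank argument actually gives, independent of geometry) and then note geometry only constrains whether $c$ layers suffice to realize $\min\{\lfloor n/2\rfloor, c\}$ straddling gates, or (ii) restrict to the case where the ansatz is allowed gates across the cut in every layer. I would adopt option (i): prove $E_b \le \min\{\lfloor n/2 \rfloor, c\}$ unconditionally (the cut can gain at most one ebit per depth layer and at most $\lfloor n/2\rfloor$ total), and then remark that saturation of this maximum is what additionally depends on the processor's connectivity, matching the phrasing in the surrounding text. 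The Schmidt-rank/subadditivity bookkeeping is the only genuinely technical ingredient and it is standard; everything else is assembling these observations.
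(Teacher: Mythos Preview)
The paper does not actually supply a proof of this lemma: it states the result (twice), cites the external reference, and surrounds it with the relevant definitions (bipartite rank as Schmidt number, the remark that a rank-$k$ state carries at most $\log_2 k$ ebits, and the definition of an ebit), then passes directly to the geometry table. Your Schmidt-rank bookkeeping and the Bell-pair staircase construction are precisely the argument those surrounding remarks are pointing at, so in content you are aligned with the paper.

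One point to tighten. You write that a layer adds ``at most the number of two-qubit gates in that layer that straddle the cut'' ebits, but then conclude ``after $c$ layers \ldots\ at most $c$ ebits,'' i.e.\ one ebit per layer. Those two statements agree only if each layer contains at most one two-qubit gate. That is exactly the staircase hardware-efficient ansatz the paper draws (one controlled-$R_Y$ per time slice), so with that reading of the ansatz your bound is correct and $c$ coincides with the total number of two-qubit gates. If instead one allows parallel two-qubit gates within a layer, the correct per-layer increment is the number of gates crossing the cut, and the depth needed to saturate $\lfloor n/2\rfloor$ becomes geometry-dependent --- which is precisely what the paper's line/ring/grid table records. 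So your option~(i) is the right framing: prove $E_b\le\min\{\lfloor n/2\rfloor,\#\text{two-qubit gates}\}$ unconditionally via Schmidt rank, identify $\#\text{two-qubit gates}=c$ for the staircase ansatz, and treat the parallel-layout saturation depths as the separate geometric remark the paper also treats them as.
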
 
In a low-depth circuit, the underlying geometry of the processor heavily dictates $c$ above.  For example, for a line of qubits and for a ring, the minimal $c$ required to possibly maximise $E_b$ is $\sim n/2$ and $\sim n/4$ respectively. 
However, in the case of a grid, the minimal depth scales as $\sim \sqrt{n}/2$.  

Hence, if we wish to simulate a quantum algorithm described by a low-depth circuit, having access to a grid architecture could provide an intrinsic advantage.  Specifically, our combinatorial quantum circuit area law establishes that  an objective circuit generating $k< \left \lfloor{n/2}\right \rfloor$ ebits across every bipartition, must be simulated by an ansatz of at least minimal required circuit depth $\sim \sqrt{k}$ on a grid.  

While this does establish a preliminary relationship, the general case remains unclear at the time of writing.  For example, given a quantum circuit with application time $t^\star$ which outputs $\ket{\psi}$, what is the minimal $t(\epsilon)\leq t^\star$ for a control sequence \eqref{eqn:time} to provide an $\epsilon$ close 2-norm approximation to $\ket{\psi}$? 

Here we will consider some properties of the quantum states that are accessible in NISQ era quantum information processing.  Here we provide a bound for the minimal depth circuit (generated from the so called, hardware efficient Ansatz as used in recent experiments \cite{2017Natur.549..242K}) to possibly saturate bipartite entanglement on any bipartition.  Understanding the computational power of these circuits represents a central open question in the field of quantum computation today.  This appendix seeks to quantify contemporary capacities.  We are currently not able to express the success probability as a function of the circuit depth required for an objective function to accept.

\begin{definition}[Interaction graph]    
Consider the Hamiltonian 
$$ \mathcal{H} = \sum_{ij} J_{ij} A_i A_j + \sum_i b_i B_i  + \sum_i c_i C_i. $$
The support matrix $S$ of $J_{ij}$ is defined to have the entries
 $$ S_{ij} =  (J_{ij})^0 $$
 and is called the \textit{interaction graph of $\mathcal{H}$}---a symmetric adjacency matrix. (Here we assume that $a^0$ (written alternatively as $(a)^0$) vanishes for real $a < \varepsilon$ and $a^0$ goes to unity for $a > \varepsilon$ for some finite (small) real $\varepsilon$ cutoff.) 
\end{definition}

\begin{remark} 
The interaction graph induces a space-time quantum circuit defined by a \textit{tiling} on the multiplex from $S$.
\end{remark}

\begin{definition} 
 A \textit{Tiling} is a gate sequence acting on a multiplex network induced by $S$.  
 \end{definition} 
 
 \begin{remark}     An active edge (node) will specify if neighboring edges (nodes) can be active. As a general rule, non-commuting terms must be active on different layers.  
     \begin{enumerate}
        \item Qubits connect network layers by time propagation.
        \item Nodes of $S$ in each layer can be acted on by local gates.
        \item Edges of $S$ in each layer can be acted on by two-qubit gates. 
    \end{enumerate}
 \end{remark} 
   
 As an example, consider the following multiplex network.    
 \begin{figure}[htb!]
 
    \begin{center}
        \minipage{0.5\textwidth}
        \includegraphics[width=\linewidth]{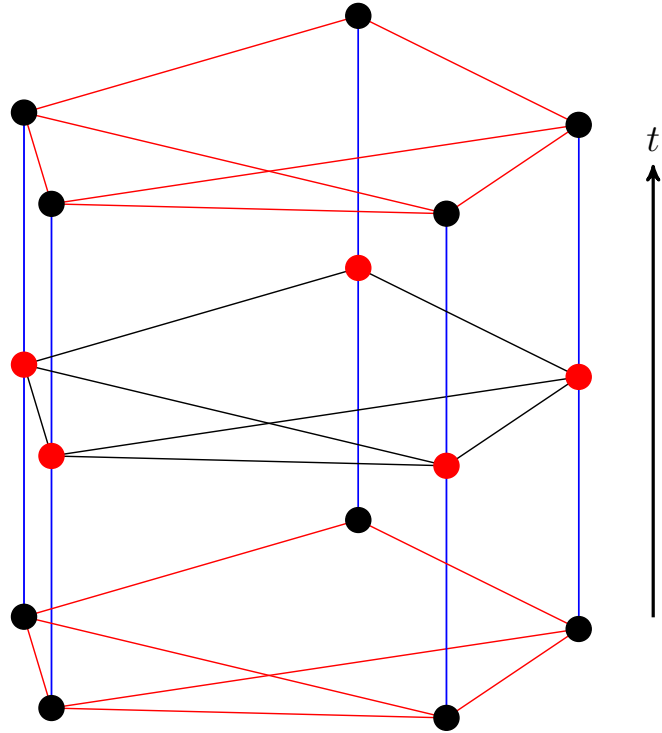}
        \endminipage
        \minipage{0.01\textwidth}
        \begin{tikzpicture}[scale=5, axis/.style={very thick, ->, >=stealth'}]
            \draw[axis] (0,-0.1) -- (0,1.1) node(yline)[above] {$t$};
        \end{tikzpicture}
        \endminipage
    \end{center}
    \captionof{figure}{Five qubits evolve as time goes up on the page.  At $t=0$, commuting interaction terms (red edges) are applied between the qubits.  At $t=1$ local gates (red nodes) are applied to each of the qubits. }
    \label{fig:5qe}
 \end{figure}
    
    In the bottom layer of the multiplex graph in Figure \ref{fig:5qe}, a sequence of red edges correspond to the application of two-body gates.  In the next layer, the red (active) notes correspond to local rotation gates being applied to all the qubits. Finally, in the top layer the commuting two-body gates are applied.  The blue vertical edges represent time passing (going up on the page).  The hardware-efficient ansatz is exactly such a tiling.  
    
In terms of example circuits, we typically will be given a short repetitive sequence.  For example, here we consider qubits interacting on a ring.  A layer of local gates followed by a layer of two-body gates is as follows.     
    
    \begin{equation*}
        \Qcircuit @C=.75em @R=.75em {
            & \gate{U} & \qw & \qw & \ctrl{1}   & \qw        & \qw & \qw    & \qw & \qw        & \gate{R_Y} & \qw \\
            & \gate{U} & \qw & \qw & \gate{R_Y} & \ctrl{1}   & \qw & \qw    & \qw & \qw        & \qw        & \qw \\
            & \gate{U} & \qw & \qw & \qw        & \gate{R_Y} & \qw & \qw    & \qw & \ctrl{2}   & \qw        & \qw \\
            &          &     &     &            &            &     & \ddots &     &            &            &     \\
            & \gate{U} & \qw & \qw & \qw        & \qw        & \qw & \qw    & \qw & \gate{R_Y} & \ctrl{-4}  & \qw \\
        }
    \end{equation*}
    
    \bigskip
    
Where the local gates, $U$ are arbitrary and the two-body gates are controlled $Y$ rotations.  What is the maximal bipartite entanglement that such a circuit can generate when acting on a product state? 

To understand this question, let us consider a pure $n$-qubit state $\ket{\psi}$. 

\begin{definition} 
Bipartite rank is the Schmidt number (the number of non-zero singular values) across any reduced bipartite density state from $\ket{\psi}$ (i.e.~$\lfloor n/2 \rfloor$ qubits).  
\end{definition} 

\begin{remark} 
Rank provides an upper-bound on the bipartite entanglement that a quantum state can support---a rank-$k$ state has at most $\log_2(k)$ ebits of entanglement. 
\end{remark} 

\begin{definition}
An ebit is a unit of entanglement contained in a maximally entangled two-qubit (Bell) state. 
\end{definition} 

\begin{remark} 
A quantum state with $q$ ebits of entanglement (quantified by any entanglement measure) contains the same amount of entanglement (in that measure) as $q$ Bell states. 
\end{remark} 

\begin{remark} 
If a task requires $r$ ebits, it can be done with $r$ or more Bell states, but not with fewer.  Maximally entangled states in $\mathbb{C}^d\otimes \mathbb{C}^d$ have $\log_2(d)$ ebits of entanglement. 
\end{remark} 

Now we arrive at what we call a {\it quantum circuit combinatorial area law}.  It is the minimal depth circuit that possibly could saturate the bipartite entanglement with respect to any bipartition. 

\begin{lemma}[Biamonte \cite{UVQC}]
 Let $c$ be the depth of 2-qubit controlled gates in the $n$-qubit hardware-efficient ansatz.  Then the maximum possible number of ebits across any bipartition is 
    $$ \min \{ \left \lfloor{n/2}\right \rfloor, c \}.$$ 
 \end{lemma}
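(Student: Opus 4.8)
The plan is to bound the Schmidt rank across any bipartition in two independent ways and then observe that the maximum possible number of ebits is the logarithm (base 2) of the smaller bound. First I would recall that for a pure $n$-qubit state $\ket{\psi}$, a bipartition splits the qubits into a block of size $m \leq \lfloor n/2 \rfloor$ and its complement; the Schmidt rank is at most $2^m \leq 2^{\lfloor n/2 \rfloor}$, which immediately gives the first term in the minimum: the entanglement cannot exceed $\lfloor n/2 \rfloor$ ebits simply because one side of the cut has only $\lfloor n/2 \rfloor$ qubits.

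The substance is the second bound, namely that $c$ layers of two-qubit controlled gates can multiply the Schmidt rank by at most a factor of $2$ per layer when acting on a product (hence rank-$1$) initial state. Here the key step is a standard lemma: a single two-qubit gate straddling a bipartition at most doubles the Schmidt rank across that bipartition, while a two-qubit gate lying entirely on one side of the cut (and all local gates from the $U$ and $R_Y$ layers) leaves the Schmidt rank unchanged. I would prove the doubling claim by writing the Schmidt decomposition $\ket{\psi} = \sum_{i=1}^k \sqrt{p_i}\ket{a_i}\ket{b_i}$, noting that a gate $G$ acting on one qubit from each side sends this to $\sum_i \sqrt{p_i} (G\ket{a_i}\ket{b_i})$, and each term $G\ket{a_i}\ket{b_i}$ has Schmidt rank at most $2$ across the cut (since $G$ acts on a $2\times 2$ factor), so by subadditivity of rank the whole state has Schmidt rank at most $2k$. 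Iterating over the depth, after $c$ layers the rank is at most $2^c$, giving at most $c$ ebits. Combining the two bounds, the Schmidt rank is at most $\min\{2^{\lfloor n/2\rfloor}, 2^c\}$, so the maximum possible number of ebits across any bipartition is $\min\{\lfloor n/2 \rfloor, c\}$, using the fact (recalled in the excerpt) that a rank-$k$ state carries at most $\log_2 k$ ebits.

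I would then complete the argument by noting the bound is achievable in principle: for $c \geq \lfloor n/2 \rfloor$ one can pair up qubits across the cut and apply entangling gates to reach rank $2^{\lfloor n/2 \rfloor}$; for $c < \lfloor n/2 \rfloor$ one can route $c$ successive entangling layers so each contributes a genuine factor of $2$, reaching rank $2^c$ — the hardware-efficient tiling described before the lemma admits such a schedule as long as the interaction graph is connected enough to place the gates, which is why the remark following the lemma then specializes the achievable $c$ to the line, ring, and grid geometries.

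The main obstacle is not the upper bound, which is a routine rank-subadditivity argument, but stating the achievability cleanly: one must be careful that ``depth $c$'' counts layers of commuting two-qubit gates and that the routing actually delivers $c$ independent rank-doublings across the \emph{worst-case} bipartition rather than merely some bipartition. For the statement as phrased — ``the maximum possible number of ebits across any bipartition is $\min\{\lfloor n/2\rfloor, c\}$'' — the upper bound half is the rigorous content, and I would present that in full while treating achievability as the constructive remark that motivates the subsequent geometry-dependent discussion.
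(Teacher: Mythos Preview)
The paper does not actually prove this lemma: it is stated twice in the text, but in both places the discussion moves directly to the geometry-dependent saturation table without supplying an argument. Your proposal therefore goes further than the paper does, and the two ingredients you identify---the dimension bound $2^{\lfloor n/2\rfloor}$ on the smaller side and the rank-doubling property of a single controlled gate across a cut---are the natural ones.

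There is, however, a gap in the step ``iterating over the depth, after $c$ layers the rank is at most $2^c$.'' Your rank-doubling lemma applies per \emph{gate} crossing the cut, not per \emph{layer}. A single depth-$1$ layer of commuting two-qubit gates can contain several gates straddling the same bipartition (two for a contiguous half-cut on a ring, order $\sqrt{n}$ on a grid), and each contributes its own factor of~$2$. Hence for a general bipartition the bound after $c$ layers is $2^{kc}$, where $k$ is the number of cut-crossing gates per layer, not $2^c$. This is precisely why the table following the lemma gives different saturating depths for the line, ring and grid: those values ($n/2$, $n/4$, $\sqrt{n}/2$) only reconcile with reaching $\lfloor n/2\rfloor$ ebits if a single layer can supply more than one ebit on the ring and grid. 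You flag the depth-counting issue at the end as an obstacle, but it is not merely bookkeeping---as literally stated the bound $E_b\le c$ does not hold for those geometries unless ``depth'' is read as ``number of controlled gates crossing the cut,'' which is closer to how the paper appears to use it informally. Under that reading your upper-bound argument is correct, and your achievability discussion matches the spirit of the paper's remarks.
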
 
 
    
        
    
 \begin{remark}[Combinatorial Quantum Circuit Area Law] 
    Minimal possible $c$ saturating specific graphs are given in Table \ref{tab:cqcal}. 
 \end{remark}

\begin{table}[h t!]
  \centering
  \begin{tabular}{p{2.2cm}|c|c|c}
     & line & ring & grid \\
    \hline    \hline
    interaction geometry
    &
    \begin{minipage}{.25\textwidth}
      \includegraphics[width=1.\linewidth]{line.pdf}
    \end{minipage}
    &
    \begin{minipage}{.25\textwidth}
      \includegraphics[width=1.\linewidth]{ring.pdf}
    \end{minipage}
    &
    \begin{minipage}{.25\textwidth}
      \includegraphics[width=1.\linewidth]{grid.pdf}
    \end{minipage}
    \\
    \hline
    saturating depth & $c \sim n/2$ & $c \sim n/4$ & $c \sim \sqrt{n}/2$

  \end{tabular}
  \caption{Minimal possible circuit depth $c$ possibly saturating bipartite entanglement with respect to different interaction geometries.}\label{tab:cqcal}
\end{table}

\section{Reachability deficits}

In this section we explore a fundamental limitation discovered in \cite{2019arXiv190611259A} with coauthors.  The {\sf QAOA} algorithms that exploit variational state preparation to find approximate ground states of Hamiltonians encoding combinatorial optimization problems. 

In particular we consider the Quantum Aproximate Optimization Algorithm or {\sf QAOA}~\cite{farhi2014quantum}. As a means to study the performance of {\sf QAOA}, we turn to constraint satifiability---a tool with a successful history which was covered in detail in \S~\ref{chap:progGS} (see particularly \S~\ref{sec:cpt}).  Such problems are expressed in terms of $N$ variables and $M$ clauses (or constraints). 

\begin{remark}
Recall from \S~\ref{sec:cpt}, the density of such problem instances is the clause to variable ratio, the clause density $\alpha = M/N$. $k$-{\sf SAT} clauses are randomly generated to form instances by uniformly selecting unique $k$-tupels from the union of a variable set (cardinality $n>k$) and its element wise negation.  We consider random instances of the {\sf NP}-complete decision problem, $3$-{\sf SAT}. 
\end{remark}

{\sf QAOA} aims to approximate solutions to optimization version of this problem.  Here we consider MAX-3-{\sf SAT} which is {\sf NP}-Hard for exact solutions and {\sf APX}-complete for approximations beyond a certain ratio \cite{haastad2001some}. In these settings, the algorithm's limiting performance exhibits strong dependence on the problem density. As discovered in \cite{2019arXiv190611259A}. 

Let \textit{N} be the number of variables in a {\sf SAT} instance with $M$ clauses. MAX-{\sf SAT} solutions are the Boolean strings $\bm{\omega} = \omega_1, \omega_2, \omega_3, ..., \omega_N$ that violate the least number of clauses. Techniques discussed in \S~\ref{chap:progGS} map {\sf SAT} instances into Hamiltonians,  

\begin{equation}\label{3sathamiltonian}
    \mathcal{H}_{\text{SAT}} = \sum_{l=1}^{M}\mathcal{P}(l),
\end{equation}

\noindent where $\mathcal{P}(l)$ are rank-1 projectors acting on the $l^{th}$ clause. It is easy to verify that the ground state energy of $\mathcal{H}_{\text{SAT}}$ is representative of the minimum number of violated clauses. 

The variational state generated by {\sf QAOA} can be described as running a \textit{p}-depth quantum circuit on the state $\ket{+}^{\otimes{n}}$,
\begin{equation}\label{qaoa equation}
    \ket{\psi(\boldsymbol{\gamma},\boldsymbol{\beta})}=\prod_{i=1}^{p} \mathcal{U}(\gamma_{i},\beta_{i})\ket{+}^{\otimes{n}},
    \end{equation} 
    where
\begin{equation}\label{driverandproblem}
    \mathcal{U}(\gamma_{k},\beta_{k})=\exp{-\imath \beta_{k}\mathcal{H}_{x}} \cdot \exp{-\imath \gamma_{k}\mathcal{V}}.
\end{equation}

In order to approximate solutions of MAX-{\sf SAT}, {\sf QAOA} with standard settings, $\mathcal{H}_x=\sum_{i}\sigma_{x}^{(i)}$ and $\mathcal{V} = \mathcal{H}_{\text{SAT}}$, is used to calculate the energy approximation $E_{g}^{\text{QAOA}}$, where

\begin{equation}\label{qaoaoptimization}
    E_{g}^{\text{QAOA}} =  \min_{\boldsymbol{\gamma},\boldsymbol{\beta}} \bra{\psi(\boldsymbol{\gamma},\boldsymbol{\beta})}\mathcal{H}_{\text{SAT}}\ket{\psi(\boldsymbol{\gamma},\boldsymbol{\beta})}.
\end{equation}

We numerically studied $f = E_{g}^{\text{QAOA}} - \min(\mathcal{V})$ as a function of clause density $\alpha$, for a \textit{p}-depth {\sf QAOA} circuit on randomly generated $3$-{\sf SAT} instances (see Figure~\ref{fig:qaoareachability}). Although increased depth versions achieve better approximations, the limiting performance exhibits a non-trivial dependence on the problem density. Based on this finding we formulate the following:   
 
\begin{definition}
 Let $\ket{\psi}$, be the ansatz states generated from a \textit{p}--depth {\sf QAOA} circuit as shown in \eqref{qaoa equation}. Then 
\begin{equation}\label{eqn:reachabilitydef}
    f = \min_{\psi \subset \mathcal{H}} \bra{\psi}\mathcal{V}\ket{\psi} - \min_{\phi \in \mathcal{H}} \bra{\phi}\mathcal{V}\ket{\phi},
\end{equation}
characterises the limiting performance of {\sf QAOA}. The R.H.S.~of equation \eqref{eqn:reachabilitydef} can be expressed as a function, $f(p,\alpha,n)$.
\end{definition}

\begin{proposition}[Reachability Deficit \cite{2019arXiv190611259A}]
For $p \in \mathbb{N}$ and fixed problem size, $\exists$ $ \alpha > \alpha_c$ such that $f$ from \eqref{eqn:reachabilitydef} is non-vanishing. This is a reachability deficit.
\end{proposition}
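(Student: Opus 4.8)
The plan is to establish the reachability deficit by a counting-and-continuity argument that contrasts the dimension of the accessible variational manifold against the structure of the optimal-energy subspace as clause density grows. First I would fix $p \in \mathbb{N}$ and the problem size $n$, and observe that the set of states reachable by a $p$-depth \textsf{QAOA} circuit, $\mathcal{R}_p \bydef \{\ket{\psi(\boldsymbol{\gamma},\boldsymbol{\beta})} : \boldsymbol{\gamma},\boldsymbol{\beta}\in\mathbb{R}^{2p}\}$, is the image of the compact parameter torus under a smooth map, hence a compact subset of projective Hilbert space of (real) dimension at most $2p$. In particular, for fixed $p$ this is a measure-zero, low-dimensional submanifold of the full state space, and the minimization $\min_{\psi\in\mathcal{R}_p}\bra{\psi}\mathcal{V}\ket{\psi}$ is attained. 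The quantity $f(p,\alpha,n)$ from \eqref{eqn:reachabilitydef} is then the gap between this attained value and the true ground energy $\min(\mathcal{V})$ of $\mathcal{H}_{\text{SAT}}$, and it is non-negative always; the content of the proposition is that for $\alpha$ above the satisfiability threshold $\alpha_c$ it is strictly positive.

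Next I would connect $f = 0$ to a rigid structural condition. If $f(p,\alpha,n) = 0$, then some state in $\mathcal{R}_p$ lies exactly in the ground eigenspace of $\mathcal{H}_{\text{SAT}}$. Here the key observation is that the \textsf{QAOA} ansatz with the standard driver $\mathcal{H}_x = \sum_i \sigma_x^{(i)}$ starts from $\ket{+}^{\otimes n}$, which has real nonnegative amplitudes, and the problem unitary $e^{-\imath\gamma\mathcal{V}}$ is diagonal; the reachable states therefore have amplitude structure constrained in ways that can be analyzed. For $\alpha < \alpha_c$ (satisfiable regime) the ground energy of $\mathcal{H}_{\text{SAT}}$ is $0$ and the ground space is spanned by satisfying assignments; one recovers here that \textsf{QAOA} can in principle concentrate on that space (indeed Grover-type and adiabatic arguments as in \S~\ref{chap:varintro} give reachability asymptotically). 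But for $\alpha > \alpha_c$, generic random $3$-\textsf{SAT} instances are unsatisfiable, so $\min(\mathcal{V}) \geq 1$ and the ground space is spanned by assignments violating the minimal number of clauses; this space is typically of small (often unique up to symmetry) dimension and sits ``deep'' inside the Hilbert space. I would then argue that a fixed-$p$ ansatz manifold of dimension $2p$ cannot, for an open set of such instances (equivalently, with positive probability over random instance generation and for all sufficiently large $n$ relative to $p$), intersect this low-dimensional optimal subspace — a transversality/dimension-count obstruction — forcing $f > 0$.

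The cleanest route to the strict positivity, and the one I would actually write up, is by contradiction combined with the variational stability estimate of Theorem~\ref{thm:e2overlap}: suppose $f(p,\alpha,n)=0$ for some $\alpha>\alpha_c$; then the optimal \textsf{QAOA} state $\ket{\psi^\star}$ satisfies $\bra{\psi^\star}\bigl(\mathcal{H}_{\text{SAT}} - \min(\mathcal{V})\bigr)\ket{\psi^\star} = 0$, i.e.\ $\ket{\psi^\star}$ is an exact ground state of $\mathcal{H}_{\text{SAT}}$. One then derives a contradiction from the combinatorics of the instance: the exact ground states are supported on a set of computational-basis strings of size $d$ with $d$ small, while any $p$-depth \textsf{QAOA} state applied to $\ket{+}^{\otimes n}$ has support spreading over exponentially many basis strings unless the $\beta_k$ are tuned to highly non-generic values, and matching both the support and the relative phases exactly imposes more independent real equations than the $2p$ free parameters can satisfy for the overwhelming majority of instances at density $\alpha$. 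The main obstacle — and the part I expect to require the most care — is making this last dimension/genericity count fully rigorous rather than heuristic: one must control the rank of the Jacobian of the map $(\boldsymbol{\gamma},\boldsymbol{\beta})\mapsto$ (projection onto the orthogonal complement of the ground space) uniformly over a positive-probability family of random instances, and rule out accidental low-dimensional coincidences. A fallback, if the clean transversality argument proves too delicate, is to exhibit an explicit small family of unsatisfiable instances (e.g.\ built from an unsatisfiable core glued to padding variables) for which a direct computation shows $f(p,\alpha,n)>0$ for all $p$ below a threshold growing with $n$, which already suffices to witness the deficit as an existence statement; this is in the spirit of the numerical evidence in Figure~\ref{fig:qaoareachability} and of the proof strategy in \cite{2019arXiv190611259A}.
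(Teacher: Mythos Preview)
The paper does not actually prove this proposition. It is stated as a formulation \emph{based on} numerical findings: the preceding paragraph says ``We numerically studied $f = E_{g}^{\text{QAOA}} - \min(\mathcal{V})$ \ldots\ Based on this finding we formulate the following,'' and the proposition is followed immediately by Figure~\ref{fig:qaoareachability} showing the deficit over random $3$-\textsf{SAT} instances at $n=6$. The evidence offered is empirical, with the analytic claim deferred to the cited reference. So you are attempting something the paper itself does not do.

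As for your proposal on its own terms: the dimension-count heuristic is natural but, as you yourself flag, the transversality step is not a proof. The difficulty is sharper than you indicate. The reachable set $\mathcal{R}_p$ being a $2p$-dimensional manifold does \emph{not} by itself prevent it from intersecting a linear subspace of any dimension --- a curve can pass through a point --- so the obstruction must come from the specific structure of the \textsf{QAOA} map, not from dimension alone. Your claim that a depth-$p$ state ``has support spreading over exponentially many basis strings unless the $\beta_k$ are tuned to highly non-generic values'' is precisely the kind of statement that fails for carefully chosen parameters (e.g.\ $\beta_k=0$ collapses everything), and the ground space of an unsatisfiable instance can still be exponentially large, so the equation-counting does not close. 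Your fallback --- exhibit an explicit family of instances and compute directly --- is the only route here that could be made rigorous, and it is essentially what the numerics in Figure~\ref{fig:qaoareachability} are standing in for. If you want a genuine proof, that is where to invest effort; the transversality sketch should be dropped.
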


\begin{figure}[ht!]
  \centering 
  \includegraphics[width=0.67\textwidth]{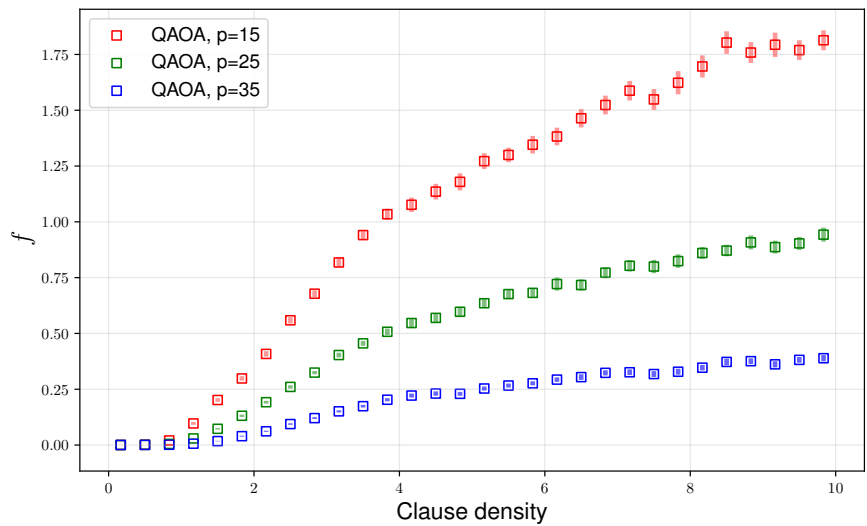}
  \caption{$f = E_{g}^{\text{QAOA}} - \min(\mathcal{H}_{SAT})$ vs clause density for $3$-{\sf SAT} for differing {\sf QAOA} depths. Squares show the average value obtained over 100 randomly generated instances for $n=6$ with error bars indicating the standard error of mean. Figure taken from \cite{2019arXiv190611259A}.  }
\label{fig:qaoareachability}
\end{figure}
\vspace{10pt}
\noindent


\chapter[Variational Quantum Computation]{Universal variational quantum computation} 
\label{sec:variational}

In Chapter \ref{chap:varintro} we considered variational quantum search and optimization.  Here we address a different problem.   We wish to simulate the output of an $L$ gate quantum circuit acting on the $n$-qubit product state $\ket{0}^{\otimes n}$.  We have access to $p$ appropriately bounded and tunable parameters to prepare and vary over a family of quantum states. All coefficients herein are assumed to be accurate to not more than $\text{poly}(n)$ decimal places.  We will define an objective function that when minimized will produce a state close to the desired quantum circuit output. We will provide a~solution to the minimization problem.  

\section{Notions of quantum computational universality} 
There are different notions of {\it computational universality} in the literature.  A strong notion is algebraic, wherein a system is called universal if its generating Lie algebra is proven to span ${ \mathfrak{su}}(2^n)$ for $n$ qubits.  Here we call this {\it controllability}.  An alternative notion is that of {\it computational universality}, in which a system is proven to emulate a universal set of quantum gates---which implies directly that the system has access to any polynomial time quantum algorithm on $n$-qubits (the power of quantum algorithms in the class \BQP{}). Evidently these two notions can be related (or even interrelated): by proving that a controllable system can efficiently simulate a universal gate set, a controllable system becomes computationally universal.  It is conversely anticipated by the strong Church-Turing-Deutsch principle~\cite{deutsch1985quantum}, that a universal system can be made to simulate any controllable system. 

The present chapter follows the work I did in \cite{UVQC}, where we assume access to control sequences which can create quantum gates such as \cite{PhysRevLett.90.247901, PhysRevA.70.032314, SHI02}.  Given a quantum circuit of $L$ gates, preparing a state 
$$
\ket{\psi} = \prod_{l=1}^L U_l \ket{0}^{\otimes n}
$$
for unitary gates $U_l$, we construct a universal objective function that is minimised by $\ket{\psi}$.  The objective function is engineered to have certain desirable properties.  Importantly, a gapped Hamiltonian and minimisation past some fixed tolerance, ensures sufficient overlap with the desired output state $\ket{\psi}$.  Recent work of interest by Lloyd considered an alternative form of universality, which is independent of the objective function being minimised \cite{2018arXiv181211075L}.  

Specifically, in the case of {\sf QAOA} the original goal of the algorithm was to alternate and target- and a driver-Hamiltonian to evolve the system close to the target Hamiltonian's ground state---thereby solving an optimization problem instance. Lloyd showed that alternating a driver and target Hamiltonian can also be used to perform universal quantum computation: the times for which the Hamiltonians are applied can be programmed to give computationally universal dynamics \cite{2018arXiv181211075L}.  In related work \cite{morales2019universality}, myself with two coauthors extended Lloyd's {\sf QAOA} universality result \cite{2018arXiv181211075L}.   

In yet another approach towards computational universality, Hamiltonian minimization has long been considered as a means towards universal quantum computation---in the setting of adiabatic quantum computation \cite{2004quant.ph..5098A, BL08}. In that case however, such mappings adiabatically prepare quantum states close to quantum circuit outputs.  Importantly, unlike in ground state quantum computation, in universal variational quantum computation we need not simulate Hamiltonians explicitly.  We instead expand the Hamiltonians in the Pauli basis, and evaluate expected values of these operators term-wise (with tolerance $\sim\epsilon$ for some $\sim\epsilon^{-2}$ measurements---see Hoeffding's inequality~\cite{doi:10.1080/01621459.1963.10500830}).  Measurement results are then paired with their appropriate coefficients and the objective function is calculated classically.  Hence, we translate universal quantum computation to that of (i) state preparation (ii) followed by measurement in a single basis where (iii) the quantum circuit being simulated is used to seed an optimizer to ideally reduce coherence time. 

After introducing variational quantum computation as it applies to our setting, we construct an objective function (called a telescoping construction).  The number of expected values has no dependence on Clifford gates appearing in the simulated circuit and is efficient for circuits with $\mathcal{O}(\text{poly} \ln n)$ non-Clifford gates, making it amenable for near term demonstrations.   We then modify the Feynman--Kitaev clock construction and prove that universal variational quantum computation is possible by minimising $\mathcal{O}(L^2)$ expected values while introducing not more than $\mathcal{O}(\ln L)$ slack qubits, for a quantum circuit partitioned into $L$ Hermitian blocks.  

We conclude by considering how the universal model of variational quantum computation can be utilised in practice.  In particular, the given gate sequence prepares a state which will minimise the objective function.  In practice, we think of this as providing a starting point for a classical optimizer. Given a $T$ gate sequence, we consider the first $L \leq T$ gates.  This $L$ gate circuit represents an optimal control problem where the starting point is the control sequence to prepare the $L$ gates.  The goal is to modify this control sequence (shorten it) using a variational feedback loop.  We iterate this scenario, increasing $L$ up to $T$.  Hence, the universality results proven here also represent a means towards optimal control which does not suffer from the exponential overheads of classically simulating quantum systems.

\section{Maximizing projection onto a circuit} 

We will now explicitly construct an elementary Hermitian penalty function that is non-negative, with a non-degenerate lowest ($0$) eigenstate.  Minimisation of this penalty function prepares the output of a quantum circuit.  We state this in Lemma \ref{thm:tele}. 

\begin{theorem}[Telescoping Theorem---Biamonte \cite{UVQC}] \label{thm:tele}
Consider $\prod_l U_l \ket{0}^{\otimes n}$ a $L$-gate quantum circuit preparing state $\ket{\psi}$ on $n$-qubits and containing not more than $\mathcal{O}(\text{poly} \ln n)$ non-Clifford gates. Then there exists a Hamiltonian $\mathcal{H}\geq0$ on $n$-qubits with $\text{poly}(L, n)$ cardinality, a $(L, n)$-independent gap $\Delta$ and non-degenerate ground eigenvector $\ket{\phi}\propto\prod_l U_l \ket{0}^{\otimes n}$.  In particular, a variational sequence exists causing the Hamiltonian to accept $\ket{\phi}$ viz., $0\leq \bra{\phi}\mathcal{H}\ket{\phi}  < \Delta$ then Theorem \ref{thm:e2overlap} implies stability (Theorem \ref{thm:e2overlap}).  
\end{theorem} 

To prove Theorem \ref{thm:tele}, first we show existence of the penalty function. Construct Hermitian $\mathcal{H} \in \mathscr{L}(\mathbb C_2^{\otimes n})$ with $\mathcal{H}\geq 0$ such that there exists a non-degenerate  $\ket{\psi}\in \mathbb C_2^{\otimes n}$  with the property that $\mathcal{H}\ket{\psi}=0$.   We will view the Hamiltonian $\mathcal{H}$ as a penalty function preparing the initial state and restrict $\mathcal{H}$ to have bounded cardinality ($\text{poly}(n)$ non-vanishing terms in the Pauli-basis).  Define $P_\phi$ as a sum of projectors onto product states, i.e. 
\begin{equation}\label{eqn:proj}
P_\phi = \sum_{i=1}^n \ket{1}\bra{1}^{(i)} = \frac{n}{2}\left(\openone - \frac{1}{n}\sum_{i=1}^n   Z^{(i)} \right) 
\end{equation} 
and consider \eqref{eqn:proj} as the initial Hamiltonian, preparing state $\ket{0}^{\otimes n}$.  
We will act on \eqref{eqn:proj} with a sequence of gates $ \prod_{l=1}^L U_l$ corresponding to the circuit being simulated as 
\begin{equation}\label{eqn:isoaffine}
h(k) = \left(\prod_{l=1}^{k\leq L} U_l \right)P_\phi \left(\prod_{l=1}^{k\leq L} U_l\right)^\dagger \geq 0
\end{equation}
which preserves the spectrum. From the properties of $P_\phi$ it hence follows that $h(k)$ is non-negative and non-degenerate $\forall k \leq L$.  We now consider the action of the gates \eqref{eqn:isoaffine} on \eqref{eqn:proj}. 

At $k=0$ from \eqref{eqn:proj} there are $n$ expected values to be minimized plus a global energy shift that will play a multiplicative role as the circuit depth increases. To consider $k=1$ we first expand a universal gate set expressed in the linear extension of the Pauli basis.  

Interestingly, the coefficients $\mathcal{J}^{a b \dots c}_{\alpha \beta \dots \gamma}$ of the gates will not serve as direct input(s) to the quantum hardware; these coefficients play a direct role in the classical  step where the coefficients weight the sum to be minimized.  Let us then consider single qubit gates, in general form viz.,  
\begin{equation}
e^{-\imath {\bf{a}.\boldsymbol {\sigma}} \theta} = \openone \cos(\theta) - \imath {\bf{a}.\boldsymbol {\sigma}} \sin(\theta) 
\end{equation} 
where $\bf a$ is a unit vector and ${\bf{a}.\boldsymbol {\sigma}}  = \sum_{i=1}^3 a_i\sigma_i$.  So each single qubit gate increases the number of expected values by a factor of at most $4^2$.  At first glance, this appears prohibitive yet there are two factors to consider.  The first is the following Lemma (\ref{lemma:invariance}). 

\begin{lemma}[Clifford Gate Cardinality Invariance---Biamonte \cite{UVQC}] \label{lemma:invariance} 
Let $\mathcal{C}$ be the set of all Clifford circuits on $n$ qubits, and let $\mathcal{P}$ be the set of all elements of the Pauli group on $n$ qubits.  Let $C\in\mathcal{C}$ and $P\in\mathcal{P}$ then it can be shown that 
$$
CPC^\dagger \in \mathcal{P}
$$ 
or in other words 
$$
C\left(\sigma^a_\alpha \sigma^b_\beta \cdots \sigma^c_\gamma \right)C^\dagger = \sigma^{a'}_{\alpha'} \sigma^{b'}_{\beta'} \cdots \sigma^{c'}_{\gamma'}
$$ 
and so Clifford circuits act by conjugation on tensor products of Pauli operators to produce tensor products of Pauli operators.   
\end{lemma}

For some $U$ a Clifford gate, Lemma \ref{lemma:invariance} shows that the cardinality is invariant.  Non-Clifford gates increase the cardinality by factors $\mathcal{O}(e^n)$ and so must be logarithmically bounded from above.  Hence, telescopes bound the number of expected values by restricting to circuit's with 
$$
k\sim\mathcal{O}(\text{poly} \ln n)
$$ 
general single qubit gates. Clifford gates do however modify the locality of terms appearing in the expected values---this is highly prohibitive in adiabatic quantum computation yet arises here as local measurements. 

A final argument supporting the utility of telescopes is that the initial state is restricted primarily by the initial Hamiltonian having only a polynomial number of non-vanishing coefficients in the Pauli basis.  In practice---using today's hardware---it should be possible to prepare an $\epsilon$-close 2-norm approximation to any product state 
$$
\bigotimes_{k=1}^n \left(\cos \theta_k \ket{0}+e^{\imath \phi_k}\sin \theta_k \ket{1}\right)
$$
which is realised by modifying the projectors in \eqref{eqn:proj} with a product of single qubit maps $\bigotimes_{k=1}^n U_k$.  Other more complicated states are also possible.  

Finally to finish the proof of Lemma \ref{thm:tele}, the variational sequence is given by the description of the gate sequence itself.  Hence a state can be prepared causing the Hamiltonian to accept and stability applies (Lemma \ref{thm:e2overlap}). 

To explore telescopes in practice, let us then explicitly consider the quantum algorithm for state overlap (a.k.a., {\it swap test} see e.g.~\cite{2018NJPh...20k3022C}).  This algorithm has an analogous structure to phase estimation, a universal quantum primitive which  form the backbone of error-corrected quantum algorithms.  

\begin{example} 
We are given two $d$-qubit states $\ket{\rho}$ and $\ket{\tau}$ which will be non-degenerate and minimal eigenvalue states of some initial Hamiltonian(s) on $n+1$ qubits 
\begin{equation}
h(0) \ket{+, \rho, \tau} = 0 
\end{equation}
corresponding to the minimization of $\text{poly}(n/2)+1$ expected values where the first qubit (superscript 1 below) adds one term and is measured in the $X$-basis. The controlled swap gate takes the form 
\begin{equation}
[U_\text{swap}]^1_m=\frac{1}{2}\left(\openone^1 +Z^1\right)\otimes \openone^m + \frac{1}{2}\left(\openone^1 - Z^1\right)\otimes \mathcal{S}^m
\end{equation} 
where $m=(i,j)$ indexes a qubit pair and the exchange operator of a pair of qubit states is $\mathcal{S}=\openone + \boldsymbol {\sigma}.\boldsymbol {\sigma}$.  For the case of $d=1$ we arrive at the simplest (3-qubit) experimental demonstration.  At the minimum ($=0$), the expected value of the first qubit being in logical zero is $\frac{1}{2}+\frac{1}{2}|\braket{\rho}{\tau}|^2$.  The final Hadamard gate on the control qubit is considered in the measurement step. 
\end{example} 

Telescopes offer some versatility yet fail to directly prove universality in their own right.  The crux lies in the fact that we are only allowed some polynomial in $\ln n$ non-Clifford gates (which opens an avenue for classical simulation, see \cite{Bravyi_2016, Bravyi_2019}).  Interestingly however, we considered the initial Hamiltonian in \eqref{eqn:proj} as a specific sum over projectors.  We instead could bound the cardinality by some polynomial in $n$.  Such a construction will now be established.  It retroactively proves universality of telescopes and in fact uses telescopes in its construction.  However true this might be, the spirit is indeed lost.  Telescopes are a tool which gives some handle on what we can do without adding additional slack qubits. The universal construction then follows. 

\section{Maximizing projection onto the history state}

We will now prove the following theorem (\ref{thm:history}) which establishes universality of the variational model of quantum computation. 

\begin{theorem}[Universal Objective Function---Biamonte \cite{UVQC}]\label{thm:history}
Consider a quantum circuit of $L$ gates on $n$-qubits producing state $\prod_l U_l \ket{0}^{\otimes n}$.  Then there exists an objective function (Hamiltonian, $\mathcal{H}$) with non-degenerate ground state, cardinality $\mathcal{O}(L^2)$ and spectral gap $\Delta\geq \mathcal{O}(L^{-2})$ acting on $n+\mathcal{O}(\ln L)$ qubits such that acceptance implies efficient preparation of the state $\prod_l U_l \ket{0}^{\otimes n}$.  Moreover, a variational sequence exists causing the objective function to accept.  
\end{theorem} 

To construct an objective function satisfying Theorem \ref{thm:history}, we modify the Feynman-Kitaev clock construction \cite{Fey82, KSV02}.  Coincidentally (and tangential to our objectives here), this construction is also used in certain definitions of the complexity class quantum-Merlin-Arthur (\QMA{}), the quantum analog of \NP{}, through the \QMA{}-complete problem k-\LH{}~\cite{KSV02}.  

Feynman developed a time-independent Hamiltonian that induces unitary dynamics to simulate a sequence of gates \cite{Fey82}. Consider 
\begin{equation} \label{eqn:hpropfey}
\begin{split}
\tilde{\mathcal{H}}_t &= U_t\otimes \ket{t}\bra{t-1} + U_t^\dagger \otimes \ket{t-1}\bra{t} \\
 \tilde{\mathcal{H}}_{\text{prop}} & = \sum_{t=1}^L \tilde{\mathcal{H}}_t
 \end{split}
\end{equation}
where the Hamiltonian \eqref{eqn:hpropfey} acts on a clock register (right) with orthogonal clock states $0$ to $L$ and an initial state $\ket{\xi}$ (left).  Observation of the clock in state $\ket{L}$ after some time $s=s_\star$ produces 
\begin{equation}
\openone \otimes \bra{L} e^{-\imath \cdot s \cdot \tilde{\mathcal{H}}_{\text{prop}}} \ket{\xi}\otimes \ket{0} = U_L \cdots U_1 \ket{\xi}. 
\end{equation} 

The Hamiltonian $\mathcal{H}_{\text{prop}}$ in \eqref{eqn:hpropfey} can be modified as \eqref{eqn:hprop2} so as to have the history state \eqref{eqn:hist} as its ground state 
\begin{equation} \label{eqn:hprop2}
- U_t\otimes \ket{t}\bra{t-1} - U_t^\dagger \otimes \ket{t-1}\bra{t} + \ket{t}\bra{t} + \ket{t-1}\bra{t-1}
  = 2\cdot \mathcal{H}_t \geq 0 
\end{equation}
where $\mathcal{H}_t$ is a projector.  Then $\mathcal{H}_{\text{prop}} = \sum_{t=1}^L \mathcal{H}_t$ has the history state as its ground state as 
\begin{equation}\label{eqn:hist}
\ket{\psi_{\text{hist}}} = \frac{1}{\sqrt{L+1}} \sum_{t=0}^L U_t\cdots U_1\ket{\xi}\otimes \ket{t} 
\end{equation}
for any input state $\ket{\xi}$ where $0 = \bra{\psi_{\text{hist}}}\mathcal{H}_{\text{prop}} \ket{\psi_{\text{hist}}}$.  This forms the building blocks of our objective function.  We will hence establish Theorem \ref{thm:history} by a series of lemma.  

\begin{lemma}[Lifting the degeneracy] \label{lemma:degen}
Adding the tensor product of a projector with a telescope 
\begin{equation}
\mathcal{H}_{\text{in}} = V\left( \sum_{i = 1}^n P_1^{(i)} \right)V^\dagger \otimes P_0
\end{equation} 
lifts the degeneracy of the ground space of $\mathcal{H}_{\text{prop}}$ and the history state with fixed input as 
\begin{equation}
\frac{1}{\sqrt{L+1}} \sum_{t=0}^L \prod_{l = 1}^t U_l(V\ket{0}^{\otimes n}) \otimes \ket{t} 
\end{equation}
is the non-degenerate ground state of $J\cdot \mathcal{H}_{\text{in}} + K\cdot  \mathcal{H}_{\text{prop}}$ for real $J, K>0$.  
\end{lemma}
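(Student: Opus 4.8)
The plan is to establish Lemma \ref{lemma:degen} in three stages: first characterize the kernel of $\mathcal{H}_{\text{prop}}$, then intersect it with the kernel of $\mathcal{H}_{\text{in}}$, and finally verify non-degeneracy and positivity of everything else. First I would recall from the Feynman--Kitaev analysis preceding Theorem \ref{thm:history} that $\mathcal{H}_{\text{prop}} = \sum_{t=1}^L \mathcal{H}_t \geq 0$ with each $\mathcal{H}_t$ a projector, and that its kernel is spanned exactly by the history states $\ket{\psi_{\text{hist}}(\xi)} = \tfrac{1}{\sqrt{L+1}} \sum_{t=0}^L (\prod_{l=1}^t U_l)\ket{\xi}\otimes\ket{t}$ as $\ket{\xi}$ ranges over $\mathbb{C}_2^{\otimes n}$. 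This is the standard clock-Hamiltonian argument: the change of basis $W = \sum_{t=0}^L (\prod_{l=1}^t U_l)\otimes\ket{t}\bra{t}$ conjugates $\mathcal{H}_{\text{prop}}$ to the graph Laplacian of a path on $L+1$ vertices (tensored with $\openone$ on the computation register), whose kernel is the uniform vector over clock states. So $\ker \mathcal{H}_{\text{prop}}$ is an $n$-qubit worth of degeneracy, parameterized by the input $\ket{\xi}$.

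Next I would compute $\bra{\psi_{\text{hist}}(\xi)}\mathcal{H}_{\text{in}}\ket{\psi_{\text{hist}}(\xi)}$. Since $\mathcal{H}_{\text{in}} = V(\sum_i P_1^{(i)})V^\dagger \otimes P_0$ acts as $P_0 = \ketbra{0}{0}$ on the clock register, only the $t=0$ term of the history state survives the clock projector, contributing $\tfrac{1}{L+1}\bra{\xi}V(\sum_i P_1^{(i)})V^\dagger\ket{\xi}$. This is a non-negative quantity vanishing if and only if $V^\dagger\ket{\xi}$ is annihilated by every $P_1^{(i)}$, i.e. $V^\dagger\ket{\xi} \propto \ket{0}^{\otimes n}$, i.e. $\ket{\xi}\propto V\ket{0}^{\otimes n}$. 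Therefore within $\ker\mathcal{H}_{\text{prop}}$ the operator $\mathcal{H}_{\text{in}}$ is non-negative and has a one-dimensional kernel, namely the history state with input $V\ket{0}^{\otimes n}$, which is precisely the claimed ground state. Because both $J\cdot\mathcal{H}_{\text{in}}$ and $K\cdot\mathcal{H}_{\text{prop}}$ are non-negative with $J,K>0$, a vector lies in $\ker(J\mathcal{H}_{\text{in}}+K\mathcal{H}_{\text{prop}})$ if and only if it lies in $\ker\mathcal{H}_{\text{in}}\cap\ker\mathcal{H}_{\text{prop}}$; combining the two kernel computations gives a unique (up to scalar) ground vector of eigenvalue $0$, establishing non-degeneracy.

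The main obstacle — and the step I would treat most carefully — is the claim that $\mathcal{H}_{\text{in}}$ restricted to $\ker\mathcal{H}_{\text{prop}}$ is non-negative with the stated kernel, because one must be sure that no cancellation occurs between the $\mathcal{H}_{\text{prop}}$ part and the $\mathcal{H}_{\text{in}}$ part \emph{outside} the common kernel when forming the sum; this is handled by the general fact that a sum of positive semidefinite operators has kernel equal to the intersection of kernels, so no genuine subtlety arises provided both summands are manifestly $\geq 0$, which they are ($\mathcal{H}_{\text{prop}}$ a sum of projectors, $\mathcal{H}_{\text{in}}$ a tensor product of a conjugated sum of projectors with a projector). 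A secondary point worth a line is that $\mathcal{H}_{\text{in}}$ indeed has bounded cardinality in the Pauli basis: $\sum_i P_1^{(i)}$ is $1$-local, $P_0$ on the clock register (encoded in $\mathcal{O}(\ln L)$ qubits) contributes $\mathcal{O}(\text{poly}(L))$ Pauli terms, and conjugation by $V$ — which by the Telescoping Lemma \ref{thm:tele} we take to have $\mathcal{O}(\text{poly}\ln n)$ non-Clifford gates — preserves polynomial cardinality via Lemma \ref{lemma:invariance}. The spectral gap estimate $\Delta\geq\mathcal{O}(L^{-2})$ I would defer, as it follows from the path-graph Laplacian gap together with a perturbative (Kitaev geometric-lemma) argument in a subsequent lemma rather than here.
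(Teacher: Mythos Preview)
Your proposal is correct and follows essentially the same approach as the paper: identify $\ker\mathcal{H}_{\text{prop}}$ as the span of history states over all inputs $\ket{\xi}$, then use that both summands are positive semidefinite so the ground space of $J\mathcal{H}_{\text{in}}+K\mathcal{H}_{\text{prop}}$ is $\ker\mathcal{H}_{\text{in}}\cap\ker\mathcal{H}_{\text{prop}}$, which singles out $\ket{\xi}\propto V\ket{0}^{\otimes n}$. The paper is terser---it just writes down $\ker\mathcal{H}_{\text{in}}$ on the full Hilbert space and asserts the $\arg\min$---whereas you compute the quadratic form of $\mathcal{H}_{\text{in}}$ directly on history states, but the logic is identical; your remarks on cardinality and the spectral gap are correctly deferred to the surrounding lemmas.
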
 

\begin{proof}[Lemma \ref{lemma:degen}]
The lowest energy subspace of $\mathcal{H}_{\text{prop}}$ is spanned by $\ket{\psi_{\text{hist}}}$ which has degeneracy given by freedom in choosing any input state $\ket{\xi}$.  To fix the input, consider a tensor product with a telescope 
\begin{equation}
\mathcal{H}_{\text{in}} = V\left( \sum_{i = 1}^n P_1^{(i)} \right)V^\dagger \otimes P_0
\end{equation} 
for $P_1 = \ket{1}\bra{1} = \openone - P_0$ acting on the qubits labeled in the superscript $(i)$ and $P_0$ on the clock space (right).  It is readily seen that $\mathcal{H}_{\text{in}}$ has unit gap and 
\begin{equation} 
\ker \{\mathcal{H}_{\text{in}}\} = \text{span}\{ \ket{\zeta}\otimes \ket{c}, V\ket{0}^{\otimes n}\otimes \ket{0} | 0<c \in \mathbb{N}_+ \leq L, \ket{\zeta}\in \mathbb{C}_2^{\otimes n}\} 
\end{equation} 
Now for positive $J$, $K$ 
\begin{equation}
\arg\min \{J\cdot \mathcal{H}_{\text{in}} + K\cdot  \mathcal{H}_{\text{prop}} \} \propto \frac{1}{\sqrt{L+1}} \sum_{t=0}^L \prod_{l = 1}^t U_l(V\ket{0}^{\otimes n}) \otimes \ket{t} 
\end{equation} 
\end{proof}

\begin{lemma}[Existence of a gap---Biamonte \cite{UVQC}]\label{lemma:gap}
For appropriate non-negative $J$ and $K$, the operator $J\cdot \mathcal{H}_{\text{in}} + K\cdot  \mathcal{H}_{\text{prop}}$ is gapped with a non-degenerate ground state and hence, Theorem \ref{thm:e2overlap} applies with 
\begin{equation}
\Delta \geq \max\left\{ J,\,  \frac{K \pi^2}{2(L+1)^2} \right\}. 
\end{equation} 
\end{lemma}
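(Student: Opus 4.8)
The plan is to analyze the spectrum of $J\cdot \mathcal{H}_{\text{in}} + K\cdot \mathcal{H}_{\text{prop}}$ by exploiting the block structure that both operators share with respect to the clock register, and then to combine a lower bound on each block with a bound on the notorious ``small gap'' of the propagation Hamiltonian. First I would recall the standard change of basis: conjugating $\mathcal{H}_{\text{prop}}$ by the unitary $W = \sum_{t=0}^L (\prod_{l=1}^t U_l)\otimes \ketbra{t}{t}$ maps $\mathcal{H}_{\text{prop}}$ to $\openone \otimes \Delta_{\text{path}}$, where $\Delta_{\text{path}}$ is (twice) the combinatorial Laplacian of the path graph on $L+1$ vertices. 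The eigenvalues of $\Delta_{\text{path}}$ are well known to be $2(1-\cos(\pi k/(L+1)))$ for $k=0,\dots,L$, so the smallest nonzero eigenvalue is $2(1-\cos(\pi/(L+1))) \geq \pi^2/(L+1)^2$ by the elementary bound $1-\cos\theta \geq \theta^2/\pi^2$ for $\theta\in[0,\pi]$. This already gives $K\cdot \mathcal{H}_{\text{prop}}$ a gap of at least $K\pi^2/(L+1)^2$ above its (highly degenerate) kernel.

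Next I would handle $\mathcal{H}_{\text{in}}$. Since $\mathcal{H}_{\text{in}} = V(\sum_i P_1^{(i)})V^\dagger \otimes P_0$ and $\sum_i P_1^{(i)}$ has eigenvalue $0$ only on $V\ket{0}^{\otimes n}$ and eigenvalue $\geq 1$ otherwise, while $P_0$ projects onto clock state $\ket{0}$, the operator $\mathcal{H}_{\text{in}}$ has unit gap on the subspace of states supported on clock-$0$, and is identically zero on the orthogonal complement. The key qualitative point, established in Lemma~\ref{lemma:degen}, is that the common kernel of both terms is one-dimensional, spanned by the history state with the correct fixed input. The remaining work is quantitative: to show that on the orthocomplement of this one-dimensional kernel, $J\cdot\mathcal{H}_{\text{in}}+K\cdot\mathcal{H}_{\text{prop}} \geq \max\{J, K\pi^2/2(L+1)^2\}$ (or some such lower bound leading to the stated $\Delta$). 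I would split into two cases depending on the weight a trial vector $\ket{\eta}$ places on $\ker\mathcal{H}_{\text{prop}}$. If $\ket{\eta}$ has a substantial component outside $\ker\mathcal{H}_{\text{prop}}$, the $K\mathcal{H}_{\text{prop}}$ term alone forces energy $\geq K\pi^2/(L+1)^2$ times that weight. If instead $\ket{\eta}$ lies almost entirely in $\ker\mathcal{H}_{\text{prop}}$, then (after the $W$-conjugation) it is close to a state of the form $\frac{1}{\sqrt{L+1}}\sum_t \ket{\chi}\otimes\ket{t}$ for some $n$-qubit $\ket{\chi}$; being orthogonal to the history state means $\ket{\chi}$ is orthogonal to $V\ket{0}^{\otimes n}$, and then $J\mathcal{H}_{\text{in}}$ contributes $\geq J\cdot\frac{1}{L+1}$ — here I would be careful, since the $\frac{1}{L+1}$ from the clock-$0$ overlap may require choosing $J$ to scale with $L$, or the bound absorbed into the stated form. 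This case analysis via the standard geometric lemma (e.g. the ``projection lemma'' of Kempe--Kitaev--Regev, or Jordan's lemma on two projectors) is the natural tool, and the factor of $2$ in $K\pi^2/2(L+1)^2$ is precisely the loss one incurs in combining the two cases.

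The main obstacle I anticipate is the standard one in Feynman--Kitaev clock analyses: controlling the interaction between the ``in'' term and the ``propagation'' term near their common kernel, i.e. ruling out a spurious low-energy state that is a clever superposition cheating both penalties simultaneously. The clean way is to invoke a geometric lemma bounding the lowest eigenvalue of a sum of two positive semidefinite operators in terms of their individual gaps and the angle between their kernels; the angle here is controlled because $\ker\mathcal{H}_{\text{prop}}$ consists of clock-uniform states and $\mathcal{H}_{\text{in}}$ acts only on the clock-$0$ slice, giving an overlap of order $1/\sqrt{L+1}$. Propagating this through yields the $1/(L+1)^2$ scaling. I would then simply set $J=\Theta(1)$ and $K=\Theta(1)$ appropriately (or track the constants) so that the final gap is $\Delta \geq \max\{J, K\pi^2/2(L+1)^2\}$, which for fixed positive $J,K$ is $\Omega(L^{-2})$ as claimed in Theorem~\ref{thm:history}; combined with Lemma~\ref{lemma:degen} (non-degeneracy) and Theorem~\ref{thm:e2overlap} (energy-to-overlap stability), this closes the argument. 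A routine check that the cardinality of $J\mathcal{H}_{\text{in}}+K\mathcal{H}_{\text{prop}}$ in the Pauli basis is $\mathcal{O}(L^2)$ and that only $\mathcal{O}(\ln L)$ clock qubits are needed (binary-encoding the $L+1$ clock states, as in the excerpt's bounded-cardinality remarks) completes the proof of Theorem~\ref{thm:history}.
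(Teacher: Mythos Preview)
Your approach is correct and shares its first step with the paper: both conjugate $\mathcal{H}_{\text{prop}}$ by $W$ to reduce it to the path-graph Laplacian on the clock register and read off the $\Theta(L^{-2})$ gap. The routes diverge at the combination step. The paper does not invoke the projection lemma or Kitaev's geometric lemma; it appeals to Weyl's inequalities directly, asserting $\lambda_1(J\mathcal{H}_{\text{in}}+K\mathcal{H}_{\text{prop}}) \geq \max\{\lambda_1(J\mathcal{H}_{\text{in}}),\lambda_1(K\mathcal{H}_{\text{prop}})\}$ and concluding the stated $\Delta$ in one line. Your route via the geometric lemma is the standard one in Hamiltonian complexity and is more careful about precisely the issue you flag as the ``main obstacle'': since both summands have highly degenerate kernels, Weyl with eigenvalues counted by multiplicity gives only the trivial bound, and the spectral-gap reading of $\lambda_1$ is not in general a consequence of Weyl. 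Your case split, with the $1/\sqrt{L+1}$ clock-$0$ overlap controlling the angle between the two kernels, is exactly the ingredient that makes the combination rigorous; the paper's shortcut trades that care for brevity. (Two minor constant remarks: with the paper's normalisation the $\mathcal{H}_{\text{prop}}$-eigenvalues are $1-\cos(\pi k/(L+1))$ rather than twice that, and to bound the gap from \emph{below} one needs a lower bound such as $1-\cos x \ge 2x^2/\pi^2$ on $[0,\pi]$, not the upper bound $1-\cos x \le x^2/2$ that the paper quotes.)
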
 

\begin{proof}[Lemma \ref{lemma:gap}] 
$\mathcal{H}_{\text{prop}}$ is diagonalized by the following unitary transform (see the technique in \cite{KSV02}) 
\begin{equation} 
W = \sum_{t=0}^L U_t\cdots U_1\otimes \ket{t}\bra{t} 
\end{equation} 
then $W\mathcal{H}_{\text{prop}}W^\dagger$ acts as identity on the register space (left) and induces a quantum walk on a 1D line on the clock space (right). Hence the eigenvalues are known to be $\lambda_k = 1-\cos\left(\frac{\pi k}{1+L}\right)$ for integer $0\leq k \leq L$. 
From the standard inequality, $1-\cos(x)\leq x^2/2$, we find that $\mathcal{H}_{\text{prop}}$ has a gap lower bounded as 
\begin{equation} 
\lambda_0 = 0 \leq \frac{\pi^2}{2(L+1)^2} \leq \lambda_1
\end{equation}  

From Weyl's inequalities, it follow that $J\cdot \mathcal{H}_{\text{in}} + K\cdot  \mathcal{H}_{\text{prop}}$ is gapped as
\begin{align}
\lambda_0 = 0 &< \max\{ \lambda_1(J\cdot \mathcal{H}_{\text{in}}), \lambda_1(K\cdot  \mathcal{H}_{\text{prop}}) \}\\
&\leq \lambda_1(J\cdot \mathcal{H}_{\text{in}} + K\cdot  \mathcal{H}_{\text{prop}}) \\
&\leq \min\{ \lambda_{n-1}(J\cdot \mathcal{H}_{\text{in}}), \lambda_{n-1}(K\cdot  \mathcal{H}_{\text{prop}}) \}
\end{align}
with a non-degenerate ground state and hence, Theorem \ref{thm:e2overlap} applies with 
\begin{equation}
\Delta \geq \max\left\{ J,\,  \frac{K \pi^2}{2(L+1)^2} \right\}
\end{equation} 
\end{proof} 

\begin{lemma}[$\mathcal{H}_{\text{prop}}$ admits a log space embedding---Biamonte \cite{UVQC}] \label{lemma:logem}
The clock space of $\mathcal{H}_{\text{prop}}$ embeds into $\mathcal{O}(\ln L)$ slack qubits, leaving the ground space of $J\cdot \mathcal{H}_{\text{in}} + K\cdot  \mathcal{H}_{\text{prop}}$ and the gap invariant. 
\end{lemma}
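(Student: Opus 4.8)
\textbf{Proof proposal (Lemma \ref{lemma:logem}).} The plan is to replace the $(L+1)$-dimensional clock register by $m \bydef \lceil \log_2 (L+1)\rceil = \mathcal{O}(\ln L)$ qubits under a binary encoding of the clock index, and then to argue that this relabelling is (essentially) an isometry, so it changes neither the ground space nor the gap of $J\cdot\mathcal{H}_{\text{in}}+K\cdot\mathcal{H}_{\text{prop}}$ — only the number of physical systems carrying the clock, and (harmlessly) the locality of the clock terms.

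First I would pad the circuit: append $2^m-1-L$ identity gates to $\prod_{l=1}^L U_l$, obtaining an equivalent circuit of length $L' \bydef 2^m-1 = \Theta(L)$ whose clock register is now exactly $2^m$-dimensional. The extra propagation terms $2\mathcal{H}_t = (\ket{t}-\ket{t-1})(\bra{t}-\bra{t-1})$ for $t>L$ are still non-negative projectors, the history state simply ``holds'' the circuit output in the later clock slots, and the gap bound of Lemma \ref{lemma:gap} becomes $\Delta \ge \pi^2/(2(L'+1)^2) = \mathcal{O}(L^{-2})$, with the $\mathcal{O}(L^2)$ cardinality of Theorem \ref{thm:history} preserved up to the constant-factor change $L\mapsto L'$. (If one prefers not to pad, the alternative is to keep $m=\lceil\log_2(L+1)\rceil$ and add a penalty $M\cdot\openone_{\text{reg}}\otimes\sum_{t=L+1}^{2^m-1}\ketbra{t}{t}$ with $M$ exceeding $\|J\cdot\mathcal{H}_{\text{in}}+K\cdot\mathcal{H}_{\text{prop}}\|$, lifting the spurious zero-energy states that otherwise live on the unused clock labels; this projector is a sum of fewer than $L$ computational-basis projectors, each expanding into $2^m<2(L+1)$ Pauli strings, so it costs only $\mathcal{O}(L^2)$ more.)

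Next I would define the embedding. Let $\iota:\mathbb{C}^{L'+1}\to(\mathbb{C}_2)^{\otimes m}$ send the clock basis vector $\ket{t}$ to the computational basis state $\ket{\mathrm{bin}(t)}$; since $L'+1=2^m$, $\iota$ is unitary. Put $\mathcal{H}'_{\text{prop}}\bydef(\openone_{\text{reg}}\otimes\iota)\,\mathcal{H}_{\text{prop}}\,(\openone_{\text{reg}}\otimes\iota)^\dagger$ and likewise $\mathcal{H}'_{\text{in}}$. Because conjugation by a unitary preserves spectrum, multiplicities, and eigenvectors, $J\cdot\mathcal{H}'_{\text{in}}+K\cdot\mathcal{H}'_{\text{prop}}$ has the encoded history state $\frac{1}{\sqrt{L'+1}}\sum_{t=0}^{L'}\prod_{l\le t}U_l(V\ket{0}^{\otimes n})\otimes\ket{\mathrm{bin}(t)}$ as its unique ground state, with exactly the gap $\Delta$ of Lemma \ref{lemma:gap}; moreover $\mathcal{H}'_{\text{in}}$ retains its single-product-projector form on the clock, since $P_0=\ketbra{0}{0}$ on the clock becomes $\bigotimes_{i=1}^m\ketbra{0}{0}^{(i)}$.

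Finally I would verify the cardinality bookkeeping: under $\iota$ each propagation term $U_t\otimes\ketbra{t}{t-1}$ becomes $U_t\otimes\ketbra{\mathrm{bin}(t)}{\mathrm{bin}(t-1)}$, and $\ketbra{\mathrm{bin}(t)}{\mathrm{bin}(t-1)}=\bigotimes_{i=1}^m\ketbra{b_i(t)}{b_i(t-1)}$ is a tensor product of single-qubit operators drawn from $\{P_0,P_1,\ketbra{0}{1},\ketbra{1}{0}\}$, hence expands into $2^m<2(L+1)$ Pauli strings; the diagonal clock terms $\ketbra{t}{t}$ behave the same way. Summing over $t\le L'$ and adjoining $\mathcal{H}'_{\text{in}}$ keeps the total cardinality $\mathcal{O}(L^2)$ (plus an $\mathcal{O}(nL)$ contribution from $\mathcal{H}'_{\text{in}}$, absorbed when $L\gtrsim n$). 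The one place that genuinely needs care — the main obstacle — is guaranteeing that no spurious low-lying states appear on the clock subsystem: padding to a power of two makes $\iota$ exactly unitary so this cannot occur, while the penalty-term route instead needs the routine estimate that $M$ beats the norm of the rest of the Hamiltonian. Everything else (locality of clock terms increasing, the Pauli-expansion arithmetic, invariance of the gap and of non-degeneracy) follows immediately from Lemmas \ref{lemma:degen} and \ref{lemma:gap}.
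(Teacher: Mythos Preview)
Your proposal is correct and takes essentially the same approach as the paper---encode the $L+1$ clock states in binary on $m=\lceil\log_2(L+1)\rceil$ qubits, with each clock projector $\ketbra{t}{t}$ becoming the tensor product $\bigotimes_{i=1}^m \tfrac{1}{2}(\openone+(-1)^{b_i(t)}Z_i)$. The paper's proof is in fact terser than yours: it simply writes down this projector formula and the qubit count, without explicitly discussing padding with identity gates or penalising unused clock labels; your treatment of those points (and the explicit invocation of unitary conjugation to preserve the gap) supplies detail the paper leaves implicit, and the padding trick is mentioned by the paper only later, in the proof of Lemma~\ref{lemma:uob}, for the separate purpose of boosting output probability.
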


\begin{proof}[Lemma \ref{lemma:logem}---Biamonte \cite{UVQC}] 
An $L$-gate circuit requires at most $k = \lceil \ln_2 L \rceil$ clock qubits. Consider a projector $P$ onto the orthogonal compliment of a basis state given by bit string ${\bf x} = x_1 x_2 \dots x_k$.  Then 
\begin{equation} \label{eqn:clockpro}
P_{\bf x} = \ket{\bar{\bf x}}\bra{\bar{\bf x}} = \bigotimes_{i=1}^{\lceil \ln_2 L \rceil} \frac{1}{2}\left(\openone + (-1)^{x_i} Z_i\right) 
\end{equation} 
where $\bar{\bf x}$ is the bitwise logical compliment of ${\bf x}$.   
\end{proof} 

\begin{lemma}[Existence and Acceptance---Biamonte \cite{UVQC}] \label{lemma:uob}
The objective function $J\cdot \mathcal{H}_{\text{in}} + K\cdot  \mathcal{H}_{\text{prop}}$ satisfies Theorem \ref{thm:history}. 
The gate sequence $\prod_l U_l \ket{0}^{\otimes n}$ is accepted by the objective function from Lemma \ref{lemma:uob} thereby satisfying Theorem \ref{thm:history}. 
\end{lemma}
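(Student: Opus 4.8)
The plan is to assemble Lemmas~\ref{lemma:degen}, \ref{lemma:gap} and \ref{lemma:logem} into the single operator $\mathcal{H} = J\cdot \mathcal{H}_{\text{in}} + K\cdot \mathcal{H}_{\text{prop}}$ and then verify, clause by clause, every assertion of Theorem~\ref{thm:history}. I would fix $V = \openone$, so the register is initialised to $\ket{0}^{\otimes n}$, and take $J = K = 1$. Lemma~\ref{lemma:degen} then gives that the unique ground state is the history state $\ket{\psi_{\text{hist}}} = \frac{1}{\sqrt{L+1}}\sum_{t=0}^{L}\left(\prod_{l=1}^{t}U_l\right)\ket{0}^{\otimes n}\otimes\ket{t}$ at energy $0$; Lemma~\ref{lemma:gap} gives the spectral gap $\Delta \geq \pi^2/(2(L+1)^2) = \Omega(L^{-2})$, matching the bound in the theorem; and the qubit count is $n$ register qubits plus the clock, which by Lemma~\ref{lemma:logem} costs only $\lceil\ln_2 L\rceil = \mathcal{O}(\ln L)$ slack qubits while leaving the ground space and gap intact, yielding the $n + \mathcal{O}(\ln L)$ claim.

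Next I would bound the cardinality. Here $\mathcal{H}_{\text{in}} = \left(\sum_{i=1}^n P_1^{(i)}\right)\otimes P_0$ expands into $\mathcal{O}(n)$ Pauli strings, which is dominated by the propagation part. Writing $\mathcal{H}_{\text{prop}} = \sum_{t=1}^{L}\mathcal{H}_t$ with $2\cdot\mathcal{H}_t = \openone\otimes(\ket{t}\bra{t}+\ket{t-1}\bra{t-1}) - U_t\otimes\ket{t}\bra{t-1} - U_t^\dagger\otimes\ket{t-1}\bra{t}$, I would observe that the register factor is $1$- or $2$-local, hence $\mathcal{O}(1)$ Pauli terms, while the clock factor, after the binary embedding of Lemma~\ref{lemma:logem}, is a rank-$\leq 1$ operator on $\lceil\ln_2 L\rceil$ qubits and therefore expands into at most $2^{\lceil\ln_2 L\rceil} = \mathcal{O}(L)$ Pauli strings with bounded coefficients. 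Thus each $\mathcal{H}_t$ contributes $\mathcal{O}(L)$ terms and summing over the $L$ values of $t$ gives total cardinality $\mathcal{O}(L^2)$, as required.

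It then remains to treat acceptance. For the implication ``acceptance $\Rightarrow$ efficient preparation of $\prod_l U_l\ket{0}^{\otimes n}$'': since $\mathcal{H}\geq 0$ with non-degenerate $0$-eigenvector $\ket{\psi_{\text{hist}}}$, any accepting $\ket{\phi}$ satisfies $\bra{\phi}\mathcal{H}\ket{\phi} < \Delta$, so Theorem~\ref{thm:e2overlap} gives $|\braket{\phi}{\psi_{\text{hist}}}|^2 \geq 1 - \bra{\phi}\mathcal{H}\ket{\phi}/\Delta > 0$; measuring the clock register of $\ket{\phi}$ and post-selecting on outcome $\ket{L}$, which occurs with probability $\Theta(1/L)$ because $|\braket{L}{\psi_{\text{hist}}}|^2 = 1/(L+1)$, collapses the register to a state whose $2$-norm distance from $\prod_l U_l\ket{0}^{\otimes n}$ is controlled by $\sqrt{\bra{\phi}\mathcal{H}\ket{\phi}/\Delta}$, so the output is prepared efficiently. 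For the existence of an accepting variational sequence I would exhibit an explicit $\mathrm{poly}(L,n)$-gate circuit that prepares $\ket{\psi_{\text{hist}}}$ exactly: first create the uniform clock superposition $\frac{1}{\sqrt{L+1}}\sum_{t=0}^{L}\ket{t}$ on the $\mathcal{O}(\ln L)$ clock qubits, then for $s = 1,\dots,L$ apply $U_s$ on the register controlled on the clock encoding a value $\geq s$, each such multiply-controlled gate decomposing into $\mathcal{O}(\mathrm{poly}(L))$ two-body gates via the $k$-controlled factorisations of the previous chapter. This sequence drives the objective function to $0 < \Delta$, i.e.\ it accepts, and it is the variational sequence promised by Theorem~\ref{thm:history}; collecting these facts completes the proof.

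I expect the cardinality accounting to be the main obstacle: one must ensure the binary clock embedding introduces no spurious low-energy states from ``illegal'' clock configurations encoding integers $> L$, which is exactly what Lemma~\ref{lemma:logem} supplies, the cleanest route being to annihilate precisely the legal clock subspace with an $\mathcal{O}(\ln L)$-term penalty so the count stays $\mathcal{O}(L^2)$ and the gap stays $\Omega(L^{-2})$. A secondary check is that the post-selected register state's fidelity with the circuit output is genuinely bounded below by $1 - \mathcal{O}(\bra{\phi}\mathcal{H}\ket{\phi}/\Delta)$, which follows from a short calculation combining Theorem~\ref{thm:e2overlap} with the clock weight $|\braket{L}{\psi_{\text{hist}}}|^2 = 1/(L+1)$.
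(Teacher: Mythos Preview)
Your proposal is correct and follows essentially the same skeleton as the paper: assemble Lemmas~\ref{lemma:degen}--\ref{lemma:logem}, count Pauli terms by observing that each clock factor on $\lceil\ln_2 L\rceil$ qubits expands into $\mathcal{O}(L)$ strings and there are $L$ propagation terms, and then argue acceptance. The paper's sketch is much terser than yours---it simply records $3L^2$ terms from $\mathcal{H}_{\text{prop}}$, $n$ from $\mathcal{H}_{\text{in}}$, and $\lceil\ln_2 L\rceil$ slack qubits---but the counting is the same.

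The one substantive difference is in the acceptance/output-extraction step. The paper does not post-select on clock $=L$ or build an explicit controlled-gate circuit for $\ket{\psi_{\text{hist}}}$; instead it pads the original circuit with identity gates. This is the standard {\sf QMA} trick: padding boosts the weight of the final clock state in $\ket{\psi_{\text{hist}}}$ from $1/(L+1)$ to a constant, so a single clock measurement yields the output with high probability, and it simultaneously lets one take $L+1$ to be a power of $2$, which eliminates the illegal-clock-state issue you flag at the end without needing a separate penalty term. Your route---$\Theta(1/L)$ post-selection plus an explicit illegal-clock penalty---is equally valid and keeps the same $\mathcal{O}(L^2)$ cardinality and $\Omega(L^{-2})$ gap; the paper's padding route is just more economical in that one move handles both concerns at once.
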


\begin{proof}
We will add $M$ identity gates to boost the probability of the desired circuit output state $\ket{\phi } = \Pi _{l=1}^LU_l \ket{0}^{\otimes n}$. The telescoping construction, we have that 
\begin{equation}\label{eqn:application}
1 - \frac{\bra{\phi}\mathcal{H}\ket{\phi} }{\max\{ J,  \frac{K \pi^2}{2(L+1)^2} \}} \leq | \braket{\phi}{\psi_{\text{hist}}}|^2 = \frac{1}{1+\frac{L+1}{M}}
\end{equation} 
whenever $\bra{\phi }\mathcal{H}\ket{\phi } < \max\{ J,  \frac{K \pi^2}{2(L+1)^2} \}$.  For large enough $M>L$, the right hand side of \eqref{eqn:application} approaches unity, implying acceptance. 

The term \eqref{eqn:clockpro} contributes $L$ terms and hence so does each of the four terms in $\mathcal{H}_t$ from \eqref{eqn:hprop2}.  Hence, the entire sum contributes $3\cdot L^2$ expected values, where we assume $U=U^\dagger$ and that $L$ is upper bounded by some family of circuits requiring $O(\text{poly}~ n)$ gates.  The input penalty $\mathcal{H}_{\text{in}}$ contributes $n$ terms and for an $L$-gate circuit on $n$-qubits we arrive at a total of $\mathcal{O}(\text{poly}~ L^2)$ expected values and $\mathcal{O}(\lceil \ln_2 L \rceil)$ slack qubits.  Adding identity gates to the circuit can boost output probabilities, causing the objective function to accept for a state prepared by the given quantum circuit.  
\end{proof} 

We are faced with considering self-inverse gates.  Such gates ($U$) have a spectrum $\text{Spec}(U)\subseteq\{\pm1\}$, are bijective to idempotent projectors ($P^2=P=P^\dagger$), viz. ${U = \openone - 2P}$ and if $V$ is a self-inverse quantum gate, so is the unitary conjugate $\tilde{V}= G V G^\dagger$ under arbitrary $G$.  Shi showed that a set comprising the controlled not gate (a.k.a.~Feynman gate) plus any one-qubit gate whose square does not preserve the computational basis is universal \cite{SHI02}.  Consider Hermitian 
\begin{equation}
R(\theta) = X\cdot \sin(\theta) + Z \cdot \cos(\theta), 
\end{equation} 
then 
\begin{equation}\label{eqn:RR}
e^{\imath \theta Y} = R(\pi / 2) \cdot R(\theta). 
\end{equation} 
Hence, a unitary $Y$ rotation is recovered by a product of two Hermitian operators.  A unitary $X$ rotation is likewise recovered by the composition \eqref{eqn:RR} when considering Hermitian $Y\cdot \sin(\theta) - Z \cdot \cos(\theta)$.  The universality of self-inverse gates is then established, with constant overhead.  Hence and to conclude, the method introduces not more than $\mathcal{O}(L^2)$ expected values while requiring not more than $\mathcal{O}(\ln L)$ slack qubits, for an $L$ gate quantum circuit.

\section{Discussion} 

We have established that variational methods can approximate any quantum state produced by a sequence of quantum gates and hence that variational quantum computation admits a universal model.  It appears evident that this method will yield shorter control sequences compared to the control sequence of the original quantum circuit---that is the entire point.  Indeed, the control sequence implementing the gate sequence being simulated serves as an upper-bound showing that a sequence exists to minimize the expected values.  These expected values are the fleeting resource which must be simultaneously minimized to find a shorter control sequence which prepares the desired output state of a given quantum circuit.    

Although error correction would allow the circuit model to replace methods developed here, the techniques we develop in universal variational quantum computation should augment possibilities in the NISQ setting, particularly with the advent of error suppression techniques  \cite{2016NJPh...18b3023M, PhysRevX.7.021050}.  Importantly, variational quantum computation forms a universal model in its own right and is not (in principle) limited in application scope.  

An interesting feature of the model of universal variational quantum computation is how many-body Hamiltonian terms are realized as part of the measurement process.  This is in contrast with leading alternative models of universal quantum computation.  

In the gate model, many-body interactions must be simulated by sequences of two-body gates.  The adiabatic model applies perturbative gadgets to approximate many-body interactions with two-body interactions  \cite{2004quant.ph..5098A, BL08}.    The variational model of universal quantum computation simulates many body interactions by local measurements.  Moreover the coefficients weighting many-body terms need not be implemented into the hardware directly; this weight is compensated for in the classical-iteration process which in turns controls the quantum state being produced.   
 
Many states cause a considered objective function to accept.  Hence the presented model is somewhat inherently agnostic to how the states are prepared. This enables  experimentalists to now vary the accessible control parameters to minimize an external and iteratively calculated objective function.  Though the absolute limitations of this approach in the absence of error correction are not known, a realizable method of error suppression could get us closer to implementation of the traditional text-book quantum algorithms such as Shor's factorisation algorithm.

\chapter{Gadget Hamiltonian Constructions in Ground State Computation}\label{chap:gadgets}

We considered in detail in \S~\ref{chap:qvspc} how to use gates to realize many-body Hamiltonian terms.  Here we consider a different situation.  One which does not assume access to pulses (unitary gates).  In this setting, which is relevant to a host of physical systems, we must turn to alternative methods to realize $k$-body interactions. This section follows \cite{BL08, Cao_2015}. 

Although adiabatic quantum computation is known to be a universal model of quantum computation \cite{2004quant.ph..5098A, OT06, BL08, CL08} and hence, in principle equivalent to the circuit model, the mappings between an adiabatic process and an arbitrary quantum circuit require significant overhead. 
Currently the approaches to universal adiabatic quantum computation require implementing multiple higher order and non-commuting interactions by means of perturbative gadgets \cite{BL08}.

Early work by Kitaev \emph{et al}.\ \cite{KSV02} established that an otherwise arbitrary Hamiltonian restricted to have at most $5$-body interactions has a ground state energy problem which is complete for the quantum analog of the complexity class \textsc{NP} (\textsc{{\sf QMA}-complete}).  Reducing the locality of the Hamiltonians from 5-body down to 2-body remained an open problem for a number of years.  In their 2004 proof that \textsc{2-local Hamiltonian} is \textsc{{\sf QMA}-Complete}, Kempe, Kitaev and Regev formalized the idea of a perturbative gadget, which finally accomplished this task \cite{KKR06}. Oliveira and Terhal further reduced the problem, showing completeness when otherwise arbitrary 2-body Hamiltonians were restricted to act on a square lattice \cite{OT06}. The form of the simplest \textsc{{\sf QMA}-complete} Hamiltonian is reduced to physically relevant models in \cite{BL08} (see also \cite{CM13}), e.g. 
\begin{equation}
\mathcal{H} = \sum_i h_i Z_i + \sum_{i<j} J_{ij}Z_iZ_j + \sum_{i<j}K_{ij}X_iX_j.  
\end{equation}

Although this model contains only physically accessible terms, programming problems into a universal adiabatic quantum computer \cite{BL08} or an adiabatic quantum simulator \cite{sim11} involves several types of $k$-body interactions (for bounded $k$).
To reduce from $k$-body interactions to 2-body is accomplished through the application of gadgets. Hamiltonian gadgets were introduced as theorem-proving tools in the context of quantum complexity theory yet their experimental realization currently offers the only path towards universal adiabatic quantum computation. In terms of experimental constraints, an important parameter in the construction of these gadgets is a large spectral gap introduced into the slack space as part of a penalty Hamiltonian.


\section{Hamiltonian complexity}\label{sec:qma}

What is the simplest Hamiltonians that allow universal adiabatic quantum computation?  For this we turn to the complexity class quantum-Merlin-Arthur (\QMA{}), the quantum analog of \NP{}, and consider the \QMA{}-complete problem k-\LH{}~\cite{KSV02}.  One solves $k$-\LH{} by determining if there exists an eigenstate with energy above a given value or below another---with a promise that one of these situations is the case---when the system has at most k-local interactions.  A \YES{} instance is shown by providing a witness eigenstate with energy below the lowest promised value.

\begin{definition}
 The $k$-local Hamiltonian problem: The input is a $k$-local Hamiltonian acting on n qubits, which is the sum of poly many Hermitian matrices that act on only $k$ qubits. The input also contains two numbers $a< b \in [0,1]$, such that $\frac{1}{b-a}=O(n^c)$ for some constant $c$. The problem is to determine whether the smallest eigenvalue of this Hamiltonian is less than $a$ or greater than $b$, promised that one of these is the case.
\end{definition}
    
\begin{remark}
The $k$-local Hamiltonian is {\bf QMA}-complete for $k \geq 2$.  The minimisation of $k$-local Hamiltonians is {\bf QMA}-hard for $k \geq 2$.
\end{remark}

The problem 5-\LH{} was shown to be \QMA{}-complete by Kitaev~\cite{KSV02}.  To accomplish this, Kitaev modified the autonomous quantum computer proposed by Feynman~\cite{Fey82}.  This modification later inspired a proof of the polynomial equivalence between quantum circuits and adiabatic evolutions by Aharonov et al.~\cite{2004quant.ph..5098A}.  Kempe, Kitaev and Regev subsequently proved \QMA{}-completeness of $2$-\LH{}~\cite{KKR06}.  Oliveira and Terhal then showed that universality remains even when the $2$-local Hamiltonians act on particles in a subgraph of the {\sc 2D} square lattice~\cite{OT06}. Any \QMA{}-complete Hamiltonian may realize universal adiabatic quantum computation, and so these results are also of interest for the implementation of quantum computation.

Since $1$-\LH{} is efficiently solvable, an open question is to determine which combinations of $2$-local interactions allow one to build \QMA{}-complete Hamiltonians.  Furthermore, the problem of finding the minimum set of interactions required to build a universal adiabatic quantum computer is of practical, as well as theoretical, interest: every type of $2$-local interaction requires a separate type of physical interaction.  To address this question we prove the following theorems:

\begin{theorem}[ZZXX Hamiltonian is \QMA{}-complete---Biamonte \& Love \cite{BL08}]\label{theorem:zzxx}
\emph{The decision problem \SH{} is \QMA{}-complete, with the ZZXX Hamiltonian given as: }
\begin{equation}\label{eqn:zzxx}
    {\mathcal H}_{\text{ZZXX}} = \sum_{i}h_i Z_i+\sum_{i} l_i X_i+{} 
\sum_{i,j}J_{ij}Z_i Z_j+\sum_{i,j}K_{ij}X_iX_j.
\end{equation}
\end{theorem}

\begin{theorem}[ZX Hamiltonian is \QMA{}-complete---Biamonte \& Love \cite{BL08}]\label{theorem:zx}
\emph{The decision problem \AH{} is \QMA{}-complete, with the ZX Hamiltonian given as:}
\begin{equation}\label{eqn:zx}
    {\mathcal H}
    _{\text{ZX}}=\sum_{i}h_i Z_i+\sum_i l_iX_i+{}
\sum_{i<j}J_{ij}Z_i X_j+\sum_{i<j}K_{ij}X_iZ_j.
\end{equation}
\end{theorem}

\section{{\scshape Local Hamiltonian} is {\sf QMA}-complete}

The translation from quantum circuits to adiabatic evolutions began when Kitaev~\cite{KSV02} replaced the time-dependence of gate model quantum algorithms with spatial degrees of freedom using the non-degenerate ground state of a positive semidefinite Hamiltonian:
\begin{equation}\label{eqn:htot}
0=\mathcal{H}\ket{\psi_{\text{hist}}} = 
(\mathcal{H}_{\text{in}}+ \mathcal{H}_{\clock}+ \mathcal{H}_{\text{clockinit}}+ \mathcal{H}_{\text{prop}})\ket{\psi_{\text{hist}}}.
\end{equation}
To describe this, let $T$ be the number of gates in the quantum circuit with gate sequence $U_T\cdots U_2U_1$ and let $n$ be the number of logical qubits acted on by the circuit.  Denote the circuit's classical input by $\ket{x}$ and its output by $\ket{\psi_{\text{out}}}$.  The {\em history state} representing the circuit's entire time evolution is:
\begin{align}
\ket{\psi_{\text{hist}}}&= \frac{1}{\sqrt{T+1}}\biggl[\ket{x}\otimes\ket{0}^{\otimes T}+U_1\ket{x}\otimes\ket{1}\ket{0}^{\otimes T-1}+{}\nonumber\\
&\qquad+U_2U_1\ket{x}\otimes\ket{11}\ket{0}^{\otimes T-2}+{}\nonumber\\
&\qquad+\ldots+{}\\
&\qquad+ U_T\cdots
U_2U_1\ket{x}\otimes\ket{1}^{\otimes T}\biggr],\nonumber
\end{align}
where we have indexed distinct time steps by a $T$ qubit unary clock.  In the following, tensor product symbols separate operators acting on logical qubits (left) and clock qubits (right).

$\mathcal{H}_{\text{in}}$ acts on all $n$ logical qubits and the first clock qubit.  By annihilating time-zero clock states coupled with classical input $x$, $\mathcal{H}_{\text{in}}$ ensures that valid input state ($\ket{x}\otimes\ket{0...0}$) is in the low energy eigenspace:
\begin{equation}\label{eqn:Hin}
\mathcal{H}_{\text{in}}=\sum_{i=1}^n (\eye-\ket{x_i}\bra{x_i})\otimes \ket{0}\bra{0}_1
=\left(\frac{1}{4}\right)\sum_{i=1}^n(\eye -(-1)^{x_i}Z_i)\otimes(\eye+Z_1).
\end{equation}

$\mathcal{H}_{\clock}$ is an operator on clock qubits ensuring that valid unary clock states $\ket{00...0}$, $\ket{10..0}$, $\ket{110..0}$ etc., span the low energy eigenspace:
\begin{equation}\label{eqn:Hclock}
\mathcal{H}_{\clock}=\sum_{t=1}^{T-1}\ketbra{01}{01}_{(t,t+1)}
=\frac{1}{4}\left[(T-1)\eye+Z_1-Z_T-\sum_{t=1}^{T-1}Z_tZ_{(t+1)}\right],
\end{equation}
where the superscript $(t,t+1)$ indicates the clock qubits acted on by the projection. This Hamiltonian has a simple physical interpretation as a line of ferromagnetically coupled spins with twisted boundary conditions, so that the ground state is spanned by all states with a single domain wall. The term $\mathcal{H}_{\text{clockint}}$ applies a penalty $\ket{1}\bra{1}_{t=1}$ to the first qubit to ensure that the clock is in state $\ket{0}^{\otimes T}-$ at time $t=0$.

$\mathcal{H}_{\text{prop}}$ acts both on logical and clock qubits. It ensures that the ground state is the history state corresponding to the given circuit. $\mathcal{H}_{\text{prop}}$ is a sum of $T$ terms, $\mathcal{H}_{\text{prop}} =  \sum_{t=1}^T \mathcal{H}_{{\text{prop}},t}$, where each term checks that the propagation from time $t-1$ to $t$ is correct.  For $2 \leq t \leq T-1$, $\mathcal{H}_{{\text{prop}},t}$ is defined as:
\begin{equation}\label{eqn:Hprop1}
\mathcal{H}_{{\text{prop}},t}\bydef\eye \otimes \ketbra{t-1}{t-1} - U_t\otimes\ketbra{t}{t-1}
- U_t^\dag \otimes \ketbra{t-1}{t} + \eye \otimes\ketbra{t}{t},
\end{equation}
where operators ${\ketbra{t}{t-1}=\ketbra{110}{100}_{(t-1,t,t+1)}}$ etc., act on clock qubits $t-1$, $t$, and $t+1$ and where the operator $U_t$ is the $t^{th}$ gate in the circuit. For the boundary cases ($t=1,T$), one writes $\mathcal{H}_{{\text{prop}},t}$ by omitting a clock qubit ($t-1$ and $t+1$ respectively).

We have now explained all the terms in the Hamiltonian from~\eqref{eqn:htot}---a key building block used to prove the \QMA{}-completeness of $5$-\LH{}~\cite{KSV02}.  The construction reviewed in the present section was also used in a proof of the polynomial equivalence between quantum circuits and adiabatic evolutions~\cite{2004quant.ph..5098A}.  Which physical systems can implement the Hamiltonian model of computation from~\eqref{eqn:htot}?  Ideally, we wish to find a simple Hamiltonian that is in principle realizable using current, or near-future technology.  The ground states of many physical systems are real-valued, such as the ground states of the Hamiltonians from~\eqref{eqn:zzxx} and~\eqref{eqn:zx}.  So a logical first step in our program is to show the \QMA{}-completeness of general real-valued local Hamiltonians.

\subsection{{\scshape Real Hamiltonian} is {\sf QMA}-complete}

\begin{remark}[Real Hamiltonians]
We call Hamiltonian's expressed in the real subset of the Pauli basis, {\it real Hamiltonians}.  That is, qubit Hamiltonians that contain no tensor product terms with odd numbers of $Y$ operator(s). The corresponding ground state energy problem is called {\scshape Real Hamiltonian}. 
\end{remark}

\begin{lemma}
The ground state energy decision problem {\scshape Real Hamiltonian} is {\sf QMA}-complete. 
\end{lemma}

One can show that \RFLH{} is already \QMA{}-complete---leaving the proofs in~\cite{KSV02} otherwise intact and changing only the gates used in the circuits.  $\mathcal{H}_{\text{in}}$ from~\eqref{eqn:Hin} and $\mathcal{H}_{\text{clock}}$ from~\eqref{eqn:Hclock} are already real-valued and at most $2$-local.  Now consider the terms in $\mathcal{H}_{\text{prop}}$ from~\eqref{eqn:Hprop1} for the case of {\em self-inverse} elementary gates $U_t=U_t^\dagger$:
\begin{equation}\label{eqn:Hprop}
\mathcal{H}_{{\text{prop}},t}=\frac{\eye}{4}(\eye-Z_{(t-1)})(\eye+Z_{(t+1)})
-\frac{U}{4}(\eye-Z_{(t-1)})X_t(\eye+Z_{(t+1)})
\end{equation}
For the boundary cases ($t=1,T$), define:
\begin{eqnarray}\label{eqn:H1T}
\mathcal{H}_{\text{prop,1}} &=& \frac{1}{2}(\eye+Z_2)-U_1\otimes\frac{1}{2}(X_1+X_1Z_2)\\\nonumber
\mathcal{H}_{{\text{prop}},T} &=& \frac{1}{2}\left(\eye-Z_{(T-1)}\right)-U_T\otimes\frac{1}{2}\left(X_T-Z_{(T-1)}X_T\right).
\end{eqnarray}
The terms from~\eqref{eqn:Hprop} and~\eqref{eqn:H1T} acting on the clock space are already real-valued and at most $3$-local.  As an explicit example of the gates $U_t$, let us define a universal real-valued and self-inverse 2-qubit gate as in \eqref{eqn:realrotationgate}. 
\begin{equation}\label{eqn:realrotationgate}
R_{ij}(\phi)=\frac{1}{2}(\eye+Z_i)+\frac{1}{2}(\eye-Z_i)\otimes\left(\sin(\phi)X_i+ \cos(\phi)Z_j\right).
\end{equation}
The gate sequence $R_{ij}(\phi)R_{ij}(\pi/2)$ recovers the universal gate from~\cite{RG02}. This is a continuous set of elementary gates parameterized by the angle $\phi$. Discrete sets of self inverse gates which are universal are also readily constructed. For example, Shi showed that a set comprising the {\sf CNOT} plus any one-qubit gate whose square does not preserve the computational basis is universal \cite{SHI02}. We immediately see that a universal set of self-inverse gates cannot contain only the {\sf CNOT} and a single one-qubit gate. However, the set $\{\text{\sf CNOT}, X,\cos\psi X + \sin\psi Z\}$ is universal for any {\em single} value of $\psi$ which is not a multiple of $\pi/4$.

A reduction from {\sc 5-local} to {\sc 2-local Hamiltonian} was accomplished by the use of \emph{gadgets} that reduced $3$-local Hamiltonian terms to $2$-local terms~\cite{KKR06}.  From the results in~\cite{KKR06} (see also~\cite{OT06}) and the \QMA{}-completeness of \RFLH{}, it now follows that \RH{} is \QMA{}-complete and universal for adiabatic quantum computation. We note that the real product $Y_i\otimes Y_j$, or tensor powers thereof, are not necessary in any part of our construction, and so Hamiltonians composed of the following pairwise products of real-valued Pauli matrices are \QMA{}-complete and universal for adiabatic quantum computation:
\begin{eqnarray}\label{eqn:resubset}
&&\{{\eye\otimes \eye},{\eye}\otimes X, \eye\otimes Z,X\otimes\eye, \\\nonumber
&&~~Z\otimes\eye,X\otimes Z,Z\otimes X, X\otimes X,Z\otimes Z\}.
\end{eqnarray}

To prove our Theorems (\ref{theorem:zzxx}) and (\ref{theorem:zx}), we will next show that one can approximate all the terms from~\eqref{eqn:resubset} using either the ZX or ZZXX Hamiltonians---the Hamiltonians from~\eqref{eqn:zzxx} and~\eqref{eqn:zx} respectively.  We do this using perturbation theory~\cite{KKR06,OT06} to construct gadget Hamiltonians that approximate the operators $Z_i X_j$ and $X_j Z_i$ with terms from the ZZXX Hamiltonian as well as the operators $Z_iZ_j$ and $X_iX_j$ with terms from the ZX Hamiltonian.

\begin{remark}[Experimental Gadget Realizations~\cite{menke2019automated, 2017npjQI...3...21C,2019APS..MARA42011M}]
While recent progress in the experimental implementation of adiabatic quantum processors \cite{2006cond.mat..8253H,Boixo2012,BCM+13} suggests the ability to {perform} sophisticated adiabatic quantum computing experiments, the perturbative gadgets require very large values of $\Delta$. This places high demands on experimental control precision by requiring that devices enforce very large couplings between slack qubits while still being able to resolve couplings from the original problem---even though those fields may be orders of magnitude smaller than $\Delta$. Accordingly, if perturbative gadgets are to be used, it is necessary to find gadgets which can efficiently approximate their target Hamiltonians with significantly lower values of $\Delta$. See for example the work in~\cite{menke2019automated, 2017npjQI...3...21C,2019APS..MARA42011M}.
\end{remark}

\begin{example}(Two-Qubit Experimental Proposal: Towards Adiabatic Universality). 
Consider a quantum state promised to be of the following form where $f(x)$ is unknown
\begin{equation}
    \ket{\psi_{\rm in}}\propto \sum_{x\in\{0,1\}^n}(-1)^{f(x)}\ket{x}. 
\end{equation}
Here $n$-variable Boolean function $f(x)$ is promised to be either \emph{constant} or \emph{balanced}~\footnote{A balanced Boolean function outputs $1$ (and $0$) for exactly half of all input strings, while a constant function always outputs the same value $1$ or $0$.}.  The universal building blocks to demonstrate ground state quantum computation by adiabatic evolution can be demonstrated by preforming the Deutsch-Jozsa algorithm on two-qubits.  This demonstration is closely related to other two-qubit variants, such as phase estimation---see \S~\ref{sec:kqpea}. 

The Deutsch-Jozsa algorithm decides the promise, given $\ket{\psi_{\rm in}}$, by means of a Fourier transform over the group $\mathbb{Z}_2$.  Demonstrating this Hamiltonian would demonstrate all the universal building blocks from \cite{BL08}. 

Consider the function $f(x_1)$ yielding state
\begin{equation}
    \ket{\psi_{\rm in}}\propto(-1)^{f(0)}\ket{0} + (-1)^{f(1)}\ket{1}. 
\end{equation}
In the following circuit, denote the outcome of observable $\ket{1}\bra{1}$ by $m$, where $m$ decides the Deutsch-Jozsa promise:\vspace{-0.2in}
\begin{equation*}
~~~~~~~~~~~~~~~~~~~~~~~~~~~~~~\Qcircuit @C=.5em @R=.3em {
                       & & &         &\mbox{$= m$}\\
\lstick{(-1)^{f(0)}\ket{0} + (-1)^{f(1)}\ket{1}}& \qw & \gate{H} & \qw & \meter}
\end{equation*}
If $m=0$ ($m=1$), $f(x_1)$ is constant (balanced).  Now the circuit's history,
\begin{equation}
    \ket{\psi_{\rm hist}} \propto \ket{\psi_{\rm in}}\otimes\ket{0} + H\cdot \ket{\psi_{\rm in}}\otimes\ket{1}, 
\end{equation}
be the ground state of ${\mathcal H}_{\rm in}+{\mathcal H}_{\rm prop}$, where:
\begin{eqnarray*}
{\mathcal H}_{\rm in}&=& \frac{1}{4}\left(\openone+(-1)^{f(0) +f(1)}X_{l}\right)\otimes\left(\openone+ Z_c \right),\\
{\mathcal H}_{\rm prop} &=& \frac{1}{2}\left(\openone - \frac{1}{\sqrt{2}} Z_l\otimes X_c -\frac{1}{\sqrt{2}} X_l\otimes\ X_c\right).
\end{eqnarray*}
${\mathcal H}_{\rm in}$ plays the role of an oracle Hamiltonian provided to us without knowledge of the function. The computation requires a 2-qubit effective subspace (logical qubit $l$ and clock qubit $c$) combined with dual ancillary qubits implementing respective ZX or ZZXX gadgets.  The adiabatic path Hamiltonian,
$$
{\mathcal H}(s) = (1-s) {\mathcal H}_{\rm in} + s ({\mathcal H}_{\rm in}+{\mathcal H}_{\rm prop})
$$
for monotonic $s\in[0,1]$, has $\ket{\psi_{\rm hist}}$ as the $s=1$ ground state.  The experiment is repeated until clock qubit $c$ is measured in $\ket{1}$---thereby projectively measuring qubit $l$, such that $m = H \cdot \ket{\psi_{\rm in}}$.

\end{example}

\begin{example}(Experimental Proposal: Realization of Ground State Quantum Gates).  Realization of universal ground state quantum computation requires the realization of quantum gates implemented as Hamiltonian operators.  In this direction, realization of the gates appearing in \eqref{eqn:realrotationgate}---see also \eqref{eqn:realrotationgate}---would represent a step towards a significant milestone. 

A 4-qubit quantum Fourier transform (QFT) suffices for many applications of recursive phase estimation of the ground state energy of molecules. The QFT circuit requires both controlled and single qubit $R_k \bydef \ket{0}\bra{0} + e^{2\pi \imath/2^k}\ket{1}\bra{1}$ gates as well as single qubit Hadamard gates.  Here we outline a real-valued mapping of the QFT.

In what follows, normalization constants are often omitted.  Consider the state of $n$-qubits $\ket{\psi} = \sum_{x\in\{0,1\}^n}\alpha_x\ket{x}$. Let $\alpha_x = a_x+b_x \imath$ and define a real-valued wave function $\ket{\tilde\psi}$ to represent $\ket{\psi}$ using an extra qubit that indexes the real and imaginary parts of the wavefunction:
\begin{equation}
    \ket{\tilde \psi}=\sum_{x\in\{0,1\}^n}a_x\ket{x}\otimes\ket{0}+\sum_{x\in\{0,1\}^n}b_x\ket{x}\otimes\ket{1}. 
\end{equation}
For example, an arbitrary pure single-qubit state
\begin{equation}
    \ket{\psi}=\cos(\theta)\ket{0}+e^{\imath\phi}\sin(\theta)\ket{1}
\end{equation}
is written as:
\begin{equation}
    \ket{\tilde\psi}=\cos(\theta)\ket{0}\otimes\ket{0}+\cos(\phi)\sin(\theta)\ket{1}\otimes\ket{0}+\sin(\phi)\sin(\theta)\ket{1}\otimes\ket{1}. 
\end{equation}
One now replaces each complex gate $U$ operating on $k$ qubits by its {\it real valued version}, denoted as $\tilde{U}$ and operating on $k+1$ qubits. Let $\mathfrak{Re}\{U\}$ ($\mathfrak{Im}\{U\}$) denote the real (imaginary) part of the operator $U$ and define the real-valued version of any complex gate as:
\begin{equation}
    \tilde U = \mathfrak{Re}\{U\}\otimes \openone + \mathfrak{Im}\{U\}\otimes \mathfrak{Re}\{-\imath \sigma_y\}. 
\end{equation}
The Hadamard gates needed in the QFT are real-valued.  The real-valued version of the $R_k$ gate now acts on an extra qubit:
\begin{equation}
    \tilde R_k = \ket{0}\bra{0}\otimes\openone+ \cos(2\pi /2^k)\ket{1}\bra{1}\otimes \openone + \sin(2\pi /2^k)\ket{1}\bra{1}\otimes(\ket{1}\bra{0}-\ket{0}\bra{1})
\end{equation}
and can be constructed from the gate sequence $R_{ij}(2\pi /2^k)R_{ij}(\pi/2)$.  Similarly, the required controlled $\tilde R_k$ gates are concatenations of 3-qubit rotation gates:
\begin{equation}
\begin{split}
    R_{ijk}(\phi)& = \frac{1}{4}( 3\cdot \openone + Z_i + Z_j - Z_i\otimes Z_j)+{}\\
    &\qquad+\frac{1}{4}( \openone -Z_i)\otimes( \openone -Z_j)\otimes(\sin(\phi)Z_k   +\cos(\phi)Z_k). 
\end{split}
\end{equation}
To construct the QFT, add an extra qubit initialized in $\ket{0}$ and replace each gate with its real-valued version.  Write the Hamiltonian describing the 2-local terms from~\eqref{eqn:Hin} and~\eqref{eqn:Hclock} as well as the 6-local terms from~\eqref{eqn:Hprop}.  The terms in~\eqref{eqn:Hprop} can be reduced to 4-local without perturbation theory~\cite{Nagaj_2007} (see also~\cite{KR03}).  Parallel application of the gadget in~\cite{KKR06} followed by application of ZX (ZZXX) gadgets completes the reduction.

\end{example}


\begin{remark}[Gadget Overview]
A perturbative gadget consists of 
\begin{enumerate}
    \item a slack system acted on by Hamiltonian $\mathcal{H}$, characterized by the spectral gap $\Delta$ between its ground state subspace and excited state subspace,
    \item and a perturbation $V$ which acts on both the slack and the system.
\end{enumerate}
\end{remark}

\begin{remark}[Perturbation Overview]
$V$ perturbs the ground state subspace of $\mathcal{H}$ such that the perturbed low-lying spectrum of the gadget Hamiltonian $\mathcal{\widetilde H}=\mathcal{H}+V$ captures the spectrum of the target Hamiltonian, $\mathcal{H}_\text{targ}$, up to error $\epsilon$. 
\end{remark}

\begin{remark}[Reduction Gadgets~\cite{KKR06}]
The purpose of a gadget is dependent on the form of the desired target Hamiltonian $\mathcal{H}_\text{targ}$. For example, if the target Hamiltonian is $k$-local with $k\ge 3$ while the gadget Hamiltonian is 2-local, the gadget serves as a tool for reducing locality. 
\end{remark}

\begin{remark}[Creation Gadgets~\cite{BL08}]
Also if the target Hamiltonian involves interactions that are difficult to implement experimentally and the gadget Hamiltonian contains only interactions that are physically accessible, the gadget becomes a generator of physically inaccesible terms from accessible ones. For example the gadget which we introduce in \S~\ref{sec:yy} emulates $YY$ interactions from $ZZ$ and $XX$ interactions so might fall into this use category.
\end{remark}

\begin{remark}[Exact, Non-Perturbative Gadgets~\cite{B08}]
Exact, non-perturbative, gadgets~\cite{B08} are commonly used for embedding classical optimization problems into adiabatic quantum computations. These optimization algorithms require diagonal many-body Hamiltonians where one needs a Hamiltonian of form $\mathcal{H}=\eye-|s\rangle\langle{s}|$ with $s$ being a binary string. 
\end{remark}

Apart from the physical relevance to quantum computation, gadgets have been central to many results in quantum complexity theory \cite{BDLT08, BL08, BDOT06, CM13}.  Hamiltonian gadgets were employed to help characterize the complexity of density functional theory \cite{Schuch09} and are required components in current proposals related to error correction on an adiabatic quantum computer \cite{Ganti2013} and the adiabatic and ground state quantum simulator \cite{sim11}.

\begin{remark}[Kempe, Kitaev and Regev \cite{KKR06}]
The first use of perturbative gadgets \cite{KKR06} relied on a 2-body gadget Hamiltonian to simulate a 3-body Hamiltonian of the form $\mathcal{H}_\text{targ} = \mathcal{H}_\text{else}+\alpha \cdot A\otimes B \otimes C$ with three auxiliary spins in the slack space. Here $\mathcal{H}_\text{else}$ is an arbitrary Hamiltonian that does not operate on the auxiliary spins. Further, $A$, $B$ and $C$ are unit-norm operators and $\alpha$ is the desired coupling. For such a system, it is shown that it suffices to construct $V$ with $\|V\|<\Delta/2$ to guarantee that the perturbative self-energy expansion approximates $\mathcal{H}_\text{targ}$ up to error $\epsilon$ \cite{OT06,KKR06,BDLT08}. Because the gadget Hamiltonian is constructed such that in the perturbative expansion (with respect to the low energy subspace), only virtual excitations that flip all 3 slack bits would have non-trivial contributions in the $1^\text{st}$ through $3$rd order terms. 
\end{remark}

\begin{remark}[Jordan and Farhi \cite{JF08}]
In \cite{JF08} Jordan and Farhi generalized the construction in \cite{KKR06} to a general $k$-body to $2$-body reduction using a perturbative expansion that appears to date to 1958 due to Bloch.\footnote{{Sur la th\'{e}orie des perturbations des \'{e}tats li\'{e}s}, {\em Nuclear Physics}, 6:329--347, 1958} They showed that one can approximate the low-energy subspace of a Hamiltonian containing $r$ distinct $k$-local terms using a $2$-local Hamiltonian. 
\end{remark}

\begin{remark}[Oliveira and Terhal \cite{OT06}]
Two important gadgets were introduced by Oliveira and Terhal \cite{OT06} in their proof that \textsc{2-local Hamiltonian on square lattice} is \textsc{{\sf QMA}-Complete}. In particular, they introduced  an alternative 3- to 2-body gadget which uses only one additional spin for each 3-body term as well as a ``subdivision gadget'' that reduces a $k$-body term to a $(\lceil k/2 \rceil+1)$-body term using only one additional spin \cite{OT06}.  
\end{remark}

These gadgets (listed in the three remarks above), which we improved in the work~\cite{Cao_2015}, find their use as the de facto standard whenever the use of gadgets is necessitated. For instance, the gadgets from \cite{OT06} were used by Bravyi, DiVincenzo, Loss and Terhal \cite{BDLT08} to show that one can combine the use of subdivision and 3- to 2-body gadgets to recursively reduce a $k$-body Hamiltonian to $2$-body, which is useful for simulating quantum many-body Hamiltonians.

\section{Exact ZZZ-gadget from Z, ZZ} 

Non-perturbative, exact or classical gadgets, as introduced in \cite{B08}, will be considered along with their limitations and interrelations with perturbative gadgets.  This section employs heavily the techniques we developed from \S~\ref{chap:progGS}. Similar ideas originating from \cite{B08} and \cite{spinlogic2} have been experimentally demonstrated in \cite{melanson2019tunable}---see also \cite{Schndorf2019}. 

Suppose we need to simulate a diagonal target Hamiltonian
\begin{equation}
    \mathcal{H}_\text{targ}=\mathcal{H}_\text{else}+\alpha\bigotimes_{i=1}^k\sigma_i. 
\end{equation}
Here all operators $\sigma_i$ share the same basis and $\mathcal{H}_\text{else}$ commutes with  $\bigotimes_{i=1}^k\sigma_i$. 

\begin{remark}[Perturbative Approach to ZZZ]
The perturbative approach requires a $5$th order gadget to simulate a 3-body term $Z_1Z_2Z_3$.  The gap however scales as $\Delta=\Theta(\epsilon^{-5})$ renders it challenging to realize using current experimental resources. 
\end{remark}

\begin{remark}[Penalty Function Approach to ZZZ]
Unlike their perturbative counterparts, non-perturbative gadgets do not require a large penalty on the slack spins, which render them more realistic to implement experimentally \cite{B08, spinlogic2, BOA13}.  
\end{remark}

\begin{remark}
The idea of a non-perturbative 3-body reduction gadget comes originally from two insights which relate the spectra of diagonal Hamiltonians to Boolean algebra \cite{B08}. 
\begin{enumerate}
    \item The first insight is that the spectrum of $\mathcal{H}_\text{targ}=Z_1Z_2Z_3$ has a unique correspondence with the truth table of the logic operation
    \begin{equation}
        f(s_1,s_2,s_3)=s_1\oplus s_2\oplus s_3,~~~~s_i\in\{0,1\}. 
    \end{equation}
    The ground state subspace of $\mathcal{H}_\text{targ}$ corresponds to all states with $s_1\oplus s_2\oplus s_3=0$ and the first excited subspace corresponds to  $s_1\oplus s_2\oplus s_3=1$. 
    \item The second insight is that the Boolean function {\sf XOR} $s=s_1\oplus s_2$ cannot be mapped to the ground state subspace of a 3-spin diagonal Hamiltonian using only 2-body interactions while {\sf AND} $s=s_1\wedge s_2$ can be implemented with 2-body interactions \cite{B08,spinlogic2}. 
\end{enumerate}
Thus, we see that if we can express $f(s_1,s_2,s_3)$ as a function of only $1$- or $2$-variable {\sf AND} clauses with auxiliary variables, then the spectrum of $Z_1Z_2Z_3$ can be directly mapped to a diagonal Hamiltonian consisting of only $2$-body interactions.
\end{remark}

\begin{remark}[$k$-body exact gadget]
This idea can be generalized to simulate a $k$-body diagonal target term $\bigotimes_{i=1}^k Z_i$ \cite{B08,BOA13}. Since the spectrum of the $k$-body term can be mapped to Boolean expression
\begin{equation}
    f(s_1,\cdots,s_k)=\bigoplus_{i=1}^ks_i, 
\end{equation}
we can introduce $k-3$ auxiliary variables $y_1$, $y_2$, $\cdots$ $y_s$ such that $y_1=s_1\oplus s_2$ and 
\begin{equation}
    y_j=y_{j-1}\oplus s_{j+1},~~~~\text{with}~~j=2,\dots, k-3. 
\end{equation}
Then $f(s_1,\cdots,s_k)$ becomes a sum of 3-variable {\sf XOR} terms up to $k-3$ constraints. Each of the 3-variable {\sf XOR}s can then be reduced to 2-body using gadget described above. 
\end{remark}

\begin{example}
Non-perturbative gadgets \cite{B08} do not require large gaps and they capture the target term exactly. Given these advantages, we ask if these gadgets can replace perturbative gadgets. To answer this, consider two non-commuting Hamiltonians $\mathcal{H}_Z=Z_1Z_2Z_3$ and $\mathcal{H}_X=X_1X_2X_3$. Introduce slack spins $w$ and $w'$ for reducing $\mathcal{H}_Z+\mathcal{H}_X$ to 2-body respectively. For each eigenstate $|j\rangle$ such that
\begin{equation}
    (\mathcal{H}_Z+\mathcal{H}_X)|j\rangle=\beta_j|j\rangle, 
\end{equation}
let $\mathcal{\widetilde{H}}_Z$ and $\mathcal{\widetilde{H}}_X$ be the non-perturbative gadgets generated for $\mathcal{H}_Z$ and $\mathcal{H}_X$ respectively, we are faced with the question whether the following is true for all $j$:
\begin{equation}\label{eq:min_XXX_ZZZ}
\underset{\psi,\phi}{\min}\langle\psi|\langle\phi|\langle{j}|(\mathcal{\widetilde{H}}_Z+\mathcal{\widetilde{H}}_X)|j\rangle|\phi\rangle|\psi\rangle=\beta_j
\end{equation}
where $|\phi\rangle$ and $|\psi\rangle$ denote the states of slack qubits $w$ and $w'$ respectively. We claim that~\eqref{eq:min_XXX_ZZZ} is impossible to hold for all $j$. This would violate the gadget theorem because $\mathcal{H}_Z$ and $\mathcal{H}_X$ induces transitions in each other's slack spaces that cause the perturbation series to no longer converge. However, the condition of the gadget theorem is only sufficient. In order to show that non-perturbative gadgets cannot simulate a sum of non-commuting target terms, a counter-example is easy to find. Therefore, for a general $k$-body target Hamiltonian, one still needs perturbative methods for simulating it using 2-body interactions.
\end{example}

\section{Perturbation theory} 

\begin{definition}
In our notation the spin-1/2 Pauli operators will be represented as $\{X,Y,Z\}$ with subscript indicating which spin-1/2 particle (qubit) it acts on. For example $X_2$ is a Pauli operator $X=|0\rangle\langle{1}|+|1\rangle\langle{0}|$ acting on the qubit labelled as $2$.
\end{definition}

\begin{remark}
In the literature there are different formulations of the perturbation theory that are adopted when constructing and analyzing the gadgets. This adds to the challenge faced in comparing the physical resources required among the various proposed constructions. For example, Jordan and Farhi \cite{JF08} use a formulation due to Bloch, while Bravyi et al.\ use a formulation based on the Schrieffer-Wolff transformation \cite{BDLT08}. Here we employ the formulation used in \cite{KKR06,OT06}. For a small survey on various formulations of perturbation theory, refer to \cite{BDL11}.
\end{remark}

A gadget Hamiltonian 
\begin{equation}
    \mathcal{\tilde{H}}=\mathcal{H}+V
\end{equation}
consists of a penalty Hamiltonian $\mathcal{H}$, which applies an energy gap onto an slack space, and a perturbation $V$. To explain in further detail how the low-lying sector of the gadget Hamiltonian $\mathcal{\tilde{H}}$ approximates the entire spectrum of a certain target Hamiltonian $\mathcal{H}_\text{targ}$ with  error $\epsilon$, we set up the following notations:
\begin{enumerate}
    \item Let $\lambda_j$ and $|\psi_j\rangle$ be the $j^\text{th}$ eigenvalue and eigenvector of $\mathcal{H}$ and similarly define $\tilde\lambda_j$ and $|\tilde\psi_j\rangle$  as those of $\mathcal{\tilde{H}}$, assuming all the eigenvalues are labelled in a weakly increasing order ($\lambda_1\le\lambda_2\le\cdots$, same for $\tilde{\lambda}_j$). 
    
    \item Using a cutoff value $\lambda_*$, let
    \begin{equation}
        \mathscr{L}_-=\text{span}\{|\psi_j\rangle|\forall j:\lambda_j\le\lambda_*\}
    \end{equation}
    be the low energy subspace and
    \begin{equation}
        \mathscr{L}_+=\text{span}\{|\psi_j\rangle|\forall j:\lambda_j>\lambda_*\}
    \end{equation}
    be the high energy subspace.
    
    \item Let ${\Pi_-}$ and ${\Pi_+}$ be the orthogonal projectors onto the subspaces $\mathscr{L}_-$ and $\mathscr{L}_+$ respectively. 
    
    \item  For an operator $O$ we define the partitions of $O$ into the subspaces as $O_-={\Pi_-}O{\Pi_-}$, $O_+={\Pi_+}O{\Pi_+}$, $O_{-+}={\Pi_-}O{\Pi_+}$ and $O_{+-}={\Pi_+}O{\Pi_-}$.
\end{enumerate}

With the definitions above, one can turn to perturbation theory to approximate $\mathcal{\tilde{H}_-}$ using $\mathcal{H}$ and $V$. We now consider the operator-valued resolvent
\begin{equation}
    \tilde{G}(z)=(z\eye-\mathcal{\tilde H})^{-1}. 
\end{equation}
Similarly one would define
\begin{equation}
    G(z)=(z\eye-\mathcal{H})^{-1}. 
\end{equation}
Note that
\begin{equation}
    \tilde{G}^{-1}(z)-G^{-1}(z)=-V
\end{equation}
so that this allows an expansion in powers of $V$ as 
\begin{multline}\label{eq:G_expand}
\tilde{G}=(G^{-1}-V)^{-1}=G(\eye-VG)^{-1}=\\
=G+GVG+GVGVG+GVGVGVG+\cdots.  
\end{multline}
It is then standard to define the self-energy
\begin{equation}
    \Sigma_-(z)=z\eye-({\tilde G}_-(z))^{-1}. 
\end{equation}
The self-energy is important because the spectrum of $\Sigma_-(z)$ gives an approximation to the spectrum of $\mathcal{\tilde{H}_-}$ since by definition
\begin{equation}
    \mathcal{\tilde{H}_-}=z\eye-{\Pi_-}(\tilde{G}^{-1}(z)){\Pi_-}
\end{equation}
while
\begin{equation}
  \Sigma_-(z)=z\eye-({\Pi_-}\tilde G(z){\Pi_-})^{-1}.   
\end{equation}

\begin{remark}
As is explained by Oliveira and Terhal \cite{OT06}, loosely speaking, if $\Sigma_-(z)$ is roughly constant in some range of $z$ (defined below in Theorem \ref{th:perturbation}) then $\Sigma_-(z)$ is playing the role of $\mathcal{\tilde{H}_-}$.
\end{remark}

The gadget theorem was formalized in \cite{KKR06} and improved in \cite{OT06} where the following theorem is proven (as in \cite{OT06} we state the case where $\mathcal{H}$ has zero as its lowest eigenvalue and a spectral gap of $\Delta$. 

\begin{definition}
We use operator norm $\|\cdot\|$ which is defined as
\begin{equation}
    \|M\|\equiv\max_{|\psi\rangle\in\mathscr{M}}|\langle\psi|M|\psi\rangle|
\end{equation}
for an operator $M$ acting on a vectors in the Hilbert space $\mathscr{M}$. 
\end{definition}

\begin{theorem}[Gadget Theorem \cite{KKR06,OT06}]\label{th:perturbation}
 Let
 $$
 \|V\|\le\Delta/2
 $$ 
 where $\Delta$ is the spectral gap of $\mathcal{H}$ and let the low and high spectrum of $\mathcal{H}$ be separated by a cutoff
 $$
 \lambda_*=\Delta/2. 
 $$
 Now let there be an effective Hamiltonian $\mathcal{H}_\text{eff}$ with a spectrum contained in $[a,b]$. If for some real constant $\epsilon>0$ and
 $$
 \forall z\in[a-\epsilon,b+\epsilon]
 $$ 
 with 
 $$
 a<b<\Delta/2-\epsilon, 
 $$
 the self-energy $\Sigma_-(z)$ has the property that
 $$
 \|\Sigma_-(z)-\mathcal{H}_\text{eff}\|\leq\epsilon, 
 $$ 
 then each eigenvalue $\tilde\lambda_j$ of $\mathcal{\tilde{H}_-}$ differs to the $j^\text{th}$ eigenvalue of $\mathcal{H}_\text{eff}$, $\lambda_j$, by at most $\epsilon$. In other words
 $$
 |\tilde{\lambda}_j - \lambda_j|\leq\epsilon, ~~~\forall j.
 $$
\end{theorem}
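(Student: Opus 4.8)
\textbf{Proof plan for the Gadget Theorem (Theorem~\ref{th:perturbation}).} The plan is to follow the standard self-energy expansion argument of Kempe--Kitaev--Regev \cite{KKR06} as sharpened by Oliveira--Terhal \cite{OT06}, working entirely with the operator-valued resolvents $G(z) = (z\eye - \mathcal{H})^{-1}$ and $\tilde G(z) = (z\eye - \mathcal{\tilde H})^{-1}$ introduced above. The two conceptual ingredients are: (i) a spectral comparison lemma showing that if $\Sigma_-(z)$ is $\epsilon$-close to a $z$-independent operator $\mathcal{H}_\text{eff}$ on a suitable interval, then the eigenvalues of $\mathcal{\tilde H}_-$ lie within $\epsilon$ of those of $\mathcal{H}_\text{eff}$; and (ii) the observation that the low-lying eigenvalues of $\mathcal{\tilde H}$ coincide with those of $\mathcal{\tilde H}_-$ once $\|V\| \le \Delta/2$ forces no level from $\mathscr{L}_+$ to dip below the cutoff $\lambda_* = \Delta/2$.

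First I would establish the confinement step: since $\mathcal{H}$ has ground energy $0$ and spectral gap $\Delta$, and $\|V\|\le\Delta/2$, Weyl's inequality gives that $\mathcal{\tilde H}$ has at most $\dim\mathscr{L}_-$ eigenvalues below $\Delta/2$ and all of these are the perturbations of the $\mathscr{L}_-$ block; hence for $j$ indexing $\mathscr{L}_-$ the eigenvalue $\tilde\lambda_j$ of $\mathcal{\tilde H}$ equals the $j$-th eigenvalue of $\mathcal{\tilde H}_-$. This is where the hypothesis $b < \Delta/2 - \epsilon$ is used: it guarantees the window $[a-\epsilon, b+\epsilon]$ on which we control $\Sigma_-(z)$ sits safely below the cutoff. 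Next I would record the resolvent identity $\tilde G^{-1} - G^{-1} = -V$ and the geometric (Dyson) expansion \eqref{eq:G_expand}, then project onto $\mathscr{L}_-$ to get the defining relation $\Sigma_-(z) = z\eye - (\Pi_- \tilde G(z)\Pi_-)^{-1}$ together with $\mathcal{\tilde H}_- = z\eye - \Pi_-(\tilde G^{-1}(z))\Pi_-$; comparing these two expressions is the algebraic heart of the reduction. The key spectral lemma (this is essentially Lemma 3 of \cite{OT06}, and I would cite it or reproduce its short proof) says: if $\tilde\lambda_j \in [a, b]$ is an eigenvalue of $\mathcal{\tilde H}_-$ with eigenvector $|\tilde\psi_j\rangle$, then $\Sigma_-(\tilde\lambda_j)$ has $\tilde\lambda_j$ as an eigenvalue with the same eigenvector --- because at $z = \tilde\lambda_j$ the operator $z\eye - \mathcal{\tilde H}$ has $|\tilde\psi_j\rangle$ in its kernel when restricted appropriately. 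Combining this self-consistency with $\|\Sigma_-(z) - \mathcal{H}_\text{eff}\| \le \epsilon$ on the interval, and invoking the perturbation bound $|\mu(\Sigma_-(\tilde\lambda_j)) - \mu(\mathcal{H}_\text{eff})| \le \epsilon$ for matching eigenvalues $\mu$ (Weyl again), yields $|\tilde\lambda_j - \lambda_j| \le \epsilon$ for every $j$.

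The main obstacle --- and the place where care is genuinely required rather than routine bookkeeping --- is the self-consistency argument in the spectral lemma: one must show that an eigenvalue of $\mathcal{\tilde H}_-$ in the controlled window is necessarily a fixed point of the map $z \mapsto \mu(\Sigma_-(z))$, and conversely that each eigenvalue of $\mathcal{H}_\text{eff}$ is tracked by exactly one such fixed point, with no eigenvalues lost or created. This requires a monotonicity / counting argument (the eigenvalues of $\Sigma_-(z)$ are non-decreasing in $z$ on the relevant interval, since $\tilde G_-(z)$ is operator-monotone there, away from poles) plus the continuity of $\Sigma_-(z)$, which in turn needs the invertibility of $\Pi_-\tilde G(z)\Pi_-$ on $[a-\epsilon, b+\epsilon]$ --- guaranteed precisely because that interval avoids the spectrum of $\mathcal{\tilde H}_+$ by the confinement step. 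Once these continuity and monotonicity facts are in place, the $\epsilon$-closeness transfers from operators to eigenvalues by a standard sandwich argument, and the proof concludes. I would then remark that in all our applications $\mathcal{H}_\text{eff} = \mathcal{H}_\text{targ} + \mathcal{H}_\text{else}$ and the hypothesis $\|\Sigma_-(z) - \mathcal{H}_\text{eff}\| \le \epsilon$ is verified order-by-order in the expansion \eqref{eq:G_expand}, with the gap $\Delta$ chosen as a suitable inverse power of $\epsilon$ to kill the higher-order tails.
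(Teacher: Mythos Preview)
The paper does not actually prove this theorem: it is stated with attribution to \cite{KKR06,OT06} and the surrounding text explicitly says the result ``is proven'' in \cite{OT06}, after which the paper simply \emph{applies} the theorem to analyze specific gadget constructions. So there is no in-paper proof to compare against; your plan is a faithful outline of the original Kempe--Kitaev--Regev / Oliveira--Terhal argument and would be appropriate if a self-contained proof were required here.
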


To apply Theorem \ref{th:perturbation}, a series expansion for $\Sigma_-(z)$ is truncated at low order for which $\mathcal{H}_\text{eff}$ is approximated. The 2-body terms in $\mathcal{H}$ and $V$ by construction can give rise to higher order terms in $\mathcal{H}_\text{eff}$. For this reason it is possible to engineer $\mathcal{H}_\text{eff}$ from $\Sigma_-(z)$ to approximate $\mathcal{H}_\text{targ}$ up to error $\epsilon$ in the range of $z$ considered in Theorem \ref{th:perturbation} by introducing auxiliary spins and a suitable selection of 2-body $\mathcal{H}$ and $V$. Using the series expansion of $\tilde{G}$ in~\eqref{eq:G_expand}, the self-energy
\begin{equation}
    \Sigma_-(z)=z\eye-\tilde{G}_-^{-1}(z)
\end{equation}
can be expanded as (for further details see \cite{KKR06})
\begin{equation}\label{eq:selfenergy}
\Sigma_-(z)={H}_-+V_-+V_{-+}G_+(z)V_{+-}+V_{-+}G_+(z)V_+G_+(z)V_{+-}+\cdots.
\end{equation}
The terms of $2^\text{nd}$ order and higher in this expansion give rise to the effective many-body interactions.

\section{Improved subdivision gadget} 

\begin{figure}
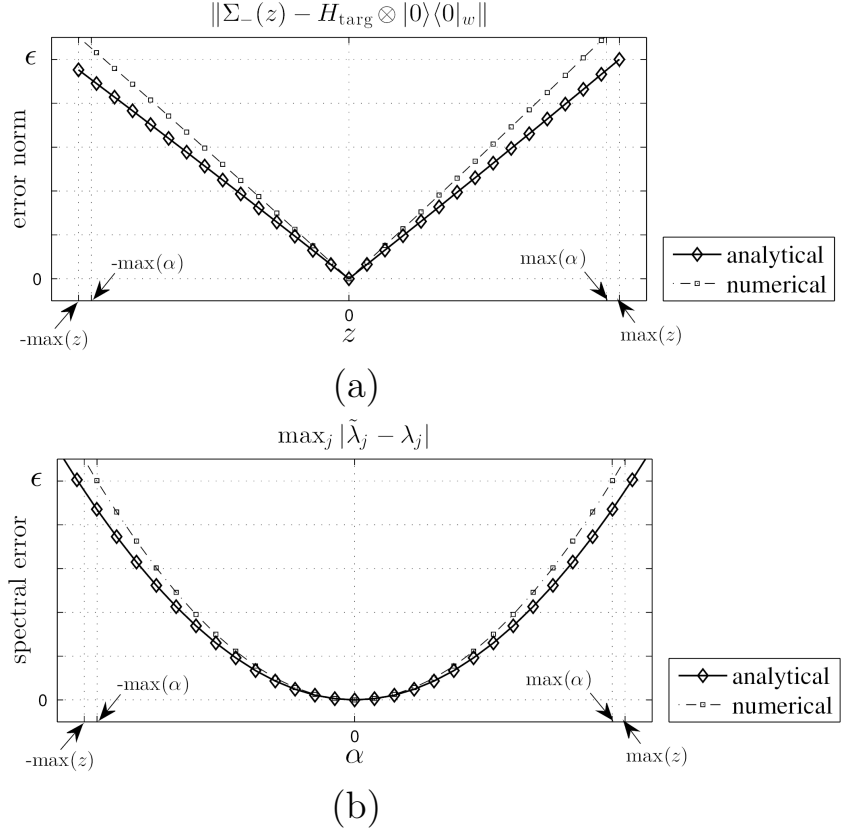

\begin{center}
\makebox[1.6cm][l]{ }\includegraphics[scale=0.15]{sup_fig1a.png}
\centerline{\makebox[-0.2cm][l]{ }(a)}
\makebox[1.6cm][l]{ }\includegraphics[scale=0.15]{sup_fig1b.png}
\centerline{\makebox[-0.2cm][l]{ }(b)}
\end{center}
\caption{\normalsize Numerical illustration of gadget theorem using a subdivision gadget. Here we use a subdivision gadget to approximate $\mathcal{H}_\text{targ}=\mathcal{H}_\text{else}+\alpha Z_1Z_2$ with $\|\mathcal{H}_\text{else}\|=0$ and $\alpha\in[-1,1]$. $\epsilon=0.05$. ``analytical'' stands for the case where the value of $\Delta$ is calculated using~\eqref{eq:2bodyDelta} when $|\alpha|=1$. ``numerical'' represents the case where $\Delta$ takes the value that yield the spectral error to be $\epsilon$. In (a) we let $\alpha=1$. $z\in[-\max z,\max z]$ with $\max z=\|\mathcal{H}_\text{else}\|+\max\alpha+\epsilon$. The operator $\Sigma_-(z)$ is computed up to the $3^\text{rd}$ order. Subplot (b) shows for every value of $\alpha$ in its range, the maximum difference between the eigenvalues $\tilde\lambda_j$ in the low-lying spectrum of $\mathcal{\tilde{H}}$ and the corresponding eigenvalues $\lambda_j$ in the spectrum of $\mathcal{H}_\text{targ}\otimes|0\rangle\langle{0}|_w$. (These plots are from \cite{Cao_2015}).}
\label{fig:2bodygadget}
\end{figure}

The subdivision gadget is introduced by Oliveira and Terhal \cite{OT06} in their proof that \textsc{2-local Hamiltonian on square lattice} is \textsc{{\sf QMA}-Complete}. Here we show an improved lower bound for the spectral gap $\Delta$ needed on the slack of the gadget.
A subdivision gadget simulates a many-body target Hamiltonian $\mathcal{H}_\text{targ}=\mathcal{H}_\text{else}+\alpha \cdot A\otimes B$ ($\mathcal{H}_\text{else}$ is a Hamiltonian of arbitrary norm, $\|A\|=1$ and $\|B\|=1$) by introducing an slack spin $w$ and applying onto it a penalty Hamiltonian $\mathcal{H}=\Delta|1\rangle\langle{1}|_w$ so that its ground state subspace $\mathscr{L}_-=\text{span}\{|0\rangle_w\}$ and its excited subspace $\mathscr{L}_+=\text{span}\{|1\rangle_w\}$ are separated by energy gap $\Delta$. In addition to the penalty Hamiltonian $\mathcal{H}$, we add a perturbation $V$ of the form
\begin{equation}\label{eq:2body_V}
V = \mathcal{H}_\text{else}+|\alpha||0\rangle\langle{0}|_w  + \sqrt{\frac{|\alpha|\Delta}{2}}(\text{sgn}(\alpha)A-B)\otimes X_w.
\end{equation}
Hence if the target term $A\otimes B$ is $k$-local, the gadget Hamiltonian $\mathcal{\tilde{H}}=\mathcal{H}+V$ is at most $(\lceil{k/2}\rceil+1)$-local, accomplishing the locality reduction. Assume $\mathcal{H}_\text{targ}$ acts on $n$ qubits.
Prior work \cite{OT06} shows that $\Delta=\Theta(\epsilon^{-2})$ is a sufficient condition for the lowest $2^n$ levels of the gadget Hamiltonian $\mathcal{\widetilde{H}}$ to be $\epsilon$-close to the corresponding spectrum of $\mathcal{H}_\text{targ}$. However, by bounding the infinite series of error terms in the perturbative expansion, we are able to obtain a tighter lower bound for $\Delta$ for error $\epsilon$.  Hence we arrive at our first result (details will be presented later in this section), that it suffices to let
\begin{equation}\label{eq:2body_D}
\Delta\ge\left(\frac{2|\alpha|}{\epsilon}+1\right)(2\|\mathcal{H}_\text{else}\|+|\alpha|+\epsilon).
\end{equation}

In Figure\ \ref{fig:delta_compare_sub} we show numerics indicating the minimum $\Delta$ required as a function of $\alpha$ and $\epsilon$. In Figure\ \ref{fig:delta_compare_sub}a the numerical results and the analytical lower bound in~\eqref{eq:2body_D} show that for our subdivision gadgets, $\Delta$ can scale as favorably as $\Theta(\epsilon^{-1})$. For the subdivision gadget presented in \cite{OT06},  $\Delta$ scales as $\Theta(\epsilon^{-2})$. Though much less than the original assignment in \cite{OT06}, the lower bound of $\Delta$ in~\eqref{eq:2body_D}, still satisfies the condition of Theorem \ref{th:perturbation}. In Figure\ \ref{fig:delta_compare_sub} we numerically find the minimum value of such $\Delta$ that yields a spectral error of exactly $\epsilon$. 
\begin{figure}[p]
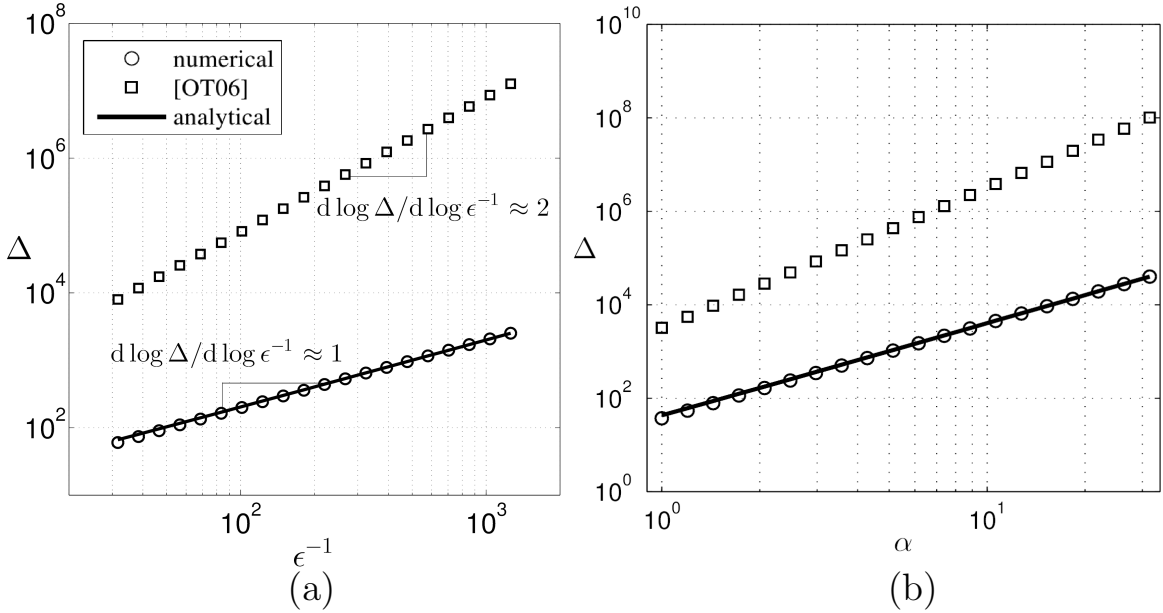

\begin{center}
\begin{minipage}{0.4\textwidth}
\setlength{\unitlength}{1cm}
\begin{picture}(10,7.5)
\put(-4.85,0.42){\includegraphics[scale=0.13]{fig1-a.png}}
\put(2.65,0.28){\includegraphics[scale=0.188]{sup_fig2a_new.png}}
\put(-1.1,0){\text{(a)}}
\put(6.8,0){\text{(b)}}
\end{picture}
\end{minipage}
\end{center}
\caption{Comparison between our subdivision gadget with that of Oliveira and Terhal \cite{OT06}. The data labelled as ``numerical'' represent the $\Delta$ values obtained from the numerical search such that the spectral error between $\mathcal{H}_\text{targ}$ and $\mathcal{\widetilde{H}_-}$ is $\epsilon$. The data obtained from the calculation using~\eqref{eq:2body_D} are labelled as ``analytical''. ``[OT06]'' refers to values of $\Delta$ calculated according to the assignment by Oliveira and Terhal \cite{OT06}. In this example we consider $\mathcal{H}_\text{targ}=\mathcal{H}_\text{else}+\alpha Z_1Z_2$. (a) Gap scaling with respect to $\epsilon^{-1}$. Here $\|\mathcal{H}_\text{else}\|=0$ and $\alpha=1$. (b) The gap $\Delta$ as a function of the desired coupling $\alpha$. Here $\|\mathcal{H}_\text{else}\|=0$,  $\epsilon=0.05$.}
\label{fig:delta_compare_sub}
\end{figure}

$\quad$\\

Following joint work with with Cao and others, we will consider analysis of the improved subdivision gadget \cite{Cao_2015}. The previously known subdivision gadgets in the literature assume that the gap in the penalty Hamiltonian $\Delta$ scales as $\Theta(\epsilon^{-2})$ (see for example \cite{OT06,BDLT08}). Here we employ a method which uses infinite series to find the upper bound to the norm of the high order terms in the perturbative expansion. We find that in fact $\Delta=\Theta(\epsilon^{-1})$ is sufficient for the error to be within $\epsilon$. 

The key aspect of developing the gadget is that given $\mathcal{H}=\Delta|1\rangle\langle{1}|_w$, we need to determine a perturbation $V$ to perturb the low energy subspace \[\mathscr{L}_-=\text{span}\{|\psi\rangle\otimes|0\rangle_w,\makebox[0.05cm]{}\text{ $\ket{\psi}$ is any state of the system excluding the slack spin $w$}\}\] such that the low energy subspace of the gadget Hamiltonian $\mathcal{\tilde{H}}=\mathcal{H}+V$ approximates the spectrum of the entire operator $\mathcal{H}_\text{targ}\otimes|0\rangle\langle{0}|_w$ up to error $\epsilon$.  Here we will define $V$ and work backwards to show that it satisfies Theorem \ref{th:perturbation}. We let
\begin{equation}\label{eq:V}
V=\mathcal{H}_\text{else}+\frac{1}{\Delta}({\kappa}^2A^2+{\lambda}^2B^2)\otimes|0\rangle\langle{0}|_w+({\kappa}A+{\lambda}B)\otimes X_w
\end{equation}
\noindent{}where ${\kappa}$, ${\lambda}$ are constants which will be determined such that the dominant contribution to the perturbative expansion which approximates $\mathcal{\tilde{H}_-}$ gives rise to the target Hamiltonian $\mathcal{H}_\text{targ}=\mathcal{H}_\text{else}+\alpha \cdot A\otimes B$. In~\eqref{eq:V} and the remainder of the section, by slight abuse of notation, we use $\kappa A+\lambda B$ to represent $\kappa(A\otimes\eye_\mathcal{B})+\lambda(\eye_\mathcal{A}\otimes B)$ for economy. Here $\eye_\mathcal{A}$ and $\eye_\mathcal{B}$ are identity operators acting on the subspaces $\mathcal{A}$ and $\mathcal{B}$ respectively. The partitions of $V$ in the subspaces are
\begin{equation}
\begin{array}{c}
\displaystyle
V_+=\mathcal{H}_\text{else}\otimes|1\rangle\langle{1}|_w,\quad V_-=\left(\mathcal{H}_\text{else}+\frac{1}{\Delta}({\kappa}^2A^2+{\lambda}^2B^2)\eye\right)\otimes|0\rangle\langle{0}|_w,\\[0.1in] V_{-+}=({\kappa}A+{\lambda}B)\otimes|0\rangle\langle{1}|_w,\quad V_{+-}=({\kappa}A+{\lambda}B)\otimes|1\rangle\langle{0}|_w.
\end{array}
\end{equation}

\noindent{}We would like to approximate the target Hamiltonian $\mathcal{H}_\text{targ}$ {and so} expand {the} self-energy in~\eqref{eq:selfenergy} up to $2^\text{nd}$ order. Note that $\mathcal{H}_-=0$ and $G_+(z)=(z-\Delta)^{-1}|1\rangle\langle{1}|_w$. Therefore the self energy $\Sigma_-(z)$ can be expanded as
\begin{align}\label{eq:Sz_2}
\Sigma_-(z) & =  \displaystyle V_-+\frac{1}{z-\Delta}V_{-+}V_{+-}+\sum_{k=1}^\infty\frac{V_{-+}V_+^kV_{+-}}{(z-\Delta)^{k+1}} ={}\\[0.1in]
 & =  \displaystyle 
\underbrace{\left(\mathcal{H}_\text{else}-\frac{2{\kappa}{\lambda}}{\Delta} A\otimes B\right)\otimes|0\rangle\langle{0}|_w}_{\mathcal{H}_\text{eff}}+{}
\notag
\\
&\qquad+\underbrace{\frac{z}{\Delta(z-\Delta)}({\kappa}A+{\lambda}B)^2\otimes|0\rangle\langle{0}|_w+\sum_{k=1}^\infty\frac{V_{-+}V_+^kV_{+-}}{(z-\Delta)^{k+1}}}_\text{error term}.
\notag
\end{align}
{By selecting} ${\kappa}=\text{sgn}(\alpha)(|\alpha|\Delta/2)^{1/2}$ and ${\lambda}=-(|\alpha|\Delta/2)^{1/2}$, the leading order term in $\Sigma_-(z)$ becomes $\mathcal{H}_\text{eff}=\mathcal{H}_\text{targ}\otimes|0\rangle\langle{0}|_w$. {We must now} show that the condition of Theorem \ref{th:perturbation} is satisfied i.e.\ for a small real number $\epsilon>0$, $\|\Sigma_-(z)-\mathcal{H}_\text{eff}\|\le\epsilon,\forall z\in[\min z,\max z]$ where $\max z=\|\mathcal{H}_\text{else}\|+|\alpha|+\epsilon=-\min z$. {Essentially this amounts to choosing a value of $\Delta$ to cause the error term in~\eqref{eq:Sz_2} to be $\le\epsilon$}. In order to derive a tighter lower bound for $\Delta$, we bound the norm of the error term in~\eqref{eq:Sz_2} by letting $z\mapsto\max z$ and {from} the triangle inequality for operator norms:
\begin{align}\label{eq:Sz_derive}
\displaystyle
\left\|\frac{z}{\Delta(z-\Delta)}(\kappa A+\lambda B)^2\otimes|0\rangle\langle{0}|_w\right\| & \leq  \displaystyle \frac{\max{z}}{\Delta(\Delta-\max{z})}\cdot 4\kappa^2=\frac{2|\alpha|\max{z}}{\Delta-\max{z}} \\[0.1in]
\displaystyle
\left\|\sum_{k=1}^\infty\frac{V_{-+}V_+^kV_{+-}}{(z-\Delta)^{k+1}}\right\| & \leq \displaystyle \sum_{k=1}^\infty\frac{\|V_{-+}\|\cdot\|V_+\|^k\cdot\|V_{+-}\|}{(\Delta-\max{z})^{k+1}}\leq{}\notag \\[0.1in]
& \leq \displaystyle \sum_{k=1}^\infty\frac{2|\kappa|\cdot\|\mathcal{H}_\text{else}\|^k\cdot 2|\kappa|}{(\Delta-\max{z})^{k+1}}={}\notag
\\
&=\sum_{k=1}^\infty\frac{2|\alpha|\Delta\|\mathcal{H}_\text{else}\|^k}{(\Delta-\max{z})^{k+1}}.
\notag
\end{align}
Using $\mathcal{H}_\text{eff}=\mathcal{H}_\text{targ}\otimes|0\rangle\langle{0}|_w$, from \eqref{eq:Sz_2} we see that
\begin{align}
\label{eq:5}
&\|\Sigma_-(z)-\mathcal{H}_\text{targ}\otimes|0\rangle\langle{0}|_w\|  \leq \displaystyle \frac{2|\alpha|\max z}{\Delta-\max z}+\sum_{k=1}^\infty\frac{2|\alpha|\Delta\|\mathcal{H}_\text{else}\|^k}{(\Delta-\max z)^{k+1}} 
={}\\
\label{eq:Sz_bound}
 &\quad=  \displaystyle \frac{2|\alpha|\max z}{\Delta-\max z}+\frac{2|\alpha|\Delta}{\Delta-\max z}\cdot\frac{\|\mathcal{H}_\text{else}\|}{\Delta-\max z-\|\mathcal{H}_\text{else}\|}.
\end{align}
Here going from~\eqref{eq:5} to~\eqref{eq:Sz_bound} we have assumed the convergence of the infinite series in~\eqref{eq:5}, which adds the reasonable constraint that
\[
\Delta>|\alpha|+\epsilon+2\|\mathcal{H}_\text{else}\|. 
\]
To ensure that
\[
\|\Sigma_-(z)-\mathcal{H}_\text{targ}\otimes|0\rangle\langle{0}|_w\|\leq\epsilon
\]
it is sufficient to let expression~\eqref{eq:Sz_bound} be $\leq\epsilon$, which implies that
\begin{equation}\label{eq:2bodyDelta}
\Delta\geq\left(\frac{2|\alpha|}{\epsilon}+1\right)(|\alpha|+\epsilon+2\|\mathcal{H}_\text{else}\|)
\end{equation}
which is $\Theta(\epsilon^{-1})$, a tighter bound than $\Theta(\epsilon^{-2})$ in the literature \cite{BDLT08,KKR06,OT06}. This bound is illustrated with a numerical example (Figure\ \ref{fig:2bodygadget}). From the data labelled as ``analytical'' in Figure\ \ref{fig:2bodygadget}a we see that the error norm $\|\Sigma_-(z)-\mathcal{H}_\text{eff}\|$ is within $\epsilon$ for all $z$ considered in the range, which satisfies the condition of the theorem for the chosen example. In Figure\ \ref{fig:2bodygadget}b, the data labelled ``analytical'' show that the spectral difference between $\mathcal{\tilde{H}_-}$ and $\mathcal{H}_\text{eff}=\mathcal{H}_\text{targ}\otimes|0\rangle\langle{0}|_w$ is indeed within $\epsilon$ as the theorem promises. Furthermore, note that the condition of Theorem \ref{th:perturbation} is only sufficient, which justifies why in Figure\ \ref{fig:2bodygadget}b for $\alpha$ values at $\max\alpha$ and $\min\alpha$ the spectral error is strictly below $\epsilon$. This indicates that an even smaller $\Delta$, although below the bound we found in~\eqref{eq:2bodyDelta} to satisfy the theorem, could still yield the spectral error within $\epsilon$ for all $\alpha$ values in the range. The smallest value $\Delta$ can take would be one such that the spectral error is exactly $\epsilon$ when $\alpha$ is at its extrema. We numerically find this $\Delta$ (up to numerical error which is less than $10^{-5}\epsilon$) and as demonstrated in Figure\ \ref{fig:2bodygadget}b, the data labelled ``numerical" shows that the spectral error is indeed $\epsilon$ at $\max(\alpha)$ and $\min(\alpha)$, yet in Figure\ \ref{fig:2bodygadget}a the data labelled ``numerical" shows that for some $z$ in the range the condition of the Theorem \ref{th:perturbation},
\begin{equation}
  \|\Sigma_-(z)-\mathcal{H}_\text{targ}\otimes|0\rangle\langle{0}|_w\|\leq\epsilon,  
\end{equation}
no longer holds. In Figure\ \ref{fig:2bodygadget} we assume that $\epsilon$ is kept constant. In Figure\ \ref{fig:delta_compare_sub}a we compute both analytical and numerical $\Delta$ values for different values of $\epsilon$.

\section{YY-gadget from XX, ZZ, X, Z}\label{sec:yy} 

\begin{theorem}[Cao-Kais-Biamonte PRA 91:012315 (2015)]
The Hamiltonian
\begin{equation}
   H_{\text{ZZXX}}=\sum_{i}h_i Z_i+\sum_{i}\Delta_i X_i+\sum_{i,j}J_{ij} Z_i Z_j+\sum_{i,j}K_{ij} X_i X_j. 
\end{equation}
emulates a $Y\otimes Y$ interaction with $\delta=\mathcal{O}(\epsilon^{-4})$ given one slack qubit. \end{theorem}

Following joint work with with Cao and others, we will summarize the $YY$ gadget \cite{Cao_2015}. The gadgets which we have presented so far are intended to reduce the locality of the target Hamiltonian. Here we present another type of gadget, called ``creation'' gadgets \cite{BL08}, which simulate the type of effective couplings that are not present in the gadget Hamiltonian. Many creation gadgets proposed so far are modifications of existing reduction gadgets. For example, the ZZXX gadget in \cite{BL08}, which is intended to simulate $Z_iX_j$ terms using Hamiltonians of the form
\begin{equation}\label{eq:dwave2}
\begin{array}{ccl}
\mathcal{H}_{ZZXX} & = & \displaystyle \sum_i\Delta_iX_i+\sum_ih_iZ_i+\sum_{i, j}J_{ij} Z_iZ_j+\sum_{i, j}K_{ij}X_i X_j,
\end{array}
\end{equation}
is essentially a 3- to 2-body gadget with the target term $A\otimes B\otimes C$ being such that the operators $A$, $B$ and $C$ are $X$, $Z$ and identity respectively. Therefore the analyses on 3- to 2- body reduction gadgets that we have presented for finding the lower bound for the gap $\Delta$ are also applicable to this ZZXX creation gadget.

Note that YY terms can be easily realized via bases rotation if single-qubit Y terms are present in the Hamiltonian in~\eqref{eq:dwave2}. Otherwise it is not \emph{a priori} clear how to realize YY terms using $\mathcal{H}_{ZZXX}$ in~\eqref{eq:dwave2}. We will now present the first YY gadget which starts with a universal Hamiltonian of the form~\eqref{eq:dwave2} and simulates the target Hamiltonian $\mathcal{H}_\text{targ}=\mathcal{H}_\text{else}+\alpha Y_i Y_j$. The basic idea is to use the identity $X_iZ_i=\imath Y_i$ and induce a term of the form $X_iZ_iZ_jX_j=Y_iY_j$ at the $4^\text{th}$ order. Introduce slack qubit $w$ and apply a penalty $\mathcal{H}=\Delta|1\rangle\langle{1}|_w$. With a perturbation $V$ we could perform the same perturbative expansion as previously. Given that the $4^\text{th}$ order perturbation is $V_{-+}V_+V_+V_{+-}$ up to a scaling constant. we could let single $X_i$ and $X_j$ be coupled with $X_w$, which causes both $X_i$ and $X_j$ to appear in $V_{-+}$ and $V_{+-}$. Furthermore, we couple single $Z_i$ and $Z_j$ terms with $Z_w$. Then $\frac{1}{2}(\eye +Z_w)$ projects single $Z_i$ and $Z_j$ onto the $+$ subspace and causes them to appear in $V_+$. For $\mathcal{H}_\text{targ}=\mathcal{H}_\text{else}+\alpha Y_1Y_2$, the full expressions for the gadget Hamiltonian is the following: the penalty Hamiltonian $\mathcal{H}=\Delta|1\rangle\langle{1}|_w$ acts on the slack qubit. The perturbation $V=V_0+V_1+V_2$ where $V_0$, $V_1$, and $V_2$ are defined as
\begin{equation}\label{eq:yy_V1}
\begin{split}
V_0 & =  \displaystyle \mathcal{H}_\text{else}+\mu({ Z_1+ Z_2})\otimes{|1\rangle\langle{1}|_w}+\mu( X_1-\text{sgn}(\alpha) X_2)\otimes X_w \\[0.05in]
V_1  & =  \displaystyle \frac{2\mu^2}{\Delta}(\eye\otimes|0\rangle\langle{0}|_w+X_1X_2) \\[0.05in]
V_2  & =  \displaystyle -\frac{2\mu^4}{\Delta^{3}}Z_1Z_2.
\end{split}
\end{equation}
with $\mu = (|\alpha|\Delta^3/4)^{1/4}$. For a specified error tolerance $\epsilon$, we have constructed a YY gadget Hamiltonian of gap scaling $\Delta=O(\epsilon^{-4})$ and the low-lying spectrum of the gadget Hamiltonian captures the spectrum of $\mathcal{H}_\text{targ}\otimes|0\rangle\langle{0}|_w$ up to error $\epsilon$.

The YY gadget implies that a wider class of Hamiltonians such as 
\begin{equation}
\begin{array}{ccl}
\mathcal{H}_{ZZYY} & = & \displaystyle \sum_i h_iX_i + \sum_i l_iZ_i + \sum_{i, j}J_{ij}Z_iZ_j+ \sum_{i, j}K_{ij}Y_iY_j
\end{array}
\end{equation}
and
\begin{equation}
\begin{array}{ccl}
\mathcal{H}_{XXYY} & = & \displaystyle \sum_i h_iX_i + \sum_i l_iZ_i + \sum_{i, j}J_{ij}X_iX_j+ \sum_{i, j}K_{ij}Y_iY_j
\end{array}
\end{equation}
can be simulated using the Hamiltonian of the form in~\eqref{eq:dwave2}. Therefore using the Hamiltonian in~\eqref{eq:dwave2} one can in principle simulate any finite-norm real valued Hamiltonian on qubits. Although by the {\sf QMA}-completeness of $\mathcal{H}_{ZZXX}$ one could already simulate such Hamiltonian via suitable embedding, our YY gadget provides a more direct alternative for the simulation.

Following joint work with with Cao and others, we will analize the $YY$ gadget \cite{Cao_2015}. The results in \cite{BL08} shows that Hamiltonians of the form in~\eqref{eq:dwave2} supports universal adiabatic quantum computation and finding the ground state of such a Hamiltonian is \textsc{{\sf QMA}-complete}. This form of Hamiltonian is also interesting because of its relevance to experimental implementation \cite{2006cond.mat..8253H}. Here we show that with a Hamiltonian of the form in~\eqref{eq:dwave2} we could simulate a target Hamiltonian $\mathcal{H}_\text{targ}=\mathcal{H}_\text{else}+\alpha Y_1Y_2$. Introduce an slack $w$ and define the penalty Hamiltonian as $\mathcal{H}=\Delta|1\rangle\langle{1}|_w$. Let the perturbation $V=V_0+V_1+V_2$ be
\begin{equation}\label{eq:yy_V}
\begin{split}
V_0 & =  \mathcal{H}_\text{else}+\kappa({Z_1+ Z_2})\otimes{|1\rangle\langle{1}|_w}+\kappa(X_1-\text{sgn}(\alpha) X_2)\otimes X_w \\[0.05in]
V_1  & =  2\kappa^2\Delta^{-1}[|0\rangle\langle{0}|_w-\text{sgn}(\alpha) X_1X_2] \\[0.05in]
V_2  & =  -4\kappa^4\Delta^{-3}Z_1Z_2.
\end{split}
\end{equation}
Then the gadget Hamiltonian $\mathcal{\tilde{H}}=\mathcal{H}+V$ is of the form in~\eqref{eq:dwave2}. Here we choose the parameter $\kappa=(|\alpha|\Delta^3/4)^{1/4}$. In order to show that the low lying spectrum of $\mathcal{\tilde{H}}$ captures that of the target Hamiltonian, define $\mathscr{L}_-=\text{span}\{|\psi\rangle$  such that $\mathcal{\tilde{H}}|\psi\rangle=\lambda|\psi\rangle,\lambda<\Delta/2\}$ as the low energy subspace of $\mathcal{\tilde{H}}$ and $\mathscr{L}_+=\eye-\mathscr{L}_-$. Define $\Pi_-$ and $\Pi_+$ as the projectors onto $\mathscr{L}_-$ and $\mathscr{L}_+$ respectively. 

With these notations in place, here we show that the spectrum of $\mathcal{\tilde{H}_-}=\Pi_-\mathcal{\tilde{H}}\Pi_-$ approximates the spectrum of $\mathcal{H}_\text{targ}\otimes|0\rangle\langle{0}|_w$ with error $\epsilon$. To begin with, the projections of $V$ into the subspaces $\mathscr{L}_-$ and $\mathscr{L}_+$ can be written as
\begin{equation}\label{eq:yy_Vproj}
\begin{split}
V_- & =  \displaystyle\bigg(\mathcal{H}_\text{else}+\underbrace{\frac{\kappa^2}{\Delta}(X_1-\text{sgn}(\alpha) X_2)^2}_{(a)}\underbrace{-\frac{4\kappa^4}{\Delta^3}Z_1Z_2}_{(b)}\bigg)\otimes|0\rangle\langle{0}|_w \\[0.05in]
V_+ & =  \displaystyle\left(\mathcal{H}_\text{else}+\kappa(Z_1+Z_2)-\frac{2\kappa^2}{\Delta}\text{sgn}(\alpha) X_1X_2 -\frac{4\kappa^4}{\Delta^3}Z_1Z_2\right)\otimes|1\rangle\langle{1}|_w \\[0.05in]
V_{-+} & =  \kappa(X_1 - \text{sgn}(\alpha) X_2)\otimes|0\rangle\langle{1}|_w \\[0.05in]
V_{+-} & =  \kappa(X_1 - \text{sgn}(\alpha) X_2)\otimes|1\rangle\langle{0}|_w
\end{split}
\end{equation}
Given the penalty Hamiltonian $\mathcal{H}$, we have the operator valued resolvent $G(z)=(z\eye-\mathcal{H})^{-1}$ that satisfies $G_+(z)=\Pi_+G(z)\Pi_+=(z-\Delta)^{-1}|1\rangle\langle{1}|_w$. Then the low lying sector of the gadget Hamiltonian $\mathcal{\tilde H}$ can be approximated by the perturbative expansion~\eqref{eq:selfenergy}. For our purposes we will consider terms up to the $4^\text{th}$ order:
\begin{equation}
\begin{split}
\Sigma_-(z)&=V_-+\frac{1}{z-\Delta}V_{-+}V_{+-}+
\frac{1}{(z-\Delta)^2}V_{-+}V_+V_{+-}+{}\\
&\qquad+\frac{1}{(z-\Delta)^3}V_{-+}V_+V_+V_{+-}+\sum_{k=3}^\infty\frac{V_{-+}V_+^kV_{+-}}{(z-\Delta)^{k+1}}.
\end{split}
\end{equation}
Now we explain the perturbative terms that arise at each order. The $1^\text{st}$ order is the same as $V_-$ in~\eqref{eq:yy_Vproj}. The $2^\text{nd}$ order term gives
\begin{equation}\label{eq:yy_2nd}
\frac{1}{z-\Delta}V_{-+}V_{+-}=
\underbrace{\frac{1}{z-\Delta}\cdot\kappa^2(X_1-\text{agn}(\alpha) X_2)^2}_{(c)}\otimes|0\rangle\langle{0}|_w.
\end{equation}
At the $3^\text{rd}$ order, we have
\begin{equation}\label{eq:yy_3rd}
\begin{split}
\displaystyle &\frac{1}{(z-\Delta)^2}V_{-+}V_+V_{+-}  = {}\\
&\quad\displaystyle =\biggl(\frac{1}{(z-\Delta)^2}\cdot\kappa^2(X_1-\text{agn}(\alpha) X_2)\mathcal{H}_\text{else}( X_1-\text{sgn}(\alpha) X_2)\times\\ 
&\quad\qquad\cdot\underbrace{\frac{1}{(z-\Delta)^2}\frac{4\kappa^4}{\Delta}(X_1X_2-\text{sgn}(\alpha)\eye)}_{(d)}\biggr)\otimes|0\rangle\langle{0}|_w+ O(\Delta^{-1/4}).
\end{split}
\end{equation}
The $4^\text{th}$ order contains the desired YY term:
\begin{equation}\label{eq:yy_4th}
\begin{split}
&    
\displaystyle \frac{1}{(z-\Delta)^3}V_{-+}V_+V_+V_{+-}  = {}\\
&\displaystyle \quad=\bigg(\underbrace{\frac{1}{(z-\Delta)^3}\cdot 2\kappa^4(X_1-\text{sgn}(\alpha) X_2)^2}_{(e)}-\underbrace{\frac{1}{(z-\Delta)^3}4\kappa^4Z_1Z_2}_{(f)}+{} \\[0.05in]
&\quad\qquad + \displaystyle \frac{4\kappa^4\text{sgn}(\alpha)}{(z-\Delta)^3}Y_1Y_2\bigg)\otimes|0\rangle\langle{0}|_w+O(\|\mathcal{H}_\text{else}\|\cdot\Delta^{-3/4})+{}\\
&\quad\qquad+O(\|\mathcal{H}_\text{else}\|^2\cdot\Delta^{-1/2})
\end{split}
\end{equation}
Note that with the choice of $\kappa=(|\alpha|\Delta^3/4)^{1/4}$, all terms of $5^\text{th}$ order and higher are of norm $O(\Delta^{-1/4})$. In the $1^\text{st}$ order through $4^\text{th}$ order perturbations the unwanted terms are labelled as $(a)$ through $(f)$ in  \eqref{eq:yy_Vproj}, \eqref{eq:yy_2nd}, \eqref{eq:yy_3rd}, and \eqref{eq:yy_4th}. Note how they compensate in pairs: the sum of $(a)$ and $(c)$ is $O(\Delta^{-1/4})$. The same holds for $(d)$ and $(e)$, $(b)$ and $(f)$. Then the self energy is then
\begin{equation}\label{eq:yy_selfenergy}
\Sigma_-(z)=(\mathcal{H}_\text{else}+\alpha Y_1Y_2)\otimes|0\rangle\langle{0}|_w+O(\Delta^{-1/4}).
\end{equation}
Let $\Delta=\Theta(\epsilon^{-4})$, then by the Gadget Theorem (\ref{th:perturbation}), the low-lying sector of the gadget Hamiltonian $\mathcal{\tilde{H}_-}$ captures the spectrum of $\mathcal{H}_\text{targ}\otimes|0\rangle\langle{0}|_w$ up to error $\epsilon$. 

The fact that the gadget relies on $4^\text{th}$ order perturbation renders the gap scaling relatively larger than it is in the case of subdivision or 3- to 2-body reduction gadgets. However, this does not diminish its usefulness in all applications.


\chapter{Conclusion and Open Problems} \label{chap:conclusion}

I believe the presented results, as they were carefully selected, form the foundation of the modern theory of quantum information processing. While many aspects of the topic have been well developed, the ending point of this story is just a beginning of what I think will prove to be an even deeper and perhaps an even more exciting direction in this changing field. 

Anticipated computational resources to determine ground state energy and calculate energy relative to a state have been conjectured.  In Table \ref{table:zoo} I have summarized what is known/conjectured regarding efficiently checkable minimisation problems.  Therein `Restricted Ising' denotes problems known to be in {\sf P}. ($^\star$) denotes conjectures. Electronic structure problem instances have constant maximum size so are assumed to be in {\sf BQP} whereas the ZZXX model is {\sf QMA}-complete.

I now plan to address the following open questions.  Perhaps I can hope that readers of this thesis will also work on these questions.  I hope that the materials presented here will serve these purposes, or otherwise serve your research.  Some of the questions I am now considering include:

\noindent {To understand the power of low-depth circuits, particularly the approximate trainability of circuits (e.g.~ with respect to the hardware efficient ansatz \cite{2017Natur.549..242K} or the checkerboard ansatz).}  

\begin{enumerate}
    \item It was recently shown in \cite{2018arXiv180906957H} that variational circuits can approximate t-designs.  In general, how well short circuits can approximate t-designs remains open. (see related results in \cite{2020arXiv200500544U}) 
    \item A systematic analysis of the epsilon neighborhood of a variational family with respect to ansatz depth is still lacking.
    \item Furthermore, a theory to avoid barren plateaus appears to be distant yet possibly can follow from recent findings.  The development of such a theory is of tantamount importance. 
\end{enumerate}

\noindent {To develop a means of active error mitigation for the variational model of quantum computation. While post-processing error decoders \cite{McClean2020} have been developed recently, as have compilation of error control codes into ansatz states see e.g.~\cite{2019arXiv191105759X, 2017arXiv171102249J}, novel approaches to variational error mitigation are emerging \cite{PhysRevX7021050, mcardle2019error} yet still lacking development.} 
\begin{enumerate}
    \item Does the circuit to variational mapping~\cite{UVQC} to approximate the output of general quantum circuits offer a means towards active error correction in variational algorithms? 
    
    \item As variational algorithms rely on local measurements, can information gained in this process influence consequential measurement basis choices and operations to correct errors? 
    
    \item Furthermore, the impacts of noise on variational algorithms (while studied in some detail) represents largely an unknown domain.  For example, recently QAOA experiments by Google were limited to depth 3 ansatz levels due to noise levels. 
    
    \item To study existing variational quantum algorithms and to determine, in physical and mathematical terms, their inherent limitations and potential advantages for tasks such as machine learning \cite{2017Natur.549..195B} and quantum simulation. 
\end{enumerate}

\noindent {To understand variable concentrations.  E.g.~It was recently proven in \cite{2019arXiv191008187F} for the Sherington-Kirkpactrick model, how generally to variable concentrations extend with respect to the algorithmic 3SAT phase transition?}

\begin{center}
\begin{table}\caption{{ Hamiltonian complexity micro zoo}}
  \begin{tabular}{|p{5.4cm}|p{5.4cm}|p{5.4cm}|}
  \hline
      \textbf{Problem Hamiltonian} & {\bf Finding Ground Energy} (Classical / Quantum) & {\bf Calculating State Energy} (Classical / Quantum) \\
      \hline
      {\sc 1-Local Hamiltonian} & Polynomial & Polynomial\\\hline
      {\sc 2-Local Ising} & Exp  & Polynomial \\\hline
      Electronic Structure & $^\star$Exp & $^\star$Exp / Polynomial\\\hline
      ZZXX Model & Exp  & $^\star$Exp / $^\star$Polynomial\\\hline 
  \end{tabular}\label{table:zoo}
  \end{table}
\end{center}

\chapter*{List of abbreviations}
\addcontentsline{toc}{chapter}{List of symbols}


\noindent {\sf AND} --- Logical conjunction. Logic gate.  

\noindent \textbf{AQC} --- Adiabatic quantum computation.  

\noindent \textbf{cbit, c-Bit} --- A deterministic bit able to store logical zero or otherwise logical one. 

\noindent {\sf CN}, $k$-{\sf CN}, {\sf CNOT} --- A controlled {\sf NOT} activated when the {\sf AND} of all $k$ controls is logical 1. 

\noindent \textbf{CNF} --- Conjunctive Normal Form. A formula is in conjunctive normal form if it is an {\sf AND} of {\sf OR}s. It is often called a {\it product of sums}. 

\noindent \textbf{DNF} --- Disjunctive Normal Form.  A formula is in disjunctive normal form is a Boolean-logical formula consisting of an {\sf OR} of {\sf AND}s.  It is often called a {\it sum of products}. 

\noindent \textbf{$k$-body/$k$-local Hamiltonian} --- A Hermitian operator acting on a finite number of qubits where each term acts non-trivially on at most $k$-qubits. 

\noindent \textbf{ML} --- Machine Learning. 

\noindent \textbf{pbit, p-bit} --- A {\it probabilistic} bit able to store expected values of recovering logical zero and logical one. 

\noindent {\sf NAND} --- Logical Negation of {\sf AND}. Logic gate. 

\noindent {\sf NOR} --- Logical Negation of {\sf OR}. Logic gate. 

\noindent {\sf NOT} --- Logical Negation. Inverter. Represented by Pauli matrix $X$. 

\noindent \textbf{NISQ} --- Noisy Intermediate Scale Quantum Technology. 

\noindent {\sf OR} --- Logical Disjunction. Logic gate.  

\noindent \textbf{SAT} --- Boolean Satisfiability Problem (see also 2-, 3- and $k$-{\sf SAT}). 

\noindent \textbf{T.D.S.E.} --- Time Dependent Schr{\"o}dinger's Equation. 

\noindent \textbf{T.I.S.E.} --- Time Independent Schr{\"o}dinger's Equation.

\noindent \textsf{QAOA} --- Quantum Approximate Optimization Algorithm.

\noindent \textsf{QAOA} --- Quantum Alternating Operator Ansatz. 

\noindent \textbf{QFT} --- Quantum Fourier Transform. 

\noindent \textbf{QMA} --- Complexity Class Quantum Merlin Author (analog of {\sf NP}).

\noindent \textbf{QML} --- Quantum Machine Learning. 

\noindent \textbf{S.E.} --- Schr{\"o}dinger's Equation. 

\noindent \textbf{VQE} --- Variational Quantum Eigensolver (Quantum Algorithm). 

\noindent {\sf XOR} --- Logical Exclusive {\sf OR}. Logic gate.  

\noindent \textbf{$2$-, $3$- and $k$-SAT} --- Boolean Satisfiability Problem with specified variables per clause. 

\newpage 
\chapter*{List of symbols}
\addcontentsline{toc}{chapter}{List of abbreviations}

\begin{longtable}{ll}
\S & section or chapter \\ 
$\bydef$ &  defined as \\
$\in$ &  belongs to (a set) \\
$\notin$ & does not belong to (a set) \\
$\cap$ & intersection of sets  \\
$\cup$ & union of sets  \\
$\emptyset$, $\varnothing$ & empty set \\
$\mathbb{B}$ &  the single space of booleans $\{0,1 \}$ \\
$\mathbb{B}^n$ & space of $n$-long boolean numbers $\{0,1\}^n$\\
$\mathbb{N}$ & set of natural numbers which we assume \\
& includes zero ($\mathbb{N} \bydef \mathbb{N}\cup \{0\}$)\\
$\mathbb{Z}$ & set of integer numbers \\
$\mathbb{Q}$ & set of rational numbers \\
$\mathbb{R}$ & set of real numbers \\
$\mathbb{R}^{+}$ & set of nonnegative real numbers \\
$\mathbb{C}$ & set of complex numbers \\
$\mathbb{C}_{/ \{0\}}$  & set of complex numbers with zero excluded\\
$\mathbb{R}^n$ & space of column vectors with $n$ real components \\
$\mathbb{C}^n$ & space of column vectors with $n$ complex components \\
$i$ & $ \sqrt{-1}$ \\
$\mathfrak{Re}$ & projection; real part of the complex number \\
$\mathfrak{Im}$ & projection; part of the complex number \\
$ | z |$ & modulus of complex number $z$ \\
$T \subset S$ & subset $T$ of set $S$ \\
$S \cap T$ & intersection of sets $S$ and $T$ \\
$S \cup T$ & union of sets $S$ and $T$ \\
$f(S)$ & image of set $S$ under mapping $f$ \\
$f \circ g$ & composition of two mappings $(f \circ g)(x) = f(g(x))$ \\
$\ket{x}$ & column vector in $\mathbb{C}^n $\\
$\bra{x}$ & complex conjugate transpose of $\ket{x} $\\
$||\cdot||$ & norm \\
$\braket{x}{y}$ & scalar product (inner product) in $\mathbb{C}^n$\\
$\det(A)$ & determinant of square matrix $A$ \\
$\Tr(A)$& trace of square matrix $A$ \\
$\rank(A)$& rank of matrix $A$ \\
$A^T$& transpose of matrix $A$ \\
$\overline{A}$& conjugate of matrix $A$ \\
$\spn  \{ \ket{i}, \ket{j},...,\ket{k} \} $ & vector space formed by linear combinations of $\ket{i}, \ket{j}, ..., \ket{k}$ \\
$\spn  \{ A, B,...,C \} $ & vector space formed by linear combinations of $A, B, ..., C$ \\
$A^{\dagger}$& conjugate transpose of matrix $A$ \\
$A^{-1}$& inverse of matrix $A$ \\
$\mathds{1}$ & identity matrix \\
$AB$ & matrix product of $m \times n$ matrix $A$ and $n \times p$ matrix $B$\\
$A\bullet B$ & Hadamard (entrywise) product of $m \times n$ matrices $A$ and $B$\\
$[A,B] \bydef AB - BA$ & commutator for square matrices $A$ and $B$\\
$\{A,B\} \bydef AB + BA$ & anticommutator for square matrices $A$ and $B$\\
$A \otimes B$  &  Kronecker (a.k.a.~tensor) product \\
$A \oplus B$  &  direct sum of matrices/operators/spaces \\
$\mathscr{L}\big({\mathscr A},{\mathscr B}\big)$  &  the space of linear maps between spaces ${\mathscr A}$ and ${\mathscr B}$ \\
$\mathscr{L}\big({\mathscr A}\big)$  & the space of linear maps from ${\mathscr A}$ to itself \\ 
$\delta_{jk}$ &  Kronecker delta with $\delta_{jk} = 1$ for $j=k $ and  $\delta_{jk}=0$ for $j\neq k$\\
iff & if and only if \\ 
$A | B$ & The predicate $A$ conditioned {\it such that} $B$ is true  \\
$\mathfrak u$ & A Lie algebra of a unitary group, e.g.~${\mathfrak su}(2)$ \\
$\bf U$ & A unitary (Lie) group, e.g.~${\bf SU}(2)$ 

\end{longtable}

\begin{definition}[Linear Extension] 
A complex linear extension of a $d$-element set spans $\mathbb{C}^d$
\end{definition}

\begin{definition}[Bell Basis] The standard Bell basis is denoted as follows. 
\begin{enumerate}
    \item $\sqrt{2}\ket {\Phi^+} = \ket{00}+\ket{11}$ 
    \item $\sqrt{2}\ket {\Phi^-} = \ket{00}-\ket{11}$ 
    \item $\sqrt{2}\ket {\Psi^+} = \ket{01}+\ket{10}$ 
    \item $\sqrt{2}\ket {\Psi^-} = \ket{01}-\imath \ket{10}$ 
\end{enumerate}
\end{definition}

\begin{remark}[Using $\cdot$ for a product with scalar(s)]
Though typically omitted, we sometimes use $\cdot$ to denote multiplication by a scalar. 
\end{remark}

\begin{remark}
We will interchange the notation $\sigma_0 \equiv \openone$, $\sigma_1 \equiv X$, $\sigma_2 \equiv Y$, $\sigma_3 \equiv Z$.
\end{remark}

\begin{definition}[Pauli Matrices]
We use the Pauli basis (or group; group algebra) defined by the following matrices and relations. 
\begin{equation}
X =
\begin{pmatrix}
    0   & 1 \\
    1   & 0 
\end{pmatrix}
\hspace{30pt}
Y =
\begin{pmatrix}
    0   & -\imath \\
    \imath   & 0 
\end{pmatrix}
\hspace{30pt}
Z =
\begin{pmatrix}
    1   & 0 \\
    0   & -1 
\end{pmatrix}
\end{equation}
\begin{equation}
X Y = iZ 
\end{equation}
\begin{equation}
X^2 = Y^2 = Z^2 = \eye = -\imath XYZ 
\end{equation}
\end{definition}

This group called the Pauli group together with the tensor product contains $4 \cdot 4^n$. This structure arises in angular momentum theory. Some properties are
\begin{enumerate}
\item $\left[O_i, O_j \right] = 2 \imath \varepsilon^{ijk} O_k$ (Lie product)
\item complex conjugation 
$$\forall \omega \in \{ 1, 2, 3 \} \hspace{10pt} O_i O_{\omega}^{*}O_i = -O_{\omega} $$
\item $Tr(O_i O_j) = 2 \delta_{ij}$ 
\item $O_i O_j = \delta_{ij} \eye + \imath \varepsilon^{ijk} O_k $ (geometric product)
\item For $i=j$ $ \{O_i, O_j \} = 2 \delta_{ij} \eye $
\end{enumerate}

\begin{example}[Single qubit density operator]
A polarisation vector is defined as 
$$ \bm{P} = (p_1, p_2, p_3)$$
where $\forall i$ $p_i \in \mathbb{R}$. We also define 
$$\bm{O} = (O_1, O_2, O_3) $$
where the $O_i \in \{X, Y, Z \}$
A density operator can be written as 
\begin{eqnarray}
\rho = \rho^{\dagger}, \hspace{10pt} \rho \geq 0 \hspace{10pt} \Tr \rho = 1, 
\end{eqnarray}
with $\rho: \mathbb{C}^2 \to \mathbb{C}^2$. The density operator is another way to write a quantum state. This density operator can be written as
\begin{equation}
\rho = \frac{1}{2} \eye + \bm{O} \cdot \bm{P} = \frac{1}{2} \eye + \sum_i p_i O_i = \frac{1}{2} + p_1 X + p_2 Y + p_3 Z
\end{equation}
We also define an inner product  $(-,-): (\mathbb{C}^2 \to \mathbb{C}^2) \times (\mathbb{C}^2 \to \mathbb{C}^2) \to \mathbb{C}$
$$(\bm{O} \cdot \bm{A}, \bm{O} \cdot \bm{B}) = \Tr \left[ (\bm{O} \cdot \bm{A}) (\bm{O} \cdot \bm{B}) \right] = \bm{A} \cdot \bm{B}$$
\end{example}

\begin{remark}
One readily verifies that $(\bm{O} \cdot \bm{A}, \bm{O} \cdot \bm{B}) = \bm{A} \cdot \bm{B} + i \bm{O}(\bm{A} \wedge \bm{B})$. 
\end{remark}

\begin{remark}
One can readily establish that for $\bm{A} = \bm{B}$ show that $\bm{A} \wedge \bm{B}$ vanishes. Hence $\lambda (\bm{A} \cdot \bm{O}) = \pm |A| $ ($\pm \sqrt{\sum_j p_j^2}$) and both roots appear as $\Tr \bm{A} \cdot \bm{O} = 0$. 
\end{remark} 

\begin{remark}
We consider a quantum state $\psi$ and a Hilbert space ${\mathscr A}$. We adopt the slight abuse of notation that:  
\begin{enumerate}
    \item $\psi \subset {\mathscr A}$ is shorthand for  $\psi \in {\mathscr B} \subset {\mathscr A}$ where $\psi$ is necessarily restricted to a proper subset ${\mathscr B}$ of ${\mathscr A}$, 
    \item $\psi \subseteq {\mathscr A}$ is shorthand for $\psi \in {\mathscr B} \subseteq {\mathscr A}$ where $\psi$ is restricted to a subset ${\mathscr B}$ that might be equivalent to ${\mathscr A}$, 
    \item $\psi \in {\mathscr A}$ means that $\psi$ can take any value in ${\mathscr A}$.
\end{enumerate}
\end{remark}

\begin{remark}
We adopt the convention that: 
\begin{equation}\label{eqn:arg1}
         \argmin_{\psi \in {\mathscr A}} \bra{\psi}{\mathcal H}\ket{\psi} = \ket{\psi'} 
\end{equation}
while it is more accurate that, 
\begin{equation}\label{eqn:arg2}
    \ket{\psi'} \in \argmin_{\psi \in {\mathscr A}} \{\bra{\psi}{\mathcal H}\ket{\psi}\} = \text{Span}\{ \ket{\psi}|\braket{\psi} =1, \bra{\psi}{\mathcal H}\ket{\psi}= E_\star\}. 
\end{equation}
\end{remark}

\subsection*{Switching Algebra Postulates} 


 \noindent Identity Elements\\
$ x  \lor 0 = 0 $\\
$ x \land 1 = x $

 \noindent Commutativity\\
$ x  \lor y = y \lor x $\\
$ x \land y = y \land x $

 \noindent Complements\\
$ x  \lor  \overline{x} = 1 $\\
$ x \land \overline{x} = 0 $

 \noindent Absorption\\
$ x \lor (x \land y) = x$\\
$ x \land (x \lor y) = x$\\
$ x \lor (\overline{x} \land y) = x \lor y$\\
$ x \land (\overline{x} \lor y) = x \land y$\\
$ (x \land y) \lor (x \land \overline{y}) = x$\\
$ (x \lor y) \land (x \lor \overline{y}) = x$ 

 \noindent Consensus\\
$(x \land y) \lor (\overline{x} \land z) \lor (y \land z) = (x \land y) \lor (\overline{x}\land z)$
$(x \lor y) \land (\overline{x} \lor z) \land (y \lor z) = (x \lor y) \land (\overline{x}\lor z)$

 \noindent Associativity\\
$(x \lor y) \lor z = x \lor (y \lor z)$\\
$(x \land y) \land z = x \land (y \land z)$

 \noindent Distributivity\\
$x \lor (y \land z) = (x \lor y) \land (x \lor z) $\\
$x \land (y \lor z) = (x \land y) \lor (x \land z) $

 \noindent Idempotency\\
$x \lor x = x$\\
$x \land x = x$

 \noindent Null elements\\
$x \lor 1 = 1$\\
$x \land 0 = 0$

 \noindent Involution\\
$ \overline{(\overline{x})} = \overline{\overline{x}} = x $

 \noindent De Morgan's Laws  \\ 
 \noindent
$ \lnot(x \lor y) = \overline{x} \land \overline{y} $\\
$ \lnot(x \land y) = \overline{x} \lor \overline{y} $

 \noindent De Morgan's Principle of Duality\\ Any theorem or postulate in Boolean algebra remains true if: \\ 0 $ \leftrightarrow $ 1\\
$ \land \leftrightarrow \lor $

\subsection*{XOR algebra and the Zhegalkin polynomial}
Here we recall the algebraic normal form (ANF) for Boolean
polynomials, commonly known as PPRMs, (Positive Polarity Reed Muller Forms). 

\begin{definition}
The XOR-algebra forms a commutative ring with presentation
$M=\{\B,\wedge,\oplus\}$ where the following product is called XOR
\begin{equation}
\text{---}\oplus\text{---}:\B\times \B\mapsto \B: (a,b)\rightarrow
a+b-ab~\text{mod}~2
\end{equation}
and conjunction is given as
\begin{equation}
\text{---}\wedge\text{---}:\B\times \B\mapsto \B: (a,b)\rightarrow a\cdot b, 
\end{equation}
where $a\cdot b$ is regular multiplication over the reals (restricted to $0, 1$).  One defines left negation
$\neg(\text{---})$ in terms of $\oplus$ as
$\neg(\text{---})\equiv$
\begin{equation}
  \text{1}\oplus(\text{---}):\B \mapsto \B: a\rightarrow 1-a.
\end{equation}
In the XOR-algebra, (i-v) hold. 
\begin{enumerate}
\item[(i)] $a\oplus 0 = a$, 
\item[(ii)] $a\oplus 1 = \neg a$,
\item[(iii)] $a\oplus a = 0$,
\item[(iv)] $a\oplus \neg a = 1$ and 
\item[(v)] $a\vee b = a\oplus b\oplus
(a\wedge b)$  
\end{enumerate}
\end{definition}
Hence, $0$ is the unit of the operation XOR and $1$ is the unit of AND. The 5th
rule (v)
reduces to $a\vee b = a\oplus b$ whenever $a\wedge b=0$, which is the case for
disjoint ($\text{mod}~2$) sums.  The 
truth table for AND/XOR follows
\begin{center}
\begin{tabular}{c|c|c|c}
$~x_1~$ & $~x_2~$ & $f(x_1,x_2)=x_1\wedge x_2$ & $f(x_1,x_2)=x_1\oplus x_2$ \\ \hline
0 & 0 & 0 & 0\\
0 & 1 & 0 & 1\\
1 & 0 & 0 & 1\\
1 & 1 & 1 & 0
\end{tabular}
\end{center}

\begin{definition}
\label{def:FPRM-ps1}
Any Boolean equation may be uniquely expanded to the fixed polarity
Reed-Muller form as
\begin{eqnarray}
\label{eqn:rm_exp-ps1}
&&f(x_1,x_2,...,x_k) = c_0\oplus c_1 x_1^{\sigma_1}\oplus c_2 x_2^{\sigma_2}\oplus
\cdots \oplus c_n x_n^{\sigma_n}\oplus\nonumber\\
&&~~~~~~~~c_{n+1}x_1^{\sigma_1} x_n^{\sigma_{n}}\oplus \cdots \oplus
c_{2k-1}x_1^{\sigma_1} x_2^{\sigma_2},...,x_k^{\sigma_k},
\end{eqnarray}
where selection variable $\sigma_i\in \{0,1\}$, literal
$x_i^{\sigma_i}$ represents a variable or its negation and any $c$
term labeled $c_0$ through $c_j$ is a binary constant $0$ or $1$. In
\eqref{eqn:rm_exp-ps1}
only fixed polarity variables appear such that
each is in either uncomplemented or complemented form.
\end{definition}

Let us now consider derivation of the form from 
Definition~\ref{def:FPRM-ps1}.
As illustrative example, we avoid keeping track of
indices in the $n$ node case, by considering the case where $n\equiv 2^n=8$.

\begin{example}
The vector $\underline{c}=(c_0,c_1,c_2,c_3,c_4,c_5,c_6,c_7,)^\intercal$ represents
all possible outputs of any function $f(x_1,x_2,x_3)$ over
algebra $\mathbb{Z}_2\times \mathbb{Z}_2\times \mathbb{Z}_2$.  We wish to
construct a normal form in terms of the vector $\underline{c}$, where each $c_i\in
\{0,1\}$, and therefore $\underline{c}$ is a selection vector
that represents the output of the function $f:\B\times\B\times\B\rightarrow
\B:(x_1,x_2,x_3)\mapsto f(x_1,x_2,x_3)$. One may expand $f$ as 
\eqref{eqn:generic-ps1}. 
\begin{eqnarray}
\label{eqn:generic-ps1}
f(x_1,x_2,x_3) &=& (c_0\cdot \neg x_1\cdot \neg x_2\cdot \neg
x_3)\vee(c_1\cdot \neg x_1\cdot \neg x_2\cdot x_3)\vee(c_2\cdot
\neg x_1\cdot x_2\cdot \neg x_3)\nonumber\\
&&\vee(c_3\cdot \neg x_1\cdot x_2\cdot x_3)\vee(c_4\cdot x_1\cdot
\neg x_2\cdot \neg x_3)\vee(c_5\cdot x_1\cdot
\neg x_2\cdot x_3)\nonumber\\
&&\vee(c_6\cdot x_1\cdot x_2\cdot \neg x_3)\vee(c_7\cdot x_1\cdot
x_2\cdot x_3)
\end{eqnarray}
Since each disjunctive term is disjoint the logical OR operation may
be replaced with the logical XOR operation.  By making the substitution $ \neg
a=a\oplus 1$ for all variables and rearranging terms one arrives at
the normal form 
\eqref{eqn:generic2-ps1}.\footnote{For instance, $\neg x_1\cdot
\neg x_2\cdot\neg x_3=(1\oplus x_1)\cdot(1\oplus x_2)\cdot(1\oplus
x_3)=(1\oplus x_1\oplus x_2\oplus x_2\cdot x_3)\cdot(1\oplus x_3)=
1\oplus x_1\oplus x_2\oplus x_3\oplus x_1\cdot x_3\oplus x_2\cdot
x_3\oplus x_1\cdot x_2\cdot x_3$.}
\begin{eqnarray}
\label{eqn:generic2-ps1}
&& f(x_1,x_2,x_3) = \\
&& c_0\oplus(c_0\oplus c_4)\cdot x_1\oplus(c_0\oplus
c_2)\cdot x_2\oplus(c_0\oplus c_1)\cdot x_3\oplus(c_0\oplus
c_2\oplus c_4\oplus c_6)\cdot x_1\cdot
x_2\nonumber\\
&& \oplus(c_0\oplus c_1\oplus c_4 \oplus c_5)\cdot x_1\cdot
x_3\oplus(c_0 \oplus c_1 \oplus c_2 \oplus c_3)\cdot x_2\cdot x_3\nonumber\\
&&\oplus (c_0\oplus c_1\oplus c_2\oplus c_3\oplus c_4 \oplus c_5
\oplus c_6\oplus c_7)\cdot x_1\cdot x_2\cdot x_3 \nonumber 
\end{eqnarray}
The set of algebraically independent polynomials, $\{x_1,x_2,x_3,x_1\cdot
x_2,x_1\cdot x_3,x_2\cdot x_3,x_1\cdot x_2\cdot x_3\}$ combined with
a set of scalars from \eqref{eqn:generic2-ps1} spans the eight
dimensional space of the hypercube representing the Algebra.
A similar form holds for arbitrary $n$.  This expansion is typically credited to Zhegalkin. 
\end{example}

\subsection*{Karnaugh maps}\label{appendix:kmap}

The \emph{Karnaugh map} is a tool to facilitate the reduction of terms needed to express Boolean functions. Many excellent texts and online tutorials cover the use of Karnaugh maps such as the wikipedia entry (\href{http://en.wikipedia.org/wiki/Karnaugh_map}{\tt http://en.wikipedia.org}) and the articles linked to therein.  Here we briefly introduce K-maps to make the thesis self contained.

\begin{definition}
We let $\mathbb{B}=\{0,1\}$. 
\end{definition}

Karnaugh maps (see Figure~\ref{fig:quad} for three examples), or more compactly K-maps, are organized so that the truth table of a given equation, such as a Boolean equation ($f:\mathbb{B}^n\rightarrow \mathbb{B}$) or pseudo Boolean form ($f:\mathbb{B}^n\rightarrow \mathbb{R}$), is arranged in a grid form and between any two adjacent boxes only one domain variable can change value.

This ordering results as the rows and columns are sequenced according to Gray code---a binary numeral system where two successive values differ in only one digit.  For example, the 4-bit Gray code is sequenced as:
\begin{eqnarray*}
&&(0000, 0001, 0011, 0010, 0110, 0111, 0101, 0100, 1100,\\
&&1101, 1111, 1110, 1010, 1011, 1001, 1000).
\end{eqnarray*}
By arranging the truth table of a given function in this way, a K-map can be used to derive a minimized function.

To use a K-map to minimize a Boolean function one \emph{covers} the 1s on the map by rectangular \emph{coverings} containing a number of boxes equal to a power of 2.  For example, one could circle a map of size $2^n$ for any constant function $f=1$.  Figure~\ref{fig:quad} (a) and (b) contain three circles each --- all of 2 and 4 boxes respectively. After the 1's are covered, a term in a \emph{sum of products expression} is produced by finding the variables that do not change throughout the entire covering, and taking a 1 to mean that variable ($x_i$) and a 0 as its negation ($\overline{x_i}$). Doing this for every covering yields a function which matches the truth table.

For instance consider Figure~\ref{fig:quad} (a) and (b).  Here the boxes contain simply labels representing the decimal value of the corresponding Gray code ordering.  The circling in Figure~\ref{fig:quad} (a) would correspond to the truth vector (ordered $z_\star, x_1$ then $x_2$)
\begin{equation}\label{eqn:tt1}
\left(0,0,0,1,0,1,1,1\right)^\top.
\end{equation}
The cubes 3 and 7 circled in Figure~\ref{fig:quad} correspond to the sum of products term $x_1x_2$.  Likewise (5,7) corresponds to $z_\star x_2$ and finally (7,6) corresponds to $z_\star x_1$. The sum of products representation of (\ref{eqn:tt1}) is simply
\begin{equation*}
f(z_\star,x_1,x_2)=x_1x_2\vee z_\star x_2\vee z_\star x_1.
\end{equation*}
Let us repeat the same procedure for Figure~\ref{fig:quad} (b) by again assuming the circled cubes correspond to 1's in the functions truth table.  In this case one finds $z_\star$ for the circling of cubes ladled (4,5,7,6), $x_2$ for (1,3,5,7) and $x_1$ for (3,2,7,6) resulting in the function
\begin{equation*}
f(z_\star,x_1,x_2)=x_1\vee z_\star \vee x_2.
\end{equation*}

\chapter*{Glossary of terms}
\addcontentsline{toc}{chapter}{Glossary of terms}

\noindent {\bf adiabatic model of quantum computation};  A model of quantum computation that relies on the adiabatic theorem to ensure that a physical process will minimize a target Hamiltonian which embeds a problem instance solution in its ground state. \\ 

\noindent {\bf adjoint}; The complex conjugate transpose of the operation (inverse of a unitary). \\ 

\noindent {\bf advantage, quantum}; For a given problem, the improvement in run time for a quantum computer versus a conventional computer running the best known conventional algorithm. \\ 

\noindent {\bf annealing, quantum/simulated/physical};  Annealing algorithms mimic (simulate) the cooling process.  Alternatively, physical annealing utilizes a physical process to replace simulated annealing with a physical annealing process.  Likewise, quantum annealing utilizes quantum effects to help accelerate the physical annealing process.   \\ 

\noindent {\bf ansatz}. An ansatz is the establishment of the starting equation(s), the theorem(s), or the value(s) describing a mathematical or physical problem or solution.  It is often determined by a so called, educated guess. \\ 

\noindent {\bf bit, classical};  A deterministic ideal memory/register storing one binary unit (logical zero or logical one). \\ 

\noindent {\bf bit, probabilistic};  A probabilistic bit is a controllable memory/register that is able to store probabilistic values of logical zero and logical one.  For example, storing the state $(1-p)\ket{0}+ p\ket{1}$ for probability $p$, implies that the expected value of recovering $\ket{1}$ ($\ket{0}$) is $p$ ($1-p$). This is typically ensured through redundancy (i.e.~multiple deterministic bits can mimic a probabilistic bit). \\ 

\noindent {\bf Bloch sphere}; The Bloch sphere is a geometrical representation of the pure state space of a two-level quantum mechanical system (qubit), named after the physicist Felix Bloch. \\ 

\noindent {\bf coherence, quantum};  The (quantum) coherence of a qubit, roughly is the ability to maintain superposition over time. (Not to be confused with optics.) \\ 

\noindent {\bf complement, logical}; The logical negation of a Boolean variable, sending $x$ to $\bar{x}=1-x$. \\ 

\noindent {\bf costate, dual state, effect}; The result of a measurement.  The mathematical dual (dagger) of a state vector. A measurement outcome operator. A linear map from the complex numbers to the dual of a vector space.  A linear map from a vector space to the complex numbers. \\ 

\noindent {\bf Dirichlet operator}; A Hamiltonian that is both self-adjoint and infinitesimal stochastic. \\

\noindent {\bf EPR pair, Einstein–Podolsky–Rosen};	Also known as a Bell State. An entangled bipartite quantum state.  The state is typically written as $\sqrt{2}\ket {\Phi^+} = \ket{00}+\ket{11}$, see e.g.~\ref{eqn:phi}. \\ 

\noindent {\bf gadget Hamiltonian}; A construction(s) to emulate desired Hamiltonian coupling terms. \\ 

\noindent {\bf gate model, quantum};  The defacto model of universal quantum computation described by quantum logic gates acting on registers of qubits.  \\

\noindent {\bf measurement based or one-way model};  The measurement based model of quantum computation preforms universal quantum computation by local measurements on an initially entangled resource state, usually a {\it cluster state} or {\it graph state}. \\

\noindent {\bf monotonic function/quantity}; To be varying in such a way as to either (i) never decrease or (ii) never increase. \\

\noindent {\bf slack space, slack bits, ancillary}; Additional bits/qubits used to assist information processing tasks. \\ 

\noindent {\bf spin, classical/quantum}; Sometimes called {\it nuclear spin} or {\it intrinsic spin} is the quantum version form of angular momentum carried by elementary particles.  \\ 

\noindent {\bf supremacy, quantum}; An adversarial game consisting of a calculation on a quantum computer that cannot be in practice be performed on any foreseeable conventional computer. \\ 

\noindent {\bf tuple};	Deliminator (comma) separated types grouped together by parenthesis. \\

\noindent {\bf uniform circuit, polynomial-time}; 
A family of (Boolean) circuits $\{C_n:n \in \mathbb{N}\}$ generated by an algorithm $\mathcal M$, such that 
\begin{enumerate}
    \item $\mathcal M$ successfully terminates in polynomial time in $n$, 
    \item $\forall n \in \mathbb{N}$, ${\mathcal M}$ outputs a description of $C_n$ given input $1^{\times n}$. 
\end{enumerate}

\printbibliography
\addcontentsline{toc}{chapter}{Bibliography}

\newpage 
\listoffigures
\addcontentsline{toc}{chapter}{List of Figures}
\newpage 
\listoftables
\addcontentsline{toc}{chapter}{List of Tables}






\printindex
\end{document}